\newtheorem{definition}{Definition}
\newtheorem{theorem}{Theorem}
\newtheorem{lemma}{Lemma}
\newtheorem{proposition}{Proposition}
\newtheorem{remark}{Remark}
\newtheorem{Ass}{Assumption}
\begin{document}
\title{Optimal performance of generalized heat engines with finite-size baths of arbitrary multiple conserved quantities beyond i.i.d.~scaling}
%\title{Analysis of non-i.i.d. finite-size effects on generalized heat engines with multiple conserved quantities}

\author{Kosuke Ito$^{1}$~and~Masahito  Hayashi$^{1,2}$\\
\textit{${}^1$Graduate School of Mathematics, Nagoya University, Furocho, Chikusa-ku, Nagoya 464-8602, Japan \\${}^2$
Centre for Quantum Technologies, National University of Singapore, Singapore 117543}}
%\textit{${}^1$Graduate School of Mathematics, Nagoya University, Japan, \\${}^2$
%Centre for Quantum Technologies, National University of Singapore, Singapore}}

\begin{abstract} 
 In quantum thermodynamics, effects of finiteness of the baths have been less considered. In particular, there is no general theory which focuses on finiteness of the baths of multiple conserved quantities. Then, we investigate how the optimal performance of generalized heat engines with multiple conserved quantities alters in response to the size of the baths. In the context of general theories of quantum thermodynamics, the size of the baths has been given in terms of the number of identical copies of a system, which does not cover even such a natural scaling as the volume. In consideration of the asymptotic extensivity, we deal with a generic scaling of the baths to naturally include the volume scaling. Based on it, we derive a bound for the performance of generalized heat engines reflecting finite-size effects of the baths, which we call fine-grained generalized Carnot bound. We also construct a protocol to achieve the optimal performance of the engine given by this bound.
 Finally, applying the obtained general theory, we deal with simple examples of generalized heat engines.
 As for an example of non-i.i.d.~scaling and multiple conserved quantities,
 we investigate a heat engine with two baths composed of an ideal gas exchanging particles, where the volume scaling is applied. The result implies that the mass of the particle explicitly affects the performance of this engine with finite-size baths.
\end{abstract}

\maketitle

\section{Introduction}
\subsection{Motivation}
Thermodynamics has succeeded in revealing the
universal principles of nature
since its origin by Carnot \cite{Carnot:1824aa}.
Carnot efficiency is given only by the temperatures of heat baths
independently of other details of the systems.
Coarse-grained perspective of extremely enormous systems enables
such descriptions by a few number of quantities.
On the other hand,
%it is also well known that fluctuations are coming relevant as we focus on finer scale structures.
it is ubiquitous in physics that effective theories alter in accordance with the scale.
Researchers are now working on various scales of thermodynamics from microscopic to macroscopic.
Recent explosion of studies on resource theories of quantum thermodynamics has worked out
fine-grained thermodynamic laws of small systems
\cite{PhysRevLett.111.250404,Horodecki:2013aa,1367-2630-17-8-085004,1367-2630-18-1-011002,sai-qthermo2015,1751-8121-49-14-143001,Janzing:2000ab,Brandao:2015aa}.
Moreover, quantum thermodynamics of multiple conserved quantities
including non-commutative observables
has also been actively studied \cite{e15114956,PhysRevLett.118.060602,Guryanova:2016aa,1367-2630-19-4-043008,Vaccaro1770,Yunger-Halpern:2016aa,PhysRevE.93.022126} recently.
%Thermodynamics not only of energy has long history as well.
A primary system with multiple conserved quantities in thermodynamics is a system which exchanges the energy and the particle number with reservoirs (heat baths and particle baths), whose
thermal state is described by the grand canonical ensemble.
%From the principle of maximum entropy,
Jaynes \cite{PhysRev.106.620,PhysRev.108.171} further generalized
thermodynamics for arbitrary multiple conserved quantities.
%Quantum thermodynamics of multiple quantities are also based on generalized notion of bath of each arbitrary quantity from such earlier results.

%which is not only theoretically important but also became practical issue
%by outstanding progress in nanotechnology.
%thermodynamics intro~so-called resource theory introduction
\begin{table*}
  \caption{Regimes treated in conventional quantum thermodynamics. We fill in all the rest of regimes.}
  \label{table:conventional}
  \begin{center}
    \begin{tabular}{c||c|c|c|c}
      \hline
     & \multicolumn{2}{c|}{i.i.d.~scaling} & \multicolumn{2}{c}{Generic scaling} \\
      \hline conserved quantities
     & Thermodynamic limit & Finite-size effects & Thermodynamic limit & Finite-size effects \\
     [-8pt] &&&&\\ \hline \hline
      only energy & many (e.g.\cite{PhysRevLett.111.250404,Horodecki:2013aa,1367-2630-17-8-085004,Brandao:2015aa}) & \cite{PhysRevE.96.012128} & other approaches \cite{1601.00487,1611.06614}  & none \\ \hline
      multiple & \cite{Guryanova:2016aa,PhysRevE.93.022126} & none & none & none \\ \hline
    \end{tabular}
  \end{center}
\end{table*}
Although many researches \cite{PhysRevLett.111.250404,Horodecki:2013aa,1367-2630-17-8-085004,1367-2630-18-1-011002,sai-qthermo2015,1751-8121-49-14-143001,Janzing:2000ab,Brandao:2015aa,e15114956,PhysRevLett.118.060602,Guryanova:2016aa,1367-2630-19-4-043008,Vaccaro1770,Yunger-Halpern:2016aa,PhysRevE.93.022126} of quantum thermodynamics focused on the finiteness of the working substance of thermal machines,
less studies has been done on the finite-size effects of the heat baths.
Heat baths are treated as unboundedly available resources by the majority of conventional researches.
%In representative researches  of quantum thermodynamics, ``small'' just refers to
%the finiteness of the system interacting with the surrounding baths.
%On the other hand, finite-size effects of the baths have been less focused, since they treated baths as unboundedly available resources.
As pointed out by \cite{PhysRevE.96.012128,1605.06092}, 
the baths should be treated as finite resources when the size of the baths is restricted during the thermodynamic process. For example, when the source and sink are given as mesoscopic systems,
such a formulation is desired.
%In particular, such a treatment is desired for using mesoscopic systems as source and sink.
Very recently, this topic attracts increasing attentions \cite{PhysRevE.96.012128,1607.01302,1605.06092,1506.02322,1367-2630-16-10-103011}.
%%%%%%%%%
%It is very convenient to clarify the finite-size effect more simply. %%
In particular, Tajima and Hayashi \cite{PhysRevE.96.012128} derived the asymptotic expansion of the optimal efficiency of heat engines with respect to the system-size $n$, the number of identical copies of the baths.
In this expansion, since the first leading term expresses the optimal efficiency with thermodynamic limit, the second leading term expresses the finite-size effect appearing in the optimal efficiency.
%That is, the second leading term characterizes how the system-size $n$ is reflected in the difference between the thermodynamic limit and the finite-size situation.
Although this type of argument is not common in quantum thermodynamics, it became very common in recent years in quantum and classical information theory \cite{Strassen:2009aa,Polyanskiy:2010aa,Polyanskiy:2008aa,Hayashi:2008aa,hayashi09:_infor_spect_approac_secon_order}, which is often called second order asymptotics.
We can expect that the second leading term has similar importance in quantum thermodynamics.

%The power of such asymptotic approach is to precisely reveal what quantities are how relevant in what order of scales,
%a quantitative systematic analysis of the effective theories.
%%%%%%%%%%%%%%%
%They are based on the generalization
%such that the system exchanges each arbitrary quantity with its corresponding bath like particle bath, angular momentum bath, etc.
%The nature of equilibrium state of such system, generalized thermal state, itself is closely studied in \cite{1367-2630-19-4-043008,Yunger-Halpern:2016aa}.
%%%%%%%%%%%%%%%%
%Nevertheless, finiteness of such generalized baths is still untouched.
%We investigate the dependence of the optimal performance of generalized heat engines with arbitrary multiple conserved quantities on the baths' scale, giving as well the simple protocol to achieve the optimal performance.
%^Especially, our result reveals that the second leading order already contains quantum effects from non-commutativity.

Although the paper~\cite{PhysRevE.96.012128} was a first step to quantitative analysis of scale dependency in quantum thermodynamics,
their analysis with finite-size baths is limited to
%just the energy transfer, i.e.,
the case when the energy is extracted from two heat bathes with different temperatures.
In fact, there is no research on {\it finite-size baths} of {\it multiple conserved quantities} in quantum thermodynamics yet (Table \ref{table:conventional}).
In an ordinary heat engine, only the energy transfer is involved.
%Usually, an ordinary heat engine refers to a machine to transfer just energy.
%in the way as the previous paragraph.
In contrast, when a thermal machine transfers multiple conserved quantities, we call it a generalized heat engine.

%From not only theoretical but also practical interest,
%finiteness of reservoirs seems rather to be more important in consideration of arbitrary multiple conserved quantities not only energy.
%Practically, it may be convenient to use not only energy for experiments of small systems.
Many interesting systems involving multiple conserved quantities,
e.g. electric batteries, biological processes, chemical reactions, etc,
are possibly affected by finiteness of the baths.
%are needed to be formulated as finite-size baths.
%Basically, experimental setups for mesoscopic systems involve particle transport together with energy.
To investigate the finite-size effects of generalized baths, we study how the optimal performance of generalized heat engines alters in response to the scale.
For this purpose, we improve the second order asymptotics for multiple conserved quantities.
That is, in the sense of second order asymptotics, we investigate the dependence of the performance of generalized heat engines on the baths' scale.
We also give a simple protocol to achieve the optimal performance.

%Can Landauer's principle by angular momentum be touched?
%Though many situations where finiteness of reservoirs is relevant can be considered %%%
%Thus, finiteness of reservoirs must be more important when multiple conserved quantities not only energy are exchanged,
%%finiteness must be relevant very in multiple conserved quantities-situations
%Is it needed to touch with generalized Gibbs ensemble? But then, it is completely the same context as Guryanova
Next, we revisit `scaling' in quantum thermodynamics.
Most of the existing researches on quantum thermodynamics employ
the identical and identically distribution (i.i.d.)-based scaling, where
the baths are scaled by the number $n$ of identical copies of the system.
In general,
the scaling of systems in nature is not necessarily given as the i.i.d.-scaling
but rather in a more generic form, like the volume of the container including the gas,
as has originally been treated in thermodynamics and statistical mechanics.
%given by continuous values in general, like the volume of the container,
%as well as such discrete the number as number of particles.
%Nevertheless, 
%treat the size of the baths as the number of identical copies of a system.
%Note that even if the size is given by such discrete numbers as the number of particles,
%it does not mean that the system is described by identical copies.
%%%%%%%%%%%%
%Although i.i.d.-scaling may be applicable for some of artificial systems,
%such artificial scaling can not even capture scaling by the volume of the container.
%%%%%%%%%%%
%.
Thus, the i.i.d.-scaling is quite constrained in general.
In particular, to treat the change of the number of particles such as particle transport and chemical reactions,
it is natural to use the scaling in terms of the `volume' of the system.
%but not the number of identical copies of some system.
%%%%%%%%%%
%In particular, when particle transport is involved, i.i.d.-scaling seems to be quite constrained.
To extend the applicability of quantum thermodynamics to a wide range of natural objects,
we establish a more general formulation of scaling beyond the i.i.d-structure.
Especially, we achieve it in consideration of the asymptotic extensivity (recently, Tajima et al.~\cite{1601.00487,1611.06614} independently took other approaches to non-i.i.d.~based on the large deviation property to treat non-i.i.d.~Gibbs states in thermodynamic limit).
Based on such a generic scaling, we construct a protocol for a generalized heat engine
under such a generalized scaling, which is novel even in thermodynamic limit
(Table \ref{table:conventional}).
As a typical example, we deal with a heat engine with two baths composed of ideal gas exchanging particles where the size of the baths is given by the volume.
Applying our general theory, we calculate the finite-size effects on the optimal performance
of this canonical example of a generalized heat engine.

Then, our results are roughly made up of two aspects:
extension of the scaling of the baths to a generic manner,
and generalization of the finite-size reservoir thermodynamics to multiple quantum conserved quantities,
in terms of this generic scaling,
which fills in untouched regimes (Table \ref{table:conventional}).
%finite-size effect may includes woods et al and so on

\subsection{Overview}
\begin{figure*}
\centering
 \includegraphics[clip ,width=7.0in]{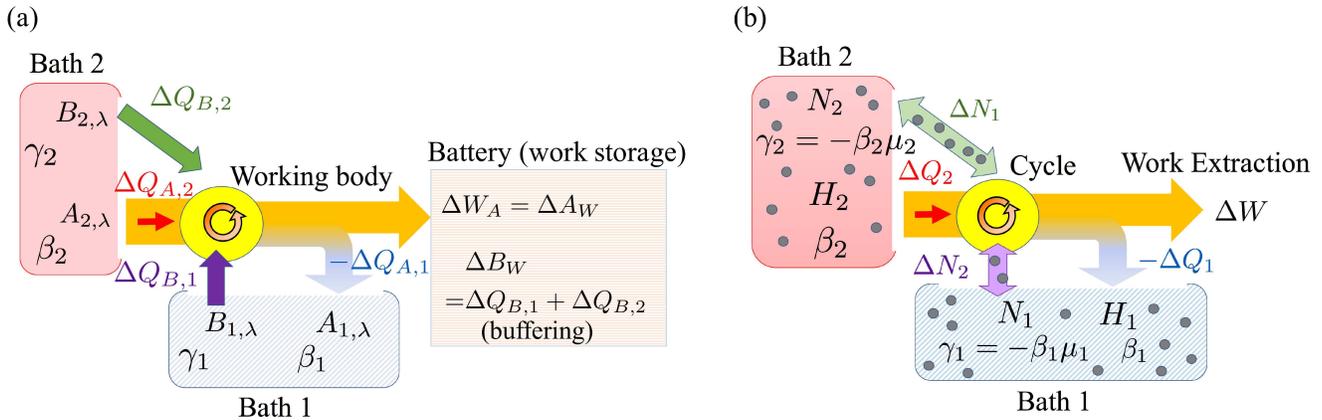}
 \caption{(a) Schematic picture of the model of generalized heat engines with multiple conserved quantities.
 Each bath has two kinds of quantities $A_{i,\lambda}$ and $B_{i,\lambda}$, which are, for example, energy, particle number, $x$-component of angular momentum, etc.
 We assume that generalized inverse temperatures $\beta_i$ and $\gamma_i$ conjugate to $A_{i,\lambda}$ and $B_{i,\lambda}$ respectively are assigned to each bath.
 GCB implies that in extraction of Quantity $A$ through the supply $\Delta Q_{A,2}$ from Bath $2$,
 ``waste heat'' $-\Delta Q_{A,1}$ to Bath $1$ can be reduced by the supplies $\Delta Q_{B,1}$ and $\Delta Q_{B,2}$ of quantity $B$ from the two baths.
 The finite-size effects in FGCB reflect the canonical correlations of the baths' quantities.
 Of course, we can consider a generalized heat pomp in the same way by running it oppositely. (b) Schematic picture of a heat engine with particle transport, which is a primary example of a generalized heat engines. We treat it in detail in Sec.~\ref{sub_particles}.
 This engine extracts work through exchanging heat energy and identical particles with two baths.
 In this case, Quantity $A$ is the energy, and $B$ is the number of identical particles.
 Each bath $i$ has corresponding Hamiltonian $H_{i,\lambda}$ and the particle number $N_{i,\lambda}$.
 Ordinary inverse temperature $\beta_i$ conjugate to $H_{i,\lambda}$ is assigned to each bath.
 The generalized inverse temperature $\gamma_i$ conjugate to the particle number $N_{i,\lambda}$ is given as $\gamma_i=-\beta_i\mu_i$, where $\mu_i$ is the chemical potential of Bath $i$.
 Waste heat can be reduced by compensating it by particle transport.
 Finite-size effects in the optimal performance reflect the canonical correlation of $H_{i,\lambda}$ and $N_{i,\lambda}$, as well as the fluctuation of $H_{i,\lambda}$ and $N_{i,\lambda}$.
}
\label{figure0gen}
\end{figure*}
In this paper,
%based on the quantum statistical mechanical formulation,
we explicitly reveal the effects of the finiteness of the baths on
the optimal performance of a generalized heat engine with multiple conserved quantities, even when they are not necessarily mutually commutative.
Especially, we treat finiteness of the baths by the generic scaling parameter $\lambda$ which
 can be discrete or even continuous.
Instead of assuming the i.i.d.~form scaling of the baths, we just impose the asymptotic extensivity on appropriate
quantities with respect to the scaling parameter $\lambda$.
The deviation from extensivity, because of the finiteness,
may reflect the effects of the interactions and the boundary.
Of course, the i.i.d.-scaling is also covered
since the extensivity is trivially satisfied.
%%%%%%%%%%%%%%%%%%%%%%
%%%%%%%%%%%%%%%%%%
%ref corresponding statement in the main text

%%%%%%%%%%%%%%%%%%
%Guryanova et al. \cite{Guryanova:2016aa} showed the second law of thermodynamics with multiple conserved quantities.
%When we focus on just two baths with two kinds of conserved quantities $A$ and $B$ for simplicity,
% the second law immediately implies
%%%%%%%%%%%%%%%%%%%%%%%
First of all, we focus on the bound on the performance (Sec.~\ref{Sub_opt}).
%%implicit
To this end, we have to impose appropriate constraints on allowed operations.
We have two ways to describe the battery system storing the extracted quantity: implicitly or explicitly \cite{Guryanova:2016aa}.
Implicit-battery formulation just focuses on the operations on the bath $\mathcal{H}_{\mathrm{Baths}}$ and
working body $\mathcal{H}_C$, and describes the extracted quantity as the difference between their quantities before and after the operation so that the battery storing it is implicitly given outside of them.
%Implicit-battery formulation just focuses on the operations on the baths $\mathcal{H}_{\mathrm{Baths}}$ and working body $\mathcal{H}_C$ so that the extracted quantity is given as the difference between their quantities before and after the operation
%and the battery storing it is implicitly given outside of them.
Explicit-battery formulation includes the battery system as an explicit quantum system $\mathcal{H}_W$ so that we explicitly treat the whole system $\mathcal{H}_{\mathrm{Baths}}\otimes \mathcal{H}_C \otimes \mathcal{H}_W$.
Since the implicit-battery formulation describes a part of the whole dynamics,
an operation in the implicit-battery formulation corresponds to many operations in the explicit-battery formulations in general.
In the derivation of an upper bound of the performance,
as weak as possible constraints are preferable for wide
applicability.
That is, a necessary condition for permissible operation is sufficient to impose.
In this sense, we derive a bound under the appropriate implicit-battery formulation (Sec.~\ref{sub_exp_bat}).
%%
%In the derivation of an upper bound of the performance,
%as weaker as possible constraints are preferable for wide applicability.
%That is, a necessary condition for permissible operation is sufficient to impose.
%In this sense, since an operation in the implicit-battery
%formulation corresponds to many operations in the explicit-battery formulations in general, 
%we derive a bound under the appropriate implicit-battery formulation (Sec.~\ref{sub_exp_bat}).
%we just seek the appropriate necessary condition for the implicit operation to be quantum thermodynamically permissible.
%
The recent paper \cite{Guryanova:2016aa} describes an operation in the implicit-battery formulation 
as a unitary operation on $\mathcal{H}_{\mathrm{Baths}}\otimes\mathcal{H}_C$,
corresponding to the dynamics of the system driven by semi-classical external field.
%Operations in implicit-battery formulation refer to unitary operations on $\mathcal{H}_{\mathrm{Baths}}\otimes\mathcal{H}_C$ in \cite{Guryanova:2016aa},
%corresponding to the dynamics of the system driven by a semi-classical external field.
%
However, since the reduced dynamics of $\mathcal{H}_{\mathrm{Baths}}\otimes\mathcal{H}_C$ tracing out the battery $\mathcal{H}_W$ is not unitary without approximation in general,
we include wider class of operations as operations with implicit battery: unital completely positive and trace preserving (CPTP)-maps \cite{PhysRevA.95.032132}.
The unitalness is equivalent to non-decreasing of von Neumann entropy,
which is analogous to adiabaticity with the battery in macroscopic thermodynamics.
Furthermore,
as the reduced dynamics from the operations with an explicit battery,
the unitalness indeed follows from translational symmetry of the battery \cite{Guryanova:2016aa,Skrzypczyk:2014aa,PhysRevA.95.032132},
which is imposed
to guarantee that no hidden heat-like transfer cheatingly improves the performance.
%translation invariance of the battery implies the unitalness in explicit operations \cite{Guryanova:2016aa,Skrzypczyk:2014aa,PhysRevA.95.032132} including our formulation detailed in Sec.~\ref{Sub_achievability}.
%guarantees that only displacements caused by the operations are relevant.
Since we consider the working body executing a cycle, we also impose the cyclicity with respect to $\mathcal{H}_C$.

Let us consider a generalized heat engine with two baths, namely Baths $1$ and $2$, of two kinds of conserved quantities, namely Quantities $A$ and $B$, for simplicity (Fig.~\ref{figure0gen} (a)).
For our formulation, the role of Quantities $A$ and $B$ are essentially the same.
Thus, we focus on the bound on the extraction $\Delta W_A$ of Quantity $A$ without loss of generality.
We can choose $A$ and $B$ as arbitrary conserved quantities.
For example, one may choose the energy as Quantity $A$ to focus on the work extraction, or one may choose the particle number as Quantity $A$ to focus on the extraction of the number of particles.
Heat engine with particle transport (Fig \ref{figure0gen} (b)) is
a canonical example of the generalized heat engine.
Our objective is the upper bound on
the extraction $\Delta W_A$ of Quantity $A$ by a cyclic process where the generalized heat $\Delta Q_{A,i}$ of Quantity $A$ and $\Delta Q_{B,i}$ of quantity $B$ are absorbed from Bath $i$ (Fig.~\ref{figure0gen} (a)).
%the ratio of the maximum work extraction to a given endothermic heat from the hot bath
%is given by the Carnot efficiency.
Under the implicit-battery formulation,
 the second law for multiple conserved quantities \cite{Guryanova:2016aa} immediately implies
 the following upper bound
for the extraction $\Delta W_A$:
\begin{align}
 \Delta W_A\leq \left(1-\frac{\beta_{2}}{\beta_{1}}\right)\Delta Q_{A,2}
 -\sum_{i=1}^{2}\frac{\gamma_{i}}{\beta_{1}}\Delta Q_{B,i},\label{intro_GCB}
 \end{align}
where the baths are initially in the generalized thermal state at
the respective generalized inverse temperatures $\beta_{i}$ and $\gamma_{i}$ corresponding to Quantities $A$ and $B$ of Bath $i$
(For the definitions of 
generalized thermal state and generalized inverse temperature,
see Definition \ref{def_thermal}).
\begin{remark}\label{remark1}
This bound does not include $\Delta Q_{A,1}$ since $\Delta W_A$ is determined if we fix both $\Delta Q_{A,1}$ and $\Delta Q_{A,2}$.
Rather $\Delta Q_{A,1}$ is constrained when the others $\Delta Q_{A,2}$, $\Delta Q_{B,1}$ and $\Delta Q_{B,2}$ are given.
The bound \eqref{intro_GCB} is obtained through this constraint.
This situation is similar to the ordinary Carnot bound, where
the upper bound for the work extraction $\Delta W$ is given in response to the endothermic heat $\Delta Q_h$ from the hot bath:
\begin{align}
 \Delta W\leq \left(1-\frac{\beta_h}{\beta_c}\right)\Delta Q_h,
\end{align}
 where $\beta_h$ and $\beta_c$ are the inverse temperatures of the hot and the cold baths, respectively.
\end{remark}
We call the bound \eqref{intro_GCB} the generalized Carnot bound (GCB)
since this is a straightforward generalization of the Carnot bound, which
has a similar structure depending only on the generalized inverse temperatures.
However,
because of finite-size effects, this bound is never achievable unless thermodynamic limit is taken.

Throughout the paper, we assume that the generalized heat is small enough relative to the scale (see \eqref{heat_scale}) because the baths' state should be unchanged in thermodynamic limit.
Then, by incorporating finite-size effects into GCB, our first main result is
the following inequality, which we call fine-grained generalized Carnot bound (FGCB) (Sec.~\ref{sec_FGCB} Theorem \ref{Thm_FGCB}):
\begin{align*}
 \Delta W_A
 \lesssim & \left(1-\frac{\beta_{2}}{\beta_{1}}\right)\Delta Q_{A,2}
 -\sum_{i=1}^{2}\frac{\gamma_{i}}{\beta_{1}}\Delta Q_{B,i}\nonumber\\
  &-C_{AA}\frac{\Delta Q_{A,2}^2}{\lambda}
  -\sum_{i=1}^{2}C_{AB}^{i}\frac{\Delta Q_{A,2}\Delta Q_{B,i}}{\lambda}\nonumber\\
 &-\sum_{i,j=1}^2C_{BB}^{i,j}\frac{\Delta Q_{B,i}\Delta Q_{B,j}}{\lambda}.
\end{align*}
%FGCB reveals that the terms of the first order of the finiteness already reflect
The leading terms are the same as the GCB, corresponding to thermodynamic limit.
The next leading terms reflect the largest finite-size effect, which is indeed always negative,
so that FGCB does not exceed GCB.
This finite-size effect represents the decrease of the performance 
caused by the non-negligible disturbance to the state of the baths due to their
finiteness.
FGCB gives us the guideline for relieving such drawback.
The canonical correlations of the baths' observables are included in
the coefficients $C_{AA}, C^{i}_{AB}, C^{i,j}_{BB}$ ((\ref{Caa})-(\ref{Cbb}))
of the second leading terms.
Thus, correlations reflecting non-commutativity of the conserved quantities explicitly affects the performance, quite differently from thermodynamic limit described by GCB.
Especially, correlation between Bath $1$ and $2$ is also reflected.
We should consider such correlation structures of the baths to design the engine with finite-size baths.
%Thus, finer structures than the temperatures of the baths,
%the correlations and fluctuations of the baths' quantities, are found to be relevant in this regime.
%, where it is an advantage of this asymptotic analysis to reveal such precise order in first time.

%Quantum effects occurs as the canonical correlations between observables of the baths, which reduces ordinary correlations when the observables are mutually commutative,
We consider the heat engine with baths composed of an ideal gas exchanging particles whose size is given by the volume of the container in Sec.~\ref{sub_particles}, as a physical example (Fig.~\ref{figure0gen} (b)).
Although it is so famous canonical example,
this is the first time to explicitly calculate the finite-size effects on the optimal performance of this kind of heat engine.
%%%%%%%%%%%%%%
%We also estimate the precise order with which $\lesssim$ holds, reflecting boundary effects.
%%%%%%%%%%%%%%%%%
% Guryanova et al work (and Tajima MH relative entropy free entropy difference new perspective %%for finite)
% immediately implies the thermodynamic bound on extraction of a quantity when the other quantities are supplied
% we call generalized Carnot bound (GCB)
%further finiteness result fine grained Carnot bound (FGCB) as follows (Theorem )
%including non-commutative case
%statistical mechanically derive, free from empirical results from macroscopic Thermodynamics.

It is also important to show how the bound can be achieved.
Hence, we construct a protocol to achieve the FGCB (Sec.~\ref{Sub_achievability}).
In construction of the protocol,
we should carefully avoid any hidden heat source which may cheatingly improve apparent performance of the engine,
because the definition of work-like transfer of each quantity is ambiguous in quantum thermodynamics.
Thus, we have to explicitly treat the battery to show the achievability of optimal performance in FGCB.
For an explicit treatment, in addition to the conservation laws and cyclicity of the working body, we should restrict the battery to really work just as a battery but not as a `cold reservoir'.
A reasonable condition is the 'no-cheating' condition \cite{Skrzypczyk:2014aa,Guryanova:2016aa,Masanes:2017aa}, which restricts the protocol to be independent of the state of the battery.
In this way, it is guaranteed that the battery itself can not be used cheatingly as an entropy sink.
In this sense, any exchange of the quantities with the battery does not improve the performance cheatingly as hidden heat-like transfer.
%extraction does not include heat-like amount to cheatingly use the battery's 
We consider a realization of the battery with continuous spectrum to satisfy this no-cheating condition, and finally
construct the protocol with the explicit battery.
%this part may be appropriate for the conclusion (discussion)
Furthermore, there are two types of conservation laws, the strict and average conservation laws \cite{Guryanova:2016aa,PhysRevX.5.021001}.
The strict conservation requires that each quantity commutes with the dynamics, while
the average conservation requires only the conservation of its average value.
When the observables representing the conserved quantities are commutative, 
we construct a protocol satisfying the strict conservation.
However, for the non-commuting case, it is not easy to construct such a protocol.
Instead of this requirement, we construct a protocol satisfying just the average conservation law 
%regardless of the states of both the battery and the working body, 
as in \cite{Skrzypczyk:2014aa,Guryanova:2016aa}.
As pointed out in \cite{PhysRevX.5.021001},
coherence may be indefinitely needed to realize a protocol satisfying just the average conservation.
%Also pointed out by \cite{PhysRevX.5.021001}
However, it is also pointed out in \cite{PhysRevX.5.021001} that
considering resource of coherence appropriately \cite{Aberg:2014aa},
we have a possibility to transform
a protocol satisfying the average conservation law
%with an explicit battery 
to a protocol satisfying the strict conservation law.

%%%%%%%%%%%%%%%%%%%%%%%
%One of the crucial drawbacks of the weak conservation is dependence of the protocol on the state.
%However, our protocol gets through this point in the sense that it is independent of both the states of the working body and the battery, which
%only depends on the temperatures of baths and the amounts of the quantities to supply.
%%%%%%%%%%%%%%%%%%%
%Weak conservation law is sufficient for our regime.

FGCB is ``formally'' attained by the final thermal state at the ideal final inverse temperature defined by (\ref{tl_1})-(\ref{tl_4}) in Sec.~\ref{sec_FGCB}.
However, this final thermal state is not realizable from the initial thermal state by any protocol in general.
Instead, our optimal protocol makes the final state very close to the thermal state at the ideal final inverse temperature.
To show that our protocol indeed achieves FGCB (Theorems \ref{achieve_thm}, \ref{achieve_thm_non}), we impose additional assumptions (Assumption \ref{assump2} and (\ref{Q_order})).
Assumption \ref{assump2} is a stronger version of the asymptotic extensivity which guarantees small enough deviation from the extensivity.
The condition (\ref{Q_order}) requires large enough generalized heat.
Finally, under these assumptions, we show that our protocol achieves the equality in FGCB asymptotically by
making use of information geometric structure of thermal states.
A similar idea was given for an ordinary heat engine in \cite{HTMHpre}.
%We also verify that the fluctuation of the work in our protocol can be negligible even for this finite-size regime.
%%%%%%%%%%%%%%%%%%%%%%%%
%We focus on the average values of the quantities similarly to \cite{Guryanova:2016aa,Skrzypczyk:2014aa} because it makes connection to standard thermodynamics clear.
%Another formulation is so-called single shot thermodynamics \cite{Horodecki:2013aa,1506.02322}.
%Though it gives important new features of
%work extraction in small systems,
%it is quite different from work extraction to macroscopic systems
%at the point that its battery is a wit, which is a two level system with a predetermined energy gap.
%%%%%%%%%%%%%%%%%%%%

%We apply the general theory to some examples.
\subsection{Organization}
This paper is organized as follows.
In Sec.~\ref{sec_setup}, we present the setup for our analysis.
At first, we introduce the generalized heat engine and the generalized thermal state in Sec.~\ref{Sub_setup}.
Next, we bring in a scaling of the baths based on the asymptotic extensivity in Sec.~\ref{sub_extensivity} beyond the identical and independent distributions.
Sec.~\ref{Sub_opt} is devoted to show our first main result fine-grained generalized Carnot bound (FGCB).
The implicit-battery formulation is introduced to deal with the bound on the optimal performance in Sec.~\ref{sub_imp_exp}.
In Sec.~\ref{sec_GCB}, we review the second law of thermodynamics with multiple conserved quantities, and introduce the generalized Carnot bound (GCB).
FGCB is shown in Sec.~\ref{sec_FGCB}.
We construct the optimal protocol to show the tightness of FGCB in Sec.~\ref{Sub_achievability}.
Firstly, we construct an operation with implicit battery in Sec.~\ref{sub_cons_imp}.
Then, in Sec.~\ref{sub_cons_exp}, we extend the implicit-battery protocol to the explicit-battery formulation which is introduced in Sec.~\ref{sub_exp_bat}.
Next, we verify the optimality of the protocol in Sec.~\ref{sub_ach_FGCB}.
From Sec.~\ref{sub_exp_bat} to \ref{sub_ach_FGCB}, we consider commutative quantities.
Then, we extend the construction to non-commutative quantities in Sec.~\ref{sub_noncomm} under the average conservation laws.
We apply the above general theory to some examples in Sec.~\ref{sec_example}.
%We apply the above general theory to some examples in Sec.~\ref{sec_example}.
%We discuss a bit detailed relationship of this result with prior researches in the context of quantum thermodynamics in Sec.~\ref{sec_discussion}.
Finally, the conclusion is in Sec.~\ref{sec_conclusion}.
%fluctuation
%We finally verify that our protocol really achieves FGCB in the finite-size regime.
%
%It is important to show that
%the bound can be achieved, and how to do it
%quasi-static
%Guryanova protocol  step wise
%one-cycle protocol, indep of systems
%implicit explicit
%average
%catalyst

%%%%%%%%%%%%%
%The paper is organized as follows:
%To begin with, we give the setup of the generalized heat engine in Sec.~\ref{sec_setup}.

\section{Setup}\label{sec_setup}
%We consider the general heat engine which extracts multiple quantum conserved quantities by using multiple baths concerning these quantities.
%%%%%%%%%%%%%%%%%%%%%%%%%%%
%% of these quantities.
%%, where baths has a common characteristic scale.
%%%%%%%%%%%%%%%%%%%%
%Our object is $K$ kinds of quantities conserved among the baths and the engine and the battery, which do not necessarily commute each other.
%Specifically, we focus on how the scale of the baths affects the work extraction in detail, not completely taking thermodynamic limit, but asymptotic sense in the scale.
%%%%%%%%%%%%%%%%%%%%%%%%%%%%%%%%%%%%%%%%
%Guryanova et al. \cite{} showed the constraint on the extraction of multiple quantities.
%They showed that
%For well behaved baths, we show the achievability by explicitly constructing the dynamics focusing on the baths which achieves the optimal work extraction in asymptotic sense.
%\subsection{Setup for heat engine with multiple conserved quantities}
\subsection{Heat engine with generalized thermal baths}\label{Sub_setup}
We consider a generalized heat engine to extract arbitrary quantities composed of multiple baths and a working body as Fig.~\ref{figure0gen}.
We denote the system composed of all the baths by $\mathcal{H}_{\mathrm{Baths}}$.
The working body is supposed to execute the cyclic process, which is
denoted by $\mathcal{H}_C$.
In addition,
we denote the battery system to store the extracted quantities by $\mathcal{H}_W$.
All these Hilbert spaces depend on the scale parameter $\lambda$, though we abbreviate the notation.

%The battery system $\mathcal{H}_W$ is also included in explicit treatment of the battery (cf. Sec.~\ref{sub_imp_exp}, \ref{sub_exp_bat}).
The system $\mathcal{H}_{\mathrm{Baths}}$ consists of two generalized baths, Baths $1$ and $2$, each of which exchanges two conserved quantities (Quantities $A$ and $B$) with the working body and the battery.
We set both numbers of the conserved quantities and the baths as two since our results are essentially the same for general multiple baths and quantities.
It is straightforward to generalize our results to the case of arbitrary number of the baths with arbitrarily many conserved quantities.
Especially, for only one bath with two quantities ($m=1, K=2$), it is sufficient to omit one of the baths (see an example in Appendix~\ref{sub_spin}).
%Here, we define every single type of quantity in the sense that the conservation law holds, e.g.
For example, each conserved quantity $A$ or $B$ may stand for
energy, particle number, $x$-component of the angular momentum, etc.
We denote Quantities $A$ and $B$ of Bath $i$ $(i=1,2)$ with the scale $\lambda$ by
$A_{i,\lambda}$ and $B_{i,\lambda}$ respectively.
Then, $\mathcal{H}_{\mathrm{Baths}}$ has the observables $X_{j,\lambda}$ $(j=1,2,3,4)$,
where $X_{1,\lambda}=A_{1,\lambda}$, $X_{2,\lambda}=A_{2,\lambda}$, $X_{3,\lambda}=B_{1,\lambda}$, $X_{4,\lambda}=B_{2,\lambda}$.
In general, we do not assume commutativity of $X_{j,\lambda}$'s.
Especially, quantities from the different baths (e.g. $A_{1,\lambda}$ and $A_{2,\lambda}$) can be correlated.

% each of which corresponds to Type $i$ quantity of Bath $j$, where the index $\lambda$ denotes the scale parameter.
For simplicity, we assume that the dimension $d_\lambda > 4$ of the baths' Hilbert space $\mathcal{H}_{\mathrm{Baths}}$ is finite but depending on the scale $\lambda$.
In addition, we assume that $X_{j,\lambda}\  (j=1,2,3,4)$ and the identity $I$ are linearly independent as real vectors.
Otherwise, the relation $X_{j,\lambda}=\sum_{k\neq j} a_{k}X_{k,\lambda} + a I$ holds for a $j$ with some real numbers $a_{k}$ and $a$,
which implies that $X_{j,\lambda}$ is a redundant quantity since it is just a linear combination of the other quantities plus a constant $a$.
Thus, we assume this linear independence.
%We do not assume the commutativity for any of them.
%
%Moreover, we suppose that all the baths' scales are characterized by one dimensionless scaling parameter $\lambda$.
%
%\footnote{Strictly speaking, $\mathcal{H}_{\mathrm{Baths}}$ and $A_{B,j}^{(i)}$ should be indexed by $\lambda$ as $\mathcal{H}_{\mathrm{Baths},\lambda}$, $A_{B,j,\lambda}^{(i)}$, but we omitted for simplicity of notation here, while we refine the notation when we focus on the scale dependency.}
%Reflecting this, we assume some behavior of the observables of $\mathcal{H}_{\mathrm{Baths}}$ as we mention later.
Note that our scaling of the baths is different from the conventional one where the baths consist of many identical copies of a system.
We just assume the asymptotic extensivity of the baths' quantities with respect to this generic scale parameter $\lambda$, which can even be continuous, as we discuss in detail in the next subsection.
%%%%%as is mentioned in the next subsection.
%Note that our scaling of the bath is not necessarily conventional i.i.d. form \cite{}.
%Each $i$-th kind of quantity is exchanged among these systems through conservation of
%%%%%%%%%%%%%%%%%%%%%%%%%%%%%%%%%
%The dimension of the baths' system is assumed finite but depends on the scale $\lambda$ in general.
%%%%%%%%%%%%%%%is it possible for the system to be infinite-dimensional?????
%%%%%%%%%%%%%%%%%%%%%%%%%%
Suppose that the initial state of the baths is the generalized thermal state with the associated generalized inverse temperatures $\theta^j$ conjugate to $X_{j,\lambda}$ $(j=1,2,3,4)$.
We also denote the generalized inverse temperatures associated with $A_{i,\lambda}$ and $B_{i,\lambda}$ by $\beta_{i}$ and $\gamma_{i}$ respectively to emphasize which quantity and bath correspond to each generalized inverse temperature.
%In the following, `inverse temperature' refers to a generalized inverse temperature.
 The generalized thermal state and generalized inverse temperature are defined as follows:
  \begin{definition}[Generalized thermal state \cite{PhysRev.108.171,Guryanova:2016aa,Yunger-Halpern:2016aa}]\label{def_thermal}
  Let $\mathcal Z(\bm{\theta}):=\tr e^{-\sum_{j=1}^{4}\theta^{j}X_{j,\lambda}}$ be the generalized partition function with $\bm{\theta}=(\theta^1,\theta^2,\theta^3,\theta^4)$,
  the generalized thermal state at a generalized inverse temperature $\bm{\theta}$ is
  \begin{align}
   \tau_{\bm{\theta}}^{(\lambda)}&:=
   \frac{e^{-\sum_{j=1}^{4}\theta^{j}X_{j,\lambda}}}{\mathcal Z(\bm{\theta})}\nonumber\\
   &=\frac{e^{\sum_{i=1}^{2}(-\beta_{i} A_{i,\lambda}-\gamma_{i} B_{i,\lambda})}}{\mathcal Z(\bm{\theta})}.
 %\frac{e^{-\sum_{i=1}^{K}\sum_{j=1}^{m}\beta^{(i)}_jX_{j,\lambda}^{(i)}}}{\tr e^{-\sum_{i=1}^{K}\sum_{j=1}^{m}\beta^{(i)}_jX_{j,\lambda}^{(i)}}}.
  \end{align}
   As a function of the inverse temperature coordinate $\bm{\theta}$, we define the generalized free entropy (also known as the Massieu potential) $\phi_{\lambda}(\bm{\theta}):=\log \mathcal Z(\bm{\theta})$ of the thermal state.
  \end{definition}
  The ordinary grand canonical state is a typical example of the generalized thermal state.
  It is the thermal state of the system exchanging the particles as well as the energy with the large reservoir.
  The particle number and energy of the total system are conserved.
  In this case, observables are $X_{1,\lambda}=H_{1,\lambda}$, $X_{2,\lambda}=H_{2,\lambda}$, $X_{3,\lambda}=\mathcal N_{1,\lambda}$, and $X_{4,\lambda}=\mathcal N_{2,\lambda}$, where $H_{j,\lambda}$ and $\mathcal N_{j,\lambda}$ are the Hamiltonian and the particle number operator of Bath $j$, respectively.
 The generalized inverse temperatures are composed of the inverse temperature $\beta_i$ and the chemical potential $\mu_i$ of each Bath $i$ as
 $\theta^i=\beta_i$, $\theta^{i+2}=-\beta_{i}\mu_{i}$ $(i=1,2)$.
  In the same way as the grand canonical state, the state given by Definition \ref{def_thermal} was shown to be the thermal state of the system exchanging non-commuting charges with a large reservoir \cite{Yunger-Halpern:2016aa}.
  In this sense, we consider a small part of the large reservoir as our finite-size bath.
% Reasonableness of this state as the thermal state even for non-commutative quantities was
% shown in several ways \cite{PhysRev.108.171,Guryanova:2016aa,Yunger-Halpern:2016aa}.
%%%%%%%%%%
  %Jayens Winter Popescu

% though we just focus on the commutative case for the achievability of the bound.
%We fix the initial temperature $\beta_0=(\beta_{0j}^{(i)})_{(i,j)=(1,1)}^{(m,K)}$.
 % which is recently shown to be the equilibrium state for multiple quantum conserved quantities \cite{}.
%battery
%We consider the scale dependency of
%%%%%%%%%%%%%%%%%%%%%%%%%%%
%We consider the unitary operation $U$ on $\mathcal{H}_{\mathrm{Baths}}\otimes\mathcal{H}_C\otimes\mathcal{H}_W$ conserving $\sum_{j=1}^m X_{j,\lambda}^{(i)}+A_W^{(i)}$ for each $i$, where
%$A_W^{(i)}$ is the corresponding observable of $\mathcal{H}_W$.
%Under such conservation, the baths and battery exchanges each quantity through the cycle system $\mathcal{H}_C$ as described in Fig.~\ref{figure1}.
%The state of $\mathcal{H}_C$ itself remains unchanged, while it just works to implement the dynamics.

We regard the average value as the extracted amount of each quantity in the same way as \cite{Guryanova:2016aa,Skrzypczyk:2014aa}.
Another formulation is so-called single-shot thermodynamics \cite{Horodecki:2013aa,1506.02322}.
This formulation of deterministic work
is quite different from work extraction to macroscopic systems
at the point that its battery is a wit, which is a two level system with a predetermined energy gap.
We rather focus on non-deterministic transfer of the quantity.
%, and verify that its fluctuation is permissibly small to be deterministic in our regime, similarly as in the macroscopic statistical mechanics.

To derive the universal optimal performance,
we consider general dynamics of the generalized heat engine (Fig.~\ref{figure0gen} (a))
where the conservation law among the total system including the battery for every conserved quantity.
However, there are two kinds of conservation laws, the strict and average conservation laws \cite{Guryanova:2016aa,PhysRevX.5.021001}.
The strict conservation requires that each quantity commutes with the dynamics, while
the average conservation requires only the conservation of its average value.
It is important to distinguish them when we consider the allowed operations on the whole system $\mathcal{H}_{\mathrm{Baths}}\otimes\mathcal{H}_C\otimes\mathcal{H}_W$ in Sec.~\ref{sub_exp_bat} and \ref{sub_noncomm}.
Since the average conservation follows from the strict conservation, under both conservation laws, the average value of each quantity is exchanged between the baths and the battery through the cyclic process by the working body.
Hence, the sums $- \sum_{i=1}\Delta A_{i,\lambda}$ and $- \sum_{i=1}\Delta B_{i,\lambda}$ of the differences in the average values of Quantities $A$ and $B$ are respectively stored in the battery.
%change in average values as transfer
%\end{Def}
%, that is, we consider the average
%The schematic picture of our formulation is described in Fig.~\ref{figure1}.
%%%%%%%%%%%%%%%%%%%
%The details of the operations are introduced in Sec.~\ref{sub_imp_exp} and Sec.~\ref{sub_exp_bat}.
%%%%%%%%%%%%%%%%%
%The precise statement of these restriction on the operation is in Sec.~\ref{}.

%For each $i$-th kind, the battery $\mathcal{H}_W$ has corresponding observable $A^{(i)}$.
%For example, each kind of quantity corresponds to energy, particle number or angular momentum, etc, as we show an explicit example in the end of this subsection.
%the common characteristic scale $\lambda$, e.g. volume, length and lattice size.
%engine
%The engine part $\mathcal{H}_C$ is supposed to be cyclic under the process, that is, we use $\mathcal{H}_C$ just as a catalyst.
%%%%%%%%%%%%%%%%%%%%%%%%%%%%%%%%%%%%%%figure
%More precisely,

\subsection{Extensivity of baths}\label{sub_extensivity}
We consider the behavior of the heat engine when $\lambda$ grows large under the fixed initial inverse temperature $\bm{\theta}=\bm{\theta}_0$,
which generalizes the consideration of grand-canonical type ensemble.
%For convenience, we use the coordinate ${\bm \theta}$ composed of the inverse temperatures defined as
%$\bm{\theta}:=(\theta^1,\theta^2,\dots,\theta^{mK}):=(\beta^{(t(1))}_{b(1)},\beta^{(t(2))}_{b(2)},\dots,\beta^{(t(mK))}_{b(mK)})$, where $t(i):=\lceil \frac{i}{m}\rceil$, $b(i):=i \mod m$.
%.
The free entropy $\phi_{\lambda}$ is almost the same as the free energy, but rather more natural for dealing with multiple conserved quantities \cite{Guryanova:2016aa}.
It is the generating function of the physical quantities.
The first derivatives are the expectation value
\begin{align}
 \eta_{\lambda,j}(\bm{\theta})
 :=-\pdv{\phi_{\lambda}}{\theta^j} (\bm{\theta})
 = \tr X_{j,\lambda} \tau_{\bm{\theta}}^{(\lambda)}.
\end{align}
This still holds for non-commutative quantities.
As common in information geometry \cite{Amari:2000aa},
$\eta_{\lambda,i}(\bm{\theta})$
can be regarded as a component of the dual coordinate of $\bm{\theta}$ composed of the expectation values (see Appendix \ref{app_pyth})
\begin{align}
 \bm{\eta}_{\lambda}(\bm{\theta})&:=(\eta_{\lambda,1}(\bm{\theta}), \eta_{\lambda,2}(\bm{\theta}), \eta_{\lambda,3}(\bm{\theta}), \eta_{\lambda,4}(\bm{\theta}))\nonumber\\
 &=(\tr A_{1,\lambda}\tau_{\bm{\theta}}^{(\lambda)}, \tr A_{2,\lambda}\tau_{\bm{\theta}}^{(\lambda)}, \tr B_{1,\lambda}\tau_{\bm{\theta}}^{(\lambda)}, \tr B_{2,\lambda}\tau_{\bm{\theta}}^{(\lambda)}).\label{dual_coord}
\end{align}
The second derivatives form the Fisher information matrix composed of the canonical correlation
\begin{align}
 &J_{\lambda,ij}(\bm{\theta})\nonumber\\
 :=&\pdv{\phi_{\lambda}}{\theta^i}{\theta^j} (\bm{\theta})\nonumber\\
 =&\int_{0}^{1}ds\;\tr\left[\left(\tau_{\bm{\theta}}^{(\lambda)}\right)^{1-s}X_{i,\lambda}\left(\tau_{\bm{\theta}}^{(\lambda)}\right)^s X_{j,\lambda}\right]\nonumber\\
 &-\eta_{\lambda,i}(\bm{\theta})\eta_{\lambda,j}(\bm{\theta}).\label{canonical_cor}
\end{align}
The canonical correlation reduces to the covariance for commutative observables.
In the same way, the third derivatives correspond to the skewness.
These statistical quantities are expected to be extensive in thermodynamics.
Thus, it is natural to assume that the free entropy and its derivatives are asymptotically extensive.
%where we use geometric convention of upper index since we treat $\beta^i$ as the coordinate in the information geometric analysis.
More precisely, we impose the following:
  \begin{Ass}\label{assumption}
   There exists
   %a constant $\alpha <1$ characterizing the order of the deviation of the free entropy $\phi_{\lambda}$ from its extensivity a
   an asymptotic density $\phi(\bm{\theta})$ of the free entropy $\phi_{\lambda}(\bm{\theta})$ as a smooth function satisfying the following condition.
  %There exists a smooth function $\phi(\bm{\theta})$ corresponding to the asymptotic density of the free entropy satisfying the following asymptotic.
 As $\lambda\rightarrow\infty$, the free entropy $\phi_{\lambda}$ asymptotically satisfies
 \begin{align}
  \phi_{\lambda}(\bm{\theta})=\lambda\phi(\bm{\theta})+o(\lambda),\label{extensive_1}
 \end{align}
   uniformly on a neighborhood of $\bm{\theta}_0$.
   %   , where a constant $\alpha <1$ characterizes the order of the deviation of the free entropy from its extensivity.
   %%%%%%%%%%%%%%%%%%%%%%%%%%
  %Here, we assume that $\phi(\bm{\theta})$ is smooth, and $\alpha <1$ is a constant which reflects the boundary effects.
 Moreover, up to the third-order partial derivatives of $\phi_{\lambda}$ satisfies the similar condition uniformly on a neighborhood of $\bm{\theta}_0$:
 \begin{align}
  &\left(\pdv{\theta^{i_1}}\right)^{l_1}\left(\pdv{\theta^{i_2}}\right)^{l_2}\left(\pdv{\theta^{i_3}}\right)^{l_3}\phi_{\lambda}(\bm{\theta})\nonumber\\
  =&
  \left(\pdv{\theta^{i_1}}\right)^{l_1}\left(\pdv{\theta^{i_2}}\right)^{l_2}\left(\pdv{\theta^{i_3}}\right)^{l_3}
  \lambda\phi(\bm{\theta})
  +o(\lambda)\label{extensive_2}
 \end{align}
 for all integers $l_1,l_2,l_3$ with $0< l_1+l_2+l_3\leq 3$, $i_1,i_2,i_3\in\{1,2,3,4\}$.
 %and $j_1,j_2,j_3=1,\dots,m$
 %where $\beta^{i}$ is labeled as $\beta^{i}=\beta^{(t(i))}_{b(i)}$ $(i=1,\dots,mK)$.
 The matrix $(\pdv{\phi}{\theta^i}{\theta^{j}} (\bm{\theta}))_{ij}$ is assumed to be full rank.
\end{Ass}
%Note that  by the following information geometric structures.
This asymptotic extensivity is widely expected as long as the system is thermodynamic in large scale since the free entropy should be an extensive quantity as is the case with the free energy.
%Its derivatives, the statistical moments of the physical quantities, are also expected to be extensive.
Especially, the asymptotic extensivity of the free energy was rigorously proved for Hamiltonians with short range interaction \cite{ruelle1969rigorous}, though any similar theorem is not known for the general multiple conserved quantities.
The asymptotic extensivity \eqref{extensive_2} of the derivatives of the free entropy was not generically proved even for Hamiltonian in \cite{ruelle1969rigorous}.
However, its validity is expected for usual systems since the derivatives correspond to extensive quantities in thermodynamic limit, such as the expectation values, fluctuations, and the statistical moments of the extensive quantities \footnote{Especially, explicit assurance of the asymptotic extensivity of the third-derivatives in Assumption \ref{assumption}
is needed for our main analysis in Sec.~\ref{sec_FGCB}, where
we use it to
verify the order of residual terms in fine-grained generalized Carnot bound to be negligible.
}.
In fact, Assumption \ref{assumption} is verified for some examples in Sec.~\ref{sec_example}.
It trivially holds for the i.i.d.~scaling.
A simple example of non-i.i.d.~scaling with the asymptotic extensivity is a spin chain (Sec.~\ref{ising}).
Furthermore, it is also satisfied by an ideal gas in the container (Sec.~\ref{sub_particles}), where the volume is the scaling parameter.

In \eqref{extensive_1}, $\phi(\bm{\theta})$ stands for the asymptotic density of the free entropy in the sense that $\phi(\bm{\theta})=\lim_{\lambda\rightarrow\infty}\phi_{\lambda}(\bm{\theta})/\lambda$.
The first derivatives
$
 \eta_{i}(\bm{\theta}):= -\pdv{\phi}{\theta^i} (\bm{\theta})
 $
 and the second derivatives
 $
 g_{ij}(\bm{\theta}):=\pdv{\phi}{\theta^i}{\theta^j} (\bm{\theta})
 $
of $\phi(\bm{\theta})$ also correspond to the asymptotic densities of the expectation values and canonical correlations, respectively, as seen from
an another expressions for them:
\begin{align}
 \eta_{\lambda,i}(\bm{\theta})
 =& \lambda \eta_i(\bm{\theta}) + o(\lambda),\label{dual_coord0}\\
 J_{\lambda,ij}(\bm{\theta})
 =&\lambda g_{ij}(\bm{\theta})+o(\lambda).\label{asymp_correlation}
\end{align}
Thus, Assumption \ref{assumption} coincides with the existence of the asymptotic density of each extensive quantity, in other words.

%%%%%%%%%%%%%%%%%%%%%%%%%%%%%%%%%%%%%%%%%%%%%%%%%%%%%figure
\begin{figure}[!t]
\centering
\includegraphics[clip ,width=3.2in]{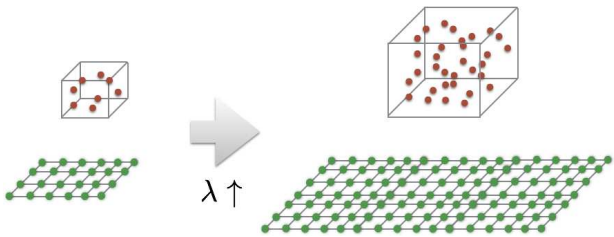}
\caption{Homothetic scaling of the baths. Even if the dimensions of the systems are different from each other, it is sufficient to put $\lambda$ as a dimensionless scaling parameter by defining the unit size of each system.}
\label{figure3}
\end{figure}
%%%%%%%%%%%%%%%%%%%%%%%%%%%%%%%%%%%%%%%%%%%%%%%%%%%%%%%%%f
%By Assumption \ref{assumption}, we implicitly define the meaning of the generic scale parameter $\lambda$ we treat.
%That is,
Here, we consider just one scaling parameter, but not as many parameters as the baths.
That is, we fix the ratio between the sizes of the baths Fig.~\ref{figure3}.
%since the free entropy $\phi_\lambda$ of all the baths asymptotically behaves as $\lambda\phi$, it is implied that all the baths are asymptotically homothetically scaled by $\lambda$ (Fig.~\ref{figure3}).
%In this way, 
Note that this scaling is applicable even if baths contain different dimensional systems or systems with different measures of their sizes by defining the unit size of each system.
For example, consider the case where one bath is a two-dimensional system and the other is of three-dimensional, whose sizes are scaled by their area $S$ and volume $V$ respectively.
Then, defining the unit area $s_0$ and volume $v_0$,
we consider the scaling $S=\lambda s_0$ and $V=\lambda v_0$ by the dimensionless parameter $\lambda$.
In this case, the difference in their dimensionality is putted on their `ratio' $v_0/s_0$ whose dimension is the length.

In order to apply our analysis to a generalized heat engine,
all we have to check is the existence of the scaling $\lambda$ satisfying this property.

%It seems that the deviation $\order{\lambda^{\alpha}}$ from the extensivity reflects the effects of the internal interactions inside the baths, and the effects caused by their boundaries.
%To evaluate
%its order $\order{\lambda^{\alpha}}$, it is enough to deal with the largest such contribution.

The last statement guarantees independence of the observables.
More precisely, the expectation values of the observables can take any combinations under sufficiently large $\lambda$,
since the Fisher information matrix $(\pdv{\phi_{\lambda}}{\theta^i}{\theta^{j}} (\bm{\theta}))_{ij}$ is the same as the Jacobian matrix of the transformation of the variable from $\bm \theta$ to the expectation value.
\section{Generalized Carnot bound for generalized work extraction with finite-size effects}\label{Sub_opt}

\subsection{Operations with implicit battery}\label{sub_imp_exp}
\begin{figure}[!t]
\centering
\includegraphics[clip ,width=3.2in]{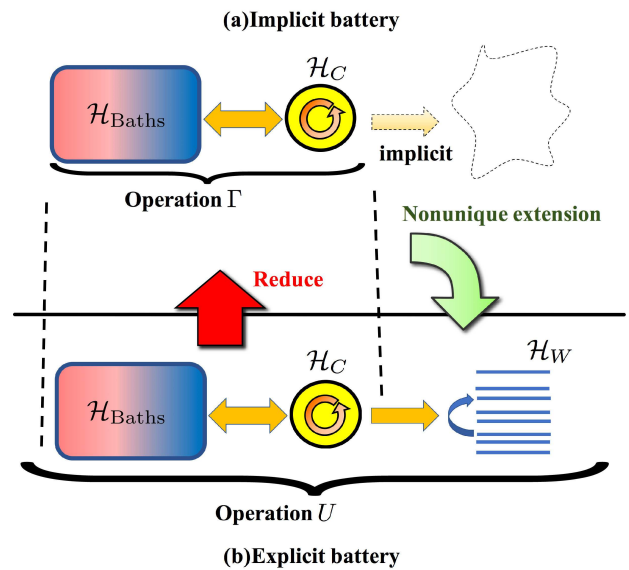}
 \caption{Schematic of (a) implicit-battery formulation and (b) explicit-battery formulation. An explicit-battery formulation is reduced to the corresponding implicit-battery formulation by tracing out $\mathcal{H}_W$, while extension from an implicit-battery formulation to some explicit-battery formulation is not unique.
}
\label{figure_impexp}
\end{figure}
There are two formulations of operations,
the implicit-battery and the explicit-battery formualtions
%which implicitly or explicitly describes the battery system
(Fig.~\ref{figure_impexp}).
%, as is pointed out also in \cite{Guryanova:2016aa}.
The former focuses on the operations only on $\mathcal{H}_{\mathrm{Baths}}\otimes \mathcal{H}_C$, so that the extracted amount of each quantity is stored in the implicitly existing battery outside of $\mathcal{H}_{\mathrm{Baths}}\otimes \mathcal{H}_C$.
The latter explicitly treats the operations on the whole system $\mathcal{H}_{\mathrm{Baths}}\otimes\mathcal{H}_C\otimes\mathcal{H}_W$ under the conditions mentioned in Sec.~\ref{Sub_setup}.
%Obviously, physically relevant formulation is the latter.
%%this expression is not so correct. Implicit operation can be also relevant, once it is checked appropriately
Macroscopic thermodynamics usually employs the implicit-battery formulation since the work is clearly defined and the functionality of the battery system is obvious.
%more non-trivial example of implicit
However,
in quantum thermodynamics,
the definition of the work-like transfer of each quantity itself is ambiguous,
and it is non-trivial to verify that there is no heat-like transfer with the battery,
even in consideration of thermodynamic limit \cite{1302.2811,Skrzypczyk:2014aa,Guryanova:2016aa}.
%to verify that it is really work
Thus, an implicit-battery operation
has no clear meaning as a thermodynamic process,
unless it is extended to an operation in an appropriate explicit-battery formulation.
Such an extension to the explicit-battery formulation is nonunique in general.
%especially for finite-size quantum thermodynamics.
On the other hand, applicability of the upper bound on the extraction becomes wider as we derive it under as weak conditions as possible.
Therefore, in our derivation of the FGCB, we focus on the implicit-battery operations on $\mathcal{H}_{\mathrm{Baths}}\otimes\mathcal{H}_C$ which satisfy appropriate necessary conditions for being extended to an explicit-battery operation.
We consider a concrete explicit formulation in Sec.~\ref{sub_exp_bat}
to construct the operation to achieve FGCB.

One way of the implicit-battery formulation is to restrict the operations to be unitary \cite{Guryanova:2016aa}.
%However, it is quite restrictive since the reduced operations on $\mathcal{H}_{\mathrm{Baths}}\otimes\mathcal{H}_C$ are CPTP-maps in general, where state transitions are not guaranteed to be described by some unitary operation.
%The reduced dynamics is not even guaranteed to be described by energy non-conserving unitary
However, this restriction does not work because the state transitions are not guaranteed to be described by some unitary operation due to the interaction with the battery \cite{PhysRevA.95.032132}.
Indeed, the reduced operations on $\mathcal{H}_{\mathrm{Baths}} \otimes \mathcal{H}_C$ are written as completely positive and trace preserving (CPTP)-maps in general.
Thus, we impose the unitalness $\Gamma(I_{\mathrm{Baths},C})=I_{\mathrm{Baths},C}$ on implicit-battery operations $\Gamma$ on $\mathcal{H}_{\mathrm{Baths}}\otimes\mathcal{H}_C$, which is equivalent to non-decreasing of von Neumann entropy.
Here, $I_{\mathrm{Baths},C}$ is the identity of $\mathcal{H}_{\mathrm{Baths}}\otimes \mathcal{H}_C$.
The unitalness is reasonable as a necessary condition
because not only non-decreasing of von Neumann entropy corresponds to adiabaticity in the macroscopic thermodynamics 
but also the unitalness is
actually derived from another reasonable condition on explicit-battery formulations.
%, where $I_{\mathrm{Baths},C}$ is the identity of $\mathcal{H}_{\mathrm{Baths}}\otimes \mathcal{H}_C$.
%It is reasonable since it , which corresponds to adiabaticity in the standard thermodynamics.
We impose that the global unitary operations in the explicit-battery formulation commute with all the translation operators on the battery (Sec.~\ref{sub_exp_bat}) as in \cite{Guryanova:2016aa,Skrzypczyk:2014aa,PhysRevA.95.032132}.
As a natural situation, we consider the case where we cannot control the initial state on the battery and observe only the translation of the battery \cite{Skrzypczyk:2014aa}.
In order that the generalized heat engine works properly, we need such translational symmetry for the battery.
%%%check32
%Such translational symmetry of the battery reflects the fact that only changes in the quantities of the battery are important, similarly to \cite{Skrzypczyk:2014aa}.
In fact, the translational symmetry of the battery implies the unitalness of the reduced dynamics \cite{Guryanova:2016aa,Skrzypczyk:2014aa,PhysRevA.95.032132}.
Hence, the unitalness is believed to be necessary in consideration of the performance of generalized heat engines.
The cyclicity $\tr_{\mathcal{H}_{\mathrm{Baths}}}\Gamma(\tau^{(\lambda)}_{\bm{\theta}}\otimes\rho_C)=\rho_C$ of the working body $\mathcal{H}_C$ is also required.
In this way, FGCB is applicable whenever the 'explicit' dynamics reduces to a unital channel on $\mathcal{H}_{\mathrm{Baths}}\otimes\mathcal{H}_C$ with the cyclicity.
Note that the cyclicity can depend on the initial state $\rho_C$ of $\mathcal{H}_C$, so that $\rho_C$ can be used as a catalyst to enlarge the class of possible operations on $\mathcal{H}_{\mathrm{Baths}}$.
%Thus, we derive the fine-grained generalized Carnot bound (\ref{FGCB}) under the `implicit' formulation
%in the interest of keeping generality.
%We restrict the implicit operation $\Gamma$ on $\mathcal{H}_{\mathrm{Baths}}\otimes\mathcal{H}_C$ to the unital CPTP map $\Gamma(I_{\mathrm{Baths},C})=I_{\mathrm{Baths},C}$, where $I_{\mathrm{Baths},C}$ is the identity of $\mathcal{H}_{\mathrm{Baths}}\otimes \mathcal{H}_C$.

In summary, we employ the following operations in the implicit-battery formulation here:
\begin{definition}[Operations in the implicit-battery formulation]\label{def_implicit}
 Allowed operations in the implicit-battery formulation are
 CPTP maps $\Gamma$ on $\mathcal{H}_{\mathrm{Baths}}\otimes\mathcal{H}_C$ which satisfies the following:
  \begin{enumerate}
  \renewcommand{\labelenumi}{A\arabic{enumi}.}
  \item Unitalness:\label{it_ic-1}
	\begin{align}
	 \Gamma(I_{\mathrm{Baths},C})=I_{\mathrm{Baths},C}.\label{ic-1}
	\end{align}
  \item Cyclicity of the engine:\label{it_ic-2}
	\begin{align}
	 \tr_{\mathcal{H}_{\mathrm{Baths}}}\Gamma(\tau^{(\lambda)}_{\bm{\theta}}\otimes\rho_C)=\rho_C.\label{ic-2}
	\end{align}
 \end{enumerate}
\end{definition}

\subsection{Second law and the generalized Carnot bound in thermodynamic limit}\label{sec_GCB}
Since operations are given as unital CPTP maps on $\mathcal{H}_{\mathrm{Baths}}\otimes\mathcal{H}_C$ with the cyclicity,
we have
\begin{align}
 S(\rho_{\mathrm{Baths}}')+S(\rho_C)&\geq S(\Gamma(\tau^{(\lambda)}_{\bm{\theta}}\otimes\rho_C))\nonumber\\
 &\geq S(\tau^{(\lambda)}_{\bm{\theta}}\otimes\rho_C)
 =S(\tau^{(\lambda)}_{\bm{\theta}})+S(\rho_C)
\end{align}
from the subadditivity of the von Neumann entropy.
Therefore, the von Neumann entropy $S(\rho_{\mathrm{Baths}}')$ of
the final state $\rho_{\mathrm{Baths}}':=\tr_{\mathcal{H}_C} \Gamma(\tau_{\bm{\theta}_0}^{\lambda}\otimes\rho_C)$ of the bath system satisfies
\begin{align}
 S(\rho_{\mathrm{Baths}}')\geq S(\tau_{\bm{\theta}_0}^{\lambda}).\label{entropy_inc1}
\end{align}
%entropy non-decreasing property $S(\rho)\geq S(\tau^{(\lambda)}_{\beta_0})$ implies that
Thus, the relation $\Delta S:=S(\rho_{\mathrm{Baths}}')- S(\tau_{\bm{\theta}_0}^{\lambda})=\sum_{i=1}^{2}(\beta_{i}\Delta A_{i,\lambda}+\gamma_{i}\Delta B_{i,\lambda})-D(\rho_{\mathrm{Baths}}'\| \tau^{(\lambda)}_{\bm{\theta}_0})$ yields
the following second law of thermodynamics \cite{Guryanova:2016aa}:
\begin{align}
 \sum_{i=1}^{2}(\beta_{i}\Delta A_{i,\lambda}+\gamma_{i}\Delta B_{i,\lambda})\geq D(\rho_{\mathrm{Baths}}'\| \tau^{(\lambda)}_{\bm{\theta}_0}),\label{2nd_law}
\end{align}
where $\Delta A_{i,\lambda}:=\tr A_{i,\lambda}(\rho_{\mathrm{Baths}}'-\tau_{\bm{\theta}_0}^{(\lambda)})$ and $\Delta B_{i,\lambda}:=\tr B_{i,\lambda}(\rho_{\mathrm{Baths}}'-\tau_{\bm{\theta}_0}^{(\lambda)})$ are the amounts of difference of $A_{i,\lambda}$ and $B_{i,\lambda}$ respectively, and $D(\rho\| \sigma):= \tr \rho(\log\rho-\log \sigma)$ is the relative entropy between states $\rho$ and $\sigma$.

As pointed out in \cite{Guryanova:2016aa}, the relation (\ref{2nd_law}) implies the trade-off relation between the amounts of extraction of quantities, instead of a constraint for each single quantity.
Now, as a natural extension of the formulation of the ordinary Carnot bound, we
formulate the generalized Carnot bound (GCB) for the extraction of Quantity $A$, without loss of generality.
%Then, let $-\Delta A_{2,\lambda}$ and $-\Delta B_{i,\lambda}$ $(i=1,2)$ be given as $\Delta Q_A$ and $\Delta Q_{B,i}$ respectively.
We call $\Delta Q_{A,i}:= -\Delta A_{i,\lambda}$ and $\Delta Q_{B,i}:=-\Delta B_{i,\lambda}$ the generalized heat.
The extraction of $A$ is defined as $\Delta W_A :=\Delta Q_{A,1} + \Delta Q_{A,2}$ in the implicit-battery formulation.
This definition is based on the conservation of the average value of the quantity.
Though there are the strict and average conservation laws on the whole system in the explicit-battery formulation,
the conservation of the average values is satisfied for both cases as mentioned in Sec.~\ref{Sub_setup}.
%%check11
Thus, both the strict and average conservation laws meet this definition of the work extraction in the implicit-battery formulation.
The sum
$\Delta B_{1,\lambda}+\Delta B_{2,\lambda}$ of the differences in quantity $B$ does not have to vanish.
The amount $\Delta Q_{B,1}+\Delta Q_{B,2}$ is stored as the gain or lose of the average value of quantity $B$ of the battery in the same way, which may be regarded as the extraction of the other quantity or a `buffer' to extract Quantity $A$.
%Especially, we consider the strict conservation for commutative quantities

%=-\Delta A_{1,\lambda}+\Delta Q_A$.
Then, the relations (\ref{2nd_law}) and $D(\rho_{\mathrm{Baths}}'\| \tau^{(\lambda)}_{\bm{\theta}_0})\geq 0$ imply the following GCB:
\begin{align}
 \Delta W_A\leq \left(1-\frac{\beta_{2}}{\beta_{1}}\right)\Delta Q_{A,2}
 -\sum_{i=1}^{2}\frac{\gamma_{i}}{\beta_{1}}\Delta Q_{B,i}\label{GCB}
\end{align}
in response to the generalized heat $\Delta Q_{A,2}$ and $\Delta Q_{B,i}$,
where we set $\beta_{1}>0$.
In the following, we just focus on $\beta_{1}>0$ regime.
When $\beta_{1}<0$ is true, the opposite inequality holds.
As with the ordinary Carnot bound, this GCB is given only by generalized inverse temperatures.
%Since $D(\rho_{\mathrm{Baths}}'\| \tau^{(\lambda)}_{\bm{\theta}_0})$ corresponds to the difference in the free entropy, it can converge to $0$ in the thermodynamic limit, hence the equality in (\ref{GCB}) can be achieved in the limit, as is shown in \cite{Guryanova:2016aa}.
As mentioned in Remark \ref{remark1}, this bound does not include $\Delta Q_{A,1}$ since it is rather constrained if the other generalized heats are given.

The equality in (\ref{GCB}) is achieved if and only if $\Delta S$ and $D(\rho_{\mathrm{Baths}}'\| \tau^{(\lambda)}_{\bm{\theta}_0})$ vanish simultaneously.
In the thermodynamic limit with i.i.d.~baths, an achievable protocol was shown for commutative quantities and non-commutative quantities \cite{Guryanova:2016aa}.
However, this is possible only in the thermodynamic limit.
When finite-size effects are taken into account, $D(\rho_{\mathrm{Baths}}'\| \tau^{(\lambda)}_{\bm{\theta}_0})$ cannot vanish.
Hence, with finite size baths, GCB is never achieved.
To derive a tight bound with finite-size effects,
we have to consider how we can make $D(\rho_{\mathrm{Baths}}'\| \tau^{(\lambda)}_{\bm{\theta}_0})$ small under the scale $\lambda$ as in \cite{PhysRevE.96.012128}.
We derive the fine-grained GCB in the next subsection.
%Then, we derive tight constraint in asymptotic sense for scaling.
%Mathematically rigorous verification of the statements in this section is in Appendix \ref{}.

\subsection{Fine-grained generalized Carnot bound}\label{sec_FGCB}
We fix the generalized heat $\Delta Q_{A,2,\lambda}$ of Quantity $A$ from Bath $2$ and $\Delta Q_{B,i,\lambda}$ of quantity $B$ from Bath $i$ taking their scale dependence in account.
We focus on the regime with
\begin{align}
 \Delta Q_{A,2,\lambda}=o(\lambda), \Delta Q_{B,i,\lambda}=o(\lambda)\; (i=1,2).\label{heat_scale}
\end{align}
These relations reflect the fact that the system $\mathcal{H}_{\mathrm{Baths}}$ is used as just like baths
in the sense that the `final inverse temperature' converges to the initial one in thermodynamic limit,
because
the order of the generalized heats
$\Delta Q_{A,2,\lambda}$, $\Delta Q_{B,i,\lambda}$
become smaller than
the order $\order{\lambda}$ of $\pdv{\eta_{\lambda,2}}{\beta_2} (\bm{\theta}_0)$ and $\pdv{\eta_{\lambda,i+2}}{\gamma_i} (\bm{\theta}_0)$, which correspond to the respective `heat capacities'.
Note that since the resultant state is not necessarily a generalized thermal state,
the final inverse temperature is not necessarily well-defined.
%so that it does not necessarily have generalized inverse temperature.
For a generic state $\rho$, we assign 
the `effective' inverse temperature $\tilde{\bm{\theta}}_{\lambda}(\rho)$,
which is defined as the generalized inverse temperature of the thermal state
%Here, we consider the final inverse temperature as the `effective' inverse temperature $\tilde{\bm{\theta}}_{\lambda}(\rho)$ defined for an arbitrary state $\rho$ as the generalized inverse temperature of the thermal state
$\tau_{\tilde{\bm{\theta}}_{\lambda}(\rho)}^{(\lambda)}$ sharing the same expectation values:
\begin{align}
 \tr A_{i,\lambda}\tau_{\tilde{\bm{\theta}}_{\lambda}(\rho)}^{(\lambda)}
 =&\tr A_{i,\lambda}\rho\\
 \tr B_{i,\lambda}\tau_{\tilde{\bm{\theta}}_{\lambda}(\rho)}^{(\lambda)}
 =&\tr B_{i,\lambda}\rho \quad (i=1,2).
\end{align}
Oppositely, if \eqref{heat_scale} does not hold, the generalized inverse temperatures of the baths change even in thermodynamic limit, so that GCB is not achievable.
This is quite different from what the bath is in thermodynamics.

%More precisely,
%for any states not necessarily thermal, the effective inverse temperature is assigned %
%Especially,
%any final state with 
%converges to ${\bm \theta}_0$ in this regime.
Then, the following fine-grained GCB holds:
\begin{theorem}[Fine-grained generalized Carnot bound (FGCB)]\label{Thm_FGCB}
 Let the generalized heats $\Delta Q_{A,2,\lambda}$,$\Delta Q_{B,1,\lambda}$ and $ \Delta Q_{B,2,\lambda}$ satisfy (\ref{heat_scale}).
 Then, we have
 \begin{align}
 \Delta W_A
 \leq & \left(1-\frac{\beta_{2}}{\beta_{1}}\right)\Delta Q_{A,2,\lambda}
 -\sum_{i=1}^{2}\frac{\gamma_{i}}{\beta_{1}}\Delta Q_{B,i,\lambda}\nonumber\\
  &
  -C_{AA}\frac{\Delta Q_{A,2,\lambda}^2}{\lambda}
  -\sum_{i=1}^{2}C_{AB}^{i}\frac{\Delta Q_{A,2,\lambda}\Delta Q_{B,i,\lambda}}{\lambda}\nonumber\\
  &-\sum_{i,j=1}^2C_{BB}^{i,j}\frac{\Delta Q_{B,i,\lambda}\Delta Q_{B,j,\lambda}}{\lambda}\nonumber\\
  &+o\left(\frac{\|\vb{Q}_{\lambda}\|^2}{\lambda}\right)\nonumber\\
  =:&\Delta W^{\mathrm{opt}}_{A,\lambda}(\vb{Q}_{\lambda})+o\left(\frac{\|\vb{Q}_{\lambda}\|^2}{\lambda}\right),\label{FGCB}
\end{align}
 where we define $\vb{Q}_{\lambda}:=(\Delta Q_{A,2,\lambda},\Delta Q_{B,1,\lambda}, \Delta Q_{B,2,\lambda})$ and its norm $\|\vb{Q}_{\lambda}\|^2:=\beta_0^2\Delta Q_{A,2,\lambda}^2+\gamma_0^2\Delta Q_{B,1,\lambda}^2+\gamma_0^2\Delta Q_{B,2,\lambda}^2$ with the unit generalized inverse temperatures $\beta_0$ and $\gamma_0$ to adjust the physical dimension, and the second order coefficients are given as follows:
  \begin{align}
   C_{AA}
   =&\frac{1}{2\beta_1}\left[
   g^{22}(\bm{\theta}_0)
   +
   \left(\frac{\beta_2}{\beta_1}\right)^2 g^{11}(\bm{\theta}_0)
   -2\frac{\beta_{2}}{\beta_{1}}g^{12}(\bm{\theta}_0)
 \right],\label{Caa}\\
   C_{AB}^{i}
   =&\frac{1}{\beta_1}\left[
   g^{2(i+2)}(\bm{\theta}_0)
   +\frac{\beta_{2}\gamma_{i}}{\beta_{1}^2}g^{11}(\bm{\theta}_0)
   \right.\nonumber\\
   &\left.-\frac{\beta_{2}}{\beta_{1}}g^{1(i+2)}(\bm{\theta}_0)
   -\frac{\gamma_{i}}{\beta_{1}}g^{12}(\bm{\theta}_0)
 \right],\label{Cab}\\
   C_{BB}^{ij}
   =&\frac{1}{2\beta_1}\left[
   g^{(i+2)(j+2)}(\bm{\theta}_0)
   +\frac{\gamma_{i}\gamma_{j}}{\beta_{1}^2}g^{11}(\bm{\theta}_0)
   \right.\nonumber\\
   &\left.
   -\frac{\gamma_{j}}{\beta_{1}}g^{1(i+2)}(\bm{\theta}_0)
   -\frac{\gamma_{i}}{\beta_{1}}g^{1(j+2)}(\bm{\theta}_0)
  \right],\label{Cbb}
 \end{align}
% \begin{align}
%   C_{AA}
%  =&\frac{1}{2\beta_1}\left[
%   \left(\frac{\beta_{2}}{\beta_{1}}\right)^2 g^{11}(\bm{\theta}_0)
%   +g^{22}(\bm{\theta}_0)
%   -2\frac{\beta_{2}}{\beta_{1}}g^{12}(\bm{\theta}_0)
% \right],\label{Caa}\\
%   C_{AB}^{i}
%  =&\frac{1}{2\beta_1}\left[
%   \frac{\beta_{2}}{\beta_1}\frac{\gamma_{i}}{\beta_{1}}g^{11}(\bm{\theta}_0)
%   +\frac{g^{2(i+2)}(\bm{\theta}_0)}{\beta_{1}}\right.\nonumber\\
%   &\left.-\frac{g^{1(i+2)}(\bm{\theta}_0)\beta_{2}}{\beta_{1}^2}
%   -\frac{g^{12}(\bm{\theta}_0)\gamma_{i}}{\beta_{1}^2}
% \right],\label{Cab}\\
%   C_{BB}^{ij}
%  =&\frac{1}{2}\left[
%   \frac{g^{11}(\bm{\theta}_0)\gamma_{i}\gamma_{j}}{(\beta_{1})^3}
%   +\frac{g^{(i+2)(j+2)}(\bm{\theta}_0)}{\beta_{1}}\right.\nonumber\\
%   &\left.-\frac{g^{1(i+2)}(\bm{\theta}_0)\gamma_{j}}{\beta_{1}^2}
%   -\frac{g^{1(j+2)}(\bm{\theta}_0)\gamma_{i}}{\beta_{1}^2}
%  \right],\label{Cbb}
% \end{align}
 where $(g^{ij}(\bm{\theta}_0))_{ij}$ is the inverse matrix of the asymptotic density of the canonical correlations $(g_{ij}(\bm{\theta}_0))_{ij}$ defined by (\ref{canonical_cor}), (\ref{asymp_correlation})
 \footnote{Physical dimension of \eqref{FGCB} is consistent since each generalized inverse temperature has inverse dimension of its conjugate quantity.}.
\end{theorem}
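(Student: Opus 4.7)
The idea is to sharpen the derivation of GCB \eqref{GCB} by replacing the crude lower bound $D(\rho_{\mathrm{Baths}}'\|\tau^{(\lambda)}_{\bm\theta_0})\ge 0$ with a quadratic lower bound in the generalized heats $\vb{Q}_\lambda$, and then feed this into the same rearrangement of the second law \eqref{2nd_law}. Write $\rho':=\rho_{\mathrm{Baths}}'$ and $\tilde{\bm\theta}:=\tilde{\bm\theta}_\lambda(\rho')$. Since $\{\tau^{(\lambda)}_{\bm\theta}\}$ is an exponential family and $\tau^{(\lambda)}_{\tilde{\bm\theta}}$ matches the expectations of all $X_{j,\lambda}$ with $\rho'$ by construction, one verifies the Pythagorean identity
\begin{equation*}
D(\rho'\|\tau^{(\lambda)}_{\bm\theta_0}) = D(\rho'\|\tau^{(\lambda)}_{\tilde{\bm\theta}})+D(\tau^{(\lambda)}_{\tilde{\bm\theta}}\|\tau^{(\lambda)}_{\bm\theta_0})
\end{equation*}
by expanding both relative entropies using $\log\tau^{(\lambda)}_{\bm\theta}=-\sum_j\theta^jX_{j,\lambda}-\phi_\lambda(\bm\theta)$ and using that the $X_{j,\lambda}$-averages of $\rho'$ and $\tau^{(\lambda)}_{\tilde{\bm\theta}}$ coincide. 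Dropping the first nonnegative summand reduces the problem to a thermal-vs-thermal relative entropy.

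Next I would Taylor expand $D(\tau^{(\lambda)}_{\tilde{\bm\theta}}\|\tau^{(\lambda)}_{\bm\theta_0})$ around $\bm\theta_0$ in $\delta^j:=\tilde\theta^j-\theta_0^j$. A standard calculation gives
\begin{equation*}
D(\tau^{(\lambda)}_{\tilde{\bm\theta}}\|\tau^{(\lambda)}_{\bm\theta_0}) = \tfrac{1}{2}\sum_{ij}J_{\lambda,ij}(\bm\theta_0)\delta^i\delta^j+R_\lambda(\delta),
\end{equation*}
where $R_\lambda$ is a cubic remainder. By Assumption \ref{assumption}, $J_{\lambda,ij}=\lambda g_{ij}(\bm\theta_0)+o(\lambda)$, and the third-order $\partial^3\phi_\lambda$ is $O(\lambda)$ uniformly near $\bm\theta_0$. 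The matching of expectations combined with the Taylor expansion of $\eta_{\lambda,j}$ reads $\Delta X_{j,\lambda}=-\sum_kJ_{\lambda,jk}(\bm\theta_0)\delta^k+O(\lambda\|\delta\|^2)$, which by the full-rank hypothesis inverts to
\begin{equation*}
\delta^k=-\tfrac{1}{\lambda}\sum_jg^{jk}(\bm\theta_0)\Delta X_{j,\lambda}+o(\|\vb{Q}_\lambda\|/\lambda).
\end{equation*}
Using $\Delta Q_{j,\lambda}=-\Delta X_{j,\lambda}$ and \eqref{heat_scale}, this gives $\|\delta\|=O(\|\vb{Q}_\lambda\|/\lambda)=o(1)$ and $R_\lambda=O(\lambda\|\delta\|^3)=o(\|\vb{Q}_\lambda\|^2/\lambda)$, so substitution yields
\begin{equation*}
D(\rho'\|\tau^{(\lambda)}_{\bm\theta_0})\ge \tfrac{1}{2\lambda}\sum_{kl}g^{kl}(\bm\theta_0)\Delta Q_{k,\lambda}\Delta Q_{l,\lambda}+o\!\left(\|\vb{Q}_\lambda\|^2/\lambda\right).
\end{equation*}

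Plugging this into \eqref{2nd_law} rearranged as $-\beta_1\Delta Q_{A,1,\lambda}\ge\beta_2\Delta Q_{A,2,\lambda}+\gamma_1\Delta Q_{B,1,\lambda}+\gamma_2\Delta Q_{B,2,\lambda}+D(\rho'\|\tau^{(\lambda)}_{\bm\theta_0})$, I would then maximize $\Delta W_A=\Delta Q_{A,1,\lambda}+\Delta Q_{A,2,\lambda}$ over the free variable $\Delta Q_{A,1,\lambda}$ (recall Remark \ref{remark1}). Since the constraint is quadratic in $\Delta Q_{A,1,\lambda}$ with the quadratic coefficient of order $1/\lambda$, a perturbative solve gives $\Delta Q_{A,1,\lambda}^{(0)}=-(\beta_2\Delta Q_{A,2,\lambda}+\gamma_1\Delta Q_{B,1,\lambda}+\gamma_2\Delta Q_{B,2,\lambda})/\beta_1$ at zeroth order, reproducing GCB, and substituting $\Delta Q_{A,1,\lambda}^{(0)}$ into the quadratic form $\tfrac{1}{2\beta_1\lambda}\sum_{kl}g^{kl}(\bm\theta_0)\Delta Q_{k,\lambda}\Delta Q_{l,\lambda}$ yields, after collecting coefficients of $\Delta Q_{A,2,\lambda}^2$, $\Delta Q_{A,2,\lambda}\Delta Q_{B,i,\lambda}$ and $\Delta Q_{B,i,\lambda}\Delta Q_{B,j,\lambda}$, exactly the constants \eqref{Caa}--\eqref{Cbb}. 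The main technical obstacle is not the perturbative optimization itself but verifying that all remainders stay uniformly $o(\|\vb{Q}_\lambda\|^2/\lambda)$; this is the precise purpose of the uniform third-derivative clause in Assumption \ref{assumption}, so I would devote care to its application in bounding both $R_\lambda$ and the inversion error for $\delta$.
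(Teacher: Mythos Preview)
Your proposal is correct and is essentially the paper's own argument, recast in the Legendre-dual coordinate. The paper Taylor-expands the entropy $S_\lambda(\bm\eta)$ directly in the expectation-value coordinate $\bm\eta$ (whose Hessian is $-J_\lambda^{ij}$), invokes the max-entropy property of $\tau^{(\lambda)}_{\tilde{\bm\theta}}$, and then perturbatively solves the constraint $S(\tau^{(\lambda)}_{\tilde{\bm\theta}})-S(\tau^{(\lambda)}_{\bm\theta_0})\ge 0$ for $\Delta\eta_{\lambda,1}$; via the identity $S(\tau^{(\lambda)}_{\tilde{\bm\theta}})-S(\tau^{(\lambda)}_{\bm\theta_0})=\sum_j\theta_0^j\Delta\eta_{\lambda,j}-D(\tau^{(\lambda)}_{\tilde{\bm\theta}}\|\tau^{(\lambda)}_{\bm\theta_0})$ this is exactly your Pythagorean step plus $\theta$-expansion plus Jacobian inversion, and the final perturbative solve is identical.
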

%In the following until Sec. \ref{Sub_achievability}, we employ $\beta_0=\gamma_0=1$ unit, for simplicity.
The quantity $\Delta W^{\mathrm{opt}}_{A,\lambda}(\vb{Q}_{\lambda})$ gives an upper bound on the extraction of Quantity $A$ including the finite-size effects of $\frac{\|\vb{Q}_{\lambda}\|^2}{\lambda}$-order with the generalized heat $\vb{Q}_{\lambda}$.
Note that
$\Delta W_A$ may not be proper `work-like' transfer of Quantity $A$, but rather possibly includes `heat-like' transfer.
Nevertheless, since any proper work-like transfer, namely $\Delta W_A'$, is included in the total transfer $\Delta W_A$,
we have $\Delta W_A'\leq \Delta W_A$.
Thus, FGCB \eqref{FGCB} is still true upper bound even for proper work-like extraction of Quantity $A$.
The achievability of the bound is more delicate in this sense.
In Sec.~\ref{Sub_achievability}, including the battery system explicitly, we carefully construct an operation to achieve FGCB by avoiding hidden extra reservoir inside the battery.

The three terms of the second order $\frac{\|\vb{Q}_{\lambda}\|^2}{\lambda}$ in (\ref{FGCB}) express the finite-size effects, which are indeed always negative, so that FGCB does not exceed GCB.
Remember that $g_{ij}(\bm{\theta}_0)$ is the asymptotic density of the Fisher information, and the elements of Fisher information are the canonical correlations of the baths' quantities reflecting their non-commutativity.
Thus, the second order terms reflect the effects of the fluctuation and correlation of the baths through $g^{ij}(\bm{\theta}_0)$.
Therefore, the Fisher information,
which is finer structure than just the temperatures of the baths is relevant in FGCB, differently from GCB (\ref{GCB}) and the ordinary Carnot bound.
Especially, correlations between the different baths are also taken into account.
This result implies that we should consider the correlations of the conserved quantities of the baths to design better engine with finite-size baths.

Let us examine how to obtain better performance of generalized heat engines through interpreting the coefficients of the finite-size effect.
FGCB is a direct consequence of the entropy increasing law due to the unitalness of the dynamics, as with GCB.
As will be shown in the proof of Theorem \ref{Thm_FGCB} in later, for FGCB the second order term
\begin{align}
 -\frac{1}{2}\sum_{i,j=1}^4 g^{ij}(\bm{\theta}_0)\frac{\Delta X_{i,\lambda}\Delta X_{j,\lambda}}{\lambda}\label{ent22}
\end{align}
 in the entropy change is taken into account, where $\Delta X_{i,\lambda}$ is the variation in the expectation value of $X_{i,\lambda}$.
The optimal performance is given when the entropy change vanishes.
Since this negative definite term \eqref{ent22} should be canceled, degradation of the optimal performance is caused.
That is, the optimal performance approaches GCB as the baths get closer to the ideal baths in the sense that their state is unchanged through the operation.
Conversely, variation of the state of the baths causes degradation of the optimal performance.
In fact, the second order term \eqref{ent22} is expressed by the variation $\Delta \bm{\theta}$ in the inverse temperature as
 \begin{align}
  &-\frac{1}{2}\sum_{i,j=1}^4 g^{ij}(\bm{\theta}_0)\frac{\Delta X_{i,\lambda}\Delta X_{j,\lambda}}{\lambda}\nonumber\\
 =&-\frac{1}{2}\sum_{i=1}^4 \Delta \theta^i \Delta X_{i,\lambda} + o\left(\frac{\Delta X_{i,\lambda}\Delta X_{j,\lambda}}{\lambda}\right)
 \end{align}
 because
 \begin{align}
  \Delta \theta^i = \sum_{j=1}^{4} g^{ij}(\bm{\theta}_0)\frac{\Delta X_{j,\lambda}}{\lambda}
  + o\left(\frac{\Delta X_{j,\lambda}}{\lambda}\right).
 \end{align}
 Hence, the finite-size effect in FGCB reflects the linear response of the inverse temperature to the variation in $X_{j,\lambda}$ up to the order $\Delta X_{j,\lambda}/\lambda$.
 Indeed, the coefficients (\ref{Caa})-(\ref{Cbb}) are given in terms of the coefficient matrix $g^{ij}(\bm{\theta}_0)$.
The coefficients (\ref{Caa})-(\ref{Cbb})
give the effect of the response of the inverse temperature on the optimal performance in a concrete form.
From the above perspective, the smaller the response becomes, the better performance is achieved.
Note that $g^{ij}$ depends not only on the inverse temperature but also other parameters $\vb{x}$ in general (see examples in Sec.~\ref{sec_example}) as $g^{ij}(\bm{\theta}; \vb{x})$.
Especially, when $(g^{ij}(\bm{\theta}^{(1)}_0;\vb{x}_1))\leq (g^{ij}(\bm{\theta}^{(2)}_0;\vb{x}_2))$ holds as the matrix inequality, $(g^{ij}(\bm{\theta}^{(1)}_0;\vb{x}_1))$ gives better performance.

% not only temperatures but also the Fisher information
\begin{proof}[Proof of Theorem \ref{Thm_FGCB}.]
The von Neumann entropy of the thermal states can be seen as a function of the expectation values by the following Legendre transformation:
\begin{align}
 S_{\lambda}(\bm{\eta})
 :=\min_{\tilde{\bm{\theta}}}\left[\sum_{i=1}^{4}\tilde{\theta}^i\eta_{i}+\phi_{\lambda}(\tilde{\bm{\theta}})\right].
\end{align}
 In this expression, $S_\lambda$ is a function of the variable $\bm{\eta}=(\eta_1,\eta_2,\eta_3,\eta_4)$.
 For an inverse temperature $\bm{\theta}$, the function $S_{\lambda}$ is actually related with the von Neumann entropy of the thermal state $\tau^{(\lambda)}_{\bm{\theta}}$ by the relation
 \begin{align}
  S_{\lambda}(\bm{\eta}_{\lambda}(\bm{\theta}))=S(\tau^{(\lambda)}_{\bm{\theta}}),
 \end{align}
 where $\bm{\eta}_{\lambda}(\bm{\theta})$ is the dual coordinate of $\bm{\theta}$ composed of the expectation values defined by \eqref{dual_coord}.
 In this sense, the variable $\bm{\eta}$ expresses the expectation values.
 %check
 If the final state $\rho_{\mathrm{Baths}}'$ of the baths satisfies
 $\Delta A_{i,\lambda}=\tr A_{i,\lambda}(\rho_{\mathrm{Baths}}'-\tau_{\bm{\theta}_0}^{(\lambda)})=o(\lambda)$ and $\Delta B_{i,\lambda}=\tr B_{i,\lambda}(\rho_{\mathrm{Baths}}'-\tau_{\bm{\theta}_0}^{(\lambda)})=o(\lambda)$,
 the effective inverse temperature $\bm{\theta}':=\tilde{\bm{\theta}}_{\lambda} (\rho_{\mathrm{Baths}}')$ of $\rho_{\mathrm{Baths}}'$ exists for sufficiently large $\lambda$.
 %revise the expression so that effective inverse temprature is used effectively
 That is, $\bm{\theta}'$ satisfies
 \begin{align}
  \Delta \eta_{\lambda,i}&:= \eta_{\lambda,i}(\bm{\theta}')-\eta_{\lambda,i}(\bm{\theta}_0)=\Delta A_{i,\lambda}, \nonumber\\
  \Delta \eta_{\lambda,i+2}&:=\eta_{\lambda,i+2}(\bm{\theta}')-\eta_{\lambda,i+2}(\bm{\theta}_0)=\Delta B_{i,\lambda}\quad (i=1,2).
 \end{align}
 Because the thermal state has the maximum entropy among the states with the same expectation values \cite{PhysRev.106.620,PhysRev.108.171,Guryanova:2016aa,Yunger-Halpern:2016aa}:
  \begin{align}
  S(\tau^{(\lambda)}_{\bm{\theta}})=\max_{\rho}\{S(\rho)|\tilde{\bm{\theta}}_{\lambda}(\rho)=\bm{\theta}\},
  \end{align}
the Taylor expansion of $S_{\lambda}$ around $\bm{\eta}_{\lambda}(\bm{\theta}_0)$ yields
\begin{align}
 &S(\tau^{(\lambda)}_{\bm{\theta}'})-S(\tau^{(\lambda)}_{\bm{\theta}_0})\nonumber\\
 =&S_{\lambda}(\bm{\eta}_{\lambda}(\bm{\theta}'))-S_{\lambda}(\bm{\eta}_{\lambda}(\bm{\theta}_0))\nonumber\\
 =&\sum_{i=1}^{4}\theta_0^{i}\Delta \eta_{\lambda,i}-\frac{1}{2}\sum_{i,j=1}^4 J_{\lambda}^{ij}(\bm{\theta}_0)\Delta \eta_{\lambda,i}\Delta \eta_{\lambda,j}
 +\order{\frac{\|\bm{\Delta \eta}_{\lambda}\|^3}{\lambda^2}}\nonumber\\
 =&\sum_{i=1}^{4}\theta_0^{i}\Delta \eta_{\lambda,i}
 -\frac{1}{2}\sum_{i,j=1}^4 g^{ij}(\bm{\theta}_0)\frac{\Delta \eta_{\lambda,i}\Delta \eta_{\lambda,j}}{\lambda}
 +o\left(\frac{\|\bm{\Delta \eta}_{\lambda}\|^2}{\lambda}\right)\nonumber\\
 \geq &
 S(\rho_{\mathrm{Baths}}')-S(\tau^{(\lambda)}_{\bm{\theta}_0})\geq 0,
 \label{region}
\end{align}
 %by using Assumption \ref{assumption}
 where
 %, where $\|\Delta A_{\lambda}\|$ is the $2$-norm $\sqrt{\sum_{i=0}^{m}(\Delta A_{i,\lambda})^2}$ of the vector $\Delta A_{\lambda}:=(\Delta A_{i,\lambda})_{i=0}^{m}$.
 $J_{\lambda}^{ij}(\bm{\theta}_0)$ is the $(i,j)$-element of the inverse matrices of $(J_{\lambda,ij}(\bm{\theta}_0))$,
 and $\|\bm{\Delta \eta}_{\lambda}\|=\sqrt{\sum_{i=1}^{4}(\Delta \eta_{\lambda,i})^2}$.
 The last inequality follows from the increasing of the entropy \eqref{entropy_inc1}.
 We used the relation $\pdv{S_{\lambda}}{\eta_i} (\bm{\eta}_{\lambda}(\bm{\theta}))=\theta^i$ to evaluate the coefficients of the Taylor expansion.
 We carried out the estimation of the third order derivatives based on Assumption \ref{assumption}
 in the derivation of the order of the residual term
 in the second equality.
The third equality follows from the estimation $J_{\lambda}^{ij}(\bm{\theta}_0)=\lambda^{-1}g^{ij}(\bm{\theta}_0)+o(\lambda^{-1})$, which holds uniformly on the neighborhood of the initial temperature.
%This is the general constraint on the generalized work extraction with the size effect for $\lambda$, which is valid regardless of commutativity of the observables.
%tight second law
%The second and later terms are contribution from $\Delta F^{(\lambda)}_{\beta_0}$.
 %Furthermore, we obtain the generalization of the Carnot efficiency bound with size effect by solving (\ref{region}) asymptotically with respect to $\Delta A_{0,\lambda}$ without loss of generality as follows:
 When $\Delta \eta_{\lambda,2}=\Delta A_{2,\lambda}=-\Delta Q_{A,2,\lambda}, \Delta \eta_{\lambda, i+2}=\Delta B_{i,\lambda}=-\Delta Q_{B,i,\lambda}$ $(i=1,2)$ are given,
 by solving the equantion $S_{\lambda}(\bm{\eta}_{\lambda}(\bm{\theta}'))-S_{\lambda}(\bm{\eta}_{\lambda}(\bm{\theta}_0))=0$ asymptotically with respect to $\Delta \eta_{\lambda,1}$, we obtain
 an upper bound for the possible value of $\Delta A_{1,\lambda}=\Delta \eta_{\lambda,1}$ as:
\begin{align}
 &-\beta_{1}\Delta A_{1,\lambda}\nonumber\\
 \leq & -\beta_{2}\Delta Q_{A,2,\lambda}
 -\sum_{i=1}^{2}\gamma_{i}\Delta Q_{B,i,\lambda}\nonumber\\
  &
  -\beta_{1}C_{AA}\frac{\Delta Q_{A,2,\lambda}^2}{\lambda}
  -\beta_{1}\sum_{i=1}^{2}C_{AB}^{i}\frac{\Delta Q_{A,2,\lambda}\Delta Q_{B,i,\lambda}}{\lambda}\nonumber\\
  &-\beta_{1}\sum_{i,j=1}^2C_{BB}^{i,j}\frac{\Delta Q_{B,i,\lambda}\Delta Q_{B,j,\lambda}}{\lambda}+o\left(\frac{\|\vb{Q}_{\lambda}\|^2}{\lambda}\right).
 \label{carnot}
\end{align}
% for \eqref{region} to be satisfied in the asymptotic sense.
%valid for the regime, lower or upper, solve perturbatively
%where $P$ is the projection onto the later than $0$-th elements such that $P\Delta A_{\lambda}=(\Delta A_{1,\lambda},\Delta A_{2,\lambda},\dots,\Delta A_{m,\lambda})$.
%Note that (\ref{carnot}) gives the upper bound on $-\Delta A_{\lambda,0}$ for $\beta_0^0\geq 0$, while it gives the lower bound for $\beta_0^0 < 0$.
%To see that (\ref{carnot}) is essentially generalized Carnot efficiency bound, let $A_{0,\lambda},\dots,A_{m_0,\lambda}$ be included in the same kind of the observables. Let $\beta_0^0\geq 0$, while $\beta_0^0<0$ case is similar except that the lower bound is derived. We focus on what amount of the work extraction $\Delta W=-\sum_{i=0}^{m_0}\Delta A_{i,\lambda}$ of this kind of quantity
%is possible when $\Delta A_{i,\lambda}\;(i=1,2,\dots,m)$ are given.
 Then,
 substituting (\ref{carnot}) to $\Delta W_A=-\Delta A_{1,\lambda}+\Delta Q_{A,2,\lambda}$,
 we obtain (\ref{FGCB}).
\end{proof}
%The first two terms correspond to the generalized Carnot type bound when thermodynamic limit is taken.
%Indeed, for the original bound, the upper bound on the work is given as $\Delta W \leq (1-\frac{\beta_H}{\beta_L})\Delta Q_H$ for given heat $\Delta Q_H$ from the hot bath.
%The later terms of order $O(\frac{\|P\Delta A_{\lambda}\|^2}{\lambda})$ is the first correction term reflecting finite-size effect.
%Remarkably, all these terms are described in terms of not only temperatures but also inverse of the canonical correlations, which reflect the effect of fluctuation.
%Thus, different from the thermodynamic limit regime, the structure of correlation and fluctuation of the baths other than temperatures is relevant for finite-size effect.
%Similar results were reported for i.i.d. case \cite{}, and for extremely small and large baths \cite{}.

%:=(\Delta Q_{A,2,\lambda}, \Delta Q_{B,\lambda,1}, \Delta Q_{B,\lambda, 2})
We define the ideal final inverse temperature $\bm{\theta}_\lambda=(\beta_{\lambda 1}, \beta_{\lambda 2}, \gamma_{\lambda 1}, \gamma_{\lambda 2})$ associated with
a vector $\vb{Q}_{\lambda}$ of the generalized heat as
\begin{align}
 S(\tau^{(\lambda)}_{\bm{\theta}_{\lambda}})&=S(\tau^{(\lambda)}_{\bm{\theta}_0})\label{tl_1}\\
 \beta_{\lambda 1}\beta_{1}&\geq 0\label{tl_2}\\
 \tr A_{2,\lambda}(\tau_{\bm{\theta}_0}^{(\lambda)}-\tau^{(\lambda)}_{\bm{\theta}_{\lambda}})&=\Delta Q_{A,2,\lambda}\label{tl_3}\\
 \tr B_{i,\lambda}(\tau_{\bm{\theta}_0}^{(\lambda)}-\tau^{(\lambda)}_{\bm{\theta}_{\lambda}})&=\Delta Q_{B,i,\lambda} \quad (i=1,2).\label{tl_4}
\end{align}
The equality in (\ref{FGCB}) is formally attained by the thermal state $\tau^{(\lambda)}_{\bm{\theta}_{\lambda}}$ at the ideal final inverse temperature $\bm{\theta}_\lambda=(\beta_{\lambda 1}, \beta_{\lambda 2}, \gamma_{\lambda 1}, \gamma_{\lambda 2})$ associated with
$\vb{Q}_{\lambda}$.
% $\eta_{\lambda,i}(\bm{\theta}_{\lambda})=\eta_{\lambda,i}(\bm{\theta}_0)+\Delta A^*_{i,\lambda}$.% for any given $\Delta A^*_{i,\lambda}$ $(i\neq 0)$ with $\Delta A^*_{i,\lambda}=o(\lambda)$.
%We denote also $\eta_{\lambda,0}(\beta_{\lambda})-\eta_{\lambda,0}(\bm{\theta}_0)$ by $\Delta A^*_{0,\lambda}$, which indeed satisfies
%\begin{align}
% &\Delta A^*_{0,\lambda}\nonumber\\
% =&-\frac{1}{\beta_0^0}\langle P\beta_0,P\Delta A_{\lambda}^*\rangle +\frac{1}{2\beta_0^0}\sum_{i,j=1}^m J_{\lambda}^{ij}(\bm{\theta}_0)\Delta A_{i,\lambda}^*\Delta A_{j,\lambda}^*
% +\frac{1}{2(\beta_0^0)^3}J^{00}_{\lambda}\langle P\beta_0,P\Delta A_{\lambda}^*\rangle^2
% \nonumber\\
% &+\frac{\langle P\beta_0,P\Delta A_{\lambda}^*\rangle}{(\beta_0^0)^2}\sum_{j=1}^{m}J^{0j}_{\lambda}\Delta A_{j,\lambda}^*
% +\order{\frac{\|P\Delta A_{\lambda}^*\|^3}{\lambda^2}}\nonumber\\ 
%=&O(\|P\Delta A_{\lambda}^*\|)=o(\lambda). 
%\end{align}
%check
However, this state $\tau^{(\lambda)}_{\bm{\theta}_{\lambda}}$ is not necessarily achievable from $\tau^{(\lambda)}_{\bm{\theta}_0}$ by operations.
In Sec.~\ref{Sub_achievability}, for commutative observables, we show that FGCB is achievable in the asymptotic sense by constructing the operation which maps $\tau^{(\lambda)}_{\bm{\theta}_0}$ close to $\tau^{(\lambda)}_{\bm{\theta}_{\lambda}}$
instead of exactly to $\tau^{(\lambda)}_{\bm{\theta}_{\lambda}}$.
%Thus, reflecting no cheating condition, we restrict $V$ to increasing the entropy of the baths, which is equivalent to that the local dynamics $\mathcal E:\tau_B\otimes\rho_C\mapsto\rho_{\mathrm{Baths}}'\otimes\rho_C$ is a unital CP-map.
%

\section{Achievability of FGCB by explicit construction of the protocol}\label{Sub_achievability}
%non-commutative case with average conservation may appear in appendix
In this section, we focus on the achievability of FGCB in `physical sense'.
That is, as mentioned in Sec.~\ref{sub_imp_exp},
%(it is not enough to construct an implicit-battery operation)
%to achieve the FGCB without loss of physical relevance
%because hidden heat-like transfer is possibly included in the extracted amount.
%Instead, we construct an operation with an explicit battery under appropriate conditions
%to show the achievability of FGCB.
it is not enough to construct an implicit-battery operation to achieve the FGCB even though it satisfies the unitalness
because the unitalness is only the necessarily condition for the existence of an operation in the explicit-formulation that has no hidden heat-like transfer in the extracted amount.

To verify that our protocol achieves FGCB, we assume a stronger extensivity than Assumption \ref{assumption}.
\begin{Ass}\label{assump2}
 The order of the deviation from the extensivity is sufficiently small so that there exists an $\alpha<\frac{1}{2}$ such that
  \begin{align}
   \phi_{\lambda}(\bm{\theta})&=\lambda\phi(\bm{\theta})+\order{\lambda^{\alpha}},\label{str_extensive_1}
  \end{align}
  \begin{align}
   &\left(\pdv{\theta^{i_1}}\right)^{l_1}\left(\pdv{\theta^{i_2}}\right)^{l_2}\left(\pdv{\theta^{i_3}}\right)^{l_3}\phi_{\lambda}(\bm{\theta})\nonumber\\
  =&
  \left(\pdv{\theta^{i_1}}\right)^{l_1}\left(\pdv{\theta^{i_2}}\right)^{l_2}\left(\pdv{\theta^{i_3}}\right)^{l_3}
  \lambda\phi(\bm{\theta})
  +\order{\lambda^{\alpha}}\label{str_extensive_2}
 \end{align}
 hold instead of \eqref{extensive_1} and \eqref{extensive_2}, where $0< l_1+l_2+l_3\leq 3$, $i_1,i_2,i_3\in\{1,2,3,4\}$.
 In addition, the matrix norms $\|A_{i,\lambda}\|, \|B_{i,\lambda}\|$ $(i=1,2)$ are of order $\order{\lambda}$:
 \begin{align}
  \|A_{i,\lambda}\|=\order{\lambda},\quad
  \|B_{i,\lambda}\|=\order{\lambda}.
  \label{mat_norm_0}
 \end{align}
\end{Ass}

At first, we just focus on the commutative quantities in Sec.~\ref{sub_cons_imp}-\ref{sub_ach_FGCB}.
We explicitly construct a protocol to achieve the equality in FGCB
in the asymptotic sense up to $o\left(\frac{\|\vb{Q}_{\lambda}\|^2}{\lambda}\right)$ under the strict conservation law.
The idea is to make the final state close to the thermal state with the ideal final inverse temperature $\bm{\theta}_{\lambda}$.
Finally in Sec. \ref{sub_noncomm}, we extend the construction to the case of non-commutative quantities under the average conservation law.
%, i.e. the maximal extraction of a quantity in the sense of the Carnot bound.

\subsection{Construction of the implicit-battery operation}\label{sub_cons_imp}
%To begin with,
%%%%%%%%%%%%%%%%%%%%%%%%%%%%%%%%%%%%%%%%%%%%%%%%%%%%%figure
\begin{figure}[!t]
\centering
\includegraphics[clip ,width=3.3in]{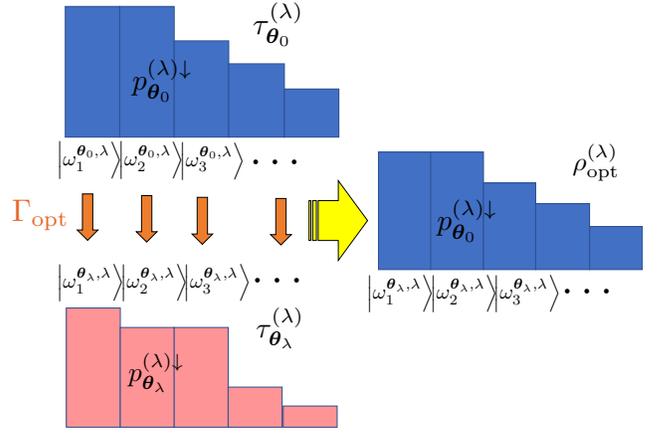}
 \caption{Schematic picture of the implicit operation $\Gamma_{\rm opt}$.
 An ordering of the eigenstates $\ket{\omega_1^{\vb*{\theta}_{0},\lambda}},\ket{\omega_2^{\vb*{\theta}_{0},\lambda}},\dots$ is such that
 $\ket{\omega_i^{\vb*{\theta}_{0},\lambda}}$ is mixed at the $i$-th largest probability
 $p^{(\lambda)}_{\bm{\theta}_0,\lambda}(i)$ in the initial state $\tau^{(\lambda)}_{\bm{\theta}_0,\lambda}$.
 Another ordering $\ket{\omega_1^{\vb*{\theta}_{\lambda},\lambda}},\ket{\omega_2^{\vb*{\theta}_{\lambda},\lambda}},\dots$ is such that
 $\ket{\omega_i^{\vb*{\theta}_{\lambda},\lambda}}$ is mixed at the $i$-th largest probability
 $p^{(\lambda)}_{\bm{\theta}_\lambda}(i)$ in the thermal state $\tau^{(\lambda)}_{\bm{\theta}_{\lambda}}$ at the the ideal final inverse temperature $\vb*{\theta}_{\lambda}$.
 $\Gamma_{\rm{opt}}$ maps each $\ket{\omega_i^{\vb*{\theta}_{0},\lambda}}$ to $\ket{\omega_i^{\vb*{\theta}_{\lambda},\lambda}}$.
 In the resultant state $\rho_{\rm opt}^{(\lambda)}$, $\ket{\omega_i^{\vb*{\theta}_{\lambda},\lambda}}$ is mixed at the probability $p_{\bm{\theta}_0}^{(\lambda)}(i)$ which was initially assigned to $\ket{\omega_i^{\vb*{\theta}_{0},\lambda}}$.}
\label{figure5}
\end{figure}
%%%%%%%%%%%%%%%%%%%%%%%%%%%%%%%%%%%%%%%%%%%%%%%%%%%%%%%%%f
To begin with, we construct an operation in the implicit-battery formulation to achieve the equality in FGCB for the case of commutative quantities.
For this purpose, we choose the simultaneous eigenstates $\ket{\omega}$ to diagonalize
$A_{1,\lambda},A_{2,\lambda}, B_{1,\lambda}, B_{2,\lambda}$,
so that
the respective eigenvalues $a_{i,\lambda}(\omega)$ and $b_{i,\lambda}(\omega)$ $(i=1,2)$ of $A_{i,\lambda}$ and $B_{i,\lambda}$
are labeled by $\omega$.
In our operation, we use the ordering of 
the simultaneous eigenstates $\ket{\omega}$
that depends on the inverse temperature coordinate $\bm{\theta}$ and 
the scale $\lambda$.
Given $\bm{\theta}$ and $\lambda$,
we diagonalize $ \tau_{\bm{\theta}}^{(\lambda)}$
as \begin{align}
 \tau_{\bm{\theta}}^{(\lambda)}
 =:\sum_{i\in\mathbb N_{d_{\lambda}}}
 p^{(\lambda)}_{\bm{\theta}}(i)\ketbra{\omega_i^{\vb*{\theta},\lambda}}{\omega_i^{\vb*{\theta},\lambda}}, \label{diag1}
   \end{align}
   where $\mathbb N_{d_{\lambda}}:=\{1,2,\dots,d_{\lambda}\}$.
In the equation \eqref{diag1},
we define the probability distribution $ p^{(\lambda)}_{\bm{\theta}}$
composed of the eigenvalues of $\tau_{\bm{\theta}}^{(\lambda)}$
in descending order
$p^{(\lambda)}_{\bm{\theta}}(1)\geq p^{(\lambda)}_{\bm{\theta}}(2)\geq \dots$
, and
%assign the positive integers $i$ to
accordingly label
the simultaneous eigenstates $\ket{\omega}$
by defining the state $\ket{\omega_i^{\vb*{\theta},\lambda}}$.
Although the ordering of the eigenstates is not unique because of the degeneracy, such multiplicity is totally irrelevant for our analysis.
Thus, it is sufficient to arbitrarily choose an ordering for the eigenstates with the same eigenvalues.
%We abbreviate $\ket{a_{1,\lambda}(\omega),a_{2,\lambda}(\omega), b_{1,\lambda}(\omega), b_{2,\lambda}(\omega)}$ as $\ket{\omega}$.
Given generalized heat amounts $\vb{Q}_{\lambda}=(\Delta Q_{A,2,\lambda},\Delta Q_{B,1,\lambda},\Delta Q_{B,2,\lambda})$,
we have defined the ideal final inverse temperature $\bm{\theta}_{\lambda}$ by the conditions (\ref{tl_1})-(\ref{tl_4}).
Since the respective $i$-th largest eigenvalues
$p^{(\lambda)}_{\bm{\theta}_0}(i)$
and
%corresponds the different eigenstate from that of
%has an ordering different from
$p^{(\lambda)}_{\bm{\theta}_{\lambda}}(i)$
of $\tau_{\bm{\theta}_0}^{(\lambda)}$ and $\tau_{\bm{\theta}_\lambda}^{(\lambda)}$
correspond to the different eigenstates from each other
in general,
the two states $\ket{\omega_i^{\vb*{\theta}_0,\lambda}}$ and $\ket{\omega_i^{\vb*{\theta}_{\lambda},\lambda}}$ are different.
Then,
we consider a unital CPTP map on $\mathcal{H}_{\mathrm{Baths}}$ which maps each eigenstate $\ket{\omega_i^{\vb*{\theta}_0,\lambda}}$ to $\ket{\omega_i^{\vb*{\theta}_{\lambda},\lambda}}$.
That is, we employ an operation $\Gamma_{\rm opt}^{\vb{Q}_{\lambda}}$ to transform
the initial state $\tau^{(\lambda)}_{\bm{\theta}_0}$ to the final state
\begin{align}
 \rho_{\rm opt}^{(\lambda)}:=&\sum_{i}\ket{\omega_i^{\vb*{\theta}_{\lambda},\lambda}}\ev{\tau^{(\lambda)}_{\bm{\theta}_0}}{\omega_i^{\vb*{\theta}_0,\lambda}}\bra{\omega_i^{\vb*{\theta}_{\lambda},\lambda}}
 \nonumber\\
 =&\sum_{i}
 p^{(\lambda)}_{\bm{\theta}_0}(i)\ketbra{\omega_i^{\vb*{\theta}_{\lambda},\lambda}}{\omega_i^{\vb*{\theta}_{\lambda},\lambda}},
\end{align}
i.e.
\begin{align}
 \Gamma_{\rm opt}^{\vb{Q}_{\lambda}}(\tau^{(\lambda)}_{\bm{\theta}_0})=\rho_{\rm opt}^{(\lambda)}.\label{GM1}
\end{align}
In the final state $\rho_{\rm opt}^{(\lambda)}$, the $i$-th largest probability $p_{\bm{\theta}_0}^{(\lambda)}(i)$
is assigned to
the eigenstate $\ket{\omega_i^{\vb*{\theta}_{\lambda},\lambda}}$ instead of the original eigenstate $\ket{\omega_i^{\vb*{\theta}_{0},\lambda}}$.
Since the operation $\Gamma_{\rm opt}^{\vb{Q}_{\lambda}}$ exchanges the eigenstates,
it satisfies the unital condition.
Therefore,
$\Gamma_{\rm opt}^{\vb{Q}_{\lambda}}\otimes \text{id}_C$ is an implicit-battery operation on $\mathcal{H}_{\mathrm{Baths}}\otimes\mathcal{H}_C$
and satisfies the unitalness. 
The cyclicity is also trivially satisfied.
Especially, we do not use any catalytic effects of $\mathcal{H}_C$ in this operation.
Fig.~\ref{figure5} is a schematic picture of $\Gamma_{\rm opt}^{\vb{Q}_{\lambda}}$.

In Sec.~\ref{sub_ach_FGCB}, we show that this final state $\rho_{\mathrm{opt}}^{(\lambda)}$ achieves the FGCB.
Although such a CPTP map $\Gamma_{\mathrm{opt}}^{\vb{Q}_{\lambda}}$ is not unique,
the proof of the achievability relies only on the final state $\rho_{\rm opt}^{(\lambda)}$.
%Since not all of $\Gamma_{\mathrm{opt}}^{\vb{Q}_{\lambda}}$ can be extended to an explicit-battery operation \cite{PhysRevA.95.032132},
%%CP1
However, since an implicit-battery operation is not necessarily extended to an explicit-battery operation,
before showing the achievability of FGCB, we have to
construct
an explicit-battery unitary operation to be reduced to a CPTP map
$\Gamma_{\mathrm{opt}}^{\vb{Q}_{\lambda}}$ satisfying \eqref{GM1}.
Then, under an explicit battery given in Sec.~\ref{sub_exp_bat}, we construct such a unitary operation in Sec.~\ref{sub_cons_exp}.

\subsection{Explicit battery}\label{sub_exp_bat}
To show the tightness of FGCB, we should construct a unitary operation on the whole system in the appropriate 'explicit' formulation as mentioned in Sec.~\ref{sub_imp_exp}.
To do so, we fix an explicit formulation by choosing an appropriate battery system $\mathcal{H}_W$ and reasonable constraints on the operations with explicit batteries as follows.

%We treat the battery system with discrete spectrum.
As in \cite{Guryanova:2016aa},
we assume that the battery system ${\mathcal H}_W$ is $\mathcal{H}_{W_a}\otimes\mathcal{H}_{W_b}=L^2(\mathbb{R})^{\otimes 2}$, where the components $\mathcal{H}_{W_a}=L^2(\mathbb{R})$ and $\mathcal{H}_{W_b}=L^2(\mathbb{R})$ of the tensor product correspond to the degree of freedom for Quantities $A$ and $B$, respectively.
Let the respective battery observables $A_W$ and $B_W$ of Quantities $A$ and $B$ be given as
$A_W=c_a \hat{x}_a$, $B_W=c_b\hat{x}_b$, where
$c_a$ and $c_b$ are the constants, $\hat{x}_a$ and $\hat{x}_b$ are the independent position operators.
We can also construct the battery system with discrete spectrum in the same way as \cite{Aberg:2014aa,PhysRevA.95.032132}.
%For continuous spectrum, if we set $A_W$ and $B_W$ as the 'weight position' the same as \cite{Guryanova:2016aa}\cite{Skrzypczyk:2014aa},
%we can similarly proceed the following analysis.
Note that such a bit unphysical doubly infinite spectrum of the battery is an idealization to focus on the theoretical limit to the performance of the engine,
%That is, we focus on the optimal performance with extremely large and symmetric battery to remove the effects of the
which is similar to that we do not care about the length of the string suspending the weight in thermodynamics.

%Then, we restrict the operations of the whole system $\mathcal{H}_{\mathrm{Baths}}\otimes\mathcal{H}_C\otimes\mathcal{H}_W$ to the unitary operations $U$ on $\mathcal{H}_{\mathrm{Baths}}\otimes\mathcal{H}_C\otimes\mathcal{H}_W$
%which satisfies:
To show that the FGCB is really achieved by properly work-like transportation of the quantity,
it is not enough to just impose
the conditions A\ref{it_ic-1}, A\ref{it_ic-2} on the operations on $\mathcal{H}_{\mathrm{Baths}}\otimes\mathcal{H}_C$ under which FGCB is verified.
Stronger conditions are needed on the dynamics of the whole system $\mathcal{H}_{\mathrm{Baths}}\otimes\mathcal{H}_C\otimes\mathcal{H}_W$ with the explicit battery $\mathcal{H}_W$ fixed above.
As reasonable constraints for our explicit-battery formulation,
we consider the following conditions~B\ref{item_strict_cons}-B\ref{item_noc2}
on a unitary operation $U$
on $\mathcal{H}_{\mathrm{Baths}}\otimes\mathcal{H}_C\otimes\mathcal{H}_W$
to be allowed as a dynamics of the generalized heat engine.
%per one cycle
In other words,
if a unitary $U$ satisfies the following conditions B\ref{item_strict_cons}-B\ref{item_noc2},
there exists a generalized heat engine to implement $U$ per unit cycle whose output work is
\begin{align}
 \Delta W_A=\tr A_W U \rho_0 U^{\dagger}- \tr A_W \rho_0 \label{workdef}
\end{align}
for the initial state $\rho_0=\tau_{\bm{\theta}_0}^{(\lambda)}\otimes \rho_C \otimes \rho_W$ of the total system.
%, in consideration of physical relevance.
In fact, as we mention in the last paragraph of this subsection,
such an allowed operation $U$ reduces to an operation on $\mathcal{H}_{\mathrm{Baths}}\otimes\mathcal{H}_C$ satisfying
the constraints A\ref{it_ic-1}, A\ref{it_ic-2} on the implicit-battery operation (Definition \ref{def_implicit}).
%where conservation law holds for each $i$ between the sum of all observables belong to $\mathcal A_i$ and corresponding observable $A_{W}^{(i)}$ of $\mathcal{H}_W$, where $\mathcal{H}_C$ just works to implement the process and unchanged through the process.
%More precisely,

 \begin{enumerate}
  \renewcommand{\labelenumi}{B\arabic{enumi}.}
 \item Strict conservation law:
      \begin{align}
	\hspace{6mm}\left[\sum_{j=1}^2A_{j,\lambda}+A_{W},U\right]=\left[\sum_{j=1}^2B_{j,\lambda}+B_{W},U\right]=0,\label{strict_conserve}
      \end{align}
       where $[O_1,O_2]$ denotes the commutator $O_1O_2-O_2O_1$ of two operators $O_1$ and $O_2$.
       \label{item_strict_cons}\\
 \item Cyclicity of the engine:\\
       There exists a state $\rho_C$ of $\mathcal{H}_C$ such that
       \begin{align} 
  \tr_{\mathcal{H}_{\mathrm{Baths}}\otimes\mathcal{H}_W}U(\tau^{(\lambda)}_{\bm{\theta}_0}\otimes\rho_C\otimes\rho_W)U^{\dagger}=\rho_C\label{catalytic}
       \end{align}
       holds for an arbitrary initial state $\rho_W$ of the battery.\label{item_cycle}\\
 \item Independence of the initial state of the battery (`no-cheating condition 1'):
      \begin{align}
       &\tr_{\mathcal{H}_W} U(\tau^{(\lambda)}_{\bm{\theta}_0}\otimes\rho_C\otimes\rho_{W,1})U^{\dagger}\nonumber\\
       =&\tr_{\mathcal{H}_W} U(\tau^{(\lambda)}_{\bm{\theta}_0}\otimes\rho_C\otimes\rho_{W,2})U^{\dagger}\label{noc_1}
      \end{align}
       for any states $\rho_{W,1},\rho_{W,2}$ of $\mathcal{H}_W$.\label{it_noc1}
 \item Translational symmetry (`no-cheating condition 2'):
       \begin{align}
	[\Delta_A^{\epsilon}, U]=[\Delta_B^{\epsilon}, U]=0, \label{no_cheating}
       \end{align}
       where we define
       the translation operators of $A_W$ as
\begin{align}
\Delta_A^{\epsilon}:=\exp (-i\epsilon \hat{p}_a)\label{eq-26}
\end{align}
 by the momentum operator $\hat{p}_a$ conjugate to $\hat{x}_a$.
       The translation operator $\Delta_B^{\epsilon}$ of $B_W$ is similarly defined.
       \label{item_noc2}
 \end{enumerate}
Unitarity is required to prohibit using any resource outside of $\mathcal{H}_{\mathrm{Baths}}\otimes\mathcal{H}_C\otimes\mathcal{H}_W$.
In addition to the conservation laws and cyclicity of the engine, we demand the no-cheating condition as in
\cite{1302.2811,Skrzypczyk:2014aa,Guryanova:2016aa}.
Independence of the initial state of the battery (Condition B\ref{it_noc1}) is to prevent ourselves from cheatingly using the battery as other than a battery, e.g. as like a `cold reservoir'.
That is, we guarantee that there is no hidden heat-like transfer of each quantity with the battery itself,
which is non-trivial to verify in quantum thermodynamics.
Indeed, if there is such a heat-like transfer, it must depend on the state of the battery.
As Condition B\ref{item_noc2}, translational symmetry is individually imposed since it is not shown to automatically follow from Condition B\ref{it_noc1}, and vice versa.
%%%%check23
Indeed, the translational symmetry of the battery is needed because it guarantees that the generalized heat engine works properly
even when we cannot control the initial state on the battery and can observe only the translation of the battery \cite{Skrzypczyk:2014aa}.
The relevance of this requirement can be found by considering the typical case where the battery is given as the `height' of the weight.
%The translational symmetry of the battery reflects the fact that only displacements in the eigenvalues of the battery's observables, which correspond to the `height' of the weight, are important \cite{Skrzypczyk:2014aa}.
%%%%%%%%
%This is needed to guarantee that only displacements caused by the operations are relevant.
Such extreme symmetries of the battery are sufficient to remove undesired effects from the battery.

Furthermore, the reduced dynamics $\Gamma(\tau^{(\lambda)}_{\bm{\theta}_0}\otimes\rho_C)=\tr_{\mathcal{H}_W} U(\tau^{(\lambda)}_{\bm{\theta}_0}\otimes\rho_C\otimes\rho_W)U^{\dagger}$ on $\mathcal{H}_{\mathrm{Baths}}\otimes\mathcal{H}_C$ is unital for arbitrary $\rho_W$ \cite{Guryanova:2016aa,Skrzypczyk:2014aa}.
Thus, an operation in this explicit-battery formulation indeed reduces to an implicit-battery operation defined in Definition \ref{def_implicit}.
Hence, for showing the achievability of FGCB, it is enough to construct an operation to achieve it under these constraints.
In the next subsection, we construct a global unitary operation to achieve FGCB under these conditions.

\subsection{Construction of the `explicit' operation}\label{sub_cons_exp}  
  Now, we construct a global unitary operation which is reduced to
  an operation $\Gamma_{\mathrm{opt}}^{\vb{Q}_{\lambda}}$ satisfying \eqref{GM1}.
  Using the translation operators $\Delta_A^{\epsilon}$, $\Delta_B^{\epsilon}$, and the state $\ket{\omega_i^{\vb*{\theta},\lambda}}$
defined in \eqref {eq-26}, and \eqref{diag1} respectively,
we define the unitary operator $U_{\rm opt}^{(\lambda)}(\vb{Q}_{\lambda})$ on $\mathcal{H}_{\mathrm{Baths}}\otimes\mathcal{H}_{W}$ depending on $\vb{Q}_{\lambda}=(\Delta Q_{A,2,\lambda},\Delta Q_{B,1,\lambda}, \Delta Q_{B,2,\lambda})$ as follows:
\begin{align}
 &U_{\rm opt}^{(\lambda)}(\vb{Q}_{\lambda})\nonumber\\
 :=&\sum_{i}
 \ketbra{\omega_i^{\vb*{\theta}_{\lambda},\lambda}}{\omega_i^{\vb*{\theta}_0,\lambda}}\nonumber\\
 &\otimes
 \Delta_{A}^{c_a^{-1}(a_{1,\lambda}(\omega_i^{\vb*{\theta}_{0},\lambda})+a_{2,\lambda}(\omega_i^{\vb*{\theta}_{0},\lambda})-a_{1,\lambda}(\omega_i^{\vb*{\theta}_{\lambda},\lambda})-a_{2,\lambda}(\omega_i^{\vb*{\theta}_{\lambda},\lambda}))}\nonumber\\
 &\otimes
 \Delta_{B}^{c_b^{-1}(b_{1,\lambda}(\omega_i^{\vb*{\theta}_{0},\lambda})+b_{2,\lambda}(\omega_i^{\vb*{\theta}_{0},\lambda})-b_{1,\lambda}(\omega_i^{\vb*{\theta}_{\lambda},\lambda})-b_{2,\lambda}(\omega_i^{\vb*{\theta}_{\lambda},\lambda}))}.
\end{align}
Note that $\vb{Q}_{\lambda}$ does not have the meaning of generalized heat
at this moment.
That is, the amount of the generalized heat of this protocol $U_{\rm opt}^{(\lambda)}(\vb{Q}_{\lambda})$ has not guaranteed to be $\vb{Q}_{\lambda}$.
Instead, $\vb{Q}_{\lambda}$ should be regarded just as a variable, though
it will turn out that it indeed asymptotically corresponds to the generalized heat of this protocol.

%Note that $\bm{\theta}_{\lambda}$ depends on $\vb{Q}_{\lambda}$.
%$U_{\rm opt}^{(\lambda)}(\vb{Q}_{\lambda})$ is well defined unitary operation by the condition 2. and 3., and satisfies the conservation law (\ref{strict_conserve}).
For an arbitrary fixed initial state $\rho_W$ of the battery,
we define the reduced dynamics $\Gamma_{\mathrm{opt}}^{\vb{Q}_{\lambda}}$
as an implicit-battery protocol
by $\Gamma_{\mathrm{opt}}^{\vb{Q}_{\lambda}}(\rho)= \tr_{\mathcal{H}_W} U_{\mathrm{opt}}^{(\lambda)}(\vb{Q}_{\lambda})(\rho \otimes\rho_W )U_{\mathrm{opt}}^{(\lambda)}(\vb{Q}_{\lambda})^{\dagger}$.
The condition $\Gamma_{\mathrm{opt}}^{\vb{Q}_{\lambda}}(\tau_{\bm{\theta}_0}^{(\lambda)})=\rho_{\rm opt}^{(\lambda)}$ is satisfied regardless of the state $\rho_W$.
Thus, once the unitary operator $U_{\mathrm{opt}}^{(\lambda)}(\vb{Q}_{\lambda})$ satisfies the conditions \eqref{strict_conserve}-\eqref{no_cheating},
we find that the reduced dynamics $\Gamma_{\mathrm{opt}}^{\vb{Q}_{\lambda}}$ is the desired implicit-battery operation
satisfying the property given in Sec.~\ref{sub_cons_imp}.

In fact,
the unitary operator $U_{\rm opt}^{(\lambda)}(\vb{Q}_{\lambda})$ satisfies the strict conservation laws (\ref{strict_conserve})
since
%it is a natural extension of our
%implicit-battery protocol to incorporate the conservation
%laws since
the battery part of the operation
absorbs
%describes the transition of each quantity of the battery which coincides with
the transition of the corresponding quantity
of the baths.
No-cheating condition (\ref{noc_1}), (\ref{no_cheating}) is also easily verified.
The global unitary operation on $\mathcal{H}_{\mathrm{Baths}}\otimes\mathcal{H}_W\otimes\mathcal{H}_C$ is $U_{\rm opt}^{(\lambda)}(\vb{Q}_{\lambda})\otimes I_C$, which obviously satisfies the cyclicity (\ref{catalytic}).
%Further, we do not use any catalytic power of $\mathcal{H}_C$ in this operation.
Therefore, this global unitary operation satisfies all the conditions of the explicit formulation.
Thus, the final state $\rho_{\mathrm{opt}}^{(\lambda)}$ is verified to be attained by an allowed operation.
%it is sufficient to show that $\rho_{\rm opt}^{(\lambda)}:=\mathcal E_{\rm opt}(\tau_{\bm{\theta}_0}^{(\lambda)})$ asymptotically achieves the equality of (\ref{carnot}).

In the next subsection, we show that this final state $\rho_{\mathrm{opt}}^{(\lambda)}$ really achieves the equality in FGCB in the asymptotic sense.

\subsection{Achievement of the equality in FGCB}\label{sub_ach_FGCB}
%To show the tightness of the FGCB, we
Our goal is to show that our constructed protocol $U_{\rm{opt}}^{(\lambda)}(\vb{Q}_{\lambda})\otimes I_C$ achieves the maximum extraction $\Delta W^{\mathrm{opt}}_{A,\lambda}(\vb{Q}_{\lambda})$ except for $o\left(\frac{\|\vb{Q}_{\lambda}\|^2}{\lambda}\right)$ order of error terms.
The work output $\Delta W_A$ of a generalized heat engine implementing an allowed unitary operation $U$ is defined by
\eqref{workdef}
with the initial state $\rho_0=\tau_{\bm{\theta}_0}^{(\lambda)}\otimes \rho_C \otimes \rho_W$.
%For any allowed unitary operation $U$,
The strict conservation law B\ref{item_strict_cons} of Quantity $A$ implies the conservation of the sum of the average values of them:
\begin{align}
 \tr (A_{1,\lambda}+A_{2,\lambda} + A_W)U\rho_0 U^{\dagger}
 =\tr (A_{1,\lambda}+A_{2,\lambda} + A_W)\rho_0.
\end{align}
Hence, if the final state of the baths is $\rho_{\mathrm{Baths}}'$,
the work output is given by
\begin{align}
 \Delta W_A= \tr (A_{1,\lambda}+A_{2,\lambda})(\tau_{\bm{\theta}_0}^{(\lambda)}-\rho_{\mathrm{Baths}}').
\end{align}
We denote the work $\Delta W_A$ with the final state $\rho_{\mathrm{Baths}}'$ of the baths by $\Delta W_A(\rho_{\mathrm{Baths}}')$.
When $f(\lambda)/g(\lambda)\rightarrow 0$, we write $f(\lambda)\ll g(\lambda)$.
Then, the statement of the achievability of FGCB is summarized in the following theorem:
  \begin{theorem}\label{achieve_thm}
 Let $A_{i,\lambda}$ and $B_{i,\lambda}$ $(i=1,2)$ be mutually commutative.
 We assume that Assumption \ref{assump2} is satisfied.
  For any
  $\vb{Q}_{\lambda}=(\Delta Q_{A,2,\lambda},\Delta Q_{B,1,\lambda}, \Delta Q_{B,2,\lambda})$,
  there exists a generalized heat engine implementing $U_{\rm{opt}}^{(\lambda)}(\vb{Q}_{\lambda})\otimes I_C$ in the sense of the explicit-battery formulation B\ref{item_strict_cons}-B\ref{item_noc2}.
  %The final state $\tr_W U_{\mathrm{opt}}(\tau_{\bm{\theta}_0}^{(\lambda)}\otimes\rho_W )U_{\mathrm{opt}}^\dagger$ of the baths of this heat engine is equal to $\rho_{\rm opt}^{(\lambda)}$ regardless of the initial state $\rho_W$ of the battery.
  If $\vb{Q}_{\lambda}$ satisfies
  \begin{align}
   \lambda^{\frac{5}{8}}\ll \|\vb{Q}_{\lambda}\| \ll \lambda,\label{Q_order}
  \end{align}
 then this engine indeed runs with the generalized heat $\vb{Q}_{\lambda}$ up to $o\left(\frac{\|\vb{Q}_{\lambda}\|^2}{\lambda}\right)$, i.e.
  \begin{align}
  \tr A_{2,\lambda}(\tau_{\bm{\theta}_0}^{(\lambda)}-\rho_{\rm opt}^{(\lambda)})&=\Delta Q_{A,2,\lambda}+o\left(\frac{\|\vb{Q}_{\lambda}\|^2}{\lambda}\right)\label{ac_2}\\
  \tr B_{i,\lambda}(\tau_{\bm{\theta}_0}^{(\lambda)}-\rho_{\rm opt}^{(\lambda)})&=\Delta Q_{B,i,\lambda}+o\left(\frac{\|\vb{Q}_{\lambda}\|^2}{\lambda}\right) \; (i=1,2),\label{ac_3}
  \end{align}
  where $\rho_{\rm opt}^{(\lambda)}$ is the final state of the baths.
  The work output of Quantity $A$ of this engine satisfies
  \begin{align}
   \Delta W_A (\rho_{\mathrm{opt}}^{(\lambda)}) &=\Delta W^{\mathrm{opt}}_{A,\lambda}(\vb{Q}_{\lambda})+o\left(\frac{\|\vb{Q}_{\lambda}\|^2}{\lambda}\right),\label{ac_1}
  \end{align}
  where $\Delta W^{\mathrm{opt}}_{A,\lambda}(\vb{Q}_{\lambda})$ is the maximum work up to the second leading order given by FGCB \eqref{FGCB} with the generalized heat $\vb{Q}_{\lambda}$.
  Hence, FGCB is asymptotically achieved up to $o\left(\frac{\|\vb{Q}_{\lambda}\|^2}{\lambda}\right)$ by this engine.
\end{theorem}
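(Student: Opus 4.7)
\medskip
\noindent\textbf{Proof plan.} Verifying that $U_{\rm opt}^{(\lambda)}(\vb{Q}_\lambda)\otimes I_C$ satisfies the explicit-battery conditions B\ref{item_strict_cons}--B\ref{item_noc2} was already carried out in Sec.~\ref{sub_cons_exp}, so the remaining task is the asymptotic identification of the generalized heat and the work with their FGCB values. My plan is to compare $\rho_{\rm opt}^{(\lambda)}$ not with the initial thermal state but with the ideal thermal state $\tau^{(\lambda)}_{\bm{\theta}_\lambda}$ at the ideal final inverse temperature defined by \eqref{tl_1}--\eqref{tl_4}. By construction, $\tau^{(\lambda)}_{\bm{\theta}_\lambda}$ realizes the equality in the FGCB up to the $o(\|\vb{Q}_\lambda\|^2/\lambda)$ error of Theorem~\ref{Thm_FGCB}; hence it suffices to show that $\rho_{\rm opt}^{(\lambda)}$ is close enough to $\tau^{(\lambda)}_{\bm{\theta}_\lambda}$ in expectation of $A_{i,\lambda}$, $B_{i,\lambda}$ to absorb the discrepancy into the same error order.

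Two structural facts fix the geometry. First, because $\Gamma_{\rm opt}^{\vb{Q}_\lambda}$ merely relabels the eigenstates within the common eigenbasis of the commuting observables (diagonalizing all four $X_{j,\lambda}$), it preserves the von Neumann entropy, so $S(\rho_{\rm opt}^{(\lambda)})=S(\tau^{(\lambda)}_{\bm{\theta}_0})=S(\tau^{(\lambda)}_{\bm{\theta}_\lambda})$ where the second equality is \eqref{tl_1}. Second, $\rho_{\rm opt}^{(\lambda)}$ and $\tau^{(\lambda)}_{\bm{\theta}_\lambda}$ are simultaneously diagonal in the basis $\{\ket{\omega_i^{\bm{\theta}_\lambda,\lambda}}\}$ with eigenvalues $p^{(\lambda)}_{\bm{\theta}_0}(i)$ and $p^{(\lambda)}_{\bm{\theta}_\lambda}(i)$ respectively, both listed in descending order. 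Consequently, the trace norm coincides with the classical $L^1$ distance,
\begin{align*}
\|\rho_{\rm opt}^{(\lambda)}-\tau^{(\lambda)}_{\bm{\theta}_\lambda}\|_1
= \sum_i |p^{(\lambda)}_{\bm{\theta}_0}(i)-p^{(\lambda)}_{\bm{\theta}_\lambda}(i)|.
\end{align*}

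The technical core is to show that this sorted-spectrum distance is $o(\|\vb{Q}_\lambda\|^2/\lambda^2)$. The linear-response relations \eqref{dual_coord0}, \eqref{asymp_correlation} together with the definitions \eqref{tl_3}--\eqref{tl_4} give $\bm{\theta}_\lambda-\bm{\theta}_0=O(\|\vb{Q}_\lambda\|/\lambda)$, and I would use the Pythagorean-type relation (Appendix~\ref{app_pyth}) to rewrite $D(\rho_{\rm opt}^{(\lambda)}\|\tau^{(\lambda)}_{\bm{\theta}_\lambda})$ as a sum of two non-negative pieces: $D(\rho_{\rm opt}^{(\lambda)}\|\tau^{(\lambda)}_{\bm{\theta}'})+D(\tau^{(\lambda)}_{\bm{\theta}'}\|\tau^{(\lambda)}_{\bm{\theta}_\lambda})$ where $\bm{\theta}'=\tilde{\bm{\theta}}_\lambda(\rho_{\rm opt}^{(\lambda)})$ is the effective inverse temperature of $\rho_{\rm opt}^{(\lambda)}$. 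Each piece is bounded using the strict extensivity of Assumption~\ref{assump2} (with remainder $O(\lambda^\alpha)$, $\alpha<1/2$), which controls the third-order corrections appearing in the Taylor expansion of the generalized free entropy $\phi_\lambda$ around $\bm{\theta}_0$. Converting back through Pinsker's inequality then yields the desired $L^1$ bound provided the scaling lower bound $\|\vb{Q}_\lambda\|\gg\lambda^{5/8}$ in \eqref{Q_order} holds; this lower bound is exactly what makes the $O(\lambda^\alpha)$ remainders sub-dominant compared with the target order $\|\vb{Q}_\lambda\|^2/\lambda^2$. Given this estimate, \eqref{mat_norm_0} yields
\begin{align*}
|\tr X_{j,\lambda}(\rho_{\rm opt}^{(\lambda)}-\tau^{(\lambda)}_{\bm{\theta}_\lambda})|
\leq \|X_{j,\lambda}\|\,\|\rho_{\rm opt}^{(\lambda)}-\tau^{(\lambda)}_{\bm{\theta}_\lambda}\|_1 = o\!\left(\frac{\|\vb{Q}_\lambda\|^2}{\lambda}\right),
\end{align*}
from which \eqref{ac_2}, \eqref{ac_3} follow immediately. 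The work identity \eqref{ac_1} is then obtained by evaluating $\Delta W_A(\rho_{\rm opt}^{(\lambda)})=\tr(A_{1,\lambda}+A_{2,\lambda})(\tau^{(\lambda)}_{\bm{\theta}_0}-\rho_{\rm opt}^{(\lambda)})$ via the strict conservation law B\ref{item_strict_cons} and observing that $\tau^{(\lambda)}_{\bm{\theta}_\lambda}$ attains the bound $\Delta W_{A,\lambda}^{\rm opt}(\vb{Q}_\lambda)$ exactly up to the same order.

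The main obstacle is the sorted-spectrum $L^1$ estimate. Naive operator perturbation gives only $\|\tau^{(\lambda)}_{\bm{\theta}_0}-\tau^{(\lambda)}_{\bm{\theta}_\lambda}\|_1=O(\|\vb{Q}_\lambda\|/\lambda)$, one order too weak. The needed refinement exploits a cancellation: the spectra of $\tau^{(\lambda)}_{\bm{\theta}_0}$ and $\tau^{(\lambda)}_{\bm{\theta}_\lambda}$, once sorted, track each other more tightly than the full operator distance suggests, because the descending rearrangement is a contraction in $L^1$ for distributions coming from nearby thermal points. Making this rigorous requires not only the strict extensivity of Assumption~\ref{assump2} (to control the curvature of $\phi_\lambda$ uniformly and bound the deviation of $\bm{\theta}'$ from $\bm{\theta}_\lambda$) but also the specific scaling window \eqref{Q_order}: the upper bound $\|\vb{Q}_\lambda\|\ll\lambda$ guarantees that $\bm{\theta}_\lambda$ lies in a neighborhood where the Taylor expansion used in Theorem~\ref{Thm_FGCB} remains valid, while the lower bound $\|\vb{Q}_\lambda\|\gg\lambda^{5/8}$ ensures that the sub-leading $O(\lambda^\alpha)$ remainders inherited from $\phi_\lambda$ are negligible against $\|\vb{Q}_\lambda\|^2/\lambda$.
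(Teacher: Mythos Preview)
Your overall strategy---compare $\rho_{\rm opt}^{(\lambda)}$ with the ideal thermal state $\tau^{(\lambda)}_{\bm{\theta}_\lambda}$---is exactly what the paper does, but the route you propose to close the gap is quantitatively too weak and the target you set is almost certainly false.

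\textbf{The Pinsker + H\"older step fails by a factor of $\sqrt{\lambda}$.} Suppose you already had the sharp estimate $D(\rho_{\rm opt}^{(\lambda)}\|\tau^{(\lambda)}_{\bm{\theta}_\lambda})=O(\|\vb{Q}_\lambda\|^2/\lambda^2)+O(\lambda^{-1/2})$ (this is what the paper proves in Theorem~\ref{rel_thm}). Pinsker then gives $\|\rho_{\rm opt}^{(\lambda)}-\tau^{(\lambda)}_{\bm{\theta}_\lambda}\|_1=O(\|\vb{Q}_\lambda\|/\lambda)+O(\lambda^{-1/4})$, and multiplying by $\|X_{j,\lambda}\|=O(\lambda)$ yields $O(\|\vb{Q}_\lambda\|)+O(\lambda^{3/4})$. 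Neither term is $o(\|\vb{Q}_\lambda\|^2/\lambda)$ under $\|\vb{Q}_\lambda\|\ll\lambda$; e.g.\ at $\|\vb{Q}_\lambda\|=\lambda^{3/4}$ you get $O(\lambda^{3/4})$ against a target of $o(\lambda^{1/2})$. Your stated goal $\|\rho_{\rm opt}^{(\lambda)}-\tau^{(\lambda)}_{\bm{\theta}_\lambda}\|_1=o(\|\vb{Q}_\lambda\|^2/\lambda^2)$ is stronger than what Pinsker can extract from the true size of the relative entropy, and the rearrangement contraction you invoke only says $\|p^\downarrow-q^\downarrow\|_1\le\|p-q\|_1$, which does not improve on $O(\|\vb{Q}_\lambda\|/\lambda)$.

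\textbf{What the paper does instead.} The missing idea is Lemma~\ref{expect_rel}: rather than going through $\|\cdot\|_1$, the paper bounds the expectation differences directly via the information-geometric inequality
\[
\|\bm{\eta}_\lambda(\bm{\theta}_\lambda)-\bm{\eta}_\lambda(\bm{\xi}_\lambda)\|^2
\le 2\,D(\rho_{\rm opt}^{(\lambda)}\|\tau^{(\lambda)}_{\bm{\theta}_\lambda})\,\max_t\|(J_{\lambda,ij}(\vb{s}_\lambda(t)))_{ij}\|,
\]
obtained from the Pythagorean theorem plus the Bregman representation \eqref{apprel_re}. The crucial gain is that $\|J_\lambda\|=O(\lambda)$ (Lemma~\ref{fish_lem}), replacing the $\|X\|^2=O(\lambda^2)$ you would get from H\"older; this recovers exactly the factor of $\lambda$ you are missing and makes the $\lambda^{5/8}$ threshold work out. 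Separately, the relative-entropy bound itself is not obtained by the Pythagorean split you describe (both pieces of that split depend on the unknown $\bm{\theta}'$, which is circular); the paper computes $D(\rho_{\rm opt}^{(\lambda)}\|\tau^{(\lambda)}_{\bm{\theta}_\lambda})$ directly as a classical KL between sorted spectra and estimates it via a strong large-deviation argument (Appendices~\ref{app_gCLT}--\ref{app_relative_ent}), which is where Assumption~\ref{assump2} and the generalized CLT actually enter.
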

%%%explain about the fluctuation outside the proof environment
Firstly, we remark that the final state $\rho_{\mathrm{opt}}^{(\lambda)}$ is not uniquely determined 
because it depends on the choices of the orderings of states $\left\{\ket{\omega_i^{\vb*{\theta}_{0},\lambda}}\right\}_i, \left\{\ket{\omega_i^{\vb*{\theta}_{\lambda},\lambda}}\right\}_i$ among their multiplicity because of the degeneracy.
However, any final state $\rho_{\mathrm{opt}}^{(\lambda)}$ satisfies Theorem \ref{achieve_thm} because any choice makes no difference in the following analysis.
%Although the state $\rho_{\mathrm{opt}}$ itself depends on the choice of the ordering  among its multiplicity because of the degeneracy,
%any $\rho_{\mathrm{opt}}$ satisfies Theorem \ref{achieve_thm}
%since any choice makes no difference in the following analysis.

According to this theorem, we can extract the maximum amount $\Delta W^{\mathrm{opt}}_{A,\lambda}(\vb{Q}_{\lambda})$ of the work given in FGCB \eqref{FGCB} in the asymptotic sense up to $o\left(\frac{\|\vb{Q}_{\lambda}\|^2}{\lambda}\right)$, if we run the protocol $U_{\mathrm{opt}}^{(\lambda)}(\vb{Q}_{\lambda})\otimes I_C$ with appropriate order of $\vb{Q}_{\lambda}$.
%under the additional conditions.
 %The error in
 Though the actual generalized heat of this protocol has the error up to $o\left(\frac{\|\vb{Q}_{\lambda}\|^2}{\lambda}\right)$,
 \begin{align}
  \Delta W^{\mathrm{opt}}_{A,\lambda}\left(\vb{Q}_{\lambda}+o\left(\frac{\|\vb{Q}_{\lambda}\|^2}{\lambda}\right)\right)=\Delta W^{\mathrm{opt}}_{A,\lambda}(\vb{Q}_{\lambda})
 \end{align}
 is obvious from FGCB.
 Thus, the equality in FGCB is achieved by our protocol asymptotically up to $o\left(\frac{\|\vb{Q}_{\lambda}\|^2}{\lambda}\right)$, hence FGCB is tight.
 Furthermore, since the dynamics on $\mathcal{H}_C$ is simply the identity in our protocol, we do not use catalytic effects at all.
 This construction shows that catalytic effects work in small order of $o\left(\frac{\|\vb{Q}_{\lambda}\|^2}{\lambda}\right)$ for the optimal performance.
 Of course, since our protocol achieves GCB in thermodynamic limit,
 our protocol with the generic scaling $\lambda$ is novel even for the regime of thermodynamic limit.
 Although our derivation imposed the condition on
the norm of the observables for the technical simplicity,
there is possibility to remove it.
 %We imposed the condition on the norm of the observables for the technical simplicity, though there is possibility to remove it.
 The conditions \eqref{str_extensive_1} and \eqref{str_extensive_2} with $\alpha<\frac{1}{2}$ are also needed for our analysis to work, which seem to be more essential.
 The reason is that
 larger order than $\lambda^{\alpha}$ $(\alpha\geq\frac{1}{2})$ of the deviation from the extensivity \eqref{extensive_1}, \eqref{extensive_2} possibly degrades
 the performance of the engine.
 %Deviation from the extensivity may reflect the effects of the
 %%%%%%
%from the extensivity degrades the performance.
%Finally, we assume that $\|P\Delta A_{\lambda}^*\|$ is small enough so that $\|P\Delta A_{\lambda}^*\|=o(\lambda^{\frac{2-\alpha}{2}})$.
%Finally, we assume that $\|P\Delta A^*_{\lambda}\|/\lambda^{\frac{3}{4}}\rightarrow\infty$, which is denoted by $\|\Delta A^*_{\lambda}\|=\Omega(\lambda^{\frac{3}{4}})$.

Now, we verify Theorem \ref{achieve_thm}.
The ideal thermal state $\tau^{(\lambda)}_{\bm{\theta}_\lambda}$ attains the equality in FGCB~(\ref{FGCB}) under the given heat amounts (\ref{tl_3}) and (\ref{tl_4}) by its definition, though this state itself is not necessarily achieved from the initial state $\tau^{(\lambda)}_{\bm{\theta}_0}$ through an allowed operation.
Thus, in order to prove Theorem \ref{achieve_thm},
it is sufficient to show that each expectation value of $\rho_{\text{opt}}^{(\lambda)}$ is close to that of $\tau^{(\lambda)}_{\bm{\theta}_\lambda}$ in the order of $o\left(\frac{\|\vb{Q}_{\lambda}\|^2}{\lambda}\right)$.
%analyzing information geometric structure,
Then, we firstly observe the relation between the differences
%$\left|\langle A_{i,\lambda}\rangle_{\rho_{\rm opt}^{(\lambda)}}-\eta_{\lambda,i}(\beta_{\lambda})\right|$
$|\tr A_{i,\lambda}(\tau^{(\lambda)}_{\bm{\theta}_\lambda}-\rho_{\text{opt}}^{(\lambda)})|$, $|\tr B_{i,\lambda}(\tau^{(\lambda)}_{\bm{\theta}_\lambda}-\rho_{\text{opt}}^{(\lambda)})|$
in the expectation values
and the relative entropy $D(\rho_{\rm opt}^{(\lambda)}\|\tau_{\bm{\theta}_\lambda}^{(\lambda)})$.
To do so, it is sufficient to focus on the thermal state at the effective inverse temperature
$\bm{\xi}_{\lambda}:=\tilde{\bm{\theta}}_{\lambda}(\rho_{\mathrm{opt}}^{(\lambda)})$
of $\rho_{\mathrm{opt}}^{(\lambda)}$
since it shares the expectation values with $\rho_{\mathrm{opt}}^{(\lambda)}$ as
\begin{align}
 \eta_{\lambda,i}(\bm{\xi}_\lambda)&=\tr A_{i,\lambda}\rho_{\rm opt}^{(\lambda)},\\
 \eta_{\lambda,i+2}(\bm{\xi}_\lambda)&=\tr B_{i,\lambda}\rho_{\rm opt}^{(\lambda)}\quad(i=1,2).
\end{align}
Then, we show the following lemma.
 \begin{lemma}\label{expect_rel}
  For the effective inverse temperature $\bm{\xi}_{\lambda}$ of $\rho_{\mathrm{opt}}^{(\lambda)}$,
%, i.e. the expectation values are the same as that of $\rho_{\rm opt}^{(\lambda)}$.
  we have
\begin{align}
 &2D(\rho_{\rm opt}^{(\lambda)}\|\tau_{\bm{\theta}_\lambda}^{(\lambda)})
 \max_{t\in[0,1]}\|(J_{\lambda,ij}(\vb{s}_{\lambda}(t)))_{ij}\|\nonumber\\
 \geq&
 \|\bm{\eta}_{\lambda}(\bm{\theta}_{\lambda})-\bm{\eta}_{\lambda}(\bm{\xi}_{\lambda})\|^2,
 \label{optup}
\end{align}
  where
  we denote the matrix whose $(i,j)$-component is $a_{ij}$ by $(a_{ij})_{ij}$,
  and $\vb{s}_{\lambda}(t)$ is the inverse temperature to satisfy $\bm{\eta}_{\lambda}(\vb{s}_{\lambda}(t))=t\bm{\eta}_{\lambda}(\bm{\theta}_{\lambda})+(1-t)\bm{\eta}_{\lambda}(\bm{\xi}_{\lambda})$.
  Here, $\|A\|$ for a matrix $A$ is the matrix norm.
\end{lemma}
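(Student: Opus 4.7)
The plan is to reduce the statement to a purely information-geometric inequality about two thermal states, and then control a Bregman-type integral by the extremal eigenvalues of the Fisher matrix along the path.

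First I would apply the information-geometric Pythagorean theorem to the exponential family of generalized thermal states. Since $\bm{\xi}_\lambda = \tilde{\bm{\theta}}_\lambda(\rho_{\rm opt}^{(\lambda)})$ is precisely the $I$-projection of $\rho_{\rm opt}^{(\lambda)}$ onto the family $\{\tau_{\bm{\theta}}^{(\lambda)}\}_{\bm{\theta}}$, a direct calculation using $\log \tau_{\bm{\theta}}^{(\lambda)} = -\sum_i \theta^i X_{i,\lambda} - \phi_\lambda(\bm{\theta})$ together with the fact that $\rho_{\rm opt}^{(\lambda)}$ and $\tau_{\bm{\xi}_\lambda}^{(\lambda)}$ share the expectation values $\bm{\eta}_\lambda(\bm{\xi}_\lambda)$ gives the identity
\begin{align*}
D(\rho_{\rm opt}^{(\lambda)}\|\tau_{\bm{\theta}_\lambda}^{(\lambda)})
= D(\rho_{\rm opt}^{(\lambda)}\|\tau_{\bm{\xi}_\lambda}^{(\lambda)})
+ D(\tau_{\bm{\xi}_\lambda}^{(\lambda)}\|\tau_{\bm{\theta}_\lambda}^{(\lambda)}),
\end{align*}
and hence the lower bound $D(\rho_{\rm opt}^{(\lambda)}\|\tau_{\bm{\theta}_\lambda}^{(\lambda)}) \geq D(\tau_{\bm{\xi}_\lambda}^{(\lambda)}\|\tau_{\bm{\theta}_\lambda}^{(\lambda)})$. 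This reduces the lemma to a bound on the divergence between two thermal states.

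Next I would derive an integral representation for the right-hand side along the $\bm{\eta}$-straight path $\vb{s}_\lambda(t)$. Write the von Neumann entropy as the dual potential $\sigma_\lambda(\bm{\eta}) := S(\tau_{\tilde{\bm{\theta}}_\lambda(\bm{\eta})}^{(\lambda)})$, which is the Legendre transform of $\phi_\lambda$ and satisfies $\partial \sigma_\lambda/\partial \eta_i = \theta^i$ and $\partial^2 \sigma_\lambda/\partial \eta_i \partial \eta_j = -(J_\lambda^{-1})_{ij}$ (the minus sign expresses the concavity of $\sigma_\lambda$, which is consistent with $\eta_i = -\partial \phi_\lambda/\partial \theta^i$). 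Substituting the second-order Taylor formula with integral remainder for $\sigma_\lambda$ along $\vb{s}_\lambda(t)$ into the identity $D(\tau_{\bm{\xi}_\lambda}^{(\lambda)}\|\tau_{\bm{\theta}_\lambda}^{(\lambda)}) = \sigma_\lambda(\bm{\eta}_\lambda(\bm{\theta}_\lambda)) - \sigma_\lambda(\bm{\eta}_\lambda(\bm{\xi}_\lambda)) - \sum_i \theta_\lambda^i \Delta\eta_i$, and using the fundamental-theorem expression $\theta_\lambda^i - \xi_\lambda^i = -\int_0^1 \sum_j (J_\lambda^{-1})_{ij}(\vb{s}_\lambda(t))\,\Delta\eta_j\, dt$ for the $\theta$-increment along the same path, the first-order terms cancel and one obtains
\begin{align*}
D(\tau_{\bm{\xi}_\lambda}^{(\lambda)}\|\tau_{\bm{\theta}_\lambda}^{(\lambda)})
= \int_0^1 t \sum_{i,j} \Delta\eta_i \Delta\eta_j (J_\lambda^{-1})_{ij}(\vb{s}_\lambda(t))\, dt,
\end{align*}
where $\Delta\eta_i := \eta_{\lambda,i}(\bm{\theta}_\lambda) - \eta_{\lambda,i}(\bm{\xi}_\lambda)$. (One can sanity-check this against the familiar $(1/2)\langle\Delta\theta,J\Delta\theta\rangle$ formula in the infinitesimal limit.)

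Finally, since $J_\lambda(\vb{s}_\lambda(t))$ is positive definite by Assumption 1, its inverse has smallest eigenvalue $1/\|J_\lambda(\vb{s}_\lambda(t))\|$, and hence $\sum_{i,j}\Delta\eta_i\Delta\eta_j (J_\lambda^{-1})_{ij}(\vb{s}_\lambda(t)) \geq \|\bm{\Delta\eta}\|^2/\|J_\lambda(\vb{s}_\lambda(t))\|$ pointwise in $t$. Bounding the denominator by $\max_{t\in[0,1]}\|(J_{\lambda,ij}(\vb{s}_\lambda(t)))_{ij}\|$ and evaluating $\int_0^1 t\, dt = 1/2$ yields
\begin{align*}
D(\rho_{\rm opt}^{(\lambda)}\|\tau_{\bm{\theta}_\lambda}^{(\lambda)})
\geq D(\tau_{\bm{\xi}_\lambda}^{(\lambda)}\|\tau_{\bm{\theta}_\lambda}^{(\lambda)})
\geq \frac{\|\bm{\eta}_\lambda(\bm{\theta}_\lambda)-\bm{\eta}_\lambda(\bm{\xi}_\lambda)\|^2}{2\max_{t\in[0,1]}\|(J_{\lambda,ij}(\vb{s}_\lambda(t)))_{ij}\|},
\end{align*}
which is the claimed inequality after rearranging. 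The main obstacle I expect is the second step: setting up the Taylor formula on the dual side carefully enough that the linear terms cancel correctly and the coefficient $t$ (rather than $1-t$) appears in the remainder, which is exactly what makes the $1/2$ factor on the right-hand side sharp.
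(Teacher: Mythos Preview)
Your proposal is correct and follows essentially the same three-step route as the paper: (i) the Pythagorean decomposition $D(\rho_{\rm opt}^{(\lambda)}\|\tau_{\bm{\theta}_\lambda}^{(\lambda)}) = D(\rho_{\rm opt}^{(\lambda)}\|\tau_{\bm{\xi}_\lambda}^{(\lambda)}) + D(\tau_{\bm{\xi}_\lambda}^{(\lambda)}\|\tau_{\bm{\theta}_\lambda}^{(\lambda)})$, (ii) the integral representation $D(\tau_{\bm{\xi}_\lambda}^{(\lambda)}\|\tau_{\bm{\theta}_\lambda}^{(\lambda)}) = \int_0^1 t\,\Delta\eta^{\top} J_\lambda^{-1}(\vb{s}_\lambda(t))\,\Delta\eta\, dt$, and (iii) the eigenvalue bound $\Delta\eta^{\top} J_\lambda^{-1}\Delta\eta \geq \|\Delta\eta\|^2/\|J_\lambda\|$ followed by $\int_0^1 t\,dt = 1/2$. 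The only cosmetic difference is in step (ii): the paper obtains the integral formula directly from the Bregman-divergence duality $D^{\phi_\lambda}(\bm{\xi}\|\bm{\theta}) = D^{\nu}(\bm{\eta}(\bm{\theta})\|\bm{\eta}(\bm{\xi}))$ and a single Taylor expansion of the dual potential, whereas you reach the same identity by combining the Taylor remainder for $\sigma_\lambda$ with the separate integral formula for $\theta_\lambda^i - \xi_\lambda^i$; both computations are equivalent and your cancellation of the linear terms is handled correctly.
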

%between $\rho_{\rm opt}^{(\lambda)}$ and $\tau_{\bm{\theta}_\lambda}^{(\lambda)}$.
Once this lemma is proved,
our problem is further reduced to the estimation of the left hand side (LHS) of (\ref{optup})
because the difference in the expectation value of each quantity between $\tau^{(\lambda)}_{\bm{\theta}_\lambda}$ and $\rho_{\text{opt}}^{(\lambda)}$ are smaller than $\|\bm{\eta}_{\lambda}(\bm{\theta}_{\lambda})-\bm{\eta}_{\lambda}(\bm{\xi}_{\lambda})\|$.
%Because the differences in the expectation values between $\tau^{(\lambda)}_{\bm{\theta}_\lambda}$ and $\rho_{\text{opt}}^{(\lambda)}$ are smaller than $\|\bm{\eta}_{\lambda}(\bm{\theta}_{\lambda})-\bm{\eta}_{\lambda}(\bm{\xi}_{\lambda})\|$,
%our problem is further reduced to the estimation of the left hand side (LHS) of (\ref{optup})
%from this lemma. we just focus on the catalytic dynamics on the baths' system $\mathcal{H}_{\mathrm{Baths}}$ implemented by unital CP-maps on $\mathcal{H}_{\mathrm{Baths}}\otimes\mathcal{H}_C$ which preserves the state of $\mathcal{H}_C$.
%Under this situation, we derive the tighter constraint on the generalized work extractions than \cite{} by taking account of the scale of the baths' in Sec.\ref{}.
%Further, we explicitly show the achievability of the bound in asymptotic sense by constructing the CP-map.
%Obviously, the
%%%%%%%%%%%%%%%%%%%%%%%%%%%%%%%%%%%%%%%%%%%%%%%%%%%proof of lemma 1
\begin{proof}[Proof of Lemma \ref{expect_rel}]
 To verify Lemma \ref{expect_rel}, we focus on the following information geometric estimations.
 In fact, since $\rho_{\mathrm{opt}}^{(\lambda)}$ has full rank, we can apply the methods of information geometry in Appendix \ref{app_bregman}.
%The states $\rho^{(\lambda)}_{\rm opt}$ and $\tau^{(\lambda)}_{\bm{\xi}_{\lambda}}$ are on the same mixture family, which consists of the states sharing the same expectation values of the observables under the consideration.
 Since $\bm{\xi}_{\lambda}$ is the effective inverse temperature of $\rho_{\mathrm{opt}}^{(\lambda)}$,
 $\tau^{(\lambda)}_{\bm{\xi}_{\lambda}}$ is the thermal state
 sharing the expectation values of $A_{i.\lambda}$ and $B_{i,\lambda}$ with $\rho_{\mathrm{opt}}^{(\lambda)}$.
 %, see Appendix \ref{app_pyth}.
 %projection of $\tau^{(\lambda)}_{\bm{\theta}_{\lambda}}$ onto this mixture family,
 Thus, applying the Pythagorean theorem (Lemma \ref{pyth} in Appendix \ref{app_pyth}), we obtain
% \footnote{The quantum relative entropy is a Bregman divergence when
%the potential function is given as the free entropy $\phi_{\lambda}(\bm{\theta})$ \cite[Theorem 6.5]{Hayashi:2017aa}.
%Then, Pythagorean theorem for Bregman divergence \cite[Theorem 2.3]{Hayashi:2017aa}
% guarantees the first equation of (\ref{pythagorean}).}
\begin{align}
 D(\rho_{\rm opt}^{(\lambda)}\|\tau_{\bm{\theta}_\lambda}^{(\lambda)})
 =&D(\rho_{\rm opt}^{(\lambda)}\|\tau_{\bm{\xi}_{\lambda}}^{(\lambda)})
 +D(\tau_{\bm{\xi}_{\lambda}}^{(\lambda)}\|\tau_{\bm{\theta}_\lambda}^{(\lambda)})\nonumber\\
 \geq&
 D(\tau_{\bm{\xi}_{\lambda}}^{(\lambda)}\|\tau_{\bm{\theta}_\lambda}^{(\lambda)})
 \label{pythagorean}
\end{align}
as illustrated in Fig.~\ref{figure_4d}.
%%%%%%%%%%%%%%%%%%%%%%%%%%%%%%%%%%%%%%%%%%%%%%%%%%%%%figure
\begin{figure}[!t]
\centering
\includegraphics[clip ,width=3.4in]{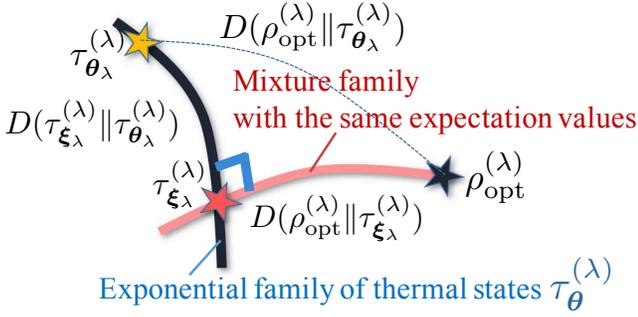}
 \caption{Information geometric positional relationships for the Pythagorean theorem.
 The states sharing the same expectation values form a mixture family.
 The family of the thermal states parametrized by inverse temperatures is an exponential family.
 The definition and details of the exponential and the mixture families of the states are given in
 Appendix \ref{app_bregman}.}
\label{figure_4d}
\end{figure}
%%%%%%%%%%%%%%%%%%%%%%%%%%%%%%%%%%%%%%%%%%%%%%%%%%%%%%%%%f
Furthermore, applying the relation \eqref{apprel_re} in Appendix \ref{app_pyth}, we have the following relation between the relative entropy and expectation values:
\begin{align}
 &D(\tau_{\bm{\xi}_{\lambda}}^{(\lambda)}\|\tau_{\bm{\theta}_\lambda}^{(\lambda)})\nonumber\\
 =&\int_{0}^{1}\sum_{ij}(\eta_{\lambda,i}(\bm{\theta}_{\lambda})-\eta_{\lambda,i}(\bm{\xi}_{\lambda}))(\eta_{\lambda,j}(\bm{\theta}_{\lambda})-\eta_{\lambda,j}(\bm{\xi}_{\lambda}))\nonumber\\
 &\times J^{ij}_{\lambda}(\vb{s}_{\lambda}(t))tdt\nonumber\\
 \geq& \frac{1}{2}\|\bm{\eta}_{\lambda}(\bm{\theta}_{\lambda})-\bm{\eta}_{\lambda}(\bm{\xi}_{\lambda})\|^2 \min_{t\in[0,1]}\frac{1}{\|(J_{\lambda,ij}(\vb{s}_{\lambda}(t)))_{ij}\|},\label{50}
\end{align}
where we used the fact that the maximum eigenvalue of $(J_{\lambda,ij}(\vb{s}_{\lambda}(t)))_{ij}$ is equal to $\|(J_{\lambda,ij}(\vb{s}_{\lambda}(t)))_{ij}\|$ since $(J_{\lambda,ij}(\vb{s}_{\lambda}(t)))_{ij}$ is a positive matrix.
 Combining \eqref{50} with (\ref{pythagorean}), we obtain \eqref{optup}.
 \end{proof}
 %Thus, our problem is reduced to the estimation of the relative entropy $D(\rho_{\rm opt}^{(\lambda)}\|\tau_{\bm{\theta}_\lambda}^{(\lambda)})$, and $\max_{t\in[0,1]}\|(J_{\lambda,ij}(\vb{s}_{\lambda}(t)))_{ij}\|$.
%
%%%%%%%%%%%%%%%%%%%%%%%%%%%%%%%%%%%%%%%%%%%%%%%%%%%%%%%%%%%%%%%%%%%%%%%%%%%%%%
 \begin{proof}[Proof of Theorem \ref{achieve_thm}]
  
The order $\lambda^{\frac{5}{8}}\ll\|\vb{Q}_{\lambda}\|\ll\lambda$ of the generalized heat is sufficient for
%where $f(\lambda)\ll g(\lambda)$ denote that $\frac{f(\lambda)}{g(\lambda)}\rightarrow 0$.
%check52
the relative entropy $D(\rho_{\rm opt}^{(\lambda)}\|\tau_{\bm{\theta}_\lambda}^{(\lambda)})$ to satisfy
\begin{align}
 D(\rho_{\rm opt}^{(\lambda)}\|\tau_{\bm{\theta}_\lambda}^{(\lambda)})=\order{\frac{\|\vb{Q}_{\lambda}\|^2}{\lambda^{2}}}+\order{\lambda^{-\frac{1}{2}}}\label{relative_ent}
\end{align}
as shown in Appendix \ref{apps_rel}.
  We just show an outline of the proof of \eqref{relative_ent} here.
  
  From the construction of $\rho_{\rm opt}^{(\lambda)}$, the following holds:
\begin{align}
 &D(\rho_{\rm opt}^{(\lambda)}\|\tau_{\bm{\theta}_\lambda}^{(\lambda)})\nonumber\\
 =&
 \tr \rho_{\rm opt}^{(\lambda)}(\log \rho_{\rm opt}^{(\lambda)}-\log \tau^{(\lambda)}_{\bm{\theta}_{\lambda}})\nonumber\\
 =&
 \sum_{j}
 p_{\bm{\theta}_0}^{(\lambda)}(j)
 (\log p_{\bm{\theta}_0}^{(\lambda)}(j)-\log p_{\bm{\theta}_{\lambda}}^{(\lambda)}(j)).
\end{align}
  Defining a random variable
\begin{align}
 Y_{l}^{(\lambda)}(j):=
 \left\{
  \begin{array}{cc}
   \frac{\log p_{\bm{\theta}_0}^{(\lambda)}(j)-\lambda\nu}{\sqrt{\lambda}} &(l=0)
  \\ \frac{\log p_{\bm{\theta}_\lambda}^{(\lambda)}(j)-\lambda\nu}{\sqrt{\lambda}} &(l=1),
  \end{array}
 \right.
 \label{def_y_main}
\end{align}
we have another expression the relative entropy
\begin{align}
 D(\rho_{\rm opt}^{(\lambda)}\|\tau_{\bm{\theta}_\lambda}^{(\lambda)})
 =\sqrt{\lambda}\left(\mathbb E_{p_{\bm{\theta}_0}^{(\lambda)}}[Y_{0}^{(\lambda)}]-\mathbb E_{p_{\bm{\theta}_0}^{(\lambda)}}[Y_{1}^{(\lambda)}]\right),\label{rel_ent_an_main}
\end{align}
where
$\nu$ denotes the asymptotic density of the negative entropy $\nu:=-(\sum_{i=0}^m\eta_{i}(\bm{\theta}_0)\beta^i+\phi(\bm{\theta}_0))$, and
  $\mathbb E_{p}[X]$ denotes the expectation value of a random variable $X$ with probability distribution $p$.
  To estimate the relative entropy, it is difficult to calculate $\mathbb E_{p_{\bm{\theta}_0}^{(\lambda)}}[Y_{1}^{(\lambda)}]$.
Instead,
we approximate $\Delta_{\lambda}(j):=Y_0^{(\lambda)}(j)-Y_1^{(\lambda)}(j)$ by a quadratic polynomial of $Y_0^{(\lambda)}(j)$.
In this way, we can calculate $\mathbb E_{p_{\bm{\theta}_0}^{(\lambda)}}[\Delta_{\lambda}]$ by calculating the moments of $Y_0^{(\lambda)}$.
  To do so, we compare the number of states.
  The idea is that the number of states
\begin{align}
 N_{l}^{(\lambda)}(a):=\left|\left\{j | Y_l^{(\lambda)}(j)\geq a\right\}\right|
\end{align}
is asymptotically close to $(Y_l^{(\lambda)})^{-1}(a)$ since $a$ is $(Y_l^{(\lambda)})^{-1}(a)$-th largest value of $Y_l^{(\lambda)}$.
Thus, asymptotically solving the equation
\begin{align}
 N_1^{(\lambda)}(a-\Delta)= N_0^{(\lambda)}(a)\label{eqq_70}
\end{align}
with respect to $\Delta$, and approximating $\Delta$ by a quadratic polynomial $Q(a)$ of $a$,
we obtain the desired approximation of $\Delta_\lambda(j)$ as
 \begin{align}
  &\Delta_{\lambda}(j)\nonumber\\
  =&Y_0^{(\lambda)}(j)-Y_1^{(\lambda)}((Y_0^{(\lambda)})^{-1}[Y_0^{(\lambda)}(j)])\nonumber\\
  \approx&
Y_0^{(\lambda)}(j)-N_1^{(\lambda)-1}(N_0^{(\lambda)}(Y_0^{(\lambda)}(j)))\nonumber\\
\approx& Q(Y_0^{(\lambda)}(j))\label{approx_mm}
 \end{align}
  by substituting $Y_0^{(\lambda)}(j)$ to $a$.
  To solve the equation \eqref{eqq_70}, we apply a similar method to \cite{PhysRevA.92.052308,PhysRevE.96.012128} to apply the strong large deviation \cite{bahadur60:_deviat_sampl_mean,joutard13:_stron_large_deviat_theor} to the estimation of $N_{l}^{(\lambda)}(a)$.
  In its derivation in Appendix \ref{apps_rel}, we generalize the central limit theorem to apply for our situation in Appendix \ref{app_gCLT}.
  Then, calculating \eqref{rel_ent_an_main} by using \eqref{approx_mm}, we obtain \eqref{relative_ent}.
  
  %  In fact, $\lambda^\alpha\ll\lambda^{\frac{3}{4}}\ll\|\vb{Q}_{\lambda}\|\ll \lambda^{\frac{2-\alpha}{2}}\ll\lambda$ is satisfied in the assumption of the theorem.
Next, combining (\ref{relative_ent}) and (\ref{optup}), we obtain
\begin{align}
  \max_{t\in[0,1]}\|(J_{\lambda,ij}(\vb{s}_{\lambda}(t)))_{ij}\|
  =\order{\lambda}\label{jj}
\end{align}
as proved in Appendix \ref{apps_fish}.
Finally, it turns out that
 \begin{align}
  \|\bm{\eta}_{\lambda}(\bm{\theta}_{\lambda})-\bm{\eta}_{\lambda}(\bm{\xi}_{\lambda})\|
  =\sqrt{\order{\frac{\|\vb{Q}_{\lambda}\|^2}{\lambda}}
  +\order{\lambda^{\frac{1}{2}}}}\label{raw_est}
 \end{align}
by (\ref{optup}), (\ref{relative_ent}) and (\ref{jj}).
Then, we check that the order \eqref{Q_order} is sufficient for the right hand side (RHS) of (\ref{raw_est}) to be $o\left(\frac{\|\vb{Q}_{\lambda}\|^2}{\lambda}\right)$, i.e.~
 \begin{align}
  \sqrt{\order{\frac{\|\vb{Q}_{\lambda}\|^2}{\lambda}}
  +\order{\lambda^{\frac{1}{2}}}}
 =o\left(\frac{\|\vb{Q}_{\lambda}\|^2}{\lambda}\right)\label{order_reduce}.
 \end{align}
In fact,
  $\lambda^{\frac{5}{8}}\ll\|\vb{Q}_{\lambda}\|\ll\lambda $ is sufficient for (\ref{order_reduce}) to be satisfied.
% \begin{align}
%  \langle A_{i,\lambda}\rangle_{\rho_{\rm opt}^{(\lambda)}}-\eta_{\lambda,i}(\bm{\theta}_0)
%  =\Delta A_{i,\lambda}^*+o\left(\frac{\|\vb{Q}_{\lambda}\|^2}{\lambda}\right),\label{sec_ach}
% \end{align}
% which means that the optimal work extraction is achieved up to $o(\frac{\|\vb{Q}_{\lambda}\|^2}{\lambda})$ terms.
  %Then, we obtain the appropriate order of $\vb{Q}_{\lambda}$ to make (\ref{ac_1})-(\ref{ac_3}) true.
 %Furthermore, to verify that the extracted work in our protocol can
%really behave as work-like,
%we focus on the order of the fluctuation of the work.
%For any initial pure state of the battery,
%the fluctuation of $A_W$ of
%the final state is of order $\order{\lambda^{\frac{1}{2}}}$ from the Assumption 1.
%Hence, when $\lambda^{\frac{3}{4}} \ll \|\vb{Q}_{\lambda}\|$,
%this fluctuation is $o\left(\frac{\|\vb{Q}_{\lambda}\|^2}{\lambda}\right)$.
%Indeed, this regime of $\vb{Q}_{\lambda}$ is compatible with
%the above regime for \eqref{order_reduce} since $\frac{5}{8},\frac{1+\alpha}{2}<\frac{3}{4}<\frac{2-\alpha}{2}$ when $\alpha<\frac{1}{2}$.
%  Thus, we obtain $\lambda^{\frac{3}{4}}\ll \|\vb{Q}_{\lambda}\|\ll \lambda^{\frac{2-\alpha}{2}}$ as appropriate order of generalized endothermic heat when the fluctuation of the extracted work is taken into account.  
  Hence, Theorem \ref{achieve_thm} is proved.
 %%%%%%%%%%%%%
  %\mathrm{V}_{p^{(\lambda)}_{\bm{\theta}_0}}\left[a_{1,\lambda}(\omega_i^{\vb*{\theta}_{0},\lambda})+a_{2,\lambda}(\omega_i^{\vb*{\theta}_{0},\lambda})\right]\nonumber\\
%   &+\mathrm{V}_{p^{(\lambda)}_{\bm{\theta}_0}}\left[a_{1,\lambda}(\omega_i^{\vb*{\theta}_{\lambda},\lambda})+a_{2,\lambda}(\omega_i^{\vb*{\theta}_{\lambda},\lambda})\right]\nonumber\\
%   &+2\sqrt{\mathrm{V}_{p^{(\lambda)}_{\bm{\theta}_0}}\left[a_{1,\lambda}(\omega_i^{\vb*{\theta}_{0},\lambda})+a_{2,\lambda}(\omega_i^{\vb*{\theta}_{0},\lambda})\right]
   %   \mathrm{V}_{p^{(\lambda)}_{\bm{\theta}_0}}\left[a_{1,\lambda}(\omega_i^{\vb*{\theta}_{\lambda},\lambda})+a_{2,\lambda}(\omega_i^{\vb*{\theta}_{\lambda},\lambda})\right]}
   %--------------------
   %\mathrm{V}_{p^{(\lambda)}_{\bm{\theta}_0}}[a_{1,\lambda}(\omega_i^{\vb*{\theta}_{0},\lambda})]+\mathrm{V}_{p^{(\lambda)}_{\bm{\theta}_0}}[a_{2,\lambda}(\omega_i^{\vb*{\theta}_{0},\lambda})]\nonumber\\
%   &+\mathrm{V}_{p^{(\lambda)}_{\bm{\theta}_0}}[a_{1,\lambda}(\omega_i^{\vb*{\theta}_{\lambda},\lambda})]+\mathrm{V}_{p^{(\lambda)}_{\bm{\theta}_0}}[a_{2,\lambda}(\omega_i^{\vb*{\theta}_{\lambda},\lambda})]\nonumber\\
%   &+2\sqrt{\mathrm{V}_{p^{(\lambda)}_{\bm{\theta}_0}}[a_{1,\lambda}(\omega_i^{\vb*{\theta}_{0},\lambda})]\mathrm{V}_{p^{(\lambda)}_{\bm{\theta}_0}}[a_{2,\lambda}(\omega_i^{\vb*{\theta}_{0},\lambda})]}\nonumber\\
%   &+2\sqrt{\mathrm{V}_{p^{(\lambda)}_{\bm{\theta}_0}}[a_{1,\lambda}(\omega_i^{\vb*{\theta}_{\lambda},\lambda})]\mathrm{V}_{p^{(\lambda)}_{\bm{\theta}_0}}[a_{2,\lambda}(\omega_i^{\vb*{\theta}_{\lambda},\lambda})]}\nonumber\\
 \end{proof}
%%%%%%%%%%%%%%%%%%%%%%%%
 
%%%%the following should be also included in Theorem 2
%%%this sentence is unclear
%It is natural that sufficiently large heat compared to the fluctuation of the baths is needed for heat engine to work well.
%The upper bound may represent small enough order of the endotherm so that the disturbance of the baths is sufficiently small in reflection of the deviation from the extensivity.

\subsection{Non-commutative quantities}\label{sub_noncomm}
Now, we extend our protocol to the case where $A_{i,\lambda}$ and $B_{j,\lambda}$ are not commutative.
We use the same battery system $\mathcal H_W=\mathcal{H}_{W_a}\otimes\mathcal{H}_{W_b}$.
Especially, we still assume that the battery observables $A_W$ and $B_W$ commute.
This is natural since it is sufficient to use an individual system for each quantity.
In this case, instead of the strict conservation law \eqref{strict_conserve}, we just impose the average conservation law:
 \begin{enumerate}
  \renewcommand{\labelenumi}{B\arabic{enumi}*.}
 \item Average conservation law:\begin{align}
 &\tr U(\tau^{(\lambda)}_{\bm{\theta}_0}\otimes\rho_C\otimes\rho_W)U^{\dagger}\left(\sum_{j=1}^2A_{j,\lambda}+A_{W}\right)\nonumber\\
 =&\tr (\tau^{(\lambda)}_{\bm{\theta}_0}\otimes\rho_C\otimes\rho_W)\left(\sum_{j=1}^2A_{j,\lambda}+A_{W}\right),\\
 &\tr U(\tau^{(\lambda)}_{\bm{\theta}_0}\otimes\rho_C\otimes\rho_W)U^{\dagger}\left(\sum_{j=1}^2B_{j,\lambda}+B_{W}\right)\nonumber\\
 =&\tr (\tau^{(\lambda)}_{\bm{\theta}_0}\otimes\rho_C\otimes\rho_W)\left(\sum_{j=1}^2B_{j,\lambda}+B_{W}\right).
 \label{average_conserve}
\end{align}\label{item_ave_cons}
 \end{enumerate}
 The other constraints B\ref{item_cycle}-B\ref{item_noc2} remain the same as the commutative case.
 That is, we consider the conditions B\ref{item_ave_cons}*, B\ref{item_cycle}-B\ref{item_noc2} on a unitary operation $U$ on $\mathcal{H}_{\mathrm{Baths}}\otimes\mathcal{H}_C\otimes\mathcal{H}_W$
to be allowed as a dynamics of the generalized heat engine.
 %That is, if a unitary $U$ satisfies B\ref{item_ave_cons}*, B\ref{item_cycle}-B\ref{item_noc2}, we consider that there exists a generalized heat engine to implement $U$ whose output $A$-type work is $\tr A_W U \rho_0 U^{\dagger}- \tr A_W \rho_0 $
%for the initial state $\rho_0=\tau_{\bm{\theta}_0}^{(\lambda)}\otimes \rho_C \otimes \rho_W$ of the total system.

As for the first law of the thermodynamics,
Lostaglio {\it et al.}~\cite{PhysRevX.5.021001} pointed out that
some external resource of the coherence may be missed
in the formulation
without the strict conservation of the energy.
That is,
the coherence with respect to the energy eigenstates
can be increased by an operation with just the average conservation,
though it is impossible for strictly energy conservative operations.
This implies that some resource of the coherence is implicitly used to implement such an operation.
This may be the case also for generic multiple conservative quantities under the consideration.
Thus, it may be appropriate to call it `semi-explicit' battery formulation, reflecting the possibility of the lack of some resource in the formulation, while the battery system is explicitly taken into account \footnote{However, as for the energy conservation, \cite{PhysRevX.5.021001} also pointed out that if some resource of coherence is appropriately considered through the method by \cite{Aberg:2014aa}, the results under the average conservation would be revived under the strict conservation law.
This is also possibly the case for the multiple conservative quantities.}.
%as pointed out by.
%Also pointed out by \cite{PhysRevX.5.021001}
We show the achievability for non-commutative quantities in this semi-explicit battery formulation in the following.

We construct a global unitary operation satisfying B\ref{item_ave_cons}*, B\ref{item_cycle}-B\ref{item_noc2}.
Because of non-commutativity, the simultaneous eigenbasis no longer exists.
However, $\tau_{\bm \theta}^{(\lambda)}$ is diagonalized by a basis depending on $\bm \theta$ and $\lambda$.
Thus, for a given vector $\vb{Q}_{\lambda}$,
we denote the diagonalization of $\tau_{{\bm \theta}_0}^{(\lambda)}$ and $\tau_{{\bm \theta}_{\lambda}}^{(\lambda)}$ respectively by:
 \begin{align}
  \tau_{{\bm \theta}_0}^{(\lambda)}=&\sum_{i\in\mathbb N_{d_{\lambda}}}p_{{\bm \theta}_0}^{(\lambda)}(i)\ketbra{\psi_i}{\psi_i}\label{y1}\\
 \tau_{{\bm \theta}_{\lambda}}^{(\lambda)}=&\sum_{i\in\mathbb N_{d_{\lambda}}}p_{{\bm \theta}_{\lambda}}^{(\lambda)}(i)\ketbra{\varphi_i}{\varphi_i}\label{y2},
 \end{align}
 where $p^{(\lambda)}_{\bm{\theta}_0}(1)\geq p^{(\lambda)}_{\bm{\theta}_0}(2)\geq \dots$ and
 $p^{(\lambda)}_{\bm{\theta}_{\lambda}}(1)\geq p^{(\lambda)}_{\bm{\theta}_{\lambda}}(2)\geq \dots$ hold.
Note that $\ket{\psi_i},\ket{\varphi_i}$ depend also on $\lambda$, though we omit the notation for simplicity.
 Then, we define $\rho_{\rm opt,nc}^{(\lambda)}:=\sum_i\ket{\varphi_i}\ev{\tau_{\bm{\theta}_0}^{(\lambda)}}{\psi_i}\bra{\varphi_i}$
 as with the commutative case.
 %we have the protocol $\Gamma_{\rm opt}^{\vb{Q}_{\lambda}}\otimes {\rm id}_C$ in the implicit-battery formulation.

 Then, we construct the protocol in the explicit-battery formulation.
 With the same battery system as the commutative case, we define the unitary operator $U_{\rm opt, nc}^{(\lambda)}(\vb{Q}_{\lambda})$ on $\mathcal{H}_{\mathrm{Baths}}\otimes\mathcal{H}_{W}$ as
  \begin{align}
 &U_{\rm opt,nc}^{(\lambda)}(\vb{Q}_{\lambda})\nonumber\\
 :=&\sum_{i}
 \ketbra{\varphi_i}{\psi_i}\nonumber\\
 &\otimes
 \Delta_{A}^{c_a^{-1}(\ev{\sum_{l=1}^2 A_{l,\lambda}}{\psi_i}
   -\ev{\sum_{l=1}^2 A_{l,\lambda}}{\varphi_i})}\nonumber\\
 &\otimes
 \Delta_{B}^{c_b^{-1}(\ev{\sum_{l=1}^2 B_{l,\lambda}}{\psi_i}
   -\ev{\sum_{l=1}^2 B_{l,\lambda}}{\varphi_i})}.
  \end{align}
  The full protocol on $\mathcal{H}_{\mathrm{Baths}}\otimes\mathcal{H}_C\otimes\mathcal{H}_W$ is $U_{\rm opt,nc}^{(\lambda)}(\vb{Q}_{\lambda})\otimes I_C$.
  As with the commutative case, the cyclicity and no-cheating condition hold.
  Further, the average conservation B\ref{item_ave_cons}* is satisfied,
  though the strict conservation is not necessarily.
  Especially, the final state $\rho_{\rm opt,nc}^{(\lambda)}=\tr_{\mathcal{H}_W} U_{\rm opt,nc}^{(\lambda)}(\vb{Q}_{\lambda})(\tau_{\bm{\theta}_0}^{(\lambda)}\otimes\rho_W)U_{\rm opt,nc}^{(\lambda)}(\vb{Q}_{\lambda})^{\dagger}$ does not depend on the state of the battery.

  As the achievement of FGCB in this case, we show the following non-commutative version of Theorem \ref{achieve_thm}:
  \begin{theorem}\label{achieve_thm_non}
  Let $A_{i,\lambda}$ and $B_{i,\lambda}$ be not necessarily commutative.
 Let Assumption \ref{assump2} be satisfied.
For any
  $\vb{Q}_{\lambda}=(\Delta Q_{A,2,\lambda},\Delta Q_{B,1,\lambda}, \Delta Q_{B,2,\lambda})$,
  there exists a generalized heat engine implementing $U_{\rm{opt,nc}}^{(\lambda)}(\vb{Q}_{\lambda})\otimes I_C$ in the sense of the semi-explicit battery formulation B\ref{item_ave_cons}*, B\ref{item_cycle}-B\ref{item_noc2}.
  %The final state $\tr_W U_{\mathrm{opt}}(\tau_{\bm{\theta}_0}^{(\lambda)}\otimes\rho_W )U_{\mathrm{opt}}^\dagger$ of the baths of this heat engine is equal to $\rho_{\rm opt}^{(\lambda)}$ regardless of the initial state $\rho_W$ of the battery.
  If $\vb{Q}_{\lambda}$ satisfies \eqref{Q_order},
 then this engine indeed runs with the generalized heat $\vb{Q}_{\lambda}$ up to $o\left(\frac{\|\vb{Q}_{\lambda}\|^2}{\lambda}\right)$, i.e.
  \begin{align}
  \tr A_{2,\lambda}(\tau_{\bm{\theta}_0}^{(\lambda)}-\rho_{\rm opt,nc}^{(\lambda)})&=\Delta Q_{A,2,\lambda}+o\left(\frac{\|\vb{Q}_{\lambda}\|^2}{\lambda}\right)\label{ac_2nc}\\
  \tr B_{i,\lambda}(\tau_{\bm{\theta}_0}^{(\lambda)}-\rho_{\rm opt,nc}^{(\lambda)})&=\Delta Q_{B,i,\lambda}+o\left(\frac{\|\vb{Q}_{\lambda}\|^2}{\lambda}\right) \; (i=1,2),\label{ac_3nc}
  \end{align}
  where $\rho_{\rm opt, nc}^{(\lambda)}$ is the final state of the baths.
  The work output of Quantity $A$ of this engine satisfies
  \begin{align}
   \Delta W_A (\rho_{\mathrm{opt,nc}}^{(\lambda)}) &=\Delta W^{\mathrm{opt}}_{A,\lambda}(\vb{Q}_{\lambda})+o\left(\frac{\|\vb{Q}_{\lambda}\|^2}{\lambda}\right).\label{ac_1nc}
  \end{align}
  Hence, FGCB is asymptotically achieved up to $o\left(\frac{\|\vb{Q}_{\lambda}\|^2}{\lambda}\right)$ by this engine.
   %%%%%%%%%%%%%%%%%%%%%%%%%%
 %There exists a generalized heat engine implements $U_{\rm{opt,nc}}^{(\lambda)}$ in the semi-explicit battery formulation .
%  If this engine runs with the generalized heat $\vb{Q}_{\lambda}=(\Delta Q_{A,2,\lambda},\Delta Q_{B,1,\lambda}, \Delta Q_{B,2,\lambda})$ in the order
%  \begin{align}
%   \lambda^{\frac{3}{4}}\ll \|\vb{Q}_{\lambda}\|\ll \lambda^{\frac{2-\alpha}{2}},\label{Q_order}
%  \end{align}
% then its final state $\rho_{\rm opt,nc}^{(\lambda)}$ satisfies
%  \begin{align}
%   \tr A_{1,\lambda}(\tau_{\bm{\theta}_0}^{(\lambda)}-\rho_{\rm opt,nc}^{(\lambda)})&=\Delta W^{\mathrm{opt}}_{A,\lambda}(\vb{Q}_{\lambda})+o\left(\frac{\|\vb{Q}_{\lambda}\|^2}{\lambda}\right)\label{ac_1}\\
%  \tr A_{2,\lambda}(\tau_{\bm{\theta}_0}^{(\lambda)}-\rho_{\rm opt,nc}^{(\lambda)})&=\Delta Q_{A,2,\lambda}+o\left(\frac{\|\vb{Q}_{\lambda}\|^2}{\lambda}\right)\label{ac_2}\\
%  \tr B_{i,\lambda}(\tau_{\bm{\theta}_0}^{(\lambda)}-\rho_{\rm opt,nc}^{(\lambda)})&=\Delta Q_{B,i,\lambda}+o\left(\frac{\|\vb{Q}_{\lambda}\|^2}{\lambda}\right) \; (i=1,2),\label{ac_3}
%  \end{align}
%  which means that FGCB is asymptotically achieved up to $o\left(\frac{\|\vb{Q}_{\lambda}\|^2}{\lambda}\right)$.
  \end{theorem}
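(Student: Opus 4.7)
The plan is to show that the proof of Theorem \ref{achieve_thm} carries over almost verbatim, with only three points requiring genuine attention: (i) verifying that $U_{\rm opt,nc}^{(\lambda)}(\vb{Q}_\lambda)\otimes I_C$ is admissible under B\ref{item_ave_cons}*, B\ref{item_cycle}--B\ref{item_noc2}; (ii) confirming that the relative entropy $D(\rho_{\rm opt,nc}^{(\lambda)}\|\tau_{\bm{\theta}_\lambda}^{(\lambda)})$ admits the same estimate \eqref{relative_ent}; and (iii) checking that the information geometric machinery behind Lemma \ref{expect_rel} still applies. The definition of $\Delta W_A$ via \eqref{workdef} combined with the \emph{average} conservation law immediately gives $\Delta W_A=\tr(A_{1,\lambda}+A_{2,\lambda})(\tau_{\bm{\theta}_0}^{(\lambda)}-\rho_{\rm opt,nc}^{(\lambda)})$, which is what we need to control.

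First I would check the admissibility. Cyclicity B\ref{item_cycle} is trivial because the working body is acted on by the identity. Independence of the battery's initial state B\ref{it_noc1} follows from the product form of $U_{\rm opt,nc}^{(\lambda)}(\vb{Q}_\lambda)$ (battery sits in a translation register only), which gives $\rho_{\rm opt,nc}^{(\lambda)}=\sum_i \ket{\varphi_i}\!\bra{\varphi_i}\!\ev{\tau_{\bm{\theta}_0}^{(\lambda)}}{\psi_i}$ irrespective of $\rho_W$. Translational symmetry B\ref{item_noc2} holds by construction because only $\Delta_A^\bullet$ and $\Delta_B^\bullet$ act on the battery. Average conservation B\ref{item_ave_cons}* is verified by a direct calculation using $\ev{\sum_l A_{l,\lambda}}{\psi_i}$ and $\ev{\sum_l A_{l,\lambda}}{\varphi_i}$ matched to the translation amounts in the definition of $U_{\rm opt,nc}^{(\lambda)}(\vb{Q}_\lambda)$.

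Next I would reduce the key estimate to the commutative computation. The crucial observation is that although the individual observables do not commute, the states $\rho_{\rm opt,nc}^{(\lambda)}$ and $\tau_{\bm{\theta}_\lambda}^{(\lambda)}$ share the common eigenbasis $\{\ket{\varphi_i}\}$ by construction \eqref{y2}. Hence they commute and
\begin{align*}
D(\rho_{\rm opt,nc}^{(\lambda)}\|\tau_{\bm{\theta}_\lambda}^{(\lambda)})
=\sum_i p_{\bm{\theta}_0}^{(\lambda)}(i)\bigl(\log p_{\bm{\theta}_0}^{(\lambda)}(i)-\log p_{\bm{\theta}_\lambda}^{(\lambda)}(i)\bigr),
\end{align*}
which is formally identical to the expression treated in Appendix \ref{apps_rel}. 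Since the probability distributions $p_{\bm{\theta}_0}^{(\lambda)}$ and $p_{\bm{\theta}_\lambda}^{(\lambda)}$ are the spectra of the same generalized thermal states appearing in the commutative proof, the strong large deviation based estimate \eqref{relative_ent} applies unchanged under Assumption \ref{assump2} and \eqref{Q_order}.

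Then I would invoke Lemma \ref{expect_rel} and Assumption \ref{assumption} exactly as in the commutative case. The Pythagorean-type identity \eqref{pythagorean} and the integral representation \eqref{apprel_re} are statements of quantum information geometry for the exponential family of generalized thermal states and the mixture family of states with fixed expectation values; they remain valid when the $X_{j,\lambda}$ fail to commute, because the BKM/Bogoliubov--Kubo--Mori metric \eqref{canonical_cor} is exactly the object that makes Lemma \ref{pyth} work. The full-rank requirement on $\rho_{\rm opt,nc}^{(\lambda)}$ needed to apply these tools is inherited from $\tau_{\bm{\theta}_0}^{(\lambda)}$ via the permutation structure, guaranteeing existence of the effective inverse temperature $\bm{\xi}_\lambda=\tilde{\bm{\theta}}_\lambda(\rho_{\rm opt,nc}^{(\lambda)})$. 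Combining \eqref{relative_ent} with \eqref{optup} and the Fisher estimate \eqref{jj} then yields the same bound \eqref{raw_est}, whence \eqref{ac_2nc}, \eqref{ac_3nc} follow, and \eqref{ac_1nc} is a consequence of \eqref{ac_2nc}, \eqref{ac_3nc} together with the FGCB \eqref{FGCB} (continuity of $\Delta W^{\rm opt}_{A,\lambda}$ in $\vb{Q}_\lambda$ at order $o(\|\vb{Q}_\lambda\|^2/\lambda)$).

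The main obstacle, as in the commutative case, is really the relative entropy estimate \eqref{relative_ent}; however, because it reduces to a statement about the two spectra only, no new non-commutative difficulty arises. The one conceptually new point worth spelling out carefully is that the Pythagorean decomposition must be justified without a simultaneous eigenbasis of the $X_{j,\lambda}$. This is handled by observing that the two families in question (the exponential family of $\tau_{\bm{\theta}}^{(\lambda)}$ and the mixture family of states with prescribed expectation values of $X_{j,\lambda}$) are defined intrinsically at the operator level, and their duality is the content of Appendix \ref{app_pyth}, so non-commutativity of the observables is absorbed into the BKM correlation $J_{\lambda,ij}$ without affecting the subsequent Taylor expansion.
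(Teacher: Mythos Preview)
Your proposal is correct and follows essentially the same route as the paper's proof: you identify the same three hinges (admissibility of $U_{\rm opt,nc}^{(\lambda)}$, commutativity of $\rho_{\rm opt,nc}^{(\lambda)}$ with $\tau_{\bm{\theta}_\lambda}^{(\lambda)}$ so that \eqref{relative_ent} reduces to the classical spectral computation, and validity of the Pythagorean theorem / Fisher estimate in the non-commutative setting), and then invoke the remainder of the commutative argument verbatim. If anything, you are slightly more explicit than the paper about why the average conservation law suffices for $\Delta W_A=\tr(A_{1,\lambda}+A_{2,\lambda})(\tau_{\bm{\theta}_0}^{(\lambda)}-\rho_{\rm opt,nc}^{(\lambda)})$ and about the full-rank property needed to define $\bm{\xi}_\lambda$, but the substance is identical.
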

  The statement is the same as Theorem \ref{achieve_thm} except the construction of the protocol $U_{\mathrm{opt,nc}}^{(\lambda)}$ and its final state $\rho_{\mathrm{opt,nc}}^{(\lambda)}$, and the average conservation law B\ref{item_ave_cons}*.
  To show Theorem \ref{achieve_thm_non}, it is enough to do the same process as the proof of Theorem \ref{achieve_thm}.
  At first, Lemma \ref{expect_rel} holds by replacing $\rho_{\mathrm{opt}}^{(\lambda)}$ with $\rho_{\mathrm{opt,nc}}^{(\lambda)}$.
  The proof is the same since the Pythagorean theorem Lemma \ref{pyth} is proved with non-commutative observables in Appendix \ref{app_pyth}, so that
  \eqref{pythagorean} holds.
  Furthermore, the estimation \eqref{relative_ent} of the relative entropy is also proved in a similar way
  % we need
%  the additional assumption:
%  \begin{Ass}
%   \begin{align}
%   \tr (A_{i,\lambda}-\tr A_{i,\lambda}\tau_{\bm{\theta}}^{(\lambda)})^2\tau_{\bm{\theta}}^{(\lambda)}&=o(\lambda^2)\\
%   \tr (B_{i,\lambda}-\tr B_{i,\lambda}\tau_{\bm{\theta}}^{(\lambda)})^2\tau_{\bm{\theta}}^{(\lambda)}&=o(\lambda^2)
%  \end{align}
%  holds uniformly on a neighborhood of $\bm{\theta}_0$.
%  \end{Ass}
%  This assumption is independent of Assumption \ref{assumption} for non-commutative case since this way of the product gives a different variation of the non-commutative version of the variance from the canonical correlation.
 %It guarantees that this variation of the non-commutative variance is not too large.
 because the final state $\rho_{\rm opt,nc}^{(\lambda)}$ commutes with $\tau^{(\lambda)}_{\bm{\theta}_{\lambda}}$ as we point out in Appendix \ref{apps_rel}.
 The estimation \eqref{jj} of the canonical correlation matrix is also established in non-commutative case in Appendix \ref{apps_fish}.
 The remaining part obviously has nothing to do with non-commutativity.
 %In fact, just the expectation values and the eigenvalues $p_{{\bm \theta}_0}^{(\lambda)}(i)$, $p_{{\bm \theta}_{\lambda}}^{(\lambda)}(i)$ are relevant in the proof of Theorem \ref{achieve_thm}.
 Thus, FGCB is also achieved in the non-commutative case in the semi-explicit battery formulation.
 \section{Examples}\label{sec_example}
 In this section, we give some examples of generalized heat engines to apply our general theory established in the above.
 In particular, we treat baths with non-i.i.d.~scaling here.
 In each case, we firstly have to verify the asymptotic extensivity (Assumption \ref{assumption}) to ensure the applicability of our setup.
 Then, we calculate the second order coefficients of the optimal performance to investigate the behavior of the finite-size effect for each model.
 
 Other examples with i.i.d.~scaling are in Appendix \ref{sec_iid_example}.
 In Appendix \ref{sub_iid}, we confirm that the previous result \cite{PhysRevE.96.012128} is reproduced for the baths with i.i.d.~scaling and only one conserved quantity.
 We also treat a toy model with non-commutative two conserved quantities in Appendix \ref{sub_spin}, though its scaling is i.i.d.
 It shows some non-trivial behavior of finite-size effects with multiple conserved quantities.
 
%\subsection{Ideal gas inside of container}
%%%%%%%%%%%%%%%%%%%%%%%%%%%%%%%%%%%%average conservation
%\tr(\sum_{j}A_{B,j}^{(i)}+A_{W}^{(i)})(\rho_{BW}'\otimes\rho_C-\tau_B\otimes\rho_C\otimes\rho_W)&=0
%%%%%%%%%%%%%%%%%%%%%%%%%%%%%%%%

\subsection{1D Ising model}\label{ising}
At first, we investigate an engine with heat baths composed of 1-dimensional (1D) Ising spin chain.
Both the hot and the cold baths consist of $n$ particles of 1D Ising spin chain, whose respective Hamiltonians $H_n^{(h)}$ and $H_n^{(c)}$ are given as
\begin{align}
 H_{n}^{(b)} = -J_b(\sum_{i=1}^{n-1}\hat{s}^{(b)}_i \hat{s}^{(b)}_{i+1} + \hat{s}^{(b)}_n \hat{s}^{(b)}_1), \quad (b=h,c)
\end{align}
where $\hat{s}^{(b)}_i$ is the spin $z$-component operator at site $i$ of the hot $(b=h)$ or the cold $(b=c)$ bath, and $J_b$ is its coupling constant.
The initial state of the baths is the Gibbs state
 \begin{align}
  \tau^{(n)}_{(\beta_h,\beta_c)}=
 \frac{e^{-\beta_{h}H_{n}^{(h)}-\beta_{c}H_n^{(c)}}}{\tr e^{-\beta_{h}H^{(h)}_{n}-\beta_{c}H^{(c)}_n}},
 \end{align}
 where $\beta_h$ and $\beta_c$ are the inverse temperatures of the hot and the cold baths respectively.
The scaling is given by the number $n$ of the spins,
which is an example of non-i.i.d.~scaling.
The partition function $\mathcal Z_n^{(b)}(\beta_b)$ of the 1D Ising model is easily calculated by the transfer matrix
\begin{align}
 T=
 \begin{pmatrix}
  e^{\beta_b J_b} & e^{-\beta_b J_b}\\
  e^{-\beta_b J_b} & e^{\beta_b J_b}
 \end{pmatrix}
\end{align}
as follows \cite{Baxter:1982aa}:
%cite? for transfer matrix method
\begin{align}
 &\mathcal Z_n^{(b)}(\beta_b)\nonumber\\
 =&\sum_{s_1,\dots,s_n=-1,1}e^{\beta_b J_b(\sum_{i=1}^{n-1}s_i s_{i+1} + s_n s_1)}\nonumber\\
 =&\tr T^n
 =(2\sinh \beta_b J_b)^n + (2\cosh \beta_b J_b)^n\nonumber\\
 =&(2\cosh \beta_b J_b)^n[1+\tanh^n \beta_b J_b]\nonumber\\
 =&(2\cosh \beta_b J_b)^n[1+o(1)].
\end{align}
Thus,
the free entropy of the baths is obtained as
\begin{align}
 &\phi_{n}(\beta_h,\beta_c)\nonumber\\
 =&\log \mathcal Z_n^{(h)}(\beta_h)\mathcal Z_n^{(c)}(\beta_c)\nonumber\\
 =&n(\log [2\cosh\beta_h J_h]+\log[2\cosh \beta_c J_c]) + o(1),
\end{align}
and the asymptotic extensivity is verified.

Then, for the work extraction $\Delta W$ and the heat $\Delta Q_{h,n}$from the hot bath,
the following FGCB holds:
\begin{align}
 &\Delta W\nonumber\\
 \leq & \left(1-\frac{\beta_{h}}{\beta_{c}}\right)\Delta Q_{h,n}
 -C\frac{\Delta Q_{h,n}^2}{n}
 +o\left(\frac{\Delta Q_{h,n}^2}{n}\right).
\end{align}
To obtain the coefficient $C$,
it is enough to calculate the asymptotic density
 $\sigma_{h}^2$ and $\sigma_{c}^2$ of the variance of the energy of the baths
 since the asymptotic density $g^{ij}(\beta_h,\beta_c)$ of the inverse matrix of the Fisher information is similar as that of the i.i.d.~case \eqref{gij}.
 These are obtained as
 \begin{align}
  \sigma_{b}^2=\frac{J_b^2}{\cosh^2\beta_b J_b}\quad (b=h,c).
 \end{align}
 Thus, we have
 \begin{align}
  C
  =&\frac{1}{2}\left(
   \frac{g^{11}(\beta_{h},\beta_{c})\beta_{h}^2}{(\beta_{c})^3}
   +\frac{g^{22}(\beta_{h},\beta_{c})}{\beta_{c}}
 \right)\nonumber\\
 =&\frac{\beta_{h}^2}{2\sigma_c^2\beta_{c}^3}
  +\frac{1}{2\sigma_h^2\beta_{c}}\nonumber\\
  =&\frac{\beta_{h}^2\cosh^2 \beta_c J_c}{2 \beta_{c}^3 J_c^2}
  +\frac{\cosh^2 \beta_h J_h}{2 \beta_{c}J_h^2}.
 \end{align}
%Similarly as the example of the i.i.d.~case,
%%
%we calculate the variance of the energy of each bath
%to calculate the coefficient $C_{AA}$ as follows.

This formula implies that the absolute value of the coupling constant $J_b$ directly affects the optimal performance in the finite-size regime.
Especially, for fixed temperatures, the coefficient $C$ takes its minimum when
$J_b$ satisfies
\begin{align}
 2\beta_b J_b \sinh 2\beta_b J_b - \cosh 2\beta_b J_b - 1=0,
\end{align}
which gives the best choice of $J_b$ for the work extraction.
On the other hand, since the sign of $J_b$ makes no difference,
the optimal performance does not depend on whether
the system is ferromagnetic or anti-ferromagnetic.
%%%%%%%%%%%%
%%%%%%%%%%%%
%%%%%%%%%%%%

\subsection{Heat engine with two baths exchanging particles}\label{sub_particles}
%We consider the similar setting as in the end of the Sec.~\ref{Sub_setup}.
%Each bath consists of fermion particles.
The next simple example is the heat engine exchanging not only energy but also particles between two baths, which may be used to model some electric cell, particle transportation, etc.
This is a first example of continuous scaling not based on i.i.d.~particles.
\subsubsection{General observation of the model}
Let the bath system $\mathcal{H}_{\mathrm{Baths}}$ be split into the cold bath $\mathcal{H}_{\mathrm{Bath},c}$ and the hot bath $\mathcal{H}_{\mathrm{Bath},h}$.
Each $\mathcal{H}_{\mathrm{Bath},b}$ $(b=c,h)$ has the Hamiltonian $H_{b,\lambda}$ and the number operator $N_{b,\lambda}$
with a scale parameter $\lambda$ as follows:
\begin{align}
 H_{b,\lambda}&=\sum_{\bm{n}=(n_1,n_2,\dots,n_{L_\lambda})} \left(\sum_{i=1}^{L_{b,\lambda}}E_{b,\lambda}(i)n_i \right) \ketbra{\bm{n}}{\bm{n}}\\
 N_{b,\lambda}&=\sum_{\bm{n}=(n_1,n_2,\dots,n_{L_\lambda})} \left(\sum_{i=1}^{L_{b,\lambda}}n_i \right) \ketbra{\bm{n}}{\bm{n}},
\end{align}
where $E_{b,\lambda}(i)$ is the $i$-th energy level, $L_{b,\lambda}$ is the number of levels of the Hamiltonian.
%Here, $\zah_{+,\nu_{0,\lambda}}$ denotes $\{0,1,\dots,\nu_{0,\lambda}\}$.
%$\nu_{0,\lambda}$ is the total number of the particle in the situation for boson, while $\nu_{0,\lambda}=1$ for fermion.
%Let the other system $(\mathcal{H}_1,H_{c,\lambda},N_{1,\lambda})$ be similarly defined.
The initial state is the grand canonical Gibbs state
with the initial generalized inverse temperature $\bm{\theta_0}=(\beta_c,\beta_h,-\beta_c\mu_c, -\beta_h\mu_h)$ with $\beta_c >\beta_h$:
\begin{align}
 &\tau_{\bm{\theta}_0}^{(\lambda)}\nonumber\\
 =&
 \frac{e^{-\beta_c H_{c,\lambda}+\beta_c\mu_c N_{c,\lambda}}}{\tr e^{-\beta_c H_{c,\lambda}+\beta_c\mu_c N_{c,\lambda}}}\otimes
 \frac{e^{-\beta_h H_{h,\lambda}+\beta_h\mu_h N_{h,\lambda}}}{\tr e^{-\beta_h H_{h,\lambda}+\beta_h\mu_h N_{h,\lambda}}},
\end{align}
where $\beta_b>0$ and $\mu_b$ are the inverse temperature and the chemical potential of $\mathcal{H}_{\mathrm{Bath},b}$ $(b=c,h)$, respectively.
Thus, each bath works as a heat and particle bath simultaneously.
%Since two bath systems are independent, it is enough to check the condition for respective system in order to apply the general theory.
Once the Assumption \ref{assumption} is verified, we have the following FGCB for the work (energy) extraction $\Delta W$
under
the endothermic heat $\Delta Q_{h,\lambda}=o(\lambda)$ from the hot bath
and the particle number $\Delta N_{b,\lambda}=o(\lambda)$ absorbed from the bath $\mathcal{H}_{\mathrm{Bath},b}$ $(b=c,h)$:
%and smallness of the boundary terms, for given appropriate size of
%heat $\Delta Q_{1,\lambda}$ absorbed from the bath $\mathcal{H}_1$ to the engine, and the particle number gain $\Delta N_{b,\lambda}$ of each bath $\mathcal{H}_b$ $(b=0,1)$,
%we obtain the achievable maximum work extraction from the baths to the battery
\begin{align}
 &\Delta W\nonumber\\
 \leq & \left(1-\frac{\beta_{h}}{\beta_{c}}\right)\Delta Q_{h,\lambda}
 +\mu_c\Delta N_{c,\lambda}+\frac{\beta_h}{\beta_{c}}\mu_h\Delta N_{h,\lambda}\nonumber\\
  &
 -C_{HH}\frac{\Delta Q_{h,\lambda}^2}{\lambda}
 -\sum_{b=c,h}C_{NN}^{b,b}\frac{\Delta N_{b,\lambda}^2}{\lambda}\nonumber\\
 &-C_{NN}^{c,h}\frac{\Delta N_{c,\lambda}\Delta N_{h,\lambda}}{\lambda}\nonumber\\
 &-\sum_{b=c,h}C_{HN}^{b}\frac{\Delta Q_{h,\lambda}\Delta N_{b,\lambda}}{\lambda}\nonumber\\
  &+o\left(\frac{\beta_h ^2\Delta Q_{h,\lambda}^2+\Delta N_{c,\lambda}^2+\Delta N_{h,\lambda}^2}{\lambda}\right),\label{FGCB_ex3}
\end{align}
where the signs of $\Delta Q_{h,\lambda}$ and $\Delta N_{b,\lambda}$ $(b=c,h)$ are taken positive if they are absorbed from the bath to the engine.
%We made the coefficients having the dimension of $\lambda^{-1}$, in \eqref{FGCB_ex3}.
The coefficients are given as
\begin{align}
 C_{HH}=& \frac{1}{2\beta_c}\left[
 \frac{\sigma_{N_h}^2}{\sigma_{H_h}^2\sigma_{N_h}^2-V_{HN}^{(h)2}}
   +\frac{\beta_h^2}{\beta_c^2}\frac{\sigma_{N_c}^2}{\sigma_{H_c}^2\sigma_{N_c}^2-V_{HN}^{(c)2}}
 \right]
 \label{CCa}\\
 C_{NN}^{h,h}=&
 \frac{1}{2\beta_c}\left[
 \frac{\sigma_{H_h}^2}{\sigma_{N_h}^2\sigma_{H_h}^2-V_{HN}^{(h)2}}
 +\frac{\beta_h^2\mu_h^2}{\beta_c^2}\frac{\sigma_{N_c}^2}{\sigma_{H_c}^2\sigma_{N_c}^2-V_{HN}^{(c)2}}
 \right]
 \\
 C_{NN}^{c,c}=&
 \frac{1}{2\beta_c}\left[
 \frac{\sigma_{H_c}^2}{\sigma_{N_c}^2\sigma_{H_c}^2-V_{HN}^{(c)2}}
 +\mu_c^2\frac{\sigma_{N_c}^2}{\sigma_{H_c}^2\sigma_{N_c}^2-V_{HN}^{(c)2}}
 \right.\nonumber\\
 &\hspace{10mm}\left.
 -2\mu_c\frac{V_{HN}^{(c)}}{\sigma_{H_c}^2\sigma_{N_c}^2-V_{HN}^{(c)2}}
 \right]
 \\
 C_{HN}^{h}=&
 -\frac{1}{\beta_c}\left[
 \frac{\beta_h^2\mu_h}{\beta_c^2}\frac{\sigma_{N_c}^2}{\sigma_{H_c}^2\sigma_{N_c}^2-V_{HN}^{(c)2}}
 +\frac{V_{HN}^{(h)}}{\sigma_{H_h}^2\sigma_{N_h}^2-V_{HN}^{(h)2}}
 \right]
 \\
 C_{HN}^{c}=&
 \frac{1}{\beta_c}\left[
 \frac{\beta_h}{\beta_c}\frac{V_{HN}^{(c)}}{\sigma_{H_c}^2\sigma_{N_c}^2-V_{HN}^{(c)2}}
 -\frac{\beta_h\mu_c}{\beta_c}\frac{\sigma_{N_c}^2}{\sigma_{H_c}^2\sigma_{N_c}^2-V_{HN}^{(c)2}}
 \right]
 \\
 C_{NN}^{c,h}=&
 \frac{1}{\beta_c}\left[
 \frac{\beta_h\mu_h\mu_c}{\beta_c}\frac{\sigma_{N_c}^2}{\sigma_{H_c}^2\sigma_{N_c}^2-V_{HN}^{(c)2}}\right.\nonumber\\
 &\hspace{10mm}\left.-\frac{\beta_h\mu_h}{\beta_c}\frac{V_{HN}^{(c)}}{\sigma_{H_c}^2\sigma_{N_c}^2-V_{HN}^{(c)2}}
 \right]
 ,\label{CCe}
\end{align}
where $\sigma_{H_b}^2$, $\sigma_{N_b}^2$ and $V_{HN}^{(b)}$ are
the respective asymptotic densities of the variance ${\rm Var}[A]:=\tr A^2\tau_{\theta_0}^{(\lambda)} -(\tr A\tau_{\theta_0}^{(\lambda)} )^2$
and covariance ${\rm Cov}[A,B]:=\tr AB\tau_{\theta_0}^{(\lambda)} -(\tr A\tau_{\theta_0}^{(\lambda)})(\tr B\tau_{\theta_0}^{(\lambda)})$ of each quantity defined as
$\sigma_{H_b(N_b)}^2:=\lim_{\lambda\rightarrow\infty}{\rm Var}[H_{b,\lambda}(N_b)]/\lambda$, $V_{HN}^{(b)}:=\lim_{\lambda\rightarrow\infty}{\rm Cov}[H_{b,\lambda},N_{b,\lambda}]/\lambda$ $(b=c,h)$.
Thus, we obtain the explicit form of dependence of the optimal performance on the fluctuation of the energy and the particle number as well as their correlation
in the coefficients of the finite-size effect.
%where $(J_{b,\lambda}^{ij})_{ij}$ is the inverse of the covariance matrix
% \begin{align}
%  &
% \begin{pmatrix}
%  \sigma_{H_{b,\lambda}}^2 & \text{Cov}(H_{b,\lambda},N_{b,\lambda})\\
%  \text{Cov}(H_{b,\lambda},N_{b,\lambda}) & \sigma_{N_{b,\lambda}}^2
% \end{pmatrix}
%  \nonumber\\
% :=&
% \begin{pmatrix}
%  \langle H_{b,\lambda}^2 \rangle_{\tau_{\theta_0}^{(\lambda)}}
%  -\langle H_{b,\lambda} \rangle_{\tau_{\theta_0}^{(\lambda)}}^2
%  &
%  \langle H_{b,\lambda}N_{b,\lambda} \rangle_{\tau_{\theta_0}^{(\lambda)}}
%  -\langle H_{b,\lambda} \rangle_{\tau_{\theta_0}^{(\lambda)}}\langle N_{b,\lambda} \rangle_{\tau_{\theta_0}^{(\lambda)}}\\
%  \langle H_{b,\lambda}N_{b,\lambda} \rangle_{\tau_{\theta_0}^{(\lambda)}}
%  -\langle H_{b,\lambda} \rangle_{\tau_{\theta_0}^{(\lambda)}}\langle N_{b,\lambda} \rangle_{\tau_{\theta_0}^{(\lambda)}}
%  &
%  \langle N_{b,\lambda}^2 \rangle_{\tau_{\theta_0}^{(\lambda)}}
%  -\langle N_{b,\lambda} \rangle_{\tau_{\theta_0}^{(\lambda)}}^2
% \end{pmatrix}
%  ,
% \end{align}
% which is given by
%  \begin{align}
%   &\left(J_{i,\lambda}^{kl}\right)_{kl}\nonumber\\
%   =&
%   \frac{1}{\sigma_{H_{b,\lambda}}^2\sigma_{N_{b,\lambda}}^2-\text{Cov}(H_{b,\lambda},N_{b,\lambda})^2}
%  \begin{pmatrix}
%   \sigma_{N_{b,\lambda}}^2 & -\text{Cov}(H_{b,\lambda},N_{b,\lambda})\\
%   -\text{Cov}(H_{b,\lambda},N_{b,\lambda}) & \sigma_{H_{b,\lambda}}^2
%  \end{pmatrix}
%   .
   %  \end{align}

   On the other hand, we have the following FGCB for the particle number extraction $\Delta N_{\rm tot}$ under the endothermic heat $\Delta Q_{b,\lambda}$ from $\mathcal{H}_{\mathrm{Bath},b}$ $(b=c,h)$ and the particle number $\Delta N_{h,\lambda}$ absorbed from one bath, say hot bath:
   \begin{align}
 &\Delta N_{\rm tot}\nonumber\\
 \leq & \left(1-\frac{\beta_{h}\mu_h}{\beta_{c}\mu_c}\right)\Delta N_{h,\lambda}
 +\mu_c^{-1}\Delta Q_{c,\lambda}+\frac{\beta_h}{\beta_{c}\mu_c}\Delta Q_{h,\lambda}\nonumber\\
  &
 -\tilde C_{HH}\frac{\Delta N_{h,\lambda}^2}{\lambda}
 -\sum_{b=c,h}\tilde C_{NN}^{b,b}\frac{\Delta Q_{b,\lambda}^2}{\lambda}\nonumber\\
 &-\tilde C_{NN}^{c,h}\frac{\Delta Q_{c,\lambda}\Delta Q_{h,\lambda}}{\lambda}
 -\sum_{b=c,h}\tilde C_{HN}^{b}\frac{\Delta N_{h,\lambda}\Delta Q_{b,\lambda}}{\lambda}\nonumber\\
  &+o\left(\frac{\Delta N_{h,\lambda}^2+\beta_c^2\Delta Q_{c,\lambda}^2+\beta_h^2\Delta Q_{h,\lambda}^2}{\lambda}\right),
\end{align}
where the coefficients are similarly calculated.
\subsubsection{A concrete model: an ideal Fermi gas inside the one dimensional well potential}
  As a concrete model, we consider an ideal Fermi gas.
  Let each bath $\mathcal{H}_{\mathrm{Bath},b}$ $(b=c,h)$ be composed of an ideal Fermi gas inside the infinite well potential
  \begin{align}
   V_{b,\lambda}(x)
   =
   \left\{
   \begin{array}{cc}
    0 & (x\in [0,\lambda l_b])\\
    \infty & (x\notin [0,\lambda l_b])
   \end{array}
   \right.
   ,
  \end{align}
  where $l_b$ is the length parameter to determine the rate of the size between two baths.
  $\lambda$ is a dimensionless scaling parameter.
  For simplicity, we set $l_b$ as the unit length for both baths.
  In this case, the energy eigenvalues of one particle is given by
  \begin{align}
   E_{b,\lambda}(i):=E_{\lambda}(i)
   =\frac{\hbar^2\pi^2 i^2}{2m \lambda^2}=:\frac{E_0}{\lambda^2}i^2
   \quad (i=1,2,\dots)
   ,
  \end{align}
  where $m$ is the mass of the particle.
  Moreover, we introduce a cut-off energy $E$ to this Hamiltonian such that
  $E_\lambda(i)\leq E$.
  That is because the dimension should be finite
  to apply our general theory, strictly speaking.
  Nevertheless, with large enough $E$, this toy model
  can be an approximation of the true square well potential.
  %which grows with the length $\lambda$.
  %This model may also relevant
  %for
  %a situation where higher energy level should be outside the bath for some reason.
  In this case, the number $L_\lambda$ of levels becomes finite, which is written as
  \begin{align}
   L_{\lambda}=\max_{\frac{E_0}{\lambda^2}i^2\leq E}i
   =\left\lfloor \sqrt{\frac{E}{E_0}}\lambda\right\rfloor.
  \end{align}
  Then, the free entropy of the bath $\mathcal{H}_{\mathrm{Bath},b}$ $(b=c,h)$ satisfies the asymptotic form
  \begin{align}
   &\phi_{b,\lambda}(\beta_b,\mu_b)\nonumber\\
   =&
   \log \sum_{(n_1,n_2,\dots,n_{L_\lambda})\in \{0,1\}^{L_{0,\lambda}}}
   \prod_{i=1}^{L_{\lambda}}e^{\beta_b(-E_{\lambda}(i)+\mu_b)n_i}\nonumber\\   
   =&
   \frac{\lambda}{2\sqrt{E_0}}
   \int_0^{E} \epsilon^{-\frac{1}{2}}\log (1+e^{\beta_b\mu_b-\beta_b \epsilon})d\epsilon +\order{1}\nonumber\\
   =&
   \int_0^{E} \frac{\lambda\sqrt{2m}}{2\pi\hbar}\epsilon^{-\frac{1}{2}}\log (1+e^{\beta_b\mu_b-\beta_b \epsilon})d\epsilon +\order{1}\nonumber\\
   =:&
   \lambda \phi_b(\beta_b,\mu_b)+\order{1}.
  \end{align}
  Thus, Assumption \ref{assumption} is satisfied with smaller deviation from the extensivity than $\order{\lambda^{\frac{1}{2}}}$.
  Moreover, since the relations $\|H_{b,\lambda}\|\leq E L_\lambda=\order{\lambda}$ and $\|N_{b,\lambda}\|=L_\lambda=\order{\lambda}$ also hold,
  all the conditions for the achievability for Theorem \ref{achieve_thm} are verified. Hence this is indeed an example where the maximum work extraction in FGCB (\ref{FGCB_ex3}) is achievable.

  Now, we further calculate the second order coefficients \eqref{CCa}-\eqref{CCe} in FGCB \eqref{FGCB_ex3} in low temperature approximation.
  Supposing that $E$ is sufficiently large, we regard $E$ as $\infty$.
  The asymptotic density of the energy $\epsilon_b$ and the particle number $n_b$ are given as
  \begin{align}
   \epsilon_b&=\frac{\sqrt{2m}}{2\pi\hbar}\int_{0}^{\infty}\frac{\epsilon^{\frac{1}{2}}}{e^{\beta_b \epsilon -\beta_b\mu_b}+1}d\epsilon\\
   n_b&=\frac{\sqrt{2m}}{2\pi\hbar}\int_{0}^{\infty}\frac{\epsilon^{-\frac{1}{2}}}{e^{\beta_b \epsilon -\beta_b\mu_b}}d\epsilon.
  \end{align}
  For sufficiently low temperature where $1\ll\mu_b\beta_b$ holds,
  $I:=\int_{0}^{\infty}\frac{F(\epsilon)}{e^{\beta_b \epsilon -\beta_b\mu_b}} d\epsilon$ is approximated as
  \begin{align}
   I\approx \int_{0}^{\mu_b}F(\epsilon)d\epsilon+\frac{\pi^2}{6}\beta_b^{-2}F'(\mu_b).
  \end{align}
  Then, $\epsilon_b$ and $n_b$ are approximated as
  \begin{align}
   \epsilon_b&=\frac{\sqrt{2m}}{2\pi\hbar}\mu_b^{-\frac{1}{2}}\left[\frac{2}{3}\mu_b^2+\frac{\pi^2}{12}\beta_b^{-2}\right]\\
   n_b&=\frac{\sqrt{2m}}{2\pi\hbar}\mu_b^{\frac{1}{2}}\left[2-\frac{\pi^2}{12}\beta_b^{-2}\mu_b^{-2}\right].
  \end{align}
  Calculating their derivatives, we obtain the variances and correlation as
  \begin{align}
   \sigma_{H_b}^2&=\frac{\sqrt{2m}}{2\pi\hbar}\frac{8\beta_b^2\mu_b^2+\pi^2}{8\beta_b^3\mu_b^{\frac{1}{2}}}\label{shb}\\
   \sigma_{N_b}^2&=\frac{\sqrt{2m}}{2\pi\hbar}\frac{8\beta_b^2\mu_b^2+\pi^2}{8\beta_b^3\mu_b^{\frac{5}{2}}}\\
   V_{HN}^{(b)}&=\frac{\sqrt{2m}}{2\pi\hbar}\frac{24\beta_b^2\mu_b^2-\pi^2}{24\beta_b^3\mu_b^{\frac{3}{2}}}.\label{vhn}
  \end{align}
  We should consider the finite-size effect under
  the fixed first order coefficients, namely, we fix $r:=\beta_h/\beta_c$, $\mu_h$ and $\mu_c$.
  Then, we obtain the second order coefficients
  as follows:
  \begin{align}
   C_{HH}=&\frac{9\hbar\beta_c^2}{\sqrt{2m}\pi}\left[r^2\mu_c^{\frac{1}{2}}\frac{8\beta_c^2\mu_c^2+\pi^2}{24\beta_c^2\mu_c^2+\pi^2}\right.\nonumber\\
   &\hspace{15mm}\left.
   +r^3\mu_h^{\frac{1}{2}}\frac{8\beta_c^2r^2\mu_h^2+\pi^2}{24\beta_c^2r^2\mu_h^2+\pi^2}\right]\label{CHHp}\\
   C_{NN}^{h,h}=&\frac{9\hbar\beta_c^2}{\sqrt{2m}\pi}\left[r^2\mu_h^2\mu_c^{\frac{1}{2}}\frac{8\beta_c^2\mu_c^2+\pi^2}{24\beta_c^2\mu_c^2+\pi^2}\right.\nonumber\\
   &\hspace{15mm}\left.
   +r^3\mu_h^{\frac{5}{2}}\frac{8\beta_c^2r^2\mu_h^2+\pi^2}{24\beta_c^2r^2\mu_h^2+\pi^2}\right]\\
   C_{NN}^{c,c}=&\frac{24\pi\hbar\mu_c^{\frac{5}{2}}\beta_c^2}{\sqrt{2m}(24\beta_c^2\mu_c^2+\pi^2)}\\
   C_{NN}^{c,h}=&\frac{24\pi\hbar r\mu_h\mu_c^{\frac{3}{2}}\beta_c^2}{\sqrt{2m}(24\beta_c^2\mu_c^2+\pi^2)}\\
   C_{HN}^{h}=&-\frac{6\hbar\beta_c^2}{\sqrt{2m}\pi}\left[3r^2\mu_h\mu_c^{\frac{1}{2}}\frac{8\beta_c^2\mu_c^2+\pi^2}{24\beta_c^2\mu_c^2+\pi^2}\right.\nonumber\\
   &\hspace{18mm}\left.
   +r^3\mu_h^{\frac{3}{2}}\frac{24\beta_c^2r^2\mu_h^2-\pi^2}{24\beta_c^2r^2\mu_h^2+\pi^2}\right]\\
   C_{HN}^{c}=&-\frac{24\pi\hbar r\mu_c^{\frac{3}{2}}\beta_c^2}{\sqrt{2m}(24\beta_c^2\mu_c^2+\pi^2)}.\label{CHNp}
  \end{align}
  According to these coefficients, it is remarkable that
  the optimal performance with finite-size effects explicitly depends on
  the mass $m$ of the particle.
  It implies that heavier particles have better performance for heat engines.
  According to the interpretation of the finite-size effect as mentioned shortly after Theorem \ref{Thm_FGCB},
  this feature implies that the performance is gained because the large mass leads to small response of the baths due to the large inertia.

  In addition, even though we fix the first order coefficients, the second order coefficients \eqref{CHHp}-\eqref{CHNp} depend on the inverse temperature $\beta_c$.
  Their expressions imply that the small $\beta_c$ (high temperature) gives the better performance.
  This behavior is also consistent with the response of the inverse temperature as follows.
  The heat capacities get larger for the higher temperature as seen from the expressions \eqref{shb}-\eqref{vhn}.
  Hence, the higher the temperature gets,
  the smaller the response of the inverse temperature to the variation of the conserved quantities becomes.
  %is explicitly included in , reflecting that 
 %\section{Comparison with the prior works}\label{sec_discussion}
  
  \section{Conclusion}\label{sec_conclusion}
  We have revealed the effects of the finiteness of the baths with arbitrary multiple conserved quantities on the optimal performance of the generalized heat engine.
  We have extended the scaling to the generic form, imposing the extensivity.
  Under this generic scaling,
  we have derived FGCB as a fine-grained upper bound on the performance of generalized heat engines.
  FGCB includes the second order terms of order $\order{\frac{\|\vb{Q}_{\lambda}\|^2}{\lambda}}$ as the finite-size effects.
  Contrary to the thermodynamic limit regime,
  the coefficients of this finite-size effects terms reflect the canonical correlations between the multiple conserved quantities of the baths as well as the generalized inverse temperatures.
  In particular, for the case without correlation between different baths,
  large fluctuation and small correlation of the quantities enlarge the optimal performance.
  %The second example in Sec.~\ref{sec_example} of the spin-$\frac{1}{2}$ bath explicitly illustrates this feature.
  %On the other hand, the third example of the continuous scaling, the volume of the container including identical particles.

  FGCB has been given for the implicit-battery formulation for wide applicability of the theory.
  However, to show the achievability of FGCB,
  %check ' in physical sense ' or so? 
  we should construct a protocol under
  the explicit-battery formulation.
  We have imposed independence of the state of the battery
  on the explicit-battery operations
  as the
  no-cheating condition
  to guarantee that the battery really works only as a storage of extracted work,
  but not as an entropy sink.
  In this sense, the energy transfer to the battery is indeed work-like.
  Under the conservation laws, the cyclicity of the working body, and the no-cheating condition,
  we have explicitly constructed a protocol with an explicit battery.
  Our protocol has been given by a permutation of the basis of the baths, which works independently of the detail of the system.
  %This protocol is completed in finite steps, and does not need any infinite sequence of operations or the thermodynamic limit.
  %%%%%%%%%%%
  %Furthermore, our protocol does not depend on also the state of the working body.
  Though the equality in FGCB is attained by the thermal state at the ideal final inverse temperature $\bm\theta_{\lambda}$ which is determined by the conditions (\ref{tl_1})-(\ref{tl_4}), this state cannot necessarily be obtained from the initial thermal state through the operations
  in finite-size bath.
  Instead, the resultant state of our protocol is very close to this ideal thermal state.
  The closeness in terms of the relative entropy shows that our protocol indeed achieves the equality in FGCB up to $o\left(\frac{\|\vb{Q}_{\lambda}\|^2}{\lambda}\right)$, which is negligible in our regime.
  We have shown this estimation
  by making use of the information geometric structure.
  One of the technical key points is the extension of the central limit theorem, which is needed for the strong large deviation estimation for our generic scaling, whose detail is given in Appendices \ref{app_gCLT} and \ref{app_relative_ent}.
  In our protocol, the dynamics on the working body $\mathcal{H}_C$ is trivial, and completely split from the baths and the battery.
  Thus, no catalytic effects work in this asymptotically optimal protocol, which means
  that the improvement of the optimal performance by catalytic effects is of order $o\left(\frac{\|\vb{Q}_{\lambda}\|^2}{\lambda}\right)$.
  However, note that the working body $\mathcal{H}_C$ should be needed to physically realize the dynamics
  even if the resultant map per one cycle is trivial like our protocol.
  
  Strictly speaking, we have imposed additional conditions in Theorem \ref{achieve_thm}.
  One is on the order of the norm of each quantity as \eqref{mat_norm_0}.
  Since this condition is needed just for a technical reason,
  it is possibly removed in future works.
  The others are the conditions \eqref{str_extensive_1} and \eqref{str_extensive_2} that the order of the deviation from the extensivity \eqref{extensive_1}, \eqref{extensive_2}
  is sufficiently small as $\order{\lambda^{\alpha}}$ with $\alpha<\frac{1}{2}$.
  This is possibly more essential in a physical sense, since
  great deal of the deviation from the extensivity of each quantity possibly degrades the performance of the engine.
  Further investigation is needed to reveal such an effects caused by the deviation from the extensivity on the performance of protocols.
  %%%%%%%%%%%%
  In addition, to verify that
  our protocol achieves the optimal performance in our analysis,
  it is also needed that
  we run the engine with the heat of the order
  $\lambda^{\frac{5}{8}}\ll\|\vb{Q}_{\lambda}\|\ll \lambda$.
  %%%%%%
  This condition is required to verify that the relative entropy between
  the ideal final thermal state and the final state of our protocol is small enough.
  %Though this condition may imply that
%  small enough order of endotherm so that the disturbance of the baths is sufficiently small in reflection of boundary effects, its meaning is still unclear.
   It is a future work to further investigate the relation between the amount of generalized heat and the scale.
  It is an interesting feature that the quality of the protocol may alter according to its amount of the generalized heat.
  %no-cheating condition0
  %catalytic effect, fluctuations 2
  %optimality info geo, st LD (CLT extension) 1
  Furthermore, it remains to verify the relation between the work fluctuation and the performance, though this is also important in order to investigate the realistic usefulness of the heat engine \cite{PhysRevLett.115.260601,PhysRevLett.118.100602}.

  Our protocols have similar forms for both commutative and non-commutative cases.
  Nevertheless, only the average conservation is satisfied for non-commutative case, though the strict conservation law is satisfied for the commutative case.
  %state independence
  While the validity of the average conservation law for the protocol may depend on the initial state in general,
  our protocol satisfies the conservation law regardless of the initial state of the battery and working body,
  just depending on the baths.
  %Although protocols under only the average conservation law indefinitely uses coherence,
  Although our protocol for the non-commutative case indefinitely uses coherence,
  it may be revived if some resource of coherence is appropriately included in our operation as pointed out by
  \cite{PhysRevX.5.021001}.
  Giving protocols for multiple non-commutative quantities under strictly conservation law is an important open but challenging problem.

  Finally, we have applied our general results to some examples. 1D Ising spin chain was a first example for the non-i.i.d. scaling with asymptotic extensivity. We have shown that the coupling constant of the spin chain affects the optimal performance for the finite baths. Especially, the best value of the coupling constant gives the largest optimal performance. On the other hand, even the finite-size effect is independent of whether the spin chain is ferromagnetic or anti-ferromagnetic.
  As for an example of multiple conserved quantities with non-i.i.d.~scaling, we have considered a heat engine with an ideal gas exchanging particles.
  Though it is so famous canonical example, it was for the first time to explicitly calculate the coefficients of the finite-size-effect terms in the optimal performance of that heat engine.
  For an ideal Fermi gas inside a well potential,
  we have found that these coefficients explicitly depend on the mass of the particle, which is again quite different from the nature in thermodynamic limit.
  This fact implies that heavier particles have better performance for heat engines.
  From these examples, we have already seen that
  the finite-size effect depends on
  the peculiar parameters for each model such as the coupling constant and mass in various ways.
  It is an important future work to
  investigate the finite-size effect for
  %our general theory to
  more practical heat engines in detail, and to compare it with our general result.

  Our protocol may be hard to experimentally realize since it involves in microscopic control of the baths' basis.
  Thus, a realistic protocol should be considered as a future work.
  Recently, a realization of thermal operations (with infinite baths) by realistic operations was studied \cite{1511.06553}.
  Though that result cannot be directly applied to the finite-size regime, our protocol may be realized by some combination of realistic operations.
  Then, our model may be applicable to an electric battery, or biological systems in a realistic mesoscopic scale.

  Since our analysis is based on the asymptotic analysis of finite-size systems,
the obtained results clarify the optimal performance of 
mesoscopic systems.
We consider that our analysis is a first step to universal understanding of quantum thermodynamics in various scale.
 % This analysis can be one of the first steps to universal understanding of quantum thermodynamics in various scale.
   \begin{acknowledgments}
    The authors would like to thank Dr. Hiroyasu Tajima for his helpful comments.
   KI was supported by JSPS KAKENHI Grant Number JP16J03549.
   MH was supported in part by a MEXT Grant-in-Aid for Scientific Research (B) No. 16KT0017,
Fund for the Promotion of Joint International Research (Fostering Joint International Research) No. 15KK0007,
the Okawa Research Grant, and Kayamori Foundation of Informational Science Advancement.
   \end{acknowledgments}

  \widetext
  \appendix

  \section{Information geometry for density matrices}\label{app_bregman}
  In this section, we review the detail of the information geometric analysis in the proof of Lemma \ref{expect_rel} in Sec. \ref{sub_ach_FGCB} including non-commutative case.
  At first, we give a brief review on the information geometry based on the theory of the Bregman divergence.
  This theory gives an abstract framework for the information geometry.
  Then, we can use the results from this theory just by applying it to individual cases.
  Next, we do so for our case.
  \subsection{The Bregman divergence}\label{app_breg_pyth}
  We review an abstract framework of information geometry in terms of the Bregman divergence \cite{Hayashi:2017aa}.
  The meaning of the following abstraction will become clear when we apply this theory to our state family in the next subsection.
  
  We consider a twice-differentiable strictly convex function $\mu$ defined on an open subset $\Theta$ of $\mathbb R^D$.
  The set $\Theta$ usually corresponds to the parameter space of the states in consideration.
  Then, we define the Bregman divergence of $\mu$ as
  \begin{align}
   D^{\mu}(\bar{\theta}\|\theta):=\sum_{k=1}^{D}\frac{\partial\mu}{\partial \theta^k}(\bar{\theta})(\bar{\theta}^k-\theta^k)-\mu(\bar{\theta})+\mu(\theta).
  \end{align}
  The Bregman divergence is a `distance measure' of the abstract parameter space $\Theta$ induced by $\mu$, which is called a potential function.
  It is an advantage of the abstract theory that once we find such a potential function $\mu$, we can apply all the results based on $\mu$.

  Since $\mu$ is strictly convex, $\theta \mapsto \eta(\theta):=\nabla\mu(\theta)$ is one-to-one.
  Thus, $\eta$ gives another parametrization.
  Because $\mu$ plays the role of the free entropy,
  its derivatives $\eta_k=\frac{\partial\mu}{\partial \theta^k}$ correspond to the expectation values.
  The Bregman divergence can be expressed by this dual parameter.
  To do so, we observe the Legendre transformation $\nu$ of $\mu$
  \begin{align}
   \nu(\eta):=\max_{\tilde{\theta}}\left[\sum_{k}\eta_k \tilde{\theta}^k-\mu(\tilde{\theta})\right].
  \end{align}
  Then, the Bregman divergence $D^{\nu}(\eta\|\bar{\eta})$ for $\nu$ is also defined since $\nu$ is also a strictly convex function of $\eta$.
  When $\eta = \nabla \mu(\theta)$, we have
  \begin{align}
   \nu(\eta)=\sum_{k}\eta_k \theta^k - \mu(\theta)
  \end{align}
  by the definition.
  Using this relation for $\eta=\nabla \mu(\theta)$ and $\bar{\eta}=\nabla \mu(\bar{\theta})$,
  we obtain
  \begin{align}
   D^{\mu}(\bar{\theta}\|\theta)
   =&\sum_{k}\bar{\eta}_k (\bar{\theta})(\bar{\theta}^k-\theta^k)-\mu(\bar{\theta})+\mu(\theta)
   =\sum_{k}\theta^k (\eta_k - \bar{\eta}_k)+ \left(\sum_{k}\bar{\eta}_k\bar{\theta}^k-\mu(\bar{\theta})\right)
   -\left(\sum_{k}\eta_k \theta^k - \mu(\theta)\right)\nonumber\\
   =&\sum_{k}\theta^k (\eta_k - \bar{\eta}_k) + \nu(\bar{\eta})-\nu(\eta)=D^{\nu}(\eta\|\bar{\eta})\nonumber\\
   =&\int_{0}^{1}\sum_{k,j}(\eta_k - \bar{\eta}_k)(\eta_j - \bar{\eta}_j)
   \frac{\partial^2 \nu}{\partial\eta_k\partial\eta_j}(\bar{\eta}+(\bar{\eta}-\eta)t)t dt,\label{apprel}
  \end{align}
  where the last line follows from the Taylor's formula.
  It should be remarked that the matrix $(\frac{\partial^2 \nu}{\partial\eta_k\partial\eta_j}(\eta))_{kj}$
  is verified to be the inverse of $(\frac{\partial^2 \mu}{\partial\mu^k\partial\theta^j}(\theta))_{kj}$ from the chain rule and the inverse relation $\theta^k=\frac{\partial\nu}{\partial\eta_k}(\eta)$.

  With a point $\theta'\in\Theta$ and $l$ linearly independent vectors $v_1,\dots, v_l \in \mathbb R^D$,
  an $l$-dimensional flat $\mathcal E=\{\theta\in\Theta | \theta=\theta'+\sum_{j=1}^l a^j v_j, (a^1,a^2,\dots,a^l)\in \mathbb R^l\}$ is defined.
  Such a flat $\mathcal E$ is called an exponential subfamily of $\Theta$ whose generator is $\{v_1,\dots, v_l\}$.
  As the name indicates, this is an abstraction of exponential family, i.e.~a family of generalized thermal states.
  As a `dual flat' of the exponential subfamily $\mathcal E$,
  $\mathcal M=\{\theta\in\Theta| b_j = \sum_{i=1}^D v_j^i \eta_i (\theta) \  (j=1,\dots,l)\}$ with some fixed real numbers $b_1,\dots,b_l$
  is called a mixture subfamily of $\Theta$ whose generator is $\{v_1,\dots, v_l\}$.
  The definition of a mixture subfamily means that $\mathcal M$ is a flat with respect to the dual parameter $\eta$.
  Hence, $\mathcal M$ corresponds to the state family with fixed expectation values.
  Then, the following Pythagorean theorem \cite{Amari:2000aa} for the Bregman divergence holds:
  \begin{proposition}[Amari \cite{5290302}]\label{prop_pythagorean}
   Let $\mathcal M$ be an mixture subfamily of $\Theta$ whose generator is $\{v_1,\dots, v_l\}$.
   For an arbitrary point $\theta\in\Theta$,
   there exists a unique intersection $\theta^*$ between $\mathcal M$ and the exponential subfamily $\mathcal E$ containing $\theta$ with the same generator $\{v_1,\dots, v_l\}$.
   This $\theta^*$ satisfies the following:
   \begin{enumerate}
    \item For any point $\theta'\in \mathcal M$, $D^{\mu}(\theta'\|\theta)=D^{\mu}(\theta'\|\theta^*)+D^{\mu}(\theta^*\|\theta)$ holds.
    \item $\theta^*=\arg\min_{\theta'\in \mathcal M} D^{\mu}(\theta'\|\theta)$.
   \end{enumerate}
  \end{proposition}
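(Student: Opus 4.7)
The plan is to exploit the duality between the $\theta$ and $\eta = \nabla\mu(\theta)$ coordinates together with the mixed expression for the Bregman divergence already derived in \eqref{apprel}, namely $D^{\mu}(\bar\theta\|\theta) = \sum_k \theta^k (\eta_k - \bar\eta_k) + \nu(\bar\eta) - \nu(\eta)$. In these dual coordinates the exponential subfamily $\mathcal E$ is an affine subspace of the $\theta$-space along $\mathrm{span}(v_1,\dots,v_l)$, while the mixture subfamily $\mathcal M$ is an affine subspace of the $\eta$-space annihilating the same $v_j$'s under the natural pairing. This mutual orthogonality in the dual sense is precisely the geometric setup that should produce a Pythagorean identity.

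For the existence and uniqueness of $\theta^*$, I would parametrize $\mathcal E$ by $a \in \mathbb R^l$ via $\theta(a) = \theta + \sum_j a^j v_j$ and observe the identity $\sum_i v_j^i \eta_i(\theta(a)) = \partial_{a^j}[\mu(\theta(a))]$. Then $\theta(a) \in \mathcal M$ is equivalent to the gradient equation $\nabla_a G(a) = 0$ for the function $G(a) := \mu(\theta(a)) - \sum_j b_j a^j$. Since $\mu$ is strictly convex and $\{v_1,\dots,v_l\}$ are linearly independent, $G$ is strictly convex on $\mathbb R^l$; its critical point is necessarily unique, and under the standing assumptions (smooth exponential family with full-rank Hessian of $\mu$ on the interior of $\Theta$, as in the thermal-state application) such a critical point exists and yields $\theta^*$.

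For part 1, I apply \eqref{apprel} to each of the three Bregman divergences in $D^{\mu}(\theta'\|\theta) - D^{\mu}(\theta'\|\theta^*) - D^{\mu}(\theta^*\|\theta)$. The $\nu$-terms telescope and cancel, and after regrouping the linear pieces one is left with $\sum_k (\theta^k - (\theta^*)^k)(\eta^*_k - \eta'_k)$. The defining property of $\mathcal E$ forces $\theta - \theta^* = \sum_j c^j v_j$ for some scalars $c^j$, while the defining property of $\mathcal M$ gives $\sum_k v_j^k (\eta^*_k - \eta'_k) = b_j - b_j = 0$ for every $j$. Substituting the former into the remaining bilinear expression and pulling the sum through, the expression vanishes and part 1 follows. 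Part 2 is then an immediate corollary: the Pythagorean identity combined with the non-negativity of $D^{\mu}$ (itself a consequence of strict convexity of $\mu$ via the integral form of \eqref{apprel}) gives $D^{\mu}(\theta'\|\theta) \geq D^{\mu}(\theta^*\|\theta)$, with equality if and only if $D^{\mu}(\theta'\|\theta^*) = 0$, that is, if and only if $\theta' = \theta^*$.

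The main obstacle is not in any single algebraic step --- once the dual formula \eqref{apprel} is in hand the Pythagorean identity essentially writes itself --- but rather in cleanly verifying existence of the intersection $\theta^*$. One must ensure that the minimizer of the strictly convex function $G$ lies in the interior of the admissible parameter domain rather than escaping to its boundary; in the setting of finite-dimensional thermal states this is automatic from smoothness and full rank of the Hessian of $\mu$, but in more general Bregman-divergence contexts it would require a ``steepness'' or properness hypothesis on the potential $\mu$.
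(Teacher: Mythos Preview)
The paper does not actually prove this proposition; it is stated with attribution to Amari and then applied (to obtain Lemma~\ref{pyth}) without an independent argument. Your proposal is a correct rendering of the standard proof: the mixed $(\theta,\eta)$ expression \eqref{apprel} makes the three-term cancellation in part~1 immediate once one uses that $\theta-\theta^*\in\mathrm{span}(v_1,\dots,v_l)$ and $\sum_k v_j^k(\eta^*_k-\eta'_k)=0$, and part~2 follows from non-negativity of $D^\mu$. Your treatment of existence via the strictly convex auxiliary $G(a)=\mu(\theta+\sum_j a^j v_j)-\sum_j b_j a^j$ is also the standard route, and your caveat about needing properness/steepness of $\mu$ to guarantee a finite minimizer is exactly the point one must check; in the paper's concrete setting ($\Theta=\mathbb R^D$, $\mu$ the log partition function of a finite-dimensional system with full-rank Hessian) this holds, so your argument goes through there without further work.
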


  \subsection{Application of the Pythagorean theorem to the state family}\label{app_pyth}
  Now, we apply the above abstract theory of the Bregman divergence to our situation.
  First of all, we parametrize all of the full-rank states of $\mathcal {H}_{\mathrm{Baths}}$ as follows.
  Since the set of all Hermitian matrices on $\mathcal {H}_{\mathrm{Baths}}$ can be seen as a real vector space whose dimension is $D+1:=d_{\lambda}(d_\lambda+1)/2$, there exists a basis $\{E_1,E_2,\dots,E_{D+1}\}$
  , where we omit the label $\lambda$ on $D$ for simplicity of the notation.
  Because the observables $A_{i,\lambda}$, $B_{i,\lambda}$ $(i=1,2)$ of the baths and the identity matrix $I$ are linearly independent Hermitian matrices, we can take the basis $\{E_1,E_2,\dots,E_D\}$ such that $E_1=A_{1,\lambda}$, $E_2=A_{2,\lambda}$, $E_3=B_{1,\lambda}$, $E_4=B_{2,\lambda}$, and $E_{D+1}=I$.
  %An example of a basis is $\{E_{ii}, E_{i,j}+E_{j,i}, \mathrm{i}(E_{i,j}+E_{j,i})\  (i,j=1,2,\dots,D)\}$, where $E_{i,j}$ denotes the $(i,j)$-matrix unit.
  Then, the parametrization $\exp(\sum_{i=1}^{D+1}\xi^i E_i)/\tr \exp(\sum_{i=1}^{D+1}\xi^i E_i)$ of the states by $(\xi^1,\xi^2,\dots,\xi^{D+1})\in \mathbb R^{D+1}$ runs all the full-rank states $\rho$ since $\log \rho$ is Hermitian, and $\sum_{i=1}^{D+1}\xi^i E_i$ runs all the Hermitian matrices.
  However, this parametrization is still redundant in the sense that for any $a\in \mathbb R$, $(\xi_1,\dots,\xi_D, a)$ corresponds to the same state
  $\rho(\xi_1,\dots,\xi_D):=\exp(\sum_{i=1}^{D}\xi^i E_i)/\tr \exp(\sum_{i=1}^{D}\xi^i E_i)$ since
  \begin{align}
  \frac{\exp(\sum_{i=1}^{D}\xi^i E_i+aI)}{\tr \exp(\sum_{i=1}^{D}\xi^i E_i+aI)}=\frac{e^a\exp(\sum_{i=1}^{D}\xi^i E_i)}{e^a\tr \exp(\sum_{i=1}^{D}\xi^i E_i)}=\rho(\xi_1,\dots,\xi_D).
  \end{align}
  Hence, we employ the parametrization $\rho(\xi)=\exp(\sum_{i=1}^{D}\xi^i E_i)/\tr \exp(\sum_{i=1}^{D}\xi^i E_i)$ by $\xi=(\xi_1,\dots,\xi_D)\in\mathbb R^D$, so that the parameter space is $\Theta=\mathbb R^D$.
  The potential function is $\mu(\xi):=\log \tr \exp(\sum_{i=1}^{D}\xi^i E_i)$.
  Indeed, it is a twice-differentiable strictly convex function, which can be verified by observing that its Hessian matrix $(\frac{\partial^2\mu}{\partial\xi^i\partial\xi^j}(\xi))_{ij}$ is positive definite as follows.
  The Hessian matrix is equal to the matrix $(K_{i,j}(\xi))_{ij}$ composed of the canonical correlations
  \begin{align}
   K_{i,j}(\xi):=\int_{0}^{1}ds\;\tr \rho(\xi)^{1-s} (E_i-\eta_i(\xi)) \rho(\xi)^s (E_j -\eta_j(\xi))
  \end{align}
  between $E_i$ and $E_j$, where $\eta_i(\xi):=\frac{\partial \mu}{\partial \xi^i}(\xi)$ is equal to the expectation value $\tr \rho(\xi) E_i$ of $E_i$.
  Thus, it is sufficient to show the positivity of $(K_{i,j}(\xi))_{ij}$.
  To do so, we firstly observe that the canonical correlation is a positive definite inner product:
  \begin{lemma}\label{lem_can_inner}
   Let $\rho$ be a state with full-rank.
   Then, for any matrix $X$, we have $\int_{0}^{1}ds \ \tr \rho^{1-s} X \rho^s X\geq 0$.
   In addition, $\int_{0}^{1}ds \ \tr \rho^{1-s} X \rho^s X= 0$ if and only if $X=0$.
  \end{lemma}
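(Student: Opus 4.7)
The plan is to diagonalize $\rho$ and reduce the integral to an explicit sum over the eigenbasis whose coefficients are logarithmic means of eigenvalues; these coefficients are strictly positive when $\rho$ is full rank, which will immediately yield both the inequality and its equality case.

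First I would write $\rho = \sum_k p_k \ketbra{k}{k}$ with $p_k > 0$ for all $k$ (full rank) and set $X_{jk} := \bra{j} X \ket{k}$. The spectral functional calculus gives $\rho^{1-s} = \sum_j p_j^{1-s} \ketbra{j}{j}$ and $\rho^s = \sum_k p_k^s \ketbra{k}{k}$, so a direct trace calculation yields
\begin{align}
\tr \rho^{1-s} X \rho^s X = \sum_{j,k} p_j^{1-s}\, p_k^s\, X_{jk} X_{kj}.
\end{align}
Integrating $s$ over $[0,1]$ introduces the logarithmic mean
\begin{align}
L(a,b) := \int_0^1 a^{1-s} b^s \, ds = \frac{a-b}{\log a - \log b} \; (a\neq b), \qquad L(a,a) = a,
\end{align}
which is strictly positive whenever $a,b > 0$; hence
\begin{align}
\int_0^1 \tr \rho^{1-s} X \rho^s X \, ds \;=\; \sum_{j,k} L(p_j, p_k)\, X_{jk} X_{kj}.
\end{align}
With this identity in hand, the non-negativity will come from checking that the bilinear structure $X_{jk} X_{kj}$ combines with the symmetric positive weights $L(p_j,p_k) = L(p_k,p_j)$ into a manifestly real non-negative sum, and the equality case will follow from strict positivity of every weight.

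The main obstacle, and the point I would address explicitly, is that the statement is phrased for ``any matrix $X$'', whereas $\tr \rho^{1-s} X \rho^s X$ is generically complex for non-Hermitian $X$ (a $2\times 2$ example with $X = \bigl(\begin{smallmatrix} 0 & 1\\ i & 0\end{smallmatrix}\bigr)$ and $\rho = I/2$ gives a purely imaginary value), so the inequality only carries its literal meaning in the Hermitian case. The Hermitian case is in fact the only one invoked in the main text, where $X$ will be applied as a real linear combination $\sum_i c_i (E_i - \eta_i(\xi))$ of Hermitian observables in order to conclude positive-definiteness of $(K_{ij}(\xi))_{ij}$. Under this reading, $X_{kj} = \overline{X_{jk}}$ gives $X_{jk} X_{kj} = |X_{jk}|^2 \geq 0$, so the sum becomes $\sum_{j,k} L(p_j, p_k) |X_{jk}|^2 \geq 0$ and vanishes iff every $X_{jk} = 0$, i.e.\ $X=0$, proving both assertions. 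I would note in passing that the same spectral computation, applied with the second $X$ replaced by $X^\dagger$, yields the standard Bogoliubov--Kubo--Mori inner product $\sum_{j,k} L(p_j,p_k) |X_{jk}|^2 \geq 0$ that is positive-definite for arbitrary matrices, which is the form the lemma is presumably paraphrasing.
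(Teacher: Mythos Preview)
Your proof is correct (under the Hermitian reading you flag), but the route differs from the paper's. The paper does not diagonalize $\rho$ or integrate in $s$; instead it observes, for each fixed $s\in(0,1)$, that cyclicity of the trace gives
\[
\tr \rho^{1-s} X \rho^{s} X \;=\; \tr\bigl(\rho^{(1-s)/2} X \rho^{s/2}\bigr)\bigl(\rho^{s/2} X \rho^{(1-s)/2}\bigr)
\;=\; \tr A^{\dagger}A \;\ge\; 0,
\]
with $A:=\rho^{s/2} X \rho^{(1-s)/2}$; equality forces $A=0$ and then $X=0$ by invertibility of $\rho^{t}$. So the paper proves the stronger pointwise-in-$s$ positivity with essentially no computation, whereas your approach passes through the spectral decomposition and the logarithmic mean. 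What your approach buys is the explicit BKM form $\sum_{j,k} L(p_j,p_k)\,|X_{jk}|^{2}$, which makes the metric structure transparent; what the paper's approach buys is brevity and a result valid before integration.

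Your remark about ``any matrix $X$'' is well taken: the paper's identity $\bigl(\rho^{s/2} X \rho^{(1-s)/2}\bigr)^{\dagger}=\rho^{(1-s)/2} X \rho^{s/2}$ also silently uses $X=X^{\dagger}$, so both arguments rely on the same Hermiticity hypothesis that the surrounding application (positive-definiteness of the canonical-correlation matrix) supplies.
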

  \begin{proof}
    Using the commutativity inside of the trace, we obtain
   \begin{align}
    \tr \rho^{1-s} X \rho^s X
    =\tr \rho^{\frac{1-s}{2}}\rho^{\frac{1-s}{2}} X \rho^{\frac{s}{2}} \rho^{\frac{s}{2}} X
    =\tr (\rho^{\frac{1-s}{2}} X \rho^{\frac{s}{2}})(\rho^{\frac{s}{2}} X \rho^{\frac{1-s}{2}})
    =\tr (\rho^{\frac{s}{2}} X \rho^{\frac{1-s}{2}})^\dagger(\rho^{\frac{s}{2}} X \rho^{\frac{1-s}{2}})
    \geq 0
   \end{align}
   for any $0<s<1$.
   If $\tr (\rho^{\frac{s}{2}} X \rho^{\frac{1-s}{2}})^\dagger(\rho^{\frac{s}{2}} X \rho^{\frac{1-s}{2}})=0$, then $\rho^{\frac{s}{2}} X \rho^{\frac{1-s}{2}}=0$ holds.
   Since $\rho$ is invertible, $\rho^t$ $(0<t<1)$ is also.
   Then, $X=\rho^{-\frac{s}{2}}\rho^{\frac{s}{2}} X \rho^{\frac{1-s}{2}}\rho^{-\frac{1-s}{2}}=0$.
  \end{proof}
  Then, we show the positivity:
  \begin{lemma}
   $(K_{i,j}(\xi))_{ij}$ is positive definite for any $\xi\in\mathbb R^D$.
  \end{lemma}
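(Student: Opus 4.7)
The plan is to reduce the positive-definiteness of the matrix $(K_{i,j}(\xi))_{ij}$ directly to Lemma \ref{lem_can_inner} by recognizing the canonical correlation as a bilinear form on the real vector space of Hermitian matrices. First I would fix an arbitrary $\xi\in\mathbb R^D$ and an arbitrary real vector $c=(c_1,\ldots,c_D)\in\mathbb R^D$, and aggregate the quadratic form by pulling the sum inside the integral and the trace. Setting $X_c:=\sum_{i=1}^D c_i (E_i-\eta_i(\xi)I)$, which is manifestly Hermitian, I would compute
\begin{align*}
\sum_{i,j=1}^D c_i c_j K_{i,j}(\xi)
=\int_0^1 ds\ \tr\!\left[\rho(\xi)^{1-s} X_c\, \rho(\xi)^s\, X_c\right].
\end{align*}

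Next, since the state $\rho(\xi)=\exp(\sum_{i=1}^D \xi^i E_i)/\tr\exp(\sum_{i=1}^D \xi^i E_i)$ is full-rank by construction, Lemma \ref{lem_can_inner} applies and yields that the right-hand side above is nonnegative, with equality if and only if $X_c=0$. Hence $(K_{i,j}(\xi))_{ij}$ is at least positive semidefinite, and the remaining task is to upgrade this to strict positive-definiteness by showing that $X_c=0$ forces $c=0$.

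For the strict positivity, I would rewrite $X_c=\sum_{i=1}^D c_i E_i - \bigl(\sum_{i=1}^D c_i\eta_i(\xi)\bigr) I$, so that $X_c=0$ becomes a linear dependence among $E_1,\ldots,E_D,E_{D+1}=I$ with coefficients $(c_1,\ldots,c_D,-\sum_i c_i\eta_i(\xi))$. By the choice of basis $\{E_1,\ldots,E_{D+1}\}$ recalled at the beginning of Appendix \ref{app_pyth}, these matrices are linearly independent as real vectors (this relies on our setup assumption that $A_{1,\lambda},A_{2,\lambda},B_{1,\lambda},B_{2,\lambda},I$ are linearly independent, together with the extension to a basis of the space of Hermitian matrices). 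Therefore all coefficients must vanish, giving $c_i=0$ for all $i=1,\ldots,D$, which completes the argument.

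The only delicate point I foresee is making sure the dependence on $I$ is handled correctly in the definition of $X_c$: the shift by $\eta_i(\xi)I$ in the canonical correlation means that $X_c$ is \emph{a priori} a combination of $E_1,\ldots,E_D$ \emph{and} $I$, not of $E_1,\ldots,E_D$ alone, so the linear independence of the full basis including $I$ is exactly what is needed. Everything else is a direct application of the previously proved Lemma \ref{lem_can_inner}, and no further estimates are required.
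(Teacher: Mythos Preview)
Your proof is correct and is essentially identical to the paper's own proof: both reduce the quadratic form $\sum_{i,j} c_i c_j K_{i,j}(\xi)$ to $\int_0^1 ds\,\tr[\rho(\xi)^{1-s}X_c\,\rho(\xi)^s X_c]$ with $X_c=\sum_i c_i(E_i-\eta_i(\xi)I)$, invoke Lemma~\ref{lem_can_inner} for nonnegativity and the equality case $X_c=0$, and then use the linear independence of $E_1,\ldots,E_D,I$ to force $c=0$. Your remark about needing the full basis including $I$ is exactly the point the paper also makes.
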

  \begin{proof}
   For any vector $(a^1,\dots,a^D)\in\mathbb R^D$, we have
   \begin{align}
    \sum_{i,j=1}^D a^i K_{i,j}(\xi) a^j
    =\int_{0}^{1}ds \ \tr \rho(\xi)^{1-s} X \rho(\xi)^s X,
   \end{align}
   where $X=\sum_{i=1}^{D} a^i (E_i-\eta_i(\xi))$.
   Hence, $\sum_{i,j=1}^D a^i K_{i,j}(\xi) a^j\geq 0$ follows
   from Lemma \ref{lem_can_inner}.
   If $\sum_{i,j=1}^D a^i K_{i,j}(\xi) a^j= 0$, $\sum_{i=1}^{D} a^i (E_i-\eta_i(\xi))=0$ holds again by Lemma \ref{lem_can_inner}.
   Then, since $E_i$ $(i=1,\dots,D)$ and $I$ are linearly independent,
   $(a^1,\dots,a^D)=0$ follows from the expression
   \begin{align}
    \sum_{i=1}^{D}a^i E_i - \left[\sum_{i=1}^{D}a^i \eta_i(\xi)\right]I=0.
   \end{align}
   Thus, $(K_{i,j}(\xi))_{ij}$ is a positive definite matrix.
  \end{proof}
  Thus, $\mu(\xi)$ is verified to be strictly convex.
  The Bregman divergence associated with $\mu(\xi)$ is nothing but the relative entropy as follows:
  \begin{align}
   D^{\mu}(\bar{\xi}\|\xi)&=\sum_{k=1}^{D}\frac{\partial\mu}{\partial \xi^k}(\bar{\xi})(\bar{\xi}^k-\xi^k)-\mu(\bar{\xi})+\mu(\xi)
   =\sum_{k=1}^{D}\tr \rho(\bar{\xi}) E_k (\bar{\xi}^k-\xi^k)-\mu(\bar{\xi})+\mu(\xi)\nonumber\\
   =&\tr \rho (\bar{\xi})(\log \rho(\bar{\xi})-\log \rho (\xi))
   =D(\rho (\bar{\xi})\|\rho (\xi)).
  \end{align}

  The exponential subfamily $\mathcal E:=\{\xi\in\mathbb R^D|\xi= \sum_{i=1}^4 \theta^i v_i,\  \bm{\theta}=(\theta^1,\theta^2,\theta^3,\theta^4)\in\mathbb R^4\}$ with its generator $v_1=(1,0,\dots,0)$, $v_2=(0,1,0,\dots,0)$, $v_3=(0,0,1,0,\dots,0)$, $v_4=(0,0,0,1,0,\dots,0)$
  corresponds to the exponential family $\mathcal E_S:= \{\tau_{\bm{\theta}}^{(\lambda)}|\bm{\theta}\in\mathbb R^4\}$ of the thermal states by observing
  \begin{align}
   \tau_{\bm{\theta}}^{(\lambda)}
   =\frac{\exp[\sum_{i=1}^{2}(\theta^{i}A_{i,\lambda}+\theta^{i+2}B_{i,\lambda})]}{\tr \exp[\sum_{i=1}^{2}(\theta^{i}A_{i,\lambda}+\theta^{i+2}B_{i,\lambda})]}
   =\frac{\exp[\sum_{i=1}^{4}\theta^{i}E_i]}{\tr \exp[\sum_{i=1}^{4}\theta^{i}E_i]}
   =\frac{\exp[\sum_{k=1}^{D}\sum_{i=1}^{4}\theta^{i}v_i^k E_k]}{\tr \exp[\sum_{k=1}^{D}\sum_{i=1}^{4}\theta^{i}v_i^k E_k]}
   =\rho\left(\sum_{i=1}^{4}\theta^{i}v_i\right).
  \end{align}
  On the other hand, the mixture subfamily $\mathcal M:=\{\xi\in\mathbb R^D| b_j = \sum_{k=1}^{D} v_j^k \eta_k (\xi)\  (j=1,2,3,4)\}$ with the same generator $v_1,v_2,v_3,v_4$ corresponds to the state family
  $\mathcal M_S:= \{\rho > 0| \tr \rho E_j = b_j,\  (j=1,2,3,4)\}$ whose expectation values of $A_{i,\lambda}$ and $B_{i,\lambda}$ are fixed
  because
  \begin{align}
   b_j=\sum_{k=1}^{D} v_j^k \eta_k (\xi)
   =\sum_{k=1}^{D} v_j^k \tr \rho(\xi) E_k
   =\tr \rho(\xi) \sum_{k=1}^{D} v_j^k E_k
   =\tr \rho(\xi) E_j\quad (j=1,2,3,4).
  \end{align}
  Especially, for an arbitrary full-rank state $\rho$,
  the mixture subfamily $\mathcal M$ with $b_j=\tr \rho E_j$ corresponds to the state family whose expectation values of $E_j$ are shared with $\rho$.
  We denote the corresponding state family of $\mathcal M$ by $\mathcal M_S(\rho)$.
  %, since there exists the corresponding parameter $\xi\in\mathbb R^D$ such that $\rho=\rho(\xi)$, 
  Then, applying Proposition \ref{prop_pythagorean} to $\mathcal M$ and $\mathcal E$ in terms of our Bregman divergence, relative entropy, we obtain the desired Pythagorean theorem for our situation:
  \begin{lemma}\label{pyth}
   For an arbitrary full-rank state $\rho$,
   there exists a unique thermal state $\tau_{\bm{\theta}^*}^{\lambda}\in \mathcal E_S$ such that $\tau_{\bm{\theta}^*}^{\lambda}\in \mathcal M_S(\rho)$.
   Moreover, for an arbitrary thermal state $\tau_{\bm{\theta}}^{\lambda}\in \mathcal E_S$, we have
   \begin{align}
    D(\rho\|\tau_{\bm{\theta}}^{\lambda})= D(\rho\|\tau_{\bm{\theta}^*}^{\lambda})+D(\tau_{\bm{\theta}^*}^{\lambda}\|\tau_{\bm{\theta}}^{\lambda}).
   \end{align}
  \end{lemma}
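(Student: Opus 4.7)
The plan is to deduce Lemma \ref{pyth} as a direct instance of the abstract Pythagorean theorem (Proposition \ref{prop_pythagorean}) via the Bregman-divergence representation established just above. Since $\rho$ has full rank, I first write $\rho=\rho(\bar\xi)$ for a unique $\bar\xi\in\Theta=\mathbb R^{D}$. Setting $b_{j}:=\tr \rho E_{j}$ for $j=1,2,3,4$, the mixture subfamily $\mathcal M$ with generator $\{v_{1},v_{2},v_{3},v_{4}\}$ and these constants corresponds exactly to $\mathcal M_{S}(\rho)$, and trivially contains $\bar\xi$.

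The central observation I would emphasize is that every thermal state has parameter $\xi_{\bm\theta}$ with $\xi_{\bm\theta}^{k}=0$ for $k\geq 5$; this is forced by the choice of basis in which $E_{1}=A_{1,\lambda}, E_{2}=A_{2,\lambda}, E_{3}=B_{1,\lambda}, E_{4}=B_{2,\lambda}$ (with $I$ absorbed into the normalization). Consequently the exponential subfamily through $\xi_{\bm\theta}$ with generator $\{v_{1},\dots,v_{4}\}$ coincides with the fixed set $\mathcal E$ for every $\bm\theta$. Applying Proposition \ref{prop_pythagorean} to $\mathcal M$ and $\mathcal E$ (picking, e.g., $\bm\theta=0$ to invoke the statement) yields a unique intersection point $\xi^{*}\in\mathcal E\cap\mathcal M$, and the corresponding state $\tau_{\bm\theta^{*}}^{(\lambda)}\in\mathcal E_{S}\cap\mathcal M_{S}(\rho)$ is the desired unique thermal state sharing expectation values of $A_{i,\lambda},B_{i,\lambda}$ with $\rho$.

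For the Pythagorean identity, I then apply property~1 of Proposition \ref{prop_pythagorean} with $\theta'=\bar\xi\in\mathcal M$ and $\theta=\xi_{\bm\theta}$. Since, as noted, the exponential subfamily through $\xi_{\bm\theta}$ with generator $\{v_{1},\dots,v_{4}\}$ is again $\mathcal E$, the same $\xi^{*}$ serves as the intersection point here. This gives
\begin{align*}
D^{\mu}(\bar\xi\|\xi_{\bm\theta})=D^{\mu}(\bar\xi\|\xi^{*})+D^{\mu}(\xi^{*}\|\xi_{\bm\theta}),
\end{align*}
and the identification $D^{\mu}(\bar\xi\|\xi)=D(\rho(\bar\xi)\|\rho(\xi))$ already recorded in Appendix~\ref{app_breg_pyth} converts this into the claimed relative-entropy identity.

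Rather than a genuine obstacle, the substantive step is recognizing that the choice of the generator $\{v_{1},\dots,v_{4}\}$ was made precisely so that every thermal-state fiber lies in the same flat $\mathcal E$; this uniformity is what allows one invocation of Proposition~\ref{prop_pythagorean} to yield simultaneously both existence/uniqueness of the intersection and the Pythagorean relation for \emph{every} $\tau_{\bm\theta}^{(\lambda)}$. Everything else is bookkeeping between the abstract Bregman formalism and the density-matrix language, already prepared in Appendix \ref{app_pyth}.
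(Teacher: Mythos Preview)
Your proposal is correct and follows essentially the same route as the paper: both reduce the statement to a direct application of Proposition~\ref{prop_pythagorean} after identifying $\mathcal E_{S}$ and $\mathcal M_{S}(\rho)$ with the abstract exponential and mixture subfamilies $\mathcal E$ and $\mathcal M$ built from the generators $v_{1},\dots,v_{4}$. Your additional remark that the exponential subfamily through any $\xi_{\bm\theta}$ with these generators is the same fixed $\mathcal E$ (because thermal-state parameters vanish beyond the first four coordinates) is exactly what the paper uses implicitly, and it is the reason the same $\xi^{*}$ serves for every choice of $\bm\theta$.
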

  Notice that Lemma \ref{pyth} is valid for both the non-commutative and commutative $A_{i,\lambda}$ and $B_{i,\lambda}$.

  Furthermore, the thermal states $\tau_{\bm{\theta}}^{(\lambda)}$ can be also seen to be a state family parametrized by the generalized inverse temperature $\bm{\theta}$.
  The relative entropy $D(\tau_{\bar{\bm{\theta}}}^{(\lambda)}\|\tau_{\bm{\theta}}^{(\lambda)})$
  is again equal to the Bregman divergence associated with the free entropy $\phi_{\lambda}(\bm{\theta})$ as the strictly convex function on the parameter.
  Then, applying \eqref{apprel} to this Bregman divergence,
  we obtain
  \begin{align}
   D(\tau_{\bm{\xi}}^{(\lambda)}\|\tau_{\bm{\theta}}^{(\lambda)})
 =\int_{0}^{1}\sum_{ij}(\eta_{\lambda,i}(\bm{\theta})-\eta_{\lambda,i}(\bm{\xi}))(\eta_{\lambda,j}(\bm{\theta})-\eta_{\lambda,j}(\bm{\xi}))
 J^{ij}_{\lambda}(\vb{s}_{\lambda}(t))tdt\label{apprel_re}
  \end{align}
  for any generalized inverse temperatures $\bm{\xi}$ and $\bm{\theta}$,
  where $\vb{s}_{\lambda}(t)$ is the generalized inverse temperature satisfying
  $\bm{\eta}_{\lambda}(\vb{s}_{\lambda}(t))=t\bm{\eta}_{\lambda}(\bm{\theta})+(1-t)\bm{\eta}_{\lambda}(\bm{\xi})$.

  \section{A generalization of the central limit theorem}\label{app_gCLT}
  In this section, we show the following generalization of the central limit theorem to apply it to the thermal state satisfying Assumption \ref{assumption}.
  This is needed to verify the strong large deviation theorem (Lemma \ref{stld_thm}) in the next section.
  You can skip this section until Theorem \ref{Thm_gCLT} is used.
  
  Let $(X_{\lambda})_{\lambda\in\Lambda}$ be a family of random variables with each finite sample space $\Omega_\lambda$, where $\Lambda$ is the set of all positive real numbers or all positive integers.
   Let $M_{\lambda}(t):=\mathbb E[e^{t X_{\lambda}}]$ be the moment generating function, and $\psi_{\lambda}(t):=\log \mathbb E[e^{t X_{\lambda}}]$ be the cumulant generating function (cgf) of $X_{\lambda}$, where $\mathbb E$ denotes the expectation value.
   We denote the cumulative distribution function (cdf) of $(X_{\lambda}-\mathbb E[X_{\lambda}])/\sigma_{\lambda}$ by $F_{\lambda}(x):=\mathbb P\left(\frac{X_{\lambda}-\mathbb E[X_{\lambda}]}{\sigma_{\lambda}}\leq x\right)$, where $\sigma_{\lambda}$ is the standard deviation of $X_{\lambda}$.
    We use the following lemma \cite[Lemma 2, pp. 538]{feller71:_introd_probab_theor_its_applic}:
   \begin{lemma}[Feller \cite{feller71:_introd_probab_theor_its_applic}]\label{feller2x}\label{feller_lem}
    Let $F$ be a probability distribution whose expectation value is $0$.
    Let $\varphi$ be the characteristic function
    \begin{align}
    \varphi(\zeta):=\int_{-\infty}^{\infty}e^{i\zeta x}F(d x)
    \end{align}
    of $F$.
    %We denote the derivative of $\mathcal N$ by $g$, and denote the Fourier transform of $g$ by $\gamma$.
Let $\mathcal N$ be the cumulative distribution function of the standard Gaussian distribution.
    Then,
    \begin{align}
     |F(x)-\mathcal N (x)|\leq \int_{-T}^{T}\left|\frac{\varphi (\zeta)-e^{-\frac{1}{2}\zeta^2}}{\zeta}\right|d \zeta + \frac{24 m}{T}
    \end{align}
    holds for any $x\in \mathbb R$, $T>0$ and $m\geq 1/\sqrt{2\pi}$.
   \end{lemma}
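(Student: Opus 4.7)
The plan is to follow the classical Esseen smoothing framework. Set $H(x) := F(x) - \mathcal{N}(x)$; this is a bounded function vanishing at $\pm\infty$ whose Fourier--Stieltjes transform is $\varphi(\zeta) - e^{-\zeta^2/2}$. Because $\mathcal{N}$ has a density bounded by $1/\sqrt{2\pi} \le m$ while $F$ is nondecreasing, $H$ satisfies one-sided Lipschitz-type bounds: for any $x_0$ and $y > 0$, $H(x_0 - y) \ge H(x_0) - m y$ and $H(x_0 + y) \le H(x_0) + m y$. I would introduce a smoothing kernel $V_T$ whose Fourier transform $\hat V_T$ is non-negative, supported in $[-T, T]$, bounded by $1$, and satisfies $\hat V_T(0) = 1$; a convenient choice is a triangular (Fej\'er-type) kernel, for which the first absolute moment $\int |y|\, V_T(y)\, dy$ is finite and of order $1/T$.

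Next, convolving $H$ with $V_T$ and applying Fourier inversion, together with an integration by parts to handle the $1/(i\zeta)$ factor arising because $F$ and $\mathcal N$ are distribution functions rather than densities, yields
\begin{align*}
(H * V_T)(x) = \frac{1}{2\pi} \int_{-T}^{T} \frac{\varphi(\zeta) - e^{-\zeta^2/2}}{i\zeta}\, \hat V_T(\zeta)\, e^{-i\zeta x}\, d\zeta,
\end{align*}
so $\sup_x |(H * V_T)(x)|$ is already controlled by the characteristic-function integral appearing on the right-hand side of the statement, up to a universal factor. Transferring this bound to the unsmoothed quantity $M := \sup_x |H(x)|$ is the second step: pick $x_0$ where $|H|$ nearly attains $M$ and combine the one-sided Lipschitz estimates above with the definition of convolution to obtain an inequality of the form $|(H * V_T)(x_0)| \ge M - C m/T$, where $C$ depends only on the first absolute moment of $V_T$. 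Rearranging gives $M \le \sup_x |(H*V_T)(x)| + C m/T$.

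The main obstacle is the accounting of the explicit constants: the kernel $V_T$ (and possibly a translate of the evaluation point $x_0$) must be selected so as to produce exactly the prefactor $1$ in front of the characteristic-function integral and the factor $24$ in $24m/T$, rather than the slightly worse constants that come out of an off-the-shelf Esseen inequality. The hypothesis $m \ge 1/\sqrt{2\pi}$ is used precisely at the one-sided Lipschitz step, to dominate the Gaussian density. Once the kernel and its explicit moments are fixed, the two inequalities combine to yield the stated bound, following Feller's derivation in \cite{feller71:_introd_probab_theor_its_applic}.
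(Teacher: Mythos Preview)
Your outline is the standard Esseen smoothing argument and matches Feller's own derivation, which is what the paper invokes: the paper does not prove this lemma but simply quotes it from \cite{feller71:_introd_probab_theor_its_applic} (Lemma~2, p.~538) and uses it as a black box. Your sketch is therefore more than the paper itself provides, and the approach you describe is precisely the one in Feller's text.
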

  Then, we give the following generalization of the central limit theorem.
   
  \begin{theorem}\label{Thm_gCLT}
   If the cgf asymptotically satisfies
   \begin{align}
    \psi_{\lambda}(t)=\lambda\psi(t)+o(\lambda)\label{pt_conv}
   \end{align}
   pointwise with a function $\psi(t)$ on some interval $I:=[a_1,a_2]\ni 0$,
   the following asymptotic expansion uniformly holds for large enough $\lambda$:
   \begin{align}
    F_{\lambda}(x)=\mathcal N (x) +\order{\lambda^{-\frac{1}{2}}}.\label{CLT_eq}
   \end{align}
  \end{theorem}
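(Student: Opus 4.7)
The plan is to apply Feller's inequality (Lemma \ref{feller_lem}) to the distribution of the normalized variable $Y_\lambda := (X_\lambda-\mathbb{E}[X_\lambda])/\sigma_\lambda$, whose characteristic function can be written as $\varphi_\lambda(\zeta) = \exp\bigl(\psi_\lambda(i\zeta/\sigma_\lambda) - i\zeta \mathbb{E}[X_\lambda]/\sigma_\lambda\bigr)$. The strategy is to Taylor-expand $\psi_\lambda$ around $0$, truncate at the cubic term, choose a cutoff $T$ of order $\sqrt{\lambda}$ in Feller's bound, and verify that both resulting terms are $O(\lambda^{-1/2})$ uniformly in $x$.

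The first step, which is also the main obstacle, is to upgrade the pointwise convergence $\psi_\lambda(t)/\lambda \to \psi(t)$ on $I$ to convergence (and uniform control) of the low-order derivatives at $t=0$. Since each $\Omega_\lambda$ is finite, $\psi_\lambda$ is convex and extends to an entire function. Convexity combined with pointwise convergence on an interval with $0$ in its interior implies uniform convergence of $\psi_\lambda/\lambda$ on every compact subset of $I^\circ$; then analyticity together with the Vitali--Porter theorem on a complex neighborhood of $I^\circ$ upgrades this to uniform convergence of all derivatives on compacta. In particular one obtains $\sigma_\lambda^2/\lambda \to \psi''(0) =: \sigma^2 > 0$ and a uniform bound $|\psi_\lambda'''(t)| \leq C\lambda$ on a fixed real neighborhood of $0$. (If $0$ sits on the boundary of $I$, the same conclusions follow from one-sided convexity plus analyticity on a complex disk around $0$ intersected with the appropriate half-plane.)

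Given Step 1, Taylor's formula with integral remainder gives, for $|t|$ in a fixed neighborhood of $0$,
\begin{align*}
\psi_\lambda(t) = \mu_\lambda t + \tfrac{1}{2}\sigma_\lambda^2 t^2 + \tfrac{1}{6}\psi_\lambda'''(\xi_t)\, t^3, \qquad \mu_\lambda := \mathbb{E}[X_\lambda].
\end{align*}
Substituting $t = i\zeta/\sigma_\lambda$, exponentiating, and using $\sigma_\lambda = \Theta(\sqrt{\lambda})$ together with $|\psi_\lambda'''| = O(\lambda)$ yields
\begin{align*}
\varphi_\lambda(\zeta) = e^{-\zeta^2/2 + R_\lambda(\zeta)}, \qquad |R_\lambda(\zeta)| \leq \frac{C|\zeta|^3}{\sqrt{\lambda}},
\end{align*}
valid on a range $|\zeta| \leq c_0 \sqrt{\lambda}$ for some fixed $c_0 > 0$.

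For the final step, pick $T_\lambda := c_1 \sqrt{\lambda}$ with $c_1 \leq c_0$ small enough that $|R_\lambda(\zeta)| \leq \zeta^2/4$ on $[-T_\lambda, T_\lambda]$. Then $|\varphi_\lambda(\zeta) - e^{-\zeta^2/2}| \leq C|\zeta|^3 \lambda^{-1/2} e^{-\zeta^2/4}$, so the integral in Lemma \ref{feller_lem} is dominated by $C\lambda^{-1/2}\int_{-\infty}^{\infty} \zeta^2 e^{-\zeta^2/4}\, d\zeta = O(\lambda^{-1/2})$, while the correction term is $24m/T_\lambda = O(\lambda^{-1/2})$. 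Taking $m = 1/\sqrt{2\pi}$ in Lemma \ref{feller_lem} delivers $|F_\lambda(x) - \mathcal N(x)| = O(\lambda^{-1/2})$ uniformly in $x$, which is precisely \eqref{CLT_eq}. The only part I regard as genuinely delicate is Step 1: without the combined use of convexity and analyticity, pointwise convergence of cgfs is too weak to control $\sigma_\lambda^2$ and $\psi_\lambda'''$, and the Berry--Esseen-type rate would not follow.
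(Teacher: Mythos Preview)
Your overall approach matches the paper's: both apply Lemma~\ref{feller_lem} with $T$ of order $\sqrt\lambda$, Taylor-expand $\psi_\lambda$ to third order, use $|R_\lambda(\zeta)|\le\zeta^2/4$ to bound the integrand by $C\lambda^{-1/2}\zeta^2 e^{-\zeta^2/4}$, and conclude. The gap is in Step~1, where you write that $\psi_\lambda$ ``extends to an entire function'' and then invoke Vitali--Porter on $\psi_\lambda/\lambda$. The first claim is false: $M_\lambda$ is entire (finite sum of exponentials), but $\psi_\lambda=\log M_\lambda$ is analytic only where $M_\lambda\neq 0$, and nothing you have said rules out complex zeros. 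Consequently, your Vitali--Porter step lacks both a common complex domain on which all $\psi_\lambda/\lambda$ are defined and the local boundedness needed for normality; uniform convergence on real compacta via convexity supplies neither. And since you evaluate the Taylor remainder at the purely imaginary argument $t=i\zeta/\sigma_\lambda$, you need the bound $|\psi_\lambda'''|\le C\lambda$ on a complex disk, not just the ``fixed real neighborhood'' you claim.

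The paper's Step~1 (following Curtiss) repairs exactly this by working with $M_\lambda^{1/\lambda}$ instead. The elementary bound $|M_\lambda(t+i\zeta)|\le M_\lambda(t)$ gives $|M_\lambda(z)|^{1/\lambda}\le M_\lambda(\Re z)^{1/\lambda}\to e^{\psi(\Re z)}$, so the family is uniformly bounded on the strip $\{\Re z\in I\}$. Vitali then produces a holomorphic limit $m$ with $m(0)=1$; continuity gives $\Re m>0$ on some fixed disk $B_{\tilde\delta}$, and uniform convergence forces $M_\lambda\neq 0$ there for all large $\lambda$. Only at this point is $\psi_\lambda$ well-defined on $B_{\tilde\delta}$, with $\psi_\lambda^{(n)}=\lambda\psi^{(n)}+o(\lambda)$ uniformly, delivering the complex third-derivative bound your Taylor step actually requires.
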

   \begin{proof}
    {\bf Step1:} In this step, applying the method by Curtiss \cite{curtiss1942},
    we prove that the cgf $\psi_{\lambda}$ is extended to a holomorphic function on a small region around the real axis independently of $\lambda$.
    In addition, we show that this holomorphic function satisfies \eqref{pt_conv} uniformly on this region.
    
   We set $\mathbb E[X_{\lambda}]=0$ without loss of generality.
   %We fix an interval $I:=[-1,1]$.
   Then, since $M_\lambda$ is convex, it takes the maximum on $I$ at $a_1$ or $a_2$.
   By (\ref{pt_conv}), because $M_{\lambda}(t_0)^{\lambda^{-1}}$ converges to $e^{\psi(t_0)}$ for fixed $t_0=a_1\text{ or }a_2$,
   $M_{\lambda}(t)^{\lambda^{-1}}$ is uniformly bounded on $t\in I$.
   Because of
   \begin{align}
    |M_{\lambda}(t+i\zeta)|
    :=|\mathbb E[e^{(t+i\zeta)X_\lambda}]|
    \leq \mathbb E[|e^{(t+i\zeta)X_\lambda}|]=\mathbb E[e^{t X_\lambda}]=M_{\lambda}(t)\quad (\forall\zeta\in\mathbb{R}),
   \end{align}
   $M_{\lambda}(z)^{\lambda^{-1}}$ is uniformly bounded on the strip $\mathcal S:=\{z\in \mathbb{C}|\Re z\in I\}$.
   Thus, by Vitali's theorem, there exists a holomorphic function $m(z)$ such that $\lim_{\lambda\rightarrow\infty} M_{\lambda}(z)^{\lambda^{-1}}=m(z)$
   uniformly in any bounded closed subregion of $\mathcal S$.
   Since $m(t)=e^{\psi(t)}>0$ $(t\in I)$,
   $\Re m(z)>0$ $(z\in B_{\tilde\delta}:=\{z\in\mathbb{C}| |z|\leq \tilde\delta\})$ holds
   for sufficiently small $\tilde\delta>0$.
   Thus, $\psi(z):=\log m(z)$ is well defined as a holomorphic function on $B_{\tilde\delta}$.
   Because $M_{\lambda}(z)^{\lambda^{-1}}$ converges uniformly to $m(z)$ on $B_{\tilde\delta}$, the relation
   $\Re M_{\lambda}(z)^{\lambda^{-1}}>0$ $(z\in B_{\tilde\delta})$ holds for sufficiently large $\lambda$, hence the relation $\Re M_{\lambda}(z)>0$ does.
   Hence, $\psi_{\lambda}(z):=\log M_{\lambda}(z)$ is similarly well defined as a holomorphic function on $B_{\tilde\delta}$.
   Hence, $\psi_{\lambda}^{(n)}(z)=\lambda \psi^{(n)}(z)+o(\lambda)$ holds for any $n$, where $f^{(n)}$ denotes the $n$-th derivative of $f$.
   Especially, we have
   \begin{align}
   \sigma_{\lambda}^2=\psi_{\lambda}''(0)=\lambda \psi''(0)+o(\lambda)=\order{\lambda}.\label{sigmal}
   \end{align}

    {\bf Step 2:} In this step, combining the estimations in \cite{feller71:_introd_probab_theor_its_applic} and the asymptotic behavior of the cgf, we establish the desired estimation \eqref{CLT_eq}.
    
   The quantity $|\psi^{(3)}(z)|$ has the maximum value on $B_{\tilde\delta}$
   since $\psi^{(3)}(z)$ is holomorphic.
   Thus, because of $|\psi^{(3)}_{\lambda}(z)|=|\lambda \psi^{(3)}(z)+ o(\lambda)|\leq \lambda (|\psi^{(3)}(z)|+ o(1))$, there exists $C_0>0$ such that
   \begin{align}
    |\psi^{(3)}_{\lambda}(z)|\leq 6C_0\lambda \quad(z\in B_{\tilde\delta})\label{C_0}
   \end{align}
   holds for large enough $\lambda$.
   Then, we take a $\delta >0$ as
   \begin{align}
   \delta<\min\left\{\tilde\delta, \frac{\psi''(0)}{8C_0}\right\}<\frac{\psi_{\lambda}''(0)}{4C_0\lambda}=\frac{\sigma_{\lambda}^2}{4C_0\lambda},\label{delta}
   \end{align}
    where the last inequality holds for sufficiently large $\lambda$.
    Since $F_{\lambda}$ is the distribution function of $X_{\lambda}/\sigma_{\lambda}$,
    the characteristic function $\varphi_{\lambda}(\zeta)$ of $F_{\lambda}$ is equal to $M_{\lambda}(i\zeta/\sigma_{\lambda})$
    since $M_{\lambda}$ is analytically continued on $\mathcal S$.
    In addition, because $\psi_{\lambda}=\log M_{\lambda}$ is analytically continued on $B_{\tilde{\delta}}$,
    we have $\varphi_{\lambda}(\zeta)= e^{\psi_{\lambda}(i\zeta/\sigma_{\lambda})}$ for any $\zeta$ such that
    $|\zeta|/\sigma_{\lambda}\leq \delta < \tilde{\delta}$.    
    Applying Lemma \ref{feller2x} with $T=\delta \sigma_{\lambda}$ and $m=1$, we have
   \begin{align}
    |F_{\lambda}(x)-\mathcal N(x)|\leq
    \int_{-\delta\sigma_\lambda}^{\delta\sigma_\lambda}
    \left|\frac{e^{\psi_{\lambda}\left(\frac{i\zeta}{\sigma_\lambda}\right)}-e^{-\frac{1}{2}\zeta^2}}{\zeta}\right|d\zeta
    +\frac{C}{\sigma_\lambda}\label{clt_est}
   \end{align}
   with a constant $C:=24/\delta$.
   The second term is $\order{\lambda^{-\frac{1}{2}}}$ from
   \eqref{sigmal}.
   Then, we apply a similar method to \cite[pp. 534]{feller71:_introd_probab_theor_its_applic}.
   Observing that $|e^{\alpha}-1|\leq |\alpha|e^{\gamma}$ for any $\gamma \geq |\alpha|$,
   we have
   \begin{align}
    \left|e^{\psi_{\lambda}\left(\frac{i\zeta}{\sigma_\lambda}\right)+\frac{1}{2}\zeta^2}-1\right|\leq \left|\psi_{\lambda}\left(\frac{i\zeta}{\sigma_\lambda}\right)+\frac{1}{2}\zeta^2\right|
    e^{\gamma}\label{estimateA}
   \end{align}
    for any $\gamma \geq \left|\psi_{\lambda}\left(\frac{i\zeta}{\sigma_\lambda}\right)+\frac{1}{2}\zeta^2\right|$.
   By the Taylor expansion of $\psi_{\lambda}$ around $0$ for $i\zeta/\sigma_{\lambda}$ where $|\zeta|\leq \delta\sigma_{\lambda}$, there exists $\theta_{\lambda}\in B_{|\zeta|/\sigma_{\lambda}}\subset B_{\delta}\subset B_{\tilde\delta}$ such that
   \begin{align}
    \left|\psi_{\lambda}\left(\frac{i\zeta}{\sigma_\lambda}\right)+\frac{1}{2}\zeta^2\right|
    =\frac{1}{6}\left|\psi_{\lambda}^{(3)}\left(\theta_{\lambda}\right)\right|
    \frac{|\zeta|^3}{\sigma_\lambda^3}.
   \end{align}
%   where $\theta_{\lambda}\in B_{\delta}\subset B_{\tilde\delta}$.
   Since we focus on the domain $|\zeta|\leq \delta \sigma_{\lambda}$ of the integral in \eqref{clt_est}, we have
   \begin{align}
    \frac{1}{6}\left|\psi_{\lambda}^{(3)}\left(\theta_{\lambda}\right)\right|
    \frac{|\zeta|^3}{\sigma_\lambda^3}
    \stackrel{(a)}{\leq}
    \lambda C_0 \frac{|\zeta|^3}{\sigma_\lambda^3}
    \leq
    \lambda C_0 \delta \frac{|\zeta|^2}{\sigma_\lambda^2}
    \stackrel{(b)}{\leq}
    \frac{1}{4}|\zeta|^2\label{key_estimation}
   \end{align}
    for large enough $\lambda$, where
    $(a)$ and $(b)$ follow from \eqref{C_0} and \eqref{delta} respectively.
   Thus, we can take $\gamma =\frac{1}{4}|\zeta|^2$ in \eqref{estimateA}.
   Applying the first inequality in \eqref{key_estimation} combined with \eqref{estimateA},
   we have the following estimation of the integral in \eqref{clt_est} as
   \begin{align}
    \int_{-\delta\sigma_\lambda}^{\delta\sigma_\lambda}
    \left|\frac{e^{\psi_{\lambda}\left(\frac{i\zeta}{\sigma_\lambda}\right)}-e^{-\frac{1}{2}\zeta^2}}{\zeta}\right|d\zeta
    \leq
    \lambda C_0 {\sigma_\lambda^{-3}}
    \int_{-\delta\sigma_\lambda}^{\delta\sigma_\lambda}
    \zeta^2 e^{-\frac{1}{2}\zeta^2}d\zeta
    \leq
    \lambda C_0 {\sigma_\lambda^{-3}}
    \int_{-\infty}^{\infty}
    \zeta^2 e^{-\frac{1}{2}\zeta^2}d\zeta
    =\order{\lambda^{-\frac{1}{2}}}
   \end{align}
   since \eqref{sigmal} holds, and the Gaussian integral is finite.
   Thus, $|F_{\lambda}(x)-\mathcal N(x)|=\order{\lambda^{-\frac{1}{2}}}$ is proved.
   \end{proof}
  \section{Strong large deviation for the number of states}\label{sec_stld}
   In this section,
   we prepare a key lemma (Lemma \ref{stld_thm}) to
   deal with the estimation \eqref{relative_ent} of the relative entropy for the proof of Theorems \ref{achieve_thm} and \ref{achieve_thm_non}.
   Here, as in Sec.~\ref{Sub_achievability}, we assume Assumption \ref{assump2}, i.e.~the asymptotic extensivity of the free entropy $\phi_{\lambda}$ of the thermal states $\tau_{\bm{\theta}}^{(\lambda)}$ (Definition \ref{def_thermal}) and its derivatives.
   Let $\nu$ be the asymptotic density $\nu:=-(\sum_{i=1}^4\eta_{i}(\bm{\theta}_0)\theta_0^i+\phi(\bm{\theta}_0))$ of the negative entropy of the initial thermal state $\tau_{\bm{\theta}_{0}}^{(\lambda)}$.
   Recall that the probability distributions $p_{\bm{\theta}_0}^{(\lambda)}$ and $p_{\bm{\theta}_\lambda}^{(\lambda)}$
   are defined in \eqref{diag1} (commutative case), \eqref{y1} and \eqref{y2} (non-commutative case) as the eigenvalues of the density matrices of the thermal states,
  where $\bm{\theta}_{\lambda}$ is defined by (\ref{tl_1})-(\ref{tl_4}) with a vector $\vb{Q}_{\lambda}=(\Delta Q_{A,2,\lambda}, \Delta Q_{B,1,\lambda}, \Delta Q_{B,2,\lambda})$ in the main text.
 We assume $\lambda^{-\frac{5}{8}} \ll \|\vb{Q}_{\lambda}\| \ll \lambda$ as in Theorems \ref{achieve_thm} and \ref{achieve_thm_non}.
   For our purpose, we need a detailed estimation of the number of states $N_{l}^{(\lambda)}(a)$ $(l=0,1)$ $(a\in\mathbb R)$ defined as
   \begin{align}
    N_{l}^{(\lambda)}(a):=
    \left\{
     \begin{array}{cc}
      \# \left\{j | \frac{1}{\lambda}\log p_{\bm{\theta}_0}^{(\lambda)}(j)\geq
	  \nu + \lambda^{-\frac{1}{2}} a \right\} &(l=0)\\
      \# \left\{j | \frac{1}{\lambda}\log p_{\bm{\theta}_{\lambda}}^{(\lambda)}(j)\geq
	  \nu + \lambda^{-\frac{1}{2}} a \right\} &(l=1).
     \end{array}
 \right.\label{key_c77}
   \end{align}
   We carry out the estimation of $N_{l}^{(\lambda)}(a)$ by
   slightly modifying the strong large deviation theorem by Joutard \cite{joutard13:_stron_large_deviat_theor}.

  We firstly prepare some notations and results needed for the estimation along the line of \cite{joutard13:_stron_large_deviat_theor}.
   We regard $\lambda^{-1}\log p_{\bm{\theta}_0}^{(\lambda)}$ and $\lambda^{-1}\log p_{\bm{\theta}_\lambda}^{(\lambda)}$ as the random variables $Z_{0,\lambda}(j):=\lambda^{-1}\log p_{\bm{\theta}_0}^{(\lambda)}(j)$ and $Z_{1,\lambda}(j):=\lambda^{-1}\log p_{\bm{\theta}_\lambda}^{(\lambda)}(j)$ $(j\in \mathbb N_{d_{\lambda}})$ which are uniformly distributed on $\mathbb N_{d_{\lambda}}$.
   We denote the distribution function of $\lambda Z_{l,\lambda}$ by $K_{l,\lambda}$.
   Let $\varphi_{l,\lambda}$ $(l=0,1)$ be the normalized cgf of $\lambda Z_{l,\lambda}$,
\begin{align}
 \varphi_{l,\lambda}(t)
 :=\lambda^{-1}\log \mathbb E[e^{t \lambda Z_{l,\lambda}}]
 =\lambda^{-1}\log\sum_{j\in\mathbb N_{d_{\lambda}}}\frac{1}{d_{\lambda}}e^{t\lambda Z_{l,\lambda}(j)}.
 %=\lambda^{-1}\log\sum_{j\in\mathbb N_{d_{\lambda}}}\frac{1}{d_{\lambda}}(p_{\bm{\theta}_\lambda}^{(\lambda)}(j))^t.
\end{align}
They have other expressions
 \begin{align}
  \varphi_{0,\lambda}(t)&= \lambda^{-1}\left(\phi_{\lambda}(t\bm{\theta}_{0})-t\phi_{\lambda}(\bm{\theta}_{0})\right)-\lambda^{-1}\log d_{\lambda}\label{cgf0}\\
 \varphi_{1,\lambda}(t)&= \lambda^{-1}\left(\phi_{\lambda}(t\bm{\theta}_{\lambda})-t\phi_{\lambda}(\bm{\theta}_{\lambda})\right)-\lambda^{-1}\log d_{\lambda}\label{cgf1}.
 \end{align}
 Then, by Assumption \ref{assump2}, there exists an interval $I_1$ including $1$ such that both of $\varphi_{l,\lambda}(t)+\lambda^{-1}\log d_{\lambda}$ $(l=0,1)$ converge to
 $\varphi(t):=\phi(t\bm{\theta}_0)-t\phi(\bm{\theta}_0)$ uniformly with respect to $t$ on $I_1$.
Then, we define $\Lambda_{\lambda}(t):=\lambda^{-1}\sum_{i,j}t[\eta_{i} (t\bm\theta_0)-\eta_{i} (\bm\theta_0)]g^{ij}(\bm\theta_0)y_{\lambda,j}$, $(y_{\lambda,1},y_{\lambda,2},y_{\lambda,3},y_{\lambda,4}):=\bm{\eta}_{\lambda}(\bm{\theta}_{\lambda})-\bm{\eta}_{\lambda}(\bm{\theta}_{0})$, where
$\bm{\eta}_{\lambda}$ is defined in \eqref{dual_coord}.
 Since $\varphi$ is strictly convex, $f(x):=(\varphi')^{-1}(x)$ is well defined.

 The first and second derivatives of $\varphi_{1,\lambda}$ are related with those of $\varphi_{0,\lambda}$ by using the Taylor expansion and the expressions \eqref{cgf0} and \eqref{cgf1} as
 \begin{align}
 \varphi_{1,\lambda}'(t)&=\varphi'_{0,\lambda}(t)+\Lambda_{\lambda}'(t)+\order{\lambda^{-\frac{1}{2}}}
 +\order{\frac{\|\vb{Q}_{\lambda}\|^2}{\lambda^2}}\label{phi1}\\
\varphi_{1,\lambda}'(1)&= \lambda^{-1}[-\sum_{i}\eta_{\lambda,i}(\bm{\theta}_{\lambda})\theta_{\lambda}^i-\phi_{\lambda}(\bm{\theta}_{\lambda})]
  = - S(\tau_{\bm{\theta}_{\lambda}}^{(\lambda)})
  = - S(\tau_{\bm{\theta}_{0}}^{(\lambda)})
  = \varphi_{0,\lambda}' (1)\label{crucial}\\
 \varphi_{1,\lambda}''(t)&=
 \varphi''_{0,\lambda}(t)+\Lambda_{\lambda}''(t)+\order{\lambda^{-\frac{1}{2}}}
 +\order{\frac{\|\vb{Q}_{\lambda}\|^2}{\lambda^2}}.\label{phi2}
 \end{align}

\if0
Taylor expansion and the expression \eqref{cgf0} and \eqref{cgf1} as
 \begin{align}
 \varphi_{1,\lambda}'(t)&=\varphi'_{0,\lambda}(t)+\Lambda_{\lambda}'(t)+\order{\lambda^{-\frac{1}{2}}}
 +\order{\frac{\|\vb{Q}_{\lambda}\|^2}{\lambda^2}}\label{phi1}\\
&= \lambda^{-1}[-\sum_{i}\eta_{\lambda,i}(\bm{\theta}_{\lambda})\theta_{\lambda}^i-\phi_{\lambda}(\bm{\theta}_{\lambda})]
  = - S(\tau_{\bm{\theta}_{\lambda}}^{(\lambda)})
  = - S(\tau_{\bm{\theta}_{0}}^{(\lambda)})
  = \varphi_{0,\lambda}' (1)\label{crucial}
  \end{align}
since we have defined $\bm{\theta}_{\lambda}$ to satisfy \eqref{tl_1}.
\fi

 Now, we give an asymptotic expansion of $N_{l}^{(\lambda)}(a)$ through a strong large deviation estimation of the upper tail probability $\mathbb P(Z_{l,\lambda}\geq \nu + \lambda^{-\frac{1}{2}}a)=d_{\lambda}^{-1} N_{l}^{(\lambda)}(a)$ in the same way as \cite{joutard13:_stron_large_deviat_theor}.
 
  \begin{lemma}\label{stld_thm}
   Let $\lambda^{-\frac{5}{8}} \ll \|\vb{Q}_{\lambda}\| \ll \lambda$ and Assumption \ref{assump2} be satisfied.
   Then, for any $a\in \mathbb R$ and sufficiently large $\lambda$,
   defining $r_{k,\lambda}^{l}$ $(l=0,1), (k=0,1,2)$ by
 \begin{align}
  r_{2,\lambda}^{1}:=&
  \frac{1}{2}\varphi_{0,\lambda}'(1)f''(\nu)+\frac{1}{2}\varphi_{0,\lambda}''(1)f'(\nu)^2
 -f'(\nu)-\frac{1}{2}\nu f''(\nu)
 + \frac{1}{2}\Lambda_{\lambda}''(1)f'(\nu)^2
  \label{r12}\\
  r_{1,\lambda}^{1}:=&
  [\varphi_{0,\lambda}'(1)f'(\nu)-\nu f'(\nu) -1 + \varphi''(1)^{-1} (\nu-\varphi_{0,\lambda}'(1))\Lambda_{\lambda}''(1)f'(\nu) ]\lambda^{\frac{1}{2}}\label{r11}\\
  r_{0,\lambda}^{1}:=&
  -\lambda \nu
 -\frac{1}{2}\varphi''(1)^{-1}(\nu-\varphi_{0,\lambda}'(1))^2\lambda
 -\log\sqrt{2\pi}
 -\frac{1}{2}\log \varphi''(1)
 -\frac{1}{2}\log \lambda \nonumber\\
 &+\frac{1}{2}\varphi''(1)^{-2}\Lambda_{\lambda}''(1)(\nu-\varphi_{0,\lambda}'(1))^2\lambda
 -\frac{1}{2\varphi''(1)}\Lambda_{\lambda}''(1)
  \label{r10}\\
  r_{2,\lambda}^{0}:=&
  \frac{1}{2}\varphi_{0,\lambda}'(1)f''(\nu)+\frac{1}{2}\varphi_{0,\lambda}''(1)f'(\nu)^2
  -f'(\nu)-\frac{1}{2}\nu f''(\nu)\label{r02}\\
  r_{1,\lambda}^{0}:=&
  [\varphi_{0,\lambda}'(1)f'(\nu)-\nu f'(\nu) -1 ]\lambda^{\frac{1}{2}}\label{r01}\\
  r_{0,\lambda}^{0}:=&
  -\lambda \nu
  -\frac{1}{2}\varphi''(1)^{-1}(\nu-\varphi_{0,\lambda}'(1))^2\lambda
 -\log\sqrt{2\pi}
 -\frac{1}{2}\log \varphi''(1)
 -\frac{1}{2}\log \lambda
  \label{r00},
 \end{align}
  we have
  \begin{align}
  N_{1}^{(\lambda)}(a)
  =&\exp [r_{2,\lambda}^{1} a^2 + r_{1,\lambda}^{1} a + r_{0,\lambda}^{1} +\order{\lambda^{-\frac{1}{2}}} + \order{\lambda^{-2}\|\vb{Q}_{\lambda}\|^{2}}].\label{an_stld}\\
  N_{0}^{(\lambda)}(a)
  =&\exp [r_{2,\lambda}^{0} a^2 + r_{1,\lambda}^{0} a + r_{0,\lambda}^{0} +\order{\lambda^{-\frac{1}{2}}} ].\label{an_stld0}
  \end{align}
 \end{lemma}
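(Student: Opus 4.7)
The plan is to obtain the estimates on $N_l^{(\lambda)}(a)$ by applying Cram\'er-type exponential tilting to the probability $N_l^{(\lambda)}(a) = d_\lambda\,\mathbb{P}(Z_{l,\lambda} \geq \nu + \lambda^{-1/2}a)$, followed by a local Gaussian evaluation of the tilted expectation using the generalized central limit theorem (Theorem~\ref{Thm_gCLT}). This is essentially the strong large deviation scheme of Joutard~\cite{joutard13:_stron_large_deviat_theor}, adapted to our non-i.i.d.\ setting in which only asymptotic extensivity of the cgf is available. The first step is to solve the saddle-point equation $\varphi'_{l,\lambda}(t_{l,\lambda}) = \nu + \lambda^{-1/2}a$ by invoking strict convexity of $\varphi_{l,\lambda}$ (which follows for large $\lambda$ from positivity of the Fisher matrix $g_{ij}$) and inverting near $t=1$. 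By \eqref{crucial} one has $\varphi'_{0,\lambda}(1) = \varphi'_{1,\lambda}(1) = -S(\tau^{(\lambda)}_{\bm{\theta}_0})/\lambda = \nu + \order{\lambda^{\alpha-1}}$, so $t_{l,\lambda}$ lies in a shrinking neighborhood of $1$, and first-order Taylor inversion gives
\[ t_{l,\lambda} - 1 = \frac{\nu + \lambda^{-1/2}a - \varphi'_{l,\lambda}(1)}{\varphi''_{l,\lambda}(1)} + \order{\lambda^{-1}a^2} + \order{\lambda^{-1}}, \]
with $\varphi''_{l,\lambda}(1) \to \varphi''(1)$ and, through \eqref{phi2}, $\varphi''_{1,\lambda}(1) = \varphi''_{0,\lambda}(1) + \Lambda''_\lambda(1) + o(1)$.

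Next I would apply the Esscher--Cram\'er identity
\[ \mathbb{P}(Z_{l,\lambda} \geq \nu + \lambda^{-1/2}a) = e^{\lambda[\varphi_{l,\lambda}(t_{l,\lambda}) - t_{l,\lambda}(\nu+\lambda^{-1/2}a)]} \tilde{\mathbb{E}}_{t_{l,\lambda}}\!\left[e^{-t_{l,\lambda}\lambda (Z_{l,\lambda}-\nu-\lambda^{-1/2}a)}\mathbf{1}_{\{Z_{l,\lambda}\geq \nu+\lambda^{-1/2}a\}}\right], \]
where $\tilde{\mathbb{E}}_{t_{l,\lambda}}$ denotes expectation under the tilted measure proportional to $e^{t_{l,\lambda}\lambda Z_{l,\lambda}}/d_\lambda$. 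Under this tilt the mean of $Z_{l,\lambda}$ is exactly $\nu + \lambda^{-1/2}a$ and its variance is $\varphi''_{l,\lambda}(t_{l,\lambda})/\lambda$; since the tilted cgf inherits asymptotic extensivity from the original one, Theorem~\ref{Thm_gCLT} applies to the standardized tilted variable and gives cdf $\mathcal{N}(\cdot) + \order{\lambda^{-1/2}}$ uniformly. Changing variable to the standardized $y$ reduces the residual expectation to the Gaussian integral $\int_0^{\infty} e^{-\beta y}\,d\mathcal{N}(y)$ with $\beta = t_{l,\lambda}\sqrt{\lambda/\varphi''_{l,\lambda}(t_{l,\lambda})}$, yielding
\[ \tilde{\mathbb{E}}_{t_{l,\lambda}}[\cdots] = \frac{1}{t_{l,\lambda}\sqrt{2\pi\lambda\varphi''_{l,\lambda}(t_{l,\lambda})}}\,\bigl(1 + \order{\lambda^{-1/2}}\bigr). \]

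Combining the two factors with $\log d_\lambda$ and Taylor-expanding $\lambda[\varphi_{l,\lambda}(t_{l,\lambda}) - t_{l,\lambda}(\nu+\lambda^{-1/2}a)]$ around $t=1$: the zeroth-order piece $\lambda\varphi_{l,\lambda}(1) - \lambda(\nu+\lambda^{-1/2}a) + \log d_\lambda = -\lambda\nu - \lambda^{1/2}a$ (using $\lambda\varphi_{l,\lambda}(1) = -\log d_\lambda$ for both $l=0,1$) contributes the $-\lambda\nu$ in $r^l_{0,\lambda}$ and the $-1$ inside $r^l_{1,\lambda}$. The second-order saddle-point correction $-\lambda[\varphi'_{l,\lambda}(1) - \nu - \lambda^{-1/2}a]^2/(2\varphi''_{l,\lambda}(1))$, together with $-\tfrac12\log(2\pi\lambda\varphi''_{l,\lambda}(t_{l,\lambda})) - \log t_{l,\lambda}$ from the CLT factor, then produces the remaining coefficients of $a^0$, $a^1$, $a^2$; using $f'(\nu) = 1/\varphi''(1)$ and $f''(\nu) = -\varphi^{(3)}(1)/\varphi''(1)^3$ casts them in the stated form. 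For $l=1$, substituting \eqref{phi1}--\eqref{phi2} extracts the additional $\Lambda''_\lambda(1)$ pieces in $r^1_{k,\lambda}$.

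The main obstacle will be error bookkeeping. Cubic Taylor remainders in $\varphi_{l,\lambda}$ must be controlled uniformly by the third-derivative bound in Assumption~\ref{assump2}, and the contribution of $\Lambda_\lambda$ (of size $\|\vb{Q}_\lambda\|/\lambda$ on a fixed neighborhood of $t=1$) has to be followed through the saddle-point substitution so that only the intended $\order{\lambda^{-2}\|\vb{Q}_\lambda\|^2}$ remainder survives in \eqref{an_stld} rather than a spurious $\order{\|\vb{Q}_\lambda\|/\lambda}$ error; the hypothesis $\|\vb{Q}_\lambda\| \ll \lambda$ is exactly what keeps the saddle $t_{l,\lambda}$ inside the analyticity strip $I_1$ and prevents the $\Lambda_\lambda$-perturbation from destabilizing the tilting. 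The $l=0$ case is structurally simpler, since $\Lambda_\lambda$ is absent and only the $\order{\lambda^{-1/2}}$ CLT error remains.
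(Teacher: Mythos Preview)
Your approach is correct and follows the same strong-large-deviation scheme as the paper (Esscher--Cram\'er tilting followed by the generalized CLT of Theorem~\ref{Thm_gCLT}, in the spirit of Joutard). The one substantive organizational difference is the choice of tilting parameter: you solve the exact saddle-point equation $\varphi'_{l,\lambda}(t_{l,\lambda})=\nu+\lambda^{-1/2}a$ separately for each $l$, whereas the paper uses a single $l$-independent tilt $t_{\lambda,a}:=f(\nu+\lambda^{-1/2}a)$ based on the \emph{limiting} cgf $\varphi$. Your choice makes the lower integration limit exactly $y=0$, eliminating the paper's offsets $b_{l,\lambda},c_{l,\lambda}$ (Steps~2--4 there); in exchange, the paper's choice gives a clean explicit Taylor expansion~\eqref{tla_exp} of the tilt, with all $\lambda$- and $l$-dependence pushed into the computable quantities $c_{l,\lambda}^2$, $u_{l,\lambda}$, which is where the $\Lambda''_\lambda(1)$ terms and the $(\nu-\varphi'_{0,\lambda}(1))$ corrections appear. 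Both routes produce the stated coefficients modulo the error terms; your natural output will be in terms of $1/\varphi''_{l,\lambda}(1)$ rather than $f'(\nu)$, $f''(\nu)$, so you will need one extra pass to verify that the differences are absorbed into $\order{\lambda^{-1/2}}+\order{\lambda^{-2}\|\vb{Q}_\lambda\|^2}$ via Lemma~\ref{L1}.

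Two minor points: your $\beta$ should read $t_{l,\lambda}\sqrt{\lambda\,\varphi''_{l,\lambda}(t_{l,\lambda})}$ (not $\sqrt{\lambda/\varphi''_{l,\lambda}}$), though your final formula for the residual expectation is already consistent with the correct value; and converting the uniform cdf bound from Theorem~\ref{Thm_gCLT} into a bound on the residual integral needs one integration by parts (the paper's $J_{l,2}$ estimate~\eqref{J2_est}), which you implicitly assume.
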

 
 \begin{remark}
  Joutard gave the strong large deviation theorem (Theorem 1 of \cite{joutard13:_stron_large_deviat_theor}) under his assumptions (A.1) and (A.2) in \cite{joutard13:_stron_large_deviat_theor}.
 The latter (A.2) is the Edgeworth expansion, which is also satisfied in our case up to the first order.
 However, the former (A.1) requires that there exist functions $\varphi_{l}$, $J_{l}$ independently of $\lambda$ such that
 \begin{align}
  \varphi_{l,\lambda}'(t)=&\varphi_{l}'(t)+ \lambda^{-1}J_l(t) + o(\lambda^{-1})\label{a1_2}.
 \end{align}
 Because of $\Lambda_{\lambda}'(t)=\order{\lambda^{-1}\|\vb{Q}_{\lambda}\|}$ and $\lambda^{\frac{5}{8}}\ll \|\vb{Q}_{\lambda}\|$,
 % and $\varphi_{0,\lambda}'(t)=\order{1}$,
 $\Lambda_{\lambda}'(t)$ has strictly larger order than $\lambda^{-1}$.
 Hence, \eqref{phi1} contradicts (\ref{a1_2}).
 As for $\varphi'_{0,\lambda}$, because just
 \begin{align}
  \varphi'_{0,\lambda}(t)=\varphi'(t) + \order{\lambda^{\alpha-1}}
 \end{align}
 is guaranteed, (\ref{a1_2}) is not necessarily satisfied.
  In addition, he only treated the tail probability of the form $\mathbb P(Z_{\lambda}\geq a)$, where $a$ does not depend on the scale $\lambda$.
  In our case, $a$ is replaced by $\nu + \lambda^{-\frac{1}{2}}a$.
 Hence, we cannot directly apply Theorem 1 of \cite{joutard13:_stron_large_deviat_theor} for our situation.
 We will slightly modify his proof to obtain Lemma \ref{stld_thm}.
 \end{remark}
 
For the proof of Lemma \ref{stld_thm}, we prepare several lemmas.

\begin{lemma}\label{L1}
  \begin{align}
   \nu-\varphi_{0,\lambda}'(1) =& \order{\lambda^{\alpha - 1}} = o(\lambda^{-\frac{1}{2}})\label{ord1}\\
   1-\varphi_{0,\lambda}''(1)f'(\nu) =& \order{\lambda^{\alpha - 1}}= o(\lambda^{-\frac{1}{2}})\label{ord2}\\
   \Lambda_{\lambda}''(1) =& \order{\lambda^{-1}\|\vb{Q}_{\lambda}\|}=o(1).\label{ord3}
  \end{align}
\end{lemma}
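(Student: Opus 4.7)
The plan is to prove each of the three estimates directly from Assumption~\ref{assump2} together with the definitions of $\nu$, $\varphi_{0,\lambda}$, $\varphi$, $f$, and $\Lambda_\lambda$. The key observation is that every quantity on the left-hand sides is a simple algebraic combination of $\phi_\lambda$ and its first two derivatives evaluated on the ray $t\bm{\theta}_0$, so the extensivity estimates \eqref{str_extensive_1} and \eqref{str_extensive_2} directly control the errors. I do not expect a serious obstacle; the proof is essentially a bookkeeping of the orders of magnitude.

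For the first estimate, I would differentiate \eqref{cgf0} to obtain
\[
\varphi_{0,\lambda}'(1) = -\lambda^{-1}\Bigl(\sum_{i=1}^{4}\theta_0^{i}\eta_{\lambda,i}(\bm{\theta}_0) + \phi_{\lambda}(\bm{\theta}_0)\Bigr),
\]
using $\partial\phi_\lambda/\partial\theta^i = -\eta_{\lambda,i}$. Substituting the extensivity estimates $\phi_\lambda(\bm{\theta}_0) = \lambda\phi(\bm{\theta}_0)+\order{\lambda^{\alpha}}$ and $\eta_{\lambda,i}(\bm{\theta}_0)=\lambda\eta_i(\bm{\theta}_0)+\order{\lambda^{\alpha}}$ then yields $\varphi_{0,\lambda}'(1)=-\sum_i\theta_0^i\eta_i(\bm{\theta}_0)-\phi(\bm{\theta}_0)+\order{\lambda^{\alpha-1}}$, which equals $\nu+\order{\lambda^{\alpha-1}}$ by the definition of $\nu$. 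Since $\alpha<1/2$ gives $\alpha-1<-1/2$, this is $o(\lambda^{-1/2})$.

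For the second estimate, I would first compute $\varphi'(1)$ directly: $\varphi'(t)=-\sum_i\theta_0^i\eta_i(t\bm{\theta}_0)-\phi(\bm{\theta}_0)$, so $\varphi'(1)=\nu$, giving $f(\nu)=1$. Then, by the inverse function rule, $f'(\nu)=1/\varphi''(f(\nu))=1/\varphi''(1)$. A second differentiation of \eqref{cgf0} gives $\varphi_{0,\lambda}''(1)=\lambda^{-1}\sum_{i,j}\theta_0^i\theta_0^j J_{\lambda,ij}(\bm{\theta}_0)$, and Assumption~\ref{assump2} applied to second derivatives gives $J_{\lambda,ij}(\bm{\theta}_0) = \lambda g_{ij}(\bm{\theta}_0)+\order{\lambda^{\alpha}}$, hence $\varphi_{0,\lambda}''(1) = \varphi''(1)+\order{\lambda^{\alpha-1}}$. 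Multiplying by $f'(\nu)=1/\varphi''(1)$ produces $\varphi_{0,\lambda}''(1)f'(\nu) = 1 + \order{\lambda^{\alpha-1}}$, and the claim follows.

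For the third estimate, I would differentiate $\Lambda_\lambda(t)=\lambda^{-1}\sum_{i,j}t[\eta_i(t\bm{\theta}_0)-\eta_i(\bm{\theta}_0)]g^{ij}(\bm{\theta}_0)y_{\lambda,j}$ twice in $t$; since the bracket vanishes at $t=1$, the result at $t=1$ reduces to a fixed linear combination (depending only on $\bm{\theta}_0$) of the components $y_{\lambda,j}$ times $\lambda^{-1}$. By the definition $y_{\lambda,j}=\eta_{\lambda,j}(\bm{\theta}_\lambda)-\eta_{\lambda,j}(\bm{\theta}_0)$ and the constraints \eqref{tl_3}, \eqref{tl_4}, the components of $y_\lambda$ have the same order as $\vb{Q}_\lambda$, yielding $\Lambda_\lambda''(1) = \order{\lambda^{-1}\|\vb{Q}_\lambda\|}$. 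The assumption $\|\vb{Q}_\lambda\|\ll\lambda$ then makes this $o(1)$, completing the lemma.
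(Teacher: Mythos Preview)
Your proof is correct and follows exactly the approach the paper intends: the paper's own proof of this lemma is the single sentence ``These relations follow from Assumption~\ref{assump2} and $\|\vb{Q}_{\lambda}\|=o(\lambda)$,'' and you have simply spelled out the bookkeeping behind that sentence. One small point worth making explicit: for \eqref{ord3} you cite only \eqref{tl_3} and \eqref{tl_4}, which control $y_{\lambda,2},y_{\lambda,3},y_{\lambda,4}$, but the remaining component $y_{\lambda,1}$ is fixed by the isoentropic condition \eqref{tl_1} and is of the same order (cf.\ \eqref{carnot}); with that addition your argument is complete.
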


\begin{proof}
These relations follow from Assumption \ref{assump2} and $\|\vb{Q}_{\lambda}\|=o(\lambda)$,
\end{proof}

We focus on $t_{\lambda,a}:= f(\nu + \lambda^{-\frac{1}{2}}a)$ as the variable $t$.
 Using the exponential tilting of the measure, 
 we define
 %the distribution function $K^*_{l,\lambda}$, and
 the random variable $\lambda Z_{l,\lambda}^*$ whose distribution function is $K_{l,\lambda}^*$ $(l=0,1)$ which is defined as
 \begin{align}
  K^*_{l,\lambda}(u):= \int_{-\infty< x \leq u} \exp[x t_{\lambda,a} - \lambda \varphi_{l,\lambda}(t_{\lambda,a})] d K_{l,\lambda}(x).
 \end{align}
 In fact, it is a distribution function since
 \begin{align}
  \lim_{u\rightarrow \infty}K^*_{l,\lambda}(u)
  =\int_{-\infty< x < \infty} \exp[x t_{\lambda,a} - \lambda \varphi_{l,\lambda}(t_{\lambda,a})] d K_{l,\lambda}(x)
  =\frac{\mathbb E[e^{t \lambda Z_{l,\lambda}}]}{\mathbb E[e^{t \lambda Z_{l,\lambda}}]}
  =1
 \end{align}
 and the other conditions are trivially satisfied.
Since the mean and the variance of $\lambda Z_{l,\lambda}^*$ are respectively equal to $\lambda \varphi_{l,\lambda}'(t_{\lambda,a})$ and $\varphi_{l,\lambda}''(t_{\lambda,a})$,
 we define the standardized random variable $V_{l,\lambda}$ as
 \begin{align}
  V_{l,\lambda}:=\frac{\lambda Z^*_{l,\lambda}-\lambda \varphi_{l,\lambda}'(t_{\lambda,a})}{\sqrt{\lambda\varphi_{l,\lambda}''(t_{\lambda,a})}}.
 \end{align}
Then, we have the following lemma.
\begin{lemma}\label{L2}
The distribution function $F_{l,\lambda}$ of the random variable $V_{l,\lambda}$ 
$(l=0,1)$
satisfies the central limit theorem as
 \begin{align}
  \sup_{y\in \mathbb R}|F_{l,\lambda}(y)- \mathcal N(y)| = \order{\lambda^{-\frac{1}{2}}}\label{gCLT_app}.
 \end{align}
\end{lemma}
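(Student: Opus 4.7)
The plan is to reduce Lemma \ref{L2} directly to Theorem \ref{Thm_gCLT} by recognizing $V_{l,\lambda}$ as the standardization of the exponentially tilted variable $\lambda Z^*_{l,\lambda}$. A direct computation from the definition of $K^*_{l,\lambda}$ gives the cgf
\begin{align*}
\psi^*_{l,\lambda}(s) := \log\mathbb{E}[e^{s\lambda Z^*_{l,\lambda}}] = \lambda\bigl[\varphi_{l,\lambda}(s+t_{\lambda,a}) - \varphi_{l,\lambda}(t_{\lambda,a})\bigr],
\end{align*}
and differentiating at $s=0$ confirms $\mathbb{E}[\lambda Z^*_{l,\lambda}] = \lambda\varphi'_{l,\lambda}(t_{\lambda,a})$ and $\mathrm{Var}[\lambda Z^*_{l,\lambda}] = \lambda\varphi''_{l,\lambda}(t_{\lambda,a})$. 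Hence $V_{l,\lambda}$ is precisely the centered and normalized version of $X_\lambda := \lambda Z^*_{l,\lambda}$ in the sense required by Theorem \ref{Thm_gCLT}, and it suffices to verify the hypothesis of that theorem for this $X_\lambda$.

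To check the hypothesis, I first note that $\nu = \varphi'(1)$: differentiating $\varphi(t) = \phi(t\bm{\theta}_0)-t\phi(\bm{\theta}_0)$ at $t=1$ and using $\partial\phi/\partial\theta^i = -\eta_i$ matches it against the definition $\nu = -\sum_i \eta_i(\bm{\theta}_0)\theta_0^i - \phi(\bm{\theta}_0)$. Consequently $f(\nu) = 1$ and $t_{\lambda,a} = f(\nu + \lambda^{-1/2}a) \to 1$. By Assumption \ref{assump2}, the expressions \eqref{cgf0}--\eqref{cgf1} give $\varphi_{l,\lambda}(t) + \lambda^{-1}\log d_\lambda = \varphi(t) + O(\lambda^{\alpha-1})$ uniformly on a fixed neighborhood of $t=1$. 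The $\lambda^{-1}\log d_\lambda$ contribution cancels in the difference appearing in $\psi^*_{l,\lambda}$, so
\begin{align*}
\psi^*_{l,\lambda}(s)/\lambda \longrightarrow \varphi(s+1) - \varphi(1) =: \psi^*(s)
\end{align*}
uniformly on a neighborhood of $s=0$, which is stronger than the pointwise convergence hypothesis of Theorem \ref{Thm_gCLT}. Applying that theorem to $X_\lambda$ then yields the bound \eqref{gCLT_app}.

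The main point to be careful with is that Assumption \ref{assump2} provides its uniform bounds only on a fixed neighborhood of $\bm{\theta}_0$, whereas for $l=1$ the point $\bm{\theta}_\lambda$ itself drifts with $\lambda$. This is under control because $\|\bm{\theta}_\lambda-\bm{\theta}_0\| = O(\|\vb{Q}_\lambda\|/\lambda) = o(1)$, so for all sufficiently large $\lambda$ the arguments $(s+t_{\lambda,a})\bm{\theta}_\lambda$ with $s$ in a fixed small neighborhood of $0$ remain inside the neighborhood of $\bm{\theta}_0$ on which the uniform estimate \eqref{str_extensive_1} holds. Beyond this bookkeeping no serious obstacle remains, since the essential probabilistic content has already been absorbed into Theorem \ref{Thm_gCLT}.
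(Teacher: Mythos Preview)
Your proof is correct and follows essentially the same approach as the paper's own proof: compute the cgf of $\lambda Z^*_{l,\lambda}$ via the tilting formula, show it has the asymptotic form $\lambda[\phi((s+1)\bm{\theta}_0)-(s+1)\phi(\bm{\theta}_0)]+o(\lambda)$ required by Theorem~\ref{Thm_gCLT}, and apply that theorem. Your explicit handling of $\nu=\varphi'(1)$, $t_{\lambda,a}\to 1$, and the drift of $\bm{\theta}_\lambda$ matches the paper's reasoning (the latter is what the paper gestures at with ``the definition of $\bm{\theta}_\lambda$''); the only minor imprecision is that for $l=1$ the error is $O(\lambda^{\alpha-1})+o(1)$ rather than $O(\lambda^{\alpha-1})$ alone, but since Theorem~\ref{Thm_gCLT} needs only pointwise $o(\lambda)$ this is harmless and you already flag it.
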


\begin{proof}
%We show that the distribution functions $F_{l,\lambda}(y)$ $(l=0,1)$ of $V_{l,\lambda}$ satisfy the central limit theorem by applying Theorem \ref{Thm_gCLT}.
The cgf $\psi_{l,\lambda}(s)$ of $\lambda Z^*_{l,\lambda}$ is calculated as
 \begin{align}
  \psi_{l,\lambda}(s)
  =\log \int_{-\infty< u < \infty} e^{s u} d K_{l,\lambda}^* (u)
  =\log \int_{-\infty< u < \infty} e^{(s+t_{\lambda,a}) u - \lambda \varphi_{l,\lambda}(t_{\lambda,a})} d K_{l,\lambda} (u)
  %=\log \frac{\mathbb E[e^{(s+t_{\lambda,a})\lambda Z_{l,\lambda}}]}{\mathbb E[e^{t_{\lambda,a}\lambda Z_{l,\lambda}}]}.
  =\lambda \varphi_{l,\lambda}(s+t_{\lambda,a}) - \lambda \varphi_{l,\lambda}(t_{\lambda,a}).
  \label{c122}
 \end{align}
 As for $l=1$, \eqref{cgf1} and \eqref{c122} yield
 \begin{align}
  \psi_{1,\lambda}(s)
  =\phi_{\lambda}((s+t_{\lambda,a})\bm{\theta}_{\lambda})-(s+t_{\lambda,a})\phi_{\lambda}(\bm{\theta}_{\lambda})
  -\left(\phi_{\lambda}(t_{\lambda,a}\bm{\theta}_{\lambda})-t_{\lambda,a}\phi_{\lambda}(\bm{\theta}_{\lambda})\right)
  =\phi_{\lambda}((s+t_{\lambda,a})\bm{\theta}_{\lambda})
  -(s+t_{\lambda,a})\phi_{\lambda}(\bm{\theta}_{\lambda}).
 \end{align}
 Thus,
 because of Assumption \ref{assump2}, \eqref{tla_exp} and the definition of $\bm{\theta}_{\lambda}$,
 there exists a small interval $I_0\ni 0$ such that
 \begin{align}
  \psi_{1,\lambda}(s) = \lambda [\phi ((s+1)\bm{\theta}_0) - (s+1)\phi (\bm{\theta}_0)] + o(\lambda)
 \end{align}
 holds for any $s\in I_0$.
 In the same way, we also have
 \begin{align}
  \psi_{0,\lambda}(s) = \lambda [\phi ((s+1)\bm{\theta}_0) - (s+1)\phi (\bm{\theta}_0)] + o(\lambda).
 \end{align}
 Therefore, both $\lambda Z_{l,\lambda}$ $(l=0,1)$ satisfy the condition of Theorem \ref{Thm_gCLT}. Hence the distribution functions $F_{l,\lambda}$ of their standardized random variable $V_{l,\lambda}$ satisfy the central limit theorem
 \eqref{gCLT_app}
 by Theorem \ref{Thm_gCLT}.
 
\end{proof}

% The third derivative of $\varphi_{l,\lambda}$ $(l=0,1)$ satisfies
% \begin{align}
%  \varphi_{l,\lambda}'''(t)&= \varphi'''(t)+o(1)\label{phi3}.
% \end{align}

\begin{proof}[Proof of Lemma \ref{stld_thm}]
\noindent{\bf Step 1:}\quad
First, we prepare several formulas for $t_{\lambda,a}$ and $\varphi_{l,\lambda}$ together with its derivatives.
 The Taylor expansion with \eqref{cgf0}, \eqref{cgf1} yields
  \begin{align}
   t_{\lambda,a}=& 1 + f'(\nu) a \lambda^{-\frac{1}{2}} + \frac{1}{2} f''(\nu) a^{2} \lambda^{-1} + \order{\lambda^{-\frac{3}{2}}},\label{tla_exp}\\
  \varphi_{l,\lambda}(t_{\lambda,a})
 =&\varphi_{l,\lambda}(1)+\varphi_{l,\lambda}'(1)[\lambda^{-\frac{1}{2}}f'(\nu)a +\frac{1}{2}\lambda^{-1}f''(\nu)a^2]
 +\frac{1}{2}\varphi_{l,\lambda}''(1)\lambda^{-1}f'(\nu)^2 a^2
  +\order{\lambda^{-\frac{3}{2}}}\nonumber\\
 =&-\lambda^{-1}\log d_{\lambda} + \lambda^{-\frac{1}{2}}\varphi_{l,\lambda}'(1)f'(\nu)a +\frac{1}{2}\lambda^{-1}[\varphi_{l,\lambda}'(1)f''(\nu)a^2
 +\varphi_{l,\lambda}''(1)f'(\nu)^2 a^2]
   +\order{\lambda^{-\frac{3}{2}}}.\label{ptla}
  \end{align}
 Furthermore, the Taylor expansion gives
  \begin{align}
   \varphi_{1,\lambda}'(t_{\lambda,a})
   =&
   \varphi_{1,\lambda}'(1)
  +\lambda^{-\frac{1}{2}}\varphi_{1,\lambda}''(1)f'(\nu)a
   +\order{\lambda^{-1}}\nonumber\\
   \stackrel{(\rm a)}{=}&
   \varphi_{0,\lambda}'(1)
  +\lambda^{-\frac{1}{2}}[\varphi_{0,\lambda}''(1)+ \Lambda_{\lambda}''(1)]f'(\nu)a
   +\order{\lambda^{-1}}\label{tphi1}\\
   \varphi_{1,\lambda}''(t_{\lambda,a})
   =&
   \varphi_{1,\lambda}''(1)
   +\order{\lambda^{-\frac{1}{2}}}\nonumber\\
   \stackrel{(\rm b)}{=}&\varphi_{0,\lambda}''(1) + \Lambda_{\lambda}''(1) +\order{\lambda^{-\frac{1}{2}}},\label{tphi2}
  \end{align}
  where (a) follows from (\ref{phi1}) and (\ref{crucial}), and (b) follows from (\ref{phi2}).
 %Here, we decided the residual orders $\order{\lambda^{-\frac{3}{2}}}$, $\order{\lambda^{-1}}$, and $\order{\lambda^{-\frac{1}{2}}}$ in \eqref{ptla}, \eqref{tphi1}, and \eqref{tphi2} in response to the orders needed for the later analysis.
 Here, we calculated \eqref{ptla}, \eqref{tphi1}, and \eqref{tphi2} up to the necessary orders for the later analysis.
  In the same way, we have
  \begin{align}
   \varphi_{0,\lambda}'(t_{\lambda,a})
   =&
   \varphi_{0,\lambda}'(1)
  +\lambda^{-\frac{1}{2}}\varphi_{0,\lambda}''(1)f'(\nu)a
   +\order{\lambda^{-1}}\label{tphi01}\\
   \varphi_{0,\lambda}''(t_{\lambda,a})
   =&
   \varphi_{0,\lambda}''(1)
   +\order{\lambda^{-\frac{1}{2}}}.\label{tphi02}
  \end{align}
  %The third derivative satisfies
%  \begin{align}
%   \varphi_{l,\lambda}'''(t_{\lambda,a})= \varphi_{l,\lambda}'''(1) + o(1) = \varphi'''(1) + o(1)
%   \quad (l=0,1)\label{tphi3}
%  \end{align}
%  because of \eqref{phi3}.
Also, applying \eqref{tla_exp}, \eqref{tphi2} and \eqref{tphi02},
   we have the asymptotic expansions for $u_{l,\lambda}:=t_{\lambda,a}\sqrt{\lambda \varphi''_{l,\lambda}(t_{\lambda,a})}$ as
   \begin{align}
   \log u_{1,\lambda}
 =&
 \frac{1}{2}\log \varphi''(1)
 +\frac{1}{2\varphi''(1)}\Lambda_{\lambda}''(1)
 +\frac{1}{2}\log \lambda
 +\order{\lambda^{-\frac{1}{2}}}
   + \order{\lambda^{-2}\|\vb{Q}_{\lambda}\|^{2}},\label{uul2}\\
   \log u_{0,\lambda}
 =&
 \frac{1}{2}\log \varphi''(1)
 +\frac{1}{2}\log \lambda
 +\order{\lambda^{-\frac{1}{2}}}
 + \order{\lambda^{-2}\|\vb{Q}_{\lambda}\|^{2}}.\label{uul}
  \end{align}

\noindent{\bf Step 2:}\quad
In this step, we divide the probability $\mathbb P(Z_{l,\lambda}\geq \nu + \lambda^{-\frac{1}{2}}a)$ into two parts to estimate it.
 Defining
   \begin{align}
    b_{l,\lambda}:=&\lambda t_{\lambda,a} (\varphi'_{l,\lambda}(t_{\lambda,a}) - (\nu + \lambda^{-\frac{1}{2}} a))\\
    c_{l,\lambda}:=&\frac{\sqrt{\lambda}(\nu + \lambda^{-\frac{1}{2}} a - \varphi'_{l,\lambda}(t_{\lambda,a}))}{\sqrt{\varphi''_{l,\lambda}(t_{\lambda,a})}}
  \end{align}
  we calculate the probability $\mathbb P(Z_{l,\lambda}\geq \nu + \lambda^{-\frac{1}{2}}a)$ as
  \begin{align}
   &\mathbb P(Z_{l,\lambda}\geq \nu + \lambda^{-\frac{1}{2}}a)\nonumber\\
   =& \int_{u\geq \lambda (\nu + \lambda^{-\frac{1}{2}}a)} d K_{l,\lambda}(u)\nonumber\\
   =& \int_{u\geq \lambda (\nu + \lambda^{-\frac{1}{2}}a)}
   e^{-u t_{\lambda,a} + \lambda \varphi_{l,\lambda}(t_{\lambda,a})}
   e^{u t_{\lambda,a} - \lambda \varphi_{l,\lambda}(t_{\lambda,a})}
   d K_{l,\lambda}(u)\nonumber\\
   =& \int_{u\geq \lambda (\nu + \lambda^{-\frac{1}{2}}a)} e^{-u t_{\lambda,a} + \lambda \varphi_{l,\lambda}(t_{\lambda,a})} d K_{l,\lambda}^*(u)\nonumber\\
   =& e^{\lambda [\varphi_{l,\lambda}(t_{\lambda,a}) - t_{\lambda,a} \varphi'_{l,\lambda}(t_{\lambda,a})]}
   \int_{u\geq \lambda (\nu + \lambda^{-\frac{1}{2}}a)} e^{-u t_{\lambda,a} + \lambda \varphi_{l,\lambda}(t_{\lambda,a})} d K_{l,\lambda}^*(u)\nonumber\\
   \stackrel{(\rm a)}{=}& e^{\lambda [\varphi_{l,\lambda}(t_{\lambda,a}) - t_{\lambda,a} (\nu + \lambda^{-\frac{1}{2}} a)]} e^{-\lambda t_{\lambda,a} (\varphi'_{l,\lambda}(t_{\lambda,a}) - (\nu + \lambda^{-\frac{1}{2}} a))}
   \int_{y \geq c_{l,\lambda}} e^{-u_{l,\lambda} y} d F_{l,\lambda}(y) \nonumber\\
   =&e^{\lambda [\varphi_{l,\lambda}(t_{\lambda,a}) - t_{\lambda,a} (\nu + \lambda^{-\frac{1}{2}} a)]} e^{-b_{l,\lambda}}
   \int_{y \geq c_{l,\lambda}} e^{-u_{l,\lambda} y} d F_{l,\lambda}(y),\label{target}
  \end{align}
  where the equality $(\rm a)$ follows from integration by substitution with $y=(u-\lambda \varphi_{l,\lambda}'(t_{\lambda,a}))/\sqrt{\lambda\varphi_{l,\lambda}''(t_{\lambda,a})}$.
  Then, we divide the integral into two parts
  \begin{align}
   \int_{y \geq c_{l,\lambda}} e^{-u_{l,\lambda} y} d F_{l,\lambda}(y)
   =\int_{y \geq c_{l,\lambda}} e^{-u_{l,\lambda} y} d \mathcal N (y) + \int_{y \geq c_{l,\lambda}} e^{-u_{l,\lambda} y} d (F_{l,\lambda}(y) - \mathcal N (y))
   =: J_{l,1} + J_{l,2}.
  \end{align}
  The latter is estimated by integration by parts as
  \begin{align}
   |J_{l,2}|=&\left|e^{-u_{l,\lambda}c_{l,\lambda}} (F_{l,\lambda}(c_{l,\lambda})-\mathcal N (c_{l,\lambda})) + \int_{y\geq c_{l,\lambda}} u_{l,\lambda} e^{-u_{l,\lambda} y} (F_{l,\lambda}(y)-\mathcal N (y)) d y \right|\nonumber\\
   \leq &\left|e^{b_{l,\lambda}} + \int_{y\geq c_{l,\lambda}} u_{l,\lambda} e^{-u_{l,\lambda} y} d y \right| \sup_{y}\left|F_{l,\lambda}(y)-\mathcal N (y)\right|
   =2 e^{b_{l,\lambda}} \order{\lambda^{-\frac{1}{2}}}\label{J2_est}
  \end{align}
  by $-u_{l,\lambda}c_{l,\lambda}=b_{l,\lambda}$ and (\ref{gCLT_app}).
  
\noindent{\bf Step 3:}\quad
In this step, we calculate the former part $J_{l,1}$.
  The former part $J_{l,1}$ is also calculated by using integration by parts as follows:
  \begin{align}
   &J_{l,1}\nonumber\\
   =&\frac{1}{\sqrt{2\pi}}\int_{y \geq c_{l,\lambda}} e^{-u_{\lambda} y} e^{-\frac{y^2}{2}} d y \nonumber\\
   =&\frac{1}{\sqrt{2\pi}}\left[
   \frac{e^{-\frac{c_{l,\lambda}^2}{2}} e^{b_{l,\lambda}}}{u_{l,\lambda}}
   -\frac{1}{u_{l,\lambda}}\int_{y \geq c_{l,\lambda}} y e^{-u_{\lambda} y} e^{-\frac{y^2}{2}} d y
   \right] \nonumber\\
   =&\frac{1}{\sqrt{2\pi}}\left[
   \frac{e^{-\frac{c_{l,\lambda}^2}{2}} e^{b_{l,\lambda}}}{u_{l,\lambda}}
   -\frac{1}{u_{l,\lambda}}
    \left[
    \frac{c_{l,\lambda} e^{-\frac{c_{l,\lambda}^2}{2}} e^{b_{l,\lambda}}}{u_{l,\lambda}}
    -\frac{1}{u_{l,\lambda}}\int_{y \geq c_{l,\lambda}} (1-y^2) e^{-u_{\lambda} y} e^{-\frac{y^2}{2}} d y
    \right]
   \right].\label{o1_est}
  \end{align}
  Since $t_{\lambda,a}= f(\nu + \lambda^{-\frac{1}{2}} a)\rightarrow 1$, we have $t_{\lambda,a}\geq 0$ for large enough $\lambda$, which yields $u_{l,\lambda} \geq 0$.
  Hence, we have
  \begin{align}
   \left|\int_{y \geq c_{l,\lambda}} (1-y^2) e^{-u_{\lambda} y} e^{-\frac{y^2}{2}} d y \right|
   \leq
   e^{-u_{\lambda} c_{l,\lambda}} \int_{y \geq c_{l,\lambda}} |1-y^2| e^{-\frac{y^2}{2}} d y 
   =\order{1} e^{b_{l,\lambda}}\label{o1_int}
  \end{align}
  because Gaussian integrals are finite.
  By substituting \eqref{tphi1} and \eqref{tphi2}, the constant $c_{1,\lambda}$ is calculated as
  \begin{align}
   c_{1,\lambda}
   =&
   \lambda^{\frac{1}{2}}(\nu + \lambda^{-\frac{1}{2}} a - \varphi'_{1,\lambda}(t_{\lambda,a}))\varphi''_{1,\lambda}(t_{\lambda,a})^{-\frac{1}{2}}\nonumber\\
 =&[(\nu-\varphi_{0,\lambda}'(1))\lambda^{\frac{1}{2}} + (1-\varphi_{0,\lambda}''(1)f'(\nu))a - \Lambda_{\lambda}''(1)f'(\nu)a + \order{\lambda^{-\frac{1}{2}}}+\order{\lambda^{-2}\|\vb{Q}_{\lambda}\|^{2}}]\nonumber\\
 &\times\varphi''(1)^{-\frac{1}{2}}[1+\varphi''(1)^{-1}\Lambda_{\lambda}''(1)+\order{\lambda^{-\frac{1}{2}}}+ \order{\lambda^{-2}\|\vb{Q}_{\lambda}\|^{2}}]^{-\frac{1}{2}}\label{o1_c1}.
  \end{align}
Due to Lemma \ref{L1}, 
the equation \eqref{o1_c1} implies 
\begin{align}
c_{1,\lambda}=o(1).\label{ee1}
\end{align}
Similarly, we have
   \begin{align}
   c_{0,\lambda}
   =&
   [(\nu-\varphi_{0,\lambda}'(1))\lambda^{\frac{1}{2}} + (1-\varphi_{0,\lambda}''(1)f'(\nu))a + \order{\lambda^{-\frac{1}{2}}}]
  \varphi''(1)^{-\frac{1}{2}}[1+\order{\lambda^{-\frac{1}{2}}}]^{-\frac{1}{2}}\nonumber \\
   =& o(1) \label{o1_c0}
   \end{align}
   from \eqref{ord1}, \eqref{ord2}, \eqref{tphi01} and \eqref{tphi02}.
  Thus, evaluating \eqref{o1_est} with \eqref{o1_int}, \eqref{ee1} and \eqref{o1_c0}, we have
  \begin{align}
   J_{l,1}
   =
   \frac{e^{-\frac{c_{l,\lambda}^2}{2}} e^{b_{l,\lambda}}}{\sqrt{2\pi} u_{l,\lambda}}\left[1+\frac{1}{u_{l,\lambda}}\order{1}\right]
   =\frac{e^{-\frac{c_{l,\lambda}^2}{2}} e^{b_{l,\lambda}}}{\sqrt{2\pi} u_{l,\lambda}}\left[1+\order{\lambda^{-\frac{1}{2}}}\right]
   ,\label{J1_est}
  \end{align}
  where we apply $u_{l,\lambda}= \order{\sqrt{\lambda}}$ for the last equality.

\noindent{\bf Step 4:}\quad
We make further calculation of $c_{l,\lambda}^2$.
  Squaring \eqref{o1_c1}, we have
  \begin{align}
   c_{1,\lambda}^2
   =&
   [(\nu-\varphi_{0,\lambda}'(1))^2\lambda
   -2(\nu-\varphi_{0,\lambda}'(1))\Lambda_{\lambda}''(1)f'(\nu) \lambda^{\frac{1}{2}} a
   +(1-\varphi_{0,\lambda}''(1) f'(\nu))^2 a^2
   + \Lambda_{\lambda}''(1)^2 f'(\nu)^2 a^2 \nonumber\\
   &-2(1-\varphi_{0,\lambda}''(1)f'(\nu))f'(\nu)a^2 \Lambda_{\lambda}''(1)
   +2(1-\varphi_{0,\lambda}''(1)f'(\nu)) (\nu-\varphi_{0,\lambda}'(1)) \lambda^{\frac{1}{2}} a
 + \order{\lambda^{-\frac{1}{2}}}+\order{\lambda^{-2}\|\vb{Q}_{\lambda}\|^{2}}]\nonumber\\
 &\times\varphi''(1)^{-1}[1-\varphi''(1)^{-1}\Lambda_{\lambda}''(1)+\order{\lambda^{-\frac{1}{2}}}+ \order{\lambda^{-2}\|\vb{Q}_{\lambda}\|^{2}}]\nonumber\\
 \stackrel{(\rm a)}{=}&
 [(\nu-\varphi_{0,\lambda}'(1))^2\lambda
 -2(\nu-\varphi_{0,\lambda}'(1))\Lambda_{\lambda}''(1)f'(\nu) \lambda^{\frac{1}{2}} a
 + \order{\lambda^{-\frac{1}{2}}}+\order{\lambda^{-2}\|\vb{Q}_{\lambda}\|^{2}}]\nonumber\\
 &\times\varphi''(1)^{-1}[1-\varphi''(1)^{-1}\Lambda_{\lambda}''(1)+\order{\lambda^{-\frac{1}{2}}}+ \order{\lambda^{-2}\|\vb{Q}_{\lambda}\|^{2}}]\nonumber\\
 =&
 \varphi''(1)^{-1}(\nu-\varphi_{0,\lambda}'(1))^2\lambda
 -2 \varphi''(1)^{-1} (\nu-\varphi_{0,\lambda}'(1))\Lambda_{\lambda}''(1)f'(\nu) \lambda^{\frac{1}{2}} a \nonumber\\
 &-\varphi''(1)^{-2}\Lambda_{\lambda}''(1)(\nu-\varphi_{0,\lambda}'(1))^2\lambda
 + 2\varphi''(1)^{-2}(\nu-\varphi_{0,\lambda}'(1)) \Lambda_{\lambda}''(1)^2 f'(\nu) \lambda^{\frac{1}{2}} a
   +\order{\lambda^{-\frac{1}{2}}} + \order{\lambda^{-2}\|\vb{Q}_{\lambda}\|^2}\nonumber\\
   \stackrel{(\rm b)}{=}&
 \varphi''(1)^{-1}(\nu-\varphi_{0,\lambda}'(1))^2\lambda
 -2 \varphi''(1)^{-1} (\nu-\varphi_{0,\lambda}'(1))\Lambda_{\lambda}''(1)f'(\nu) \lambda^{\frac{1}{2}} a \nonumber\\
 &-\varphi''(1)^{-2}\Lambda_{\lambda}''(1)(\nu-\varphi_{0,\lambda}'(1))^2\lambda
   +\order{\lambda^{-\frac{1}{2}}} + \order{\lambda^{-2}\|\vb{Q}_{\lambda}\|^2}
   ,\label{ccl}    
  \end{align}
   where the terms $\Lambda_{\lambda}''(1)^2 f'(\nu)^2 a^2$, $(1-\varphi_{0,\lambda}''(1)f'(\nu))^2 a^2$, $2(1-\varphi_{0,\lambda}''(1)f'(\nu))f'(\nu)a^2 \Lambda_{\lambda}''(1)$, and $2(1-\varphi_{0,\lambda}''(1)f'(\nu)) (\nu-\varphi_{0,\lambda}'(1)) \lambda^{\frac{1}{2}} a$
   are included in $\order{\lambda^{-\frac{1}{2}}}+ \order{\lambda^{-2}\|\vb{Q}_{\lambda}\|^{2}}$ at (a) because of \eqref{ord1}-\eqref{ord3},
   and (b) follows from
   \begin{align}
   (\nu-\varphi_{0,\lambda}'(1)) \Lambda_{\lambda}''(1)^2 \lambda^{\frac{1}{2}}
   =\order{\lambda^{\alpha-\frac{1}{2}} \lambda^{-2}\|\vb{Q}_{\lambda}\|^{2}}
   =\order{\lambda^{-2}\|\vb{Q}_{\lambda}\|^{2}},
   \end{align}
   which is obtained from \eqref{ord1}, \eqref{ord3} and $\alpha < \frac{1}{2}$.
   Similarly, we have
  \begin{align}
   c_{0,\lambda}^2
 =&
 \varphi''(1)^{-1}(\nu-\varphi_{0,\lambda}'(1))^2\lambda
   +\order{\lambda^{-\frac{1}{2}}} + \order{\lambda^{-2}\|\vb{Q}_{\lambda}\|^2}.\label{c0_sub}
  \end{align}
   
\noindent{\bf Step 5:}\quad
Finally, we calculate $N_{1}^{(\lambda)}(a)$.
   %combining , , and (\ref{ccl})-(\ref{uul}) to evaluate ,
   That is, we obtain
    \begin{align}
     &\frac{1}{d_{\lambda}} N_{1}^{(\lambda)}(a)
     =\mathbb P(Z_{1,\lambda}\geq \nu + \lambda^{-\frac{1}{2}}a)
     \nonumber\\
    \stackrel{(\rm a)}{=}&e^{\lambda [\varphi_{1,\lambda}(t_{\lambda,a}) - t_{\lambda,a} (\nu + \lambda^{-\frac{1}{2}} a)]} e^{-b_{1,\lambda}}
     \frac{e^{-\frac{c_{1,\lambda}^2}{2}} e^{b_{1,\lambda}}}{\sqrt{2\pi} u_{1,\lambda}}\left[1+\order{\lambda^{-\frac{1}{2}}}\right]\nonumber\\
     \stackrel{(\rm b)}{=}&\frac{1}{d_{\lambda}}
     \exp \left[
 -\lambda \nu
 +[\varphi_{0,\lambda}'(1)f'(\nu)-\nu f'(\nu) -1 ]\lambda^{\frac{1}{2}} a \right.\nonumber\\
 &\left. +[\frac{1}{2}\varphi_{0,\lambda}'(1)f''(\nu)+\frac{1}{2}\varphi_{0,\lambda}''(1)f'(\nu)^2
 -f'(\nu)-\frac{1}{2}\nu f''(\nu)
 + \frac{1}{2}\Lambda_{\lambda}''(1)f'(\nu)^2
 +\order{\lambda^{\alpha-2}\|\vb{Q}_{\lambda}\|}
 ]a^2
 -\frac{1}{2}c_{1,\lambda}^2
 -\log \sqrt{2\pi}u_{1,\lambda}\right.\nonumber\\
 &\left. +\order{\lambda^{-\frac{1}{2}}}
 \right]\nonumber\\
 \stackrel{(\rm c)}{=}&
 \frac{1}{d_{\lambda}}
 \exp \left[
 \left[\frac{1}{2}\varphi_{0,\lambda}'(1)f''(\nu)+\frac{1}{2}\varphi_{0,\lambda}''(1)f'(\nu)^2
 -f'(\nu)-\frac{1}{2}\nu f''(\nu)
 + \frac{1}{2}\Lambda_{\lambda}''(1)f'(\nu)^2
 \right]a^2
 \right.\nonumber\\
 &\left. +[\varphi_{0,\lambda}'(1)f'(\nu)-\nu f'(\nu) -1 + \varphi''(1)^{-1} (\nu-\varphi_{0,\lambda}'(1))\Lambda_{\lambda}''(1)f'(\nu) ]\lambda^{\frac{1}{2}} a\right.\nonumber\\
 &
 -\lambda \nu
 -\frac{1}{2}\varphi''(1)^{-1}(\nu-\varphi_{0,\lambda}'(1))^2\lambda
 -\log\sqrt{2\pi}
 -\frac{1}{2}\log \varphi''(1)
 -\frac{1}{2}\log \lambda
 +\frac{1}{2}\varphi''(1)^{-2}\Lambda_{\lambda}''(1)(\nu-\varphi_{0,\lambda}'(1))^2\lambda
 -\frac{1}{2\varphi''(1)}\Lambda_{\lambda}''(1)
 \nonumber\\
 &\left.
 +\order{\lambda^{-\frac{1}{2}}} + \order{\lambda^{-2}\|\vb{Q}_{\lambda}\|^{2}} \right],\label{c44}
    \end{align}
   where (a) follows from combining (\ref{J2_est}), (\ref{J1_est}) with (\ref{target}), and
   we apply (\ref{tla_exp}) and (\ref{ptla}) to $\varphi_{1,\lambda}(t_{\lambda,a}) - t_{\lambda,a} (\nu + \lambda^{-\frac{1}{2}} a)$ at (b), and we substitute \eqref{uul2}, (\ref{uul}), (\ref{ccl}) and (\ref{c0_sub}) at (c).
  Similarly, we have
  \begin{align}
   &\frac{1}{d_{\lambda}}N_{0}^{(\lambda)}=\mathbb P(Z_{l,\lambda}\geq \nu + \lambda^{-\frac{1}{2}}a)\nonumber\\
   =&
   \frac{1}{d_{\lambda}}
   \exp \left[
 \left[\frac{1}{2}\varphi_{0,\lambda}'(1)f''(\nu)+\frac{1}{2}\varphi_{0,\lambda}''(1)f'(\nu)^2
 -f'(\nu)-\frac{1}{2}\nu f''(\nu)
 \right]a^2\right.\nonumber\\
 &\left. +[\varphi_{0,\lambda}'(1)f'(\nu)-\nu f'(\nu) -1 ]\lambda^{\frac{1}{2}} a \right.\nonumber\\
  &\left.
  -\lambda \nu
  -\frac{1}{2}\varphi''(1)^{-1}(\nu-\varphi_{0,\lambda}'(1))^2\lambda
 -\log\sqrt{2\pi}
 -\frac{1}{2}\log \varphi''(1)
 -\frac{1}{2}\log \lambda
  +\order{\lambda^{-\frac{1}{2}}} \right].\label{c45}
  \end{align}
  The equations \eqref{c44} and \eqref{c45} are equivalent with \eqref{an_stld} and \eqref{an_stld0}, hence the proof is completed.
 \end{proof}
   
  \section{Key Estimations for the proof of Theorems \ref{achieve_thm} and \ref{achieve_thm_non}}\label{app_relative_ent}
In this section, we prove the respective estimations \eqref{relative_ent} and \eqref{jj} of the relative entropy and the canonical correlations needed for the proof of Theorems \ref{achieve_thm} and \ref{achieve_thm_non}.
  We carry out all the proofs so that they are valid in the case where
  the observables $A_{i,\lambda}$, $B_{i,\lambda}$ are not necessarily commutative.
\subsection{Estimation of the relative entropy by applying the strong large deviation (proof of \eqref{relative_ent})}\label{apps_rel}
  We implement the estimation of the relative entropy for the proof of Theorems \ref{achieve_thm} and \ref{achieve_thm_non} in the main text by using Lemma \ref{stld_thm} prepared in the previous section.
 % We again assume that in the following.
 %In fact, it is assumed in Theorems \ref{achieve_thm} and \ref{achieve_thm_non}.
  Our goal is the following theorem:
  \begin{theorem}\label{rel_thm}
   Under Assumption \ref{assump2} and $\lambda^{\frac{5}{8}}\ll\|\vb{Q}_{\lambda}\|\ll\lambda$ , we have
   \begin{align}
   D(\rho_{\rm opt}^{(\lambda)}\|\tau_{\bm{\theta}_\lambda}^{(\lambda)})&=\order{\frac{\|\vb{Q}_{\lambda}\|^2}{\lambda^{2}}}+\order{\lambda^{-\frac{1}{2}}}\nonumber\\
   D(\rho_{\rm opt, nc}^{(\lambda)}\|\tau_{\bm{\theta}_\lambda}^{(\lambda)})&=\order{\frac{\|\vb{Q}_{\lambda}\|^2}{\lambda^{2}}}+\order{\lambda^{-\frac{1}{2}}}.\label{key_est}
   \end{align}
  \end{theorem}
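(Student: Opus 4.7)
The first step reduces both parts of the theorem to a single classical Kullback--Leibler divergence. In the non-commutative case, $\rho_{\rm opt,nc}^{(\lambda)}=\sum_{i}p_{\bm{\theta}_0}^{(\lambda)}(i)\ket{\varphi_i}\bra{\varphi_i}$ shares its eigenbasis with $\tau_{\bm{\theta}_\lambda}^{(\lambda)}$ by construction, so both states commute and
\begin{align*}
 D(\rho_{\rm opt}^{(\lambda)}\|\tau_{\bm{\theta}_\lambda}^{(\lambda)})
 =\sum_{j}p_{\bm{\theta}_0}^{(\lambda)}(j)\bigl(\log p_{\bm{\theta}_0}^{(\lambda)}(j)-\log p_{\bm{\theta}_\lambda}^{(\lambda)}(j)\bigr)
 =\sqrt{\lambda}\,\mathbb{E}_{p_{\bm{\theta}_0}^{(\lambda)}}[\Delta_\lambda]
\end{align*}
with $\Delta_\lambda(j):=Y_0^{(\lambda)}(j)-Y_1^{(\lambda)}(j)$. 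Since both distributions are sorted in decreasing order, the level-counting functions $N_l^{(\lambda)}$ in \eqref{key_c77} satisfy $N_l^{(\lambda)}(Y_l^{(\lambda)}(j))=j$, so that $\Delta_\lambda(j)$ is the $\Delta$ solving $N_1^{(\lambda)}(a-\Delta)=N_0^{(\lambda)}(a)$ at $a=Y_0^{(\lambda)}(j)$.

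Substituting the asymptotic forms from Lemma \ref{stld_thm} and taking logarithms turns this into a quadratic equation
\begin{align*}
 (2r_{2,\lambda}^{1}a+r_{1,\lambda}^{1})\Delta-r_{2,\lambda}^{1}\Delta^{2}=\alpha_{2}a^{2}+\alpha_{1}a+\alpha_{0}+\order{\lambda^{-1/2}}+\order{\|\vb{Q}_\lambda\|^{2}/\lambda^{2}},
\end{align*}
where $\alpha_k:=r_{k,\lambda}^{1}-r_{k,\lambda}^{0}$, each proportional to $\Lambda_{\lambda}''(1)=\order{\|\vb{Q}_\lambda\|/\lambda}$ by \eqref{r12}--\eqref{r00}. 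Using $r_{1,\lambda}^{1}=-\sqrt{\lambda}(1+o(1))$ and $r_{2,\lambda}^{1}=\order{1}$, the self-consistent solution is $\Delta=(\alpha_2 a^2+\alpha_1 a+\alpha_0)/(2r_{2,\lambda}^{1}a+r_{1,\lambda}^{1})$ up to a $\Delta^2$-correction of order $\order{\|\vb{Q}_\lambda\|^2/\lambda^3}$, negligible after multiplication by $\sqrt{\lambda}$. The key algebraic miracle, enforced by the entropy-matching condition \eqref{tl_1} (through \eqref{crucial}) together with $f'(\nu)=1/\varphi''(1)$, is that the right-hand side factors as a shifted perfect square,
\begin{align*}
 \alpha_2 a^2+\alpha_1 a+\alpha_0
 =\frac{\Lambda_\lambda''(1)}{2\varphi''(1)^2}\bigl[a+(\nu-\varphi_{0,\lambda}'(1))\sqrt{\lambda}\bigr]^2-\frac{\Lambda_\lambda''(1)}{2\varphi''(1)},
\end{align*}
and at $a=Y_0^{(\lambda)}(j)$ the shifted argument coincides with the centered log-density $(\log p_{\bm\theta_0}^{(\lambda)}(j)-\mathbb{E}_{p_{\bm\theta_0}^{(\lambda)}}[\log p_{\bm\theta_0}^{(\lambda)}])/\sqrt{\lambda}$. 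Its second moment equals $\lambda^{-1}\sum_{ij}\theta_0^i\theta_0^j J_{\lambda,ij}(\bm\theta_0)=\varphi''(1)+\order{\lambda^{\alpha-1}}$ by Assumption \ref{assump2}, so the two pieces cancel at leading order, yielding $\mathbb{E}_{p_{\bm\theta_0}^{(\lambda)}}[\alpha_2 Y_0^2+\alpha_1 Y_0+\alpha_0]=\order{\lambda^{\alpha-2}\|\vb{Q}_\lambda\|}$. Dividing by $r_{1,\lambda}^{1}=\order{\sqrt{\lambda}}$, combining with the residual errors from Lemma \ref{stld_thm}, and multiplying by the outer prefactor $\sqrt{\lambda}$ gives the claim.

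The main obstacle is the simultaneous bookkeeping of four distinct small scales --- $\lambda^{-1/2}$, $\|\vb{Q}_\lambda\|^2/\lambda^2$, $\lambda^{\alpha-1}$, and $\|\vb{Q}_\lambda\|/\lambda$ --- that propagate through Lemma \ref{stld_thm}, the Taylor expansions \eqref{phi1}--\eqref{phi2}, and the deviation of $\mathrm{Var}_{p_{\bm\theta_0}^{(\lambda)}}[\log p_{\bm\theta_0}^{(\lambda)}]$ from $\lambda\varphi''(1)$; each must be shown to contribute at most $\order{\|\vb{Q}_\lambda\|^2/\lambda^2}+\order{\lambda^{-1/2}}$, and the window $\lambda^{5/8}\ll\|\vb{Q}_\lambda\|\ll\lambda$ is precisely what enforces this. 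A convenient point that keeps the non-commutative case uniform with the commutative one is that $\sum_i\theta_0^i X_{i,\lambda}$ is diagonal in the eigenbasis of $\tau_{\bm{\theta}_0}^{(\lambda)}$ even when the $X_{i,\lambda}$ are non-commutative, so the variance of $\log p_{\bm\theta_0}^{(\lambda)}$ is still given by the canonical correlation $\sum_{ij}\theta_0^i\theta_0^j J_{\lambda,ij}(\bm\theta_0)$.
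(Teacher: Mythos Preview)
Your proposal is correct and follows essentially the same route as the paper: reduce to the classical divergence $\sqrt{\lambda}\,\mathbb{E}_{p_{\bm\theta_0}^{(\lambda)}}[\Delta_\lambda]$, invert $N_1^{(\lambda)}(a-\Delta)=N_0^{(\lambda)}(a)$ via Lemma~\ref{stld_thm}, and then evaluate the expectation using the exact moments $\mathbb{E}[Y_0^{(\lambda)}]=-\sqrt{\lambda}(\nu-\varphi_{0,\lambda}'(1))$ and $\mathbb{E}[(Y_0^{(\lambda)})^2]=\varphi_{0,\lambda}''(1)+\lambda(\nu-\varphi_{0,\lambda}'(1))^2$; the non-commutative case is handled identically by the diagonal structure of $\sum_i\theta_0^i X_{i,\lambda}$. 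Your ``perfect square'' packaging of $\alpha_2a^2+\alpha_1a+\alpha_0$ is a clean way to expose the cancellation that the paper obtains by direct substitution in its Step~4, but the computation and the error budget are the same.
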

%where the distributions $p_{\bm{\theta}_0}^{(\lambda)}$ and $p_{\bm{\theta}_\lambda}^{(\lambda)}$
%are defined in \eqref{diag1} (commutative case), \eqref{y1} and \eqref{y2} (non-commutative case).

\begin{proof}
 We proceed the estimation as follows in a similar method to \cite{PhysRevA.92.052308,PhysRevE.96.012128}.
 At first, we deal with the case when $A_{i,\lambda}$, $B_{i,\lambda}$ $(i=1,2)$ are mutually commutative.
We stepwise reduce our problem as follows.

\noindent{\bf Step 1:}\quad
 From the construction of $\rho_{\rm opt}^{(\lambda)}$, the following holds:
\begin{align}
 &D(\rho_{\rm opt}^{(\lambda)}\|\tau_{\bm{\theta}_\lambda}^{(\lambda)})\nonumber\\
 =&
 \tr \rho_{\rm opt}^{(\lambda)}(\log \rho_{\rm opt}^{(\lambda)}-\log \tau^{(\lambda)}_{\bm{\theta}_{\lambda}})\nonumber\\
 =&
 \sum_{j}
 p_{\bm{\theta}_0}^{(\lambda)}(j)
 (\log p_{\bm{\theta}_0}^{(\lambda)}(j)-\log p_{\bm{\theta}_{\lambda}}^{(\lambda)}(j)).
\end{align}
 Defining the random variable
\begin{align}
 Y_{l}^{(\lambda)}(j):=
 \left\{
  \begin{array}{cc}
   \frac{\log p_{\bm{\theta}_0}^{(\lambda)}(j)-\lambda\nu}{\sqrt{\lambda}} &(l=0)
  \\ \frac{\log p_{\bm{\theta}_\lambda}^{(\lambda)}(j)-\lambda\nu}{\sqrt{\lambda}} &(l=1),
  \end{array}
 \right.
 \label{def_y}
\end{align}
we have another expression the relative entropy
\begin{align}
 D(\rho_{\rm opt}^{(\lambda)}\|\tau_{\bm{\theta}_\lambda}^{(\lambda)})
 =\sqrt{\lambda}\left(\mathbb E_{p_{\bm{\theta}_0}^{(\lambda)}}[Y_{0}^{(\lambda)}]-\mathbb E_{p_{\bm{\theta}_0}^{(\lambda)}}[Y_{1}^{(\lambda)}]\right),\label{rel_ent_an}
\end{align}
where
$\mathbb E_{p}[X]$ denotes the expectation value of a random variable $X$ with probability distribution $p$.
To estimate the relative entropy, it is difficult to calculate $\mathbb E_{p_{\bm{\theta}_0}^{(\lambda)}}[Y_{1}^{(\lambda)}]$.
Instead,
we approximate $\Delta_{\lambda}(j):=Y_0^{(\lambda)}(j)-Y_1^{(\lambda)}(j)$ by a quadratic polynomial of $Y_0^{(\lambda)}(j)$.
In this way, we can calculate $\mathbb E_{p_{\bm{\theta}_0}^{(\lambda)}}[\Delta_{\lambda}]$ by calculating the moments of $Y_0^{(\lambda)}$.
%%%%%%%%%%%%%%%%%
%we can estimate this relative entropy because of $\mathbb E_{p_{\bm{\theta}_0}^{(\lambda)}}[Y_{0}^{(\lambda)}]=\lambda^{-\frac{1}{2}}\tr(\log \tau_{\bm{\theta}}^{(\lambda)}- S(\tau_{\bm{\theta}_0}^{(\lambda)}))\tau_{\bm{\theta}}^{(\lambda)}=\order{\lambda^{\alpha-\frac{1}{2}}}$
%and
%$\mathbb E_{p_{\bm{\theta}_0}^{(\lambda)}}[(Y_{0}^{(\lambda)})^2]
%=\lambda^{-1}\tr(\log \tau_{\bm{\theta}}^{(\lambda)}- S(\tau_{\bm{\theta}_\lambda}^{(\lambda)}))^2\tau_{\bm{\theta}}^{(\lambda)}=\order{1}+O(\lambda^{-2}\|\vb{Q}_{\lambda}\|)+\order{\lambda^{2\alpha-1}}$.
%%%%%%%%%%%%%%%%%
%%**
%The following analysis is based on the similar method as \cite{PhysRevA.92.052308,PhysRevE.96.012128}.

\noindent{\bf Step 2:}\quad
To compare $Y_1^{(\lambda)}(j)$ with $Y_0^{(\lambda)}(j)$, we compare the number of states $N_{1}^{(\lambda)}$ with $N_{0}^{(\lambda)}$ defined by (\ref{key_c77}).
The number of states $N_{l}^{(\lambda)}$ is expressed by $Y_{l}^{(\lambda)}$ as
\begin{align}
 N_{l}^{(\lambda)}(a)=\left|\left\{j | Y_l^{(\lambda)}(j)\geq a\right\}\right|\label{key_c7}.
\end{align}
As will be shown in {\bf Step 3}, the equation
\begin{align}
 \log N_1^{(\lambda)}(a-x)= \log N_0^{(\lambda)}(a)\label{number_eq}
\end{align}
for $x$ with a constant $a$ is asymptotically solved as
\begin{align}
 \sqrt{\lambda} x
 =&
 -\frac{1}{2}\Lambda_{\lambda}''(1)f'(\nu)^2 a^2
 -\varphi''(1)^{-1} (\nu-\varphi_{0,\lambda}'(1))\Lambda_{\lambda}''(1)f'(\nu)\lambda^{\frac{1}{2}}a 
 \nonumber\\
 &-\frac{1}{2}\varphi''(1)^{-2}\Lambda_{\lambda}''(1)(\nu-\varphi_{0,\lambda}'(1))^2\lambda
 +\frac{1}{2\varphi''(1)}\Lambda_{\lambda}''(1)
 +\order{\lambda^{-\frac{1}{2}}} + \order{\lambda^{-2}\|\vb{Q}_{\lambda}\|^{2}}\nonumber\\
 =:& \sqrt{\lambda}q_{\lambda}(a)+\order{\lambda^{-\frac{1}{2}}} + \order{\lambda^{-2}\|\vb{Q}_{\lambda}\|^{2}}.\label{solsol}
\end{align}
Since $Y_{l}^{(\lambda)}$ satisfies
\begin{align}
 Y_{l}^{(\lambda)}(1)\geq Y_{l}^{(\lambda)}(2) \geq Y_{l}^{(\lambda)}(3) \geq \dots
\end{align}
by its definition,
$\log N_{l}^{(\lambda)}(Y_{l}^{(\lambda)}(j))$ is asymptotically equal to $\log j$.
Thus, the equation
\begin{align}
 \log N_{1}^{(\lambda)}(Y_{1}^{(\lambda)}(j)) = \log N_{0}^{(\lambda)}(Y_{0}^{(\lambda)}(j))
\end{align}
holds asymptotically.
Thus, $\Delta_{\lambda}(j)= Y_0^{(\lambda)}(j)-Y_1^{(\lambda)}(j)$ satisfies
\begin{align}
 \log N_{1}^{(\lambda)}(Y_0^{(\lambda)}(j)-\Delta_{\lambda}(j)) = \log N_{0}^{(\lambda)}(Y_{0}^{(\lambda)}(j)).
\end{align}
Then, we obtain the approximation of $\Delta_{\lambda}(j)$ by the solution \eqref{solsol} of the equation
\eqref{number_eq} as
 \begin{align}
  &\sqrt{\lambda}\Delta_{\lambda}(j)\nonumber\\
  =&\sqrt{\lambda} q_{\lambda}(Y_0^{(\lambda)}(j)) + +\order{\lambda^{-\frac{1}{2}}} + \order{\lambda^{-2}\|\vb{Q}_{\lambda}\|^{2}}\nonumber\\
  =&
  -\frac{1}{2}\Lambda_{\lambda}''(1)f'(\nu)^2 (Y_0^{(\lambda)}(j))^2 \nonumber\\
 &-\varphi''(1)^{-1} (\nu-\varphi_{0,\lambda}'(1))\Lambda_{\lambda}''(1)f'(\nu)\lambda^{\frac{1}{2}} Y_0^{(\lambda)}(j)
 \nonumber\\
 &-\frac{1}{2}\varphi''(1)^{-2}\Lambda_{\lambda}''(1)(\nu-\varphi_{0,\lambda}'(1))^2\lambda
 +\frac{1}{2\varphi''(1)}\Lambda_{\lambda}''(1)
 +\order{\lambda^{-\frac{1}{2}}} + \order{\lambda^{-2}\|\vb{Q}_{\lambda}\|^{2}}.
\label{H3-14}
 \end{align}

\noindent{\bf Step 3:}\quad
In this step, we show that the solution of the equation \eqref{number_eq}
is asymptotically given by \eqref{solsol}.
From the asymptotic expansions \eqref{an_stld} and \eqref{an_stld0} in Lemma \ref{stld_thm}, the equation \eqref{number_eq} is written as
\begin{align}
 r_{2,\lambda}^{1}(a-x)^2 + r_{1,\lambda}^{1}(a-x) +r_{0,\lambda}^{1} +\order{\lambda^{-\frac{1}{2}}} + \order{\lambda^{-2}\|\vb{Q}_{\lambda}\|^{2}}
 =r_{2,\lambda}^{0} a^2 + r_{1,\lambda}^{0} a + r_{0,\lambda}^{0} +\order{\lambda^{-\frac{1}{2}}},
\end{align}
which may be deformed as
\begin{align}
 r_{2,\lambda}^{1} x^2 + (-2 r_{2,\lambda}^{1} a - r_{1,\lambda}^{1}) x
 - (r_{2,\lambda}^{0} - r_{2,\lambda}^{1}) a^2 - (r_{1,\lambda}^{0} - r_{1,\lambda}^{1}) a
 - r_{0,\lambda}^{0} - r_{0,\lambda}^{1} +\order{\lambda^{-\frac{1}{2}}} + \order{\lambda^{-2}\|\vb{Q}_{\lambda}\|^{2}} = 0.\label{eq_x}
\end{align}
 Dividing the both sides of \eqref{eq_x} by $\lambda$ and changing the variable $x$ to $y:=x/\sqrt{\lambda}$, we have the equation for $y$ as
 \begin{align}
  q_{2,\lambda} y^2 + q_{1,\lambda} y + \epsilon_{\lambda}= 0,\label{eq_y}
 \end{align}
 where
 \begin{align}
  q_{2,\lambda}&:= r_{2,\lambda}^{1}= \order{1}\\
  q_{1,\lambda}&:= - r_{1,\lambda}^{1}\lambda^{-\frac{1}{2}}+ \order{\lambda^{-\frac{1}{2}}}=\order{1}\\
 \epsilon_{\lambda}&:= \lambda^{-1}[
 - (r_{2,\lambda}^{0} - r_{2,\lambda}^{1}) a^2 - (r_{1,\lambda}^{0} - r_{1,\lambda}^{1}) a
 - r_{0,\lambda}^{0} - r_{0,\lambda}^{1} +\order{\lambda^{-\frac{1}{2}}} + \order{\lambda^{-2}\|\vb{Q}_{\lambda}\|^{2}}] =\order{\lambda^{-2}\|\vb{Q}_{\lambda}\|}.
 \end{align}
 The perturbation from $y=0$ up to $\order{\epsilon_{\lambda}}$ gives
 \begin{align}
  y= -\frac{\epsilon_{\lambda}}{q_{1,\lambda}} + \order{\epsilon_{\lambda}^2}
  = -\frac{\epsilon_{\lambda}}{q_{1,\lambda}} + \order{\lambda^{-4}\|\vb{Q}_{\lambda}\|^{2}}.
 \end{align}
 In fact, substituting it to the left hand side of \eqref{eq_y}, we have
 \begin{align}
  q_{2,\lambda} y^2 + q_{1,\lambda} y + \epsilon_{\lambda}
  =
  \order{\epsilon_{\lambda}^2}
  - \epsilon_{\lambda} + \epsilon_{\lambda}
  = \order{\epsilon_{\lambda}^2}.
 \end{align}
 Therefore, we obtain
\begin{align}
 \sqrt{\lambda} x= \lambda y
 =& -\lambda\frac{\epsilon_{\lambda}}{q_{1,\lambda}} + \order{\lambda^{-3}\|\vb{Q}_{\lambda}\|^{2}}\nonumber\\
 =&
 - \frac{\sqrt{\lambda}}{r_{1,\lambda}^1}
 [(r_{2,\lambda}^{0} - r_{2,\lambda}^{1}) a^2 + (r_{1,\lambda}^{0} - r_{1,\lambda}^{1}) a
 + r_{0,\lambda}^{0} - r_{0,\lambda}^{1}] +\order{\lambda^{-\frac{1}{2}}} + \order{\lambda^{-2}\|\vb{Q}_{\lambda}\|^{2}}.\label{pert0}
\end{align}
 Thus, substituting $r_{l,\lambda}^{(j)}$ given by \eqref{r12}-\eqref{r00} in Lemma \ref{stld_thm} to \eqref{pert0},
 we have
\begin{align}
 \sqrt{\lambda} x
 =&
 \left[
 1-(\varphi'_{0,\lambda}(1)-\nu)f'(\nu)
 -\varphi''(1)^{-1} (\nu-\varphi_{0,\lambda}'(1))\Lambda_{\lambda}''(1)f'(\nu)
 \right]^{-1}\nonumber\\
 &\times
 \left[
 -\frac{1}{2}\Lambda_{\lambda}''(1)f'(\nu)^2
  a^2
 -\varphi''(1)^{-1} (\nu-\varphi_{0,\lambda}'(1))\Lambda_{\lambda}''(1)f'(\nu)\lambda^{\frac{1}{2}}a \right.
 \nonumber\\
 &\left.\hspace{10pt}-\frac{1}{2}\varphi''(1)^{-2}\Lambda_{\lambda}''(1)(\nu-\varphi_{0,\lambda}'(1))^2\lambda
 +\frac{1}{2\varphi''(1)}\Lambda_{\lambda}''(1)
 \right]
 +\order{\lambda^{-\frac{1}{2}}} + \order{\lambda^{-2}\|\vb{Q}_{\lambda}\|^{2}}\nonumber\\
 \stackrel{(\rm a)}{=}&
 \left[
 1+ \order{\lambda^{-\frac{1}{2}}}
 \right]\nonumber\\
 &\times
 \left[
 -\frac{1}{2}\Lambda_{\lambda}''(1)f'(\nu)^2 a^2
 -\varphi''(1)^{-1} (\nu-\varphi_{0,\lambda}'(1))\Lambda_{\lambda}''(1)f'(\nu)\lambda^{\frac{1}{2}}a \right.
 \nonumber\\
 &\left.\hspace{10pt}-\frac{1}{2}\varphi''(1)^{-2}\Lambda_{\lambda}''(1)(\nu-\varphi_{0,\lambda}'(1))^2\lambda
 +\frac{1}{2\varphi''(1)}\Lambda_{\lambda}''(1)
 \right]
 +\order{\lambda^{-\frac{1}{2}}} + \order{\lambda^{-2}\|\vb{Q}_{\lambda}\|^{2}}\nonumber\\
 =&
 -\frac{1}{2}\Lambda_{\lambda}''(1)f'(\nu)^2 a^2
 -\varphi''(1)^{-1} (\nu-\varphi_{0,\lambda}'(1))\Lambda_{\lambda}''(1)f'(\nu)\lambda^{\frac{1}{2}}a 
 \nonumber\\
 &-\frac{1}{2}\varphi''(1)^{-2}\Lambda_{\lambda}''(1)(\nu-\varphi_{0,\lambda}'(1))^2\lambda
 +\frac{1}{2\varphi''(1)}\Lambda_{\lambda}''(1)
 +\order{\lambda^{-\frac{1}{2}}} + \order{\lambda^{-2}\|\vb{Q}_{\lambda}\|^{2}},\nonumber
\end{align}
 where (a) is verified by observing that
 $(\varphi'_{0,\lambda}(1)-\nu)f'(\nu)$ and $\varphi''(1)^{-1} (\nu-\varphi_{0,\lambda}'(1))\Lambda_{\lambda}''(1)f'(\nu)$
 are included in $+\order{\lambda^{-\frac{1}{2}}}+ \order{\lambda^{-2}\|\vb{Q}_{\lambda}\|^{2}}]^{-\frac{1}{2}}$ because of Lemma \ref{L1}.

\noindent{\bf Step 4:}\quad
Finally, in this step, we evaluate the relative entropy
$D(\rho_{\rm opt}^{(\lambda)}\|\tau_{\bm{\theta}_\lambda}^{(\lambda)})$
by using \eqref{H3-14}.
%First, substituting $Y_0^{(\lambda)}(j)$ to $a$,
%we obtain the desired approximation of $\Delta(j)$ as
% \begin{align}
%  &\Delta(j)\nonumber\\
%  =&Y_0^{(\lambda)}(j)-Y_1^{(\lambda)}((Y_0^{(\lambda)})^{-1}[Y_0^{(\lambda)}(j)])\nonumber\\
%  =&
%Y_0^{(\lambda)}(j)-N_1^{-1}(N_0(Y_0^{(\lambda)}(j)))\nonumber\\
%=& Q(Y_0^{(\lambda)}(j)).\label{H3}
% \end{align}
% \begin{align}
%  \Lambda_{\lambda}'(1)
%  =\lambda^{-1}\sum_{j}\theta_0^{j}K_{\lambda,j}+\order{\frac{\|\vb{Q}_{\lambda}\|^2}{\lambda^{2-\alpha}}}
%  =\order{\frac{\|\vb{Q}_{\lambda}\|^2}{\lambda^{2-\alpha}}},
% \end{align}
%%%%%%%%%%%%%%%%%%%%
% \begin{align}
%  \Lambda_{\lambda}''(1)
%  =\Lambda_{\lambda}'(1)+\sum_{i,j,k,l}\frac{\partial^3\phi}{\partial\theta^{i}\partial\theta^{k}\partial\theta^{l}}\theta^{k}\theta^lJ_{\lambda}^{ij}K_{\lambda,j}
%  =\order{\frac{\|\vb{Q}_{\lambda}\|^2}{\lambda^{2-\alpha}}}+\order{\frac{\|\vb{Q}_{\lambda}\|}{\lambda}},
  % \end{align}
 Because
 \begin{align}
  \varphi_{0,\lambda}''(1)=& \lambda^{-1} \sum_{i,j} \frac{\partial^2 \phi_{\lambda}}{\partial \theta^i \partial \theta^j} (\bm{\theta}_0) \theta_0^i \theta_0^j\nonumber\\
  =&
  \lambda^{-1}
  \tr\left[\tau_{\bm{\theta}_0}^{(\lambda)}\sum_{i}\theta_0^i(X_{i,\lambda}-\eta_{\lambda,i}(\bm{\theta}_0)) \sum_{j}\theta_0^j (X_{j,\lambda}-\eta_{\lambda,j}(\bm{\theta}_0))\right]\nonumber\\
  =&
  \lambda^{-1}
  \tr \tau_{\bm{\theta}_0}^{(\lambda)} (-\log \tau_{\bm{\theta}_0}^{(\lambda)} - S(\tau_{\bm{\theta}_{0}}^{(\lambda)}))^2\nonumber\\
  =&
  \lambda^{-1} \mathbb E_{p_{\bm{\theta}_0}^{(\lambda)}}[(-\log p^{(\lambda)}_{\bm{\theta}_0}- S(\tau_{\bm{\theta}_0}^{(\lambda)}))^2]
 \end{align}
 holds,
 we have
 \begin{align}
   \mathbb E_{p_{\bm{\theta}_0}^{(\lambda)}}[Y_0^{(\lambda)}]
   =&-\sqrt{\lambda}(\nu - \varphi'_{0,\lambda}(1)),\label{ym}\\
  \mathbb E_{p_{\bm{\theta}_0}^{(\lambda)}}[(Y_0^{(\lambda)})^2]
  =& \lambda^{-1} \mathbb E_{p_{\bm{\theta}_0}^{(\lambda)}}[(-\log p^{(\lambda)}_{\bm{\theta}_0}- S(\tau_{\bm{\theta}_0}^{(\lambda)}))^2] + (S(\tau_{\bm{\theta}_0}^{(\lambda)}) - \lambda \nu)^2\nonumber\\
  =& \varphi_{0,\lambda}''(1) +\lambda(\nu - \varphi'_{0,\lambda}(1))^2+\order{\lambda^{-\frac{1}{2}}}\nonumber\\
   =&\varphi''(1)+\lambda(\nu - \varphi'_{0,\lambda}(1))^2+\order{\lambda^{-\frac{1}{2}}},\label{yv}\\
   f'(\nu)=& \frac{1}{\varphi''(1)}\label{fff}
 \end{align}
 in view of \eqref{crucial}.
 Thus, we obtain
 \begin{align}
  &D(\rho_{\rm opt}^{(\lambda)}\|\tau_{\bm{\theta}_\lambda}^{(\lambda)})\nonumber\\
  \stackrel{(\rm a)}{=}&\sqrt{\lambda}\mathbb E_{p_{\bm{\theta}_0}^{(\lambda)}}[\Delta_{\lambda}]
  \nonumber\\
  \stackrel{(\rm b)}{=}&
  -\frac{1}{2}\Lambda_{\lambda}''(1)f'(\nu)^2 \mathbb E_{p_{\bm{\theta}_0}^{(\lambda)}}[(Y_0^{(\lambda)})^2] \nonumber\\
 &-\varphi''(1)^{-1} (\nu-\varphi_{0,\lambda}'(1))\Lambda_{\lambda}''(1)f'(\nu) \lambda^{\frac{1}{2}} \mathbb E_{p_{\bm{\theta}_0}^{(\lambda)}}[Y_0^{(\lambda)}]
 \nonumber\\
 &-\frac{1}{2}\varphi''(1)^{-2}\Lambda_{\lambda}''(1)(\nu-\varphi_{0,\lambda}'(1))^2\lambda
 +\frac{1}{2\varphi''(1)}\Lambda_{\lambda}''(1)
  +\order{\lambda^{-\frac{1}{2}}} + \order{\lambda^{-2}\|\vb{Q}_{\lambda}\|^{2}}\nonumber\\
  \stackrel{(\rm c)}{=}&
  -\frac{1}{2\varphi''(1)}\Lambda_{\lambda}''(1)+\frac{1}{2\varphi''(1)}\Lambda_{\lambda}''(1)\nonumber\\
  &+\varphi''(1)^{-2}\Lambda_{\lambda}''(1)(\nu-\varphi_{0,\lambda}'(1))^2\lambda
  -\frac{1}{2}\varphi''(1)^{-2}\Lambda_{\lambda}''(1)(\nu-\varphi_{0,\lambda}'(1))^2\lambda
  -\frac{1}{2}\varphi''(1)^{-2}\Lambda_{\lambda}''(1)(\nu-\varphi_{0,\lambda}'(1))^2\lambda
  \nonumber\\ 
  &+\order{\lambda^{-\frac{1}{2}}} + \order{\lambda^{-2}\|\vb{Q}_{\lambda}\|^{2}}\nonumber\\
  =&\order{\lambda^{-\frac{1}{2}}} + \order{\lambda^{-2}\|\vb{Q}_{\lambda}\|^{2}},
 \end{align}
 where (a) and (b) follow from (\ref{rel_ent_an}) and (\ref{H3-14}) respectively, and (c) follows from
 substituting \eqref{ym}-\eqref{fff}.
 %where we observe that $\alpha < \frac{1}{2}$ implies
% \begin{align}
%  \varphi''(1)^{-2}(\nu-\varphi_{0,\lambda}'(1))^2 \Lambda_{\lambda}''(1)^2 f'(\nu)\lambda
%  =\order{\lambda^{2\alpha-3} \|\vb{Q}_{\lambda}\|^{2}}
%  =o(\lambda^{-2}\|\vb{Q}_{\lambda}\|^{2})
% \end{align}
% for the verification of $(a)$.
 Hence, we obtain the desired estimation for $D(\rho_{\rm opt}^{(\lambda)}\|\tau_{\bm{\theta}_\lambda}^{(\lambda)})$.
 
 {\bf Non-commutative case:} For non-commutative $A_{i,\lambda}$ and $B_{i,\lambda}$ $(i=1,2)$, since $\rho_{\rm opt,nc}^{(\lambda)}$ commutes with $\tau_{\bm{\theta}_\lambda}^{(\lambda)}$ by its construction, we also have
\begin{align}
 D(\rho_{\rm opt,nc}^{(\lambda)}\|\tau_{\bm{\theta}_\lambda}^{(\lambda)})
 =
 \tr \rho_{\rm opt,nc}^{(\lambda)}(\log \rho_{\rm opt,nc}^{(\lambda)}-\log \tau^{(\lambda)}_{\bm{\theta}_{\lambda}})
 =
 \sum_{j}
 p_{\bm{\theta}_0}^{(\lambda)}(j)
 (\log p_{\bm{\theta}_0}^{(\lambda)}(j)-\log p_{\bm{\theta}_{\lambda}}^{(\lambda)}(j)).
\end{align}
 %Thus, $p_{\bm{\theta}}^{(\lambda)}(1)\geq p_{\bm{\theta}}^{(\lambda)}(2)\geq \dots$ holds.
 Thus, we can define $Y_l^{(\lambda)}(j)$ as with (\ref{def_y}).
 Therefore, Step 1 and Step 2 are completely the same as the commutative case because it is sufficient to deal with just the probability distributions composed of the eigenvalue of the thermal states.
 In Step 3, (\ref{ym}) also holds for the non-commutative case.
 In addition, (\ref{yv}) is also valid because we have
 \begin{align}
  \varphi_{0,\lambda}''(1)=& \lambda^{-1} \sum_{i,j} \frac{\partial^2 \phi_{\lambda}}{\partial \theta^i \partial \theta^j} (\bm{\theta}_0) \theta_0^i \theta_0^j\nonumber\\
  =&
  \lambda^{-1}\sum_{i,j}\int_{0}^{1}ds\;\tr\left[\left(\tau_{\bm{\theta}_0}^{(\lambda)}\right)^{1-s}(X_{i,\lambda}-\eta_{\lambda,i}(\bm{\theta}_0))\left(\tau_{\bm{\theta}_0}^{(\lambda)}\right)^s (X_{j,\lambda}-\eta_{\lambda,j}(\bm{\theta}_0))\right]
  \theta_0^i \theta_0^j\nonumber\\
  =&
  \lambda^{-1}
  \int_{0}^{1}ds\;\tr\left[\left(\tau_{\bm{\theta}_0}^{(\lambda)}\right)^{1-s}\sum_{i}\theta_0^i(X_{i,\lambda}-\eta_{\lambda,i}(\bm{\theta}_0))\left(\tau_{\bm{\theta}_0}^{(\lambda)}\right)^s \sum_{j}\theta_0^j (X_{j,\lambda}-\eta_{\lambda,j}(\bm{\theta}_0))\right]\nonumber\\
  \stackrel{(\rm a)}{=}&
  \lambda^{-1}
  \tr\left[\tau_{\bm{\theta}_0}^{(\lambda)}\sum_{i}\theta_0^i(X_{i,\lambda}-\eta_{\lambda,i}(\bm{\theta}_0)) \sum_{j}\theta_0^j (X_{j,\lambda}-\eta_{\lambda,j}(\bm{\theta}_0))\right]\nonumber\\
  =&
  \lambda^{-1}
  \tr \tau_{\bm{\theta}_0}^{(\lambda)} (-\log \tau_{\bm{\theta}_0}^{(\lambda)} - S(\tau_{\bm{\theta}_{0}}^{(\lambda)}))^2\nonumber\\
  =&
  \lambda^{-1} \mathbb E_{p_{\bm{\theta}_0}^{(\lambda)}}[(-\log p^{(\lambda)}_{\bm{\theta}_0}- S(\tau_{\bm{\theta}_0}^{(\lambda)}))^2],
 \end{align}
 where (a) follows from that $\tau_{\bm{\theta}_0}^{(\lambda)}= \exp [\sum_{i} \theta_0^i X_{i,\lambda}- \phi_{\lambda}(\bm{\theta}_0)]$ commutes with $\sum_{i} \theta_0^i X_{i,\lambda}$.
 Thus, Step 3 is also the same as the commutative case.
 Then, the proof is completed.
 \end{proof}

 \subsection{Estimation of the Fisher information (proof of \eqref{jj})}\label{apps_fish}
 Next, we prove \eqref{jj} (Sec.~\ref{sub_ach_FGCB}).
 Recall that $\vb{s}_{\lambda}(t)$ is defined as the generalized inverse temperature such that
\begin{align}
 \bm{\eta}_{\lambda}(\vb{s}_{\lambda}(t))=t\bm{\eta}_{\lambda}(\bm{\theta}_{\lambda})+(1-t)\bm{\eta}_{\lambda}(\bm{\xi}_{\lambda})
\end{align}
 by the ideal inverse temperature $\bm{\theta}_{\lambda}$ associated with a vector $\vb{Q}_{\lambda}$, and the effective inverse temperature $\xi_{\lambda}:=\tilde{\bm{\theta}}_{\lambda}(\rho_{\rm opt (,nc)}^{(\lambda)})$ of the final state $\rho_{\rm opt}^{(\lambda)}$ ($\rho_{\rm opt, nc}^{(\lambda)}$) of our protocol (for the non-commutative case).
 We show the following estimation of the Fisher information $J_{\lambda,ij}$ defined by \eqref{canonical_cor}:
  \begin{lemma}\label{fish_lem}
   Under Assumption \ref{assump2} and $\lambda^{\frac{5}{8}}\ll\|\vb{Q}_{\lambda}\|\ll\lambda$, we have
  \begin{align}
  \max_{t\in[0,1]}\|(J_{\lambda,ij}(\vb{s}_{\lambda}(t)))_{ij}\|
  =\order{\lambda}.
 \end{align}
  \end{lemma}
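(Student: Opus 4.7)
The strategy is to reduce everything to the uniform asymptotic extensivity (\ref{str_extensive_2}) of the second derivatives of $\phi_\lambda$. By Assumption \ref{assump2} there exists a neighborhood $U$ of $\bm{\theta}_0$ on which $J_{\lambda,ij}(\bm{\theta})=\lambda g_{ij}(\bm{\theta})+\order{\lambda^{\alpha}}$ uniformly, with $\alpha<\tfrac12$. Since $(g_{ij}(\bm{\theta}))_{ij}$ is continuous and finite at $\bm{\theta}_0$, we may shrink $U$ so that $\sup_{\bm{\theta}\in U}\|(g_{ij}(\bm{\theta}))_{ij}\|<\infty$. Hence, once one knows that $\vb{s}_\lambda(t)\in U$ uniformly in $t\in[0,1]$ for all sufficiently large $\lambda$, the bound $\|(J_{\lambda,ij}(\vb{s}_\lambda(t)))_{ij}\|=\order{\lambda}$ follows immediately. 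The whole problem is thus to verify $\vb{s}_\lambda(t)\to \bm{\theta}_0$ uniformly in $t$.

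Since $\bm{\eta}_\lambda(\vb{s}_\lambda(t))=t\bm{\eta}_\lambda(\bm{\theta}_\lambda)+(1-t)\bm{\eta}_\lambda(\bm{\xi}_\lambda)$, it suffices to show that both $\bm{\eta}_\lambda(\bm{\theta}_\lambda)-\bm{\eta}_\lambda(\bm{\theta}_0)$ and $\bm{\eta}_\lambda(\bm{\xi}_\lambda)-\bm{\eta}_\lambda(\bm{\theta}_0)$ are componentwise $o(\lambda)$. For $\bm{\theta}_\lambda$, the defining equations (\ref{tl_3})--(\ref{tl_4}) directly give three components equal to $-\Delta Q_{A,2,\lambda}$, $-\Delta Q_{B,1,\lambda}$, $-\Delta Q_{B,2,\lambda}$, which are $o(\lambda)$ by the heat scaling (\ref{heat_scale}); the remaining component $\Delta A_{1,\lambda}$ is determined by the entropy constraint (\ref{tl_1}), which, through the quadratic expansion (\ref{region}) of $S_\lambda$ already used to derive FGCB, yields $\Delta A_{1,\lambda}=\order{\|\vb{Q}_\lambda\|}=o(\lambda)$.

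For $\bm{\xi}_\lambda$ the clean route is to invoke Theorem \ref{rel_thm}, which is proved independently in Sec.~\ref{apps_rel} and does not use Lemma \ref{fish_lem}. Under the hypothesis $\lambda^{5/8}\ll\|\vb{Q}_\lambda\|\ll\lambda$, it gives
\begin{align*}
 D(\rho_{\rm opt}^{(\lambda)}\|\tau_{\bm{\theta}_\lambda}^{(\lambda)})=\order{\|\vb{Q}_\lambda\|^2/\lambda^2}+\order{\lambda^{-1/2}}=o(1),
\end{align*}
and by Pinsker's inequality $\|\rho_{\rm opt}^{(\lambda)}-\tau_{\bm{\theta}_\lambda}^{(\lambda)}\|_1=o(1)$. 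Combining this with the operator-norm bound $\|X_{j,\lambda}\|=\order{\lambda}$ of Assumption \ref{assump2} yields $|\tr X_{j,\lambda}(\rho_{\rm opt}^{(\lambda)}-\tau_{\bm{\theta}_\lambda}^{(\lambda)})|=o(\lambda)$, and then the previous paragraph gives $\bm{\eta}_\lambda(\bm{\xi}_\lambda)-\bm{\eta}_\lambda(\bm{\theta}_0)=o(\lambda)$. The identical argument, with $\rho_{\rm opt,nc}^{(\lambda)}$ replacing $\rho_{\rm opt}^{(\lambda)}$, covers the non-commutative case because Theorem \ref{rel_thm} is stated for both.

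To conclude, I convert the $o(\lambda)$ bound on the dual coordinate into an $o(1)$ bound on the primal coordinate. The map $\bm{\theta}\mapsto\bm{\eta}_\lambda(\bm{\theta})/\lambda$ converges uniformly on a neighborhood of $\bm{\theta}_0$ to the smooth map $\bm{\theta}\mapsto-\nabla\phi(\bm{\theta})$, whose Jacobian at $\bm{\theta}_0$ is the invertible matrix $-(g_{ij}(\bm{\theta}_0))$ (full rank by Assumption \ref{assumption}); a standard quantitative inverse-function-theorem argument then shows that $\bm{\eta}_\lambda(\vb{s}_\lambda(t))/\lambda \to \bm{\eta}_\lambda(\bm{\theta}_0)/\lambda$ forces $\vb{s}_\lambda(t)\to\bm{\theta}_0$ uniformly in $t$, so that $\vb{s}_\lambda(t)\in U$ for large $\lambda$. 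The first paragraph then completes the proof. The main obstacle is the $\bm{\xi}_\lambda\to\bm{\theta}_0$ step: without the $\order{\lambda}$ operator-norm bound in Assumption \ref{assump2}, Pinsker would not suffice, which is precisely why that bound was included in the assumption.
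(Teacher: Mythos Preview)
Your proof is correct and shares the paper's overall skeleton: establish $\vb{s}_\lambda(t)\to\bm{\theta}_0$ uniformly in $t$, then invoke the uniform extensivity of the second derivatives to get the $\order{\lambda}$ bound. The difference lies in how the key ingredient $\bm{\eta}_\lambda(\bm{\xi}_\lambda)-\bm{\eta}_\lambda(\bm{\theta}_\lambda)=o(\lambda)$ is obtained. The paper runs a two-stage bootstrap: it first proves a crude a-priori bound $\max_t\|(J_{\lambda,ij}(\vb{s}_\lambda(t)))\|=\order{\lambda^2}$ via positivity of the Wigner--Yanase--Dyson skew information together with $\|X_{i,\lambda}\|=\order{\lambda}$, feeds this and Theorem~\ref{rel_thm} into the information-geometric inequality (\ref{optup}) of Lemma~\ref{expect_rel} to obtain the $o(\lambda)$ expectation gap, and only then deduces $\vb{s}_\lambda(t)\to\bm{\theta}_0$ and the final $\order{\lambda}$ bound. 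You short-circuit this: Pinsker's inequality plus the same operator-norm hypothesis gives $|\tr X_{j,\lambda}(\rho_{\rm opt}^{(\lambda)}-\tau_{\bm{\theta}_\lambda}^{(\lambda)})|\le\|X_{j,\lambda}\|\sqrt{2D(\rho_{\rm opt}^{(\lambda)}\|\tau_{\bm{\theta}_\lambda}^{(\lambda)})}=o(\lambda)$ in one line, with no need for the crude $\order{\lambda^2}$ step or the WYD machinery. Your route is shorter and more elementary; the paper's route has the mild virtue of reusing Lemma~\ref{expect_rel}, which is already in place for the main estimate (\ref{raw_est}), so no new inequality is imported. On the conversion from $o(\lambda)$ in the dual coordinate to $o(1)$ in the primal coordinate, you actually supply more justification than the paper, which simply asserts $\vb{s}_\lambda(t)\to\bm{\theta}_0$ without further comment.
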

 \begin{proof}
 We consider the non-commutative case, which of course includes the commutative case.
 First of all, since $\|A\|\leq \|A\|_1$ holds for any matrix $A$, where $\|A\|_1=\tr|A|$ is the trace norm,
 we have
 \begin{align}
  \|(J_{\lambda,ij}(\vb{s}_{\lambda}(t)))_{ij}\|
  \leq
  \tr (J_{\lambda,ij}(\vb{s}_{\lambda}(t)))_{ij}
  =
  \sum_{i=1}^{4}
  \int_{0}^{1}ds\ \tr (X_{i,\lambda} - \eta_{\lambda,i}(\vb{s}_{\lambda}(t))) \left(\tau^{(\lambda)}_{\vb{s}_{\lambda}(t)}\right)^s (X_{i,\lambda} - \eta_{\lambda,i}(\vb{s}_{\lambda}(t)))\left(\tau^{(\lambda)}_{\vb{s}_{\lambda}(t)}\right)^{1-s},\label{c19}
 \end{align}
 where $X_{i,\lambda}=A_{i,\lambda}$, $X_{i+2,\lambda}=B_{i,\lambda}$ $(i=1,2)$.
 Furthermore, for any $0\leq s \leq 1$ we have
 \begin{align}
  \tr (X_{i,\lambda} - \eta_{\lambda,i}(\vb{s}_{\lambda}(t))) \left(\tau^{(\lambda)}_{\vb{s}_{\lambda}(t)}\right)^s (X_{i,\lambda} - \eta_{\lambda,i}(\vb{s}_{\lambda}(t)))\left(\tau^{(\lambda)}_{\vb{s}_{\lambda}(t)}\right)^{1-s}
  \leq
  \tr (X_{i,\lambda} - \eta_{\lambda,i}(\vb{s}_{\lambda}(t)))^2 \tau^{(\lambda)}_{\vb{s}_{\lambda}(t)}\label{c20}
 \end{align}
 since the Wigner-Yanase-Dyson skew information \cite{wigner1963information,wigner1964positive,1742-6596-201-1-012015}
 \begin{align}
  I_{\rho,s}(X):=\tr X^2\rho - \tr X \rho^s X \rho^{1-s}
 \end{align}
 is positive $I_{\rho,s}(X)\geq 0$ for any state $\rho$, observable $X$, and $0\leq s \leq 1$.
 The positivity follows from the fact that $I_{\rho,s}(X)$ is convex with respect to $\rho$ \cite{lieb1973convex},
 because $I_{\ket{\psi}\bra{\psi},s}(X)$ is obviously positive for any pure state $\ket{\psi}$.
 Observing that
 \begin{align}
  \tr (X_{i,\lambda} - \eta_{\lambda,i}(\vb{s}_{\lambda}(t)))^2 \tau^{(\lambda)}_{\vb{s}_{\lambda}(t)}
  =\tr X_{i,\lambda}^2 \tau^{(\lambda)}_{\vb{s}_{\lambda}(t)} - \eta_{\lambda,i}(\vb{s}_{\lambda}(t))^2
  \leq \tr X_{i,\lambda}^2 \tau^{(\lambda)}_{\vb{s}_{\lambda}(t)},
 \end{align}
 The combination of \eqref{c19} and \eqref{c20} yields that
 \begin{align}
  \|(J_{\lambda,ij}(\vb{s}_{\lambda}(t)))_{ij}\|
  \leq
  \sum_{i=1}^{4}
  \tr X_{i,\lambda}^2 \tau^{(\lambda)}_{\vb{s}_{\lambda}(t)}.\label{c23}
 \end{align}
 From the inequalities $\|AB\|_1\leq \|A\|\|B\|_1$ and $\|A^2\|\leq\|A\|^2$ for any matrices $A,B$, and the assumption $\|X_{i,\lambda}\|=\order{\lambda}$,
 we obtain
 \begin{align}
  \tr X_{i,\lambda}^2 \tau^{(\lambda)}_{\vb{s}_{\lambda}(t)}
  \leq
  \|X_{i,\lambda}^2 \tau^{(\lambda)}_{\vb{s}_{\lambda}(t)}\|_1
  \leq
  \|X_{i,\lambda}\|^2 \|\tau^{(\lambda)}_{\vb{s}_{\lambda}(t)}\|_1
  \leq
  \|X_{i,\lambda}\|^2
  =\order{\lambda^2}\label{c24}
 \end{align}
 since $\|\tau^{(\lambda)}_{\vb{s}_{\lambda}(t)}\|_1= \tr \tau^{(\lambda)}_{\vb{s}_{\lambda}(t)}=1$.
 Thus, the combination of \eqref{c23} and \eqref{c24} implies that
 \begin{align}
  \max_{t\in[0,1]}\|(J_{\lambda,ij}(\vb{s}_{\lambda}(t)))_{ij}\|=\order{\lambda^2}.\label{Jup}
 \end{align}
 Furthermore, we improve this estimation by using \eqref{optup}.
 Combining \eqref{Jup} and \eqref{optup}, we have
 \begin{align}
  &\|\bm{\eta}_{\lambda}(\bm{\theta}_{\lambda})-\bm{\eta}_{\lambda}(\bm{\xi}_{\lambda})\|\nonumber\\
  =&\order{\sqrt{D(\rho_{\rm opt, nc}^{(\lambda)}\|\tau_{\bm{\theta}_\lambda}^{(\lambda)})
  \max_{t\in[0,1]}\|(J_{\lambda,ij}(\vb{s}_{\lambda}(t)))_{ij}\|}}\label{finalopt}\nonumber\\
  =&\order{D(\rho_{\rm opt, nc}^{(\lambda)}\|\tau_{\bm{\theta}_\lambda}^{(\lambda)})^{\frac{1}{2}}\lambda}.
 \end{align}
 Since we assumed $\|\vb{Q}_{\lambda}\|=o(\lambda)$,
 the relation $D(\rho_{\rm opt, nc}^{(\lambda)}\|\tau_{\bm{\theta}_\lambda}^{(\lambda)})=o(1)$ follows from Theorem \ref{rel_thm},
 which implies $\|\bm{\eta}_{\lambda}(\bm{\theta}_{0})-\bm{\eta}_{\lambda}(\bm{\xi}_{\lambda})\|=o(\lambda)$ because of $\|\bm{\eta}_{\lambda}(\bm{\theta}_{0})-\bm{\eta}_{\lambda}(\bm{\theta}_{\lambda})\|=o(\lambda)$.
 Hence, the relation $\|\bm{\eta}_{\lambda}(\bm{\theta}_{0})-\bm{\eta}_{\lambda}(\vb{s}_{\lambda}(t))\|=o(\lambda)$ holds for any $t\in[0,1]$ since $\bm{\eta}_{\lambda}(\vb{s}_{\lambda}(t))$ is a convex combination of $\bm{\eta}_{\lambda}(\bm{\theta}_{\lambda})$ and $\bm{\eta}_{\lambda}(\bm{\xi}_{\lambda})$.
 Therefore,
 \begin{align}
  \vb{s}_{\lambda}(t)\rightarrow \bm{\theta}_{0}\label{c27}
 \end{align}
 holds.
 Because \eqref{asymp_correlation} and \eqref{c27} imply that
 \begin{align}
  J_{\lambda,ij}(\vb{s}_{\lambda}(t))=\lambda g_{ij}(\bm{\theta}_0)+o(\lambda),
 \end{align}
 we have
 \begin{align}
  \max_{t\in[0,1]}\|(J_{\lambda,ij}(\vb{s}_{\lambda}(t)))_{ij}\|
  =\order{\lambda}.
 \end{align}
 \end{proof}
  \section{Examples with i.i.d.-scaling}\label{sec_iid_example}
 
%%%%%%%%%%%%%%%%%%%%%%%%%%%%%%%%
\subsection{An ordinary heat engine with i.i.d.~particles}\label{sub_iid}
We verify that
the model of the heat engine in the previous work \cite{PhysRevE.96.012128}
is included in our theory.
In this heat engine,
the hot and the cold baths
consist of $n$ particles with Hamiltonian $H_h$ and $H_c$ respectively.
Quantity $A$ is the energy, and $B$ is empty in this case.
The scale parameter is the number $n$ of the particles.
The scale dependent observables of the hot and the cold baths are
$H_{h,n}:=\sum_{l=0}^{n} I^{l} \otimes H_{h}\otimes I^{n-l}$
and
$H_{c,n}:=\sum_{l=0}^{n} I^{l} \otimes H_{c}\otimes I^{n-l}$
respectively.
%not necessarily exactly the same number
Then, the initial thermal state is the i.i.d.~Gibbs state
$\left(\frac{e^{-\beta_{h}H_h}}{\tr e^{-\beta_{h}H_h}}\right)^{\otimes n}\otimes\left(\frac{e^{-\beta{c}H_c}}{\tr e^{-\beta{c}H_c}}\right)^{\otimes n}
=\frac{e^{-\beta_{h}H_h^{\otimes n}-\beta{c}H_c^{\otimes n}}}{(\tr e^{-\beta_{h}H_h-\beta{c}H_c})^{n}}
$ with the inverse temperatures $\beta_{h}, \beta_{c}>0$.
It is easy to check that Assumption \ref{assumption} is satisfied since
$\phi_{n}(\beta_{h},\beta_{c})=\log (\tr e^{-\beta_{h}H_h-\beta{c}H_c})^n=n\log \tr e^{-\beta_{h}H_h-\beta{c}H_c}
=n\phi(\beta_{h},\beta_{c})$.
Indeed, because $\phi(\beta_{h},\beta_{c})=\log \tr e^{-\beta_{h}H_h-\beta_{c}H_c}$ is smooth, Assumption \ref{assumption} is satisfied.
In this case, the deviation from the extensivity is exactly $0$.
Hence, the achievability in Sec.~\ref{Sub_achievability} is verified.
In fact, although the paper \cite{PhysRevE.96.012128} gives a slightly different operation as the asymptotically optimal operation by using the specific structure of i.i.d.~and gives a better estimation of the bound, the application of our general theory also gives the same estimation up to the second order as follows.
%In this case, Quantity $A$ is the energy, and no Quantity $B$ exists.
Then, FGCB becomes
\begin{align}
 \Delta W
 \leq  \left(1-\frac{\beta_{h}}{\beta_{c}}\right)\Delta Q_{h,n}
 -C_{AA}\frac{\Delta Q_{h,n}^2}{n}
 +o\left(\frac{\Delta Q_{h,n}^2}{n}\right),
\end{align}
where $\Delta W$ and $\Delta Q_{h,n}$ are the extracted work and the endothermic heat from the hot bath respectively.
Since there is no correlation between two baths,
the matrix composed of the canonical correlations is just a diagonal matrix
\begin{align}
 (g_{ij}(\beta{c},\beta_{h}))_{ij}=
 \left(
 \begin{array}{cc}
  \sigma_{L}^2 & 0\\
  0 & \sigma_{H}^2
 \end{array}
 \right),
\end{align}
where $\sigma_{L,H}^2$ is the variance of the energy of each bath at each initial inverse temperature.
Thus, its inverse is
\begin{align}
 (g^{ij}(\beta{c},\beta_{h}))_{ij}=
 \left(
 \begin{array}{cc}
  \sigma_{L}^{-2} & 0\\
  0 & \sigma_{H}^{-2}
 \end{array}
 \right).\label{gij}
\end{align}
Therefore, the coefficient $C_{AA}$ is calculated as
\begin{align}
 C_{AA}
  =\frac{1}{2}\left(
   \frac{g^{11}(\beta{c},\beta_{h})\beta_{h}^2}{(\beta{c})^3}
   +\frac{g^{22}(\beta{c},\beta_{h})}{\beta{c}}
 \right)
 =\frac{\beta_{h}^2}{2\sigma_L^2\beta{c}^3}
   +\frac{1}{2\sigma_H^2\beta{c}},
\end{align}
which indeed reproduces the second order coefficient \cite[Eq. (39)]{PhysRevE.96.012128}.
%We show that the heat baths consists of i.i.d. Gibbs state particles

\subsection{Spin-$\frac{1}{2}$ bath}\label{sub_spin}
\begin{figure}[!t]
\centering
\includegraphics[clip ,width=3.2in]{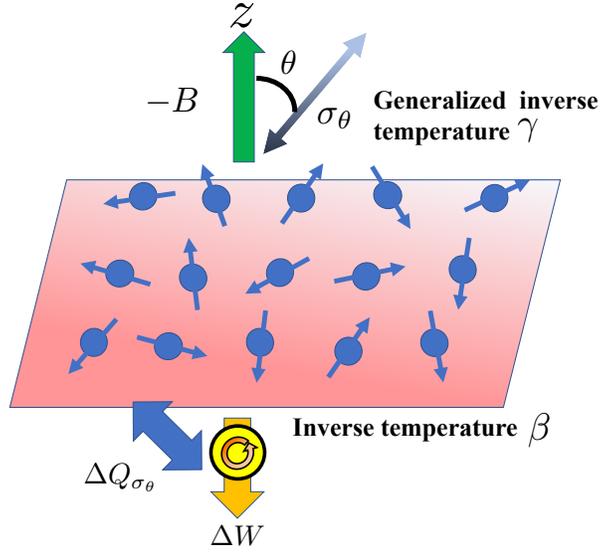}
 \caption{Schematic of spin-$\frac{1}{2}$ bath
 }
\label{figure51}
\end{figure}
The next example is a simple toy model to illustrate the explicit behavior of the coefficient of the finite-size effect in FGCB with non-commutative quantities in a two-level system, though its scaling is of i.i.d.
We consider the work extraction from just one bath composed of spin-$\frac{1}{2}$ systems without interaction (Fig.~\ref{figure51}) in the following model.
$n$ particles with spin-$\frac{1}{2}$ are placed on a lattice, so that each particle does not move.
We assume that interactions among particles are negligible.
We impose a uniform external magnetic field in $z$-direction, then the Hamiltonian of each particle is
$H=\hbar\omega \sigma_z$, where $\sigma_z=\ketbra{0}{0}-\ketbra{1}{1}$, and $2\omega$ is the cyclotron frequency.
In this example, Quantity $A$ is the energy given by this Hamiltonian.
As seen from FGCB, by using another conserved quantity $B$, we can extract work even from one bath.
As a toy model, we consider the spin in another direction as Quantity $B$.
Obviously, since this system is symmetric with respect to the rotation around $z$-axis,
it is sufficient to consider the angle $0\leq\theta\leq\frac{\pi}{2}$ between the spin and $z$-axis.
Then, we denote the $\theta$-direction spin by
\begin{align}
 \sigma_{\theta}=\cos\theta(\ketbra{0}{0}-\ketbra{1}{1})+\sin\theta(\ketbra{0}{1}+\ketbra{1}{0}).
\end{align}
We suppose that
the bath is
%$\sigma_\theta$ together with the Hamiltonian $H$ are
thermalized to the generalized thermal state
\begin{align}
 \tau^{(n)}_{\beta,\gamma}=\left(\frac{e^{-\beta H -\gamma \sigma_{\theta}}}{\mathcal Z}\right)^{\otimes n}
\end{align}
like the grand canonical ensemble with the non-commutative charges $\sigma_{\theta}$ and $H$ \cite{Yunger-Halpern:2016aa},
where $\beta>0$ is the ordinary inverse temperature of the bath, and $\gamma$ is the corresponding generalized inverse temperature for $\sigma_{\theta}$.
%The spin $\sigma_{\theta}$ is not conserved 
%does not commute with $H$
%Suppose that the spin has  .
Here, $\gamma$ is taken to be dimensionless.
Although it is unclear whether such a thermal state can be realized physically in this way,
this is one of the simplest examples of a generalized heat engine with two distinct conserved quantities.
Since it is simple to calculate the coefficient of the second order term in FGCB for this example,
we can analytically observe the behavior of the finite-size effects on its optimal performance.
The free entropy is calculated as
\begin{align}
 \phi_n(\beta,\gamma)
 =n\log \mathcal Z
 =n(\log\cosh\sqrt{(\beta\hbar\omega)^2+\gamma^2+2\gamma\beta\hbar\omega\cos\theta} +\log 2).
\end{align}
We use this state as the initial state of the bath.
Note that in the commutative case $\theta=0$,
$\sigma_{0}=\sigma_{z}$ is proportional to $H$,
which means that $\sigma_{0}$ and $H$ are essentially the same quantities.
Hence, $\sigma_{0}$ is useless for the work extraction.
%Note that $H$ and $\sigma_\theta$ is obviously dependent .
Thus, in this system, non-commutativity is needed for the work extraction.

For the work extraction $\Delta W$ under the supply $\Delta Q_{\sigma_{\theta},n}=o(n)$ of the $\theta$-direction spin,
the FGCB takes the following form:
\begin{align}
 \Delta W_n
 \leq -\frac{\gamma}{\beta}\Delta Q_{\sigma_{\theta},n}
 -C(\beta,\gamma,\omega,\theta)\frac{(\Delta Q_{\sigma_{\theta},n})^2}{n}
 +O\left(\frac{(\Delta Q_{\sigma_{\theta},n})^3}{n^2}\right).
\end{align}
The coefficient $C_{\theta}(\beta,\gamma;\omega)$ is calculated by \eqref{Cbb}
as
\begin{align}
 C(\beta,\gamma,\omega,\theta)
 =\frac{((\beta\hbar\omega)^2+\gamma^2+2\gamma\beta\hbar\omega\cos\theta)^{\frac{3}{2}}}{2\beta^3(\hbar\omega)^2\sin^2\theta\tanh\sqrt{(\beta\hbar\omega)^2+\gamma^2+2\gamma\beta\hbar\omega\cos\theta}}.
\end{align}
Note that this coefficient explicitly depends on
the full parameters:
the direction $\theta$ and cyclotron frequency $2\omega$ as well as inverse temperatures.
As already mentioned in the general theory,
the coefficient $C(\beta,\gamma,\omega,\theta)$ reflects the correlation between the Hamiltonian and $\theta$ direction spin.
%while the coefficient of the first term corresponding GCB is given just by .
Thus,
while just the ratio $\frac{\gamma}{\beta}$ between the inverse temperatures
determines the maximum work extraction
in thermodynamic limit,
the imposed field and the direction $\theta$ of the spin themselves explicitly make differences
in consideration of finite-size regime.

To extract the work as large as possible,
%(or take as small as possible cost)
we should minimize $C(\beta,\gamma,\omega,\theta)$ under the fixed $\eta:=\frac{\gamma}{\beta}$ to keep the first term.
Then, the coefficient $C(\beta,\gamma,\omega,\theta)$ is rewritten as
\begin{align}
 C(\beta,\omega,\theta;\eta)
 :=C(\beta,\beta\eta,\omega,\theta)
 =\frac{((\hbar\omega)^2+\eta^2+2\hbar\omega\eta\cos\theta)^{\frac{3}{2}}}{2(\hbar\omega)^2\sin^2\theta\tanh(\beta\sqrt{(\hbar\omega)^2+\eta^2+2\hbar\omega\eta\cos\theta})}.
\end{align}
Interestingly, it depends on not only the ratio $\eta$, but also the single inverse temperature $\beta$.
When $\eta$ (hence the first term) is fixed,
the lower the temperature is,
the smaller $C(\beta,\omega,\theta;\eta)$ becomes.
Moreover, $C(\beta,\omega,\theta;\eta)$ quite differently behaves in accordance with the sign of $\eta$ (i.e.~of $\gamma$) as follows.

At first, we consider the case when $\eta>0$.
In this case, $\Delta Q_{\sigma_{\theta},n}<0$ is needed to extract work.
The coefficient $C(\beta,\omega,\theta;\eta)$ diverges $+\infty$ as $\theta\rightarrow 0$.
The coefficient $C(\beta,\omega,\theta;\eta)$ always takes its minimum at $\theta=\frac{\pi}{2}$ ($x$-direction)
for any $\beta$ and $\omega$:
\begin{align}
 C(\beta,\omega,\frac{\pi}{2};\eta)
 =\frac{(\hbar^2\omega^2+\eta^2)^{\frac{3}{2}}}{2\hbar^2\omega^2\tanh(\beta\sqrt{\hbar^2\omega^2+\eta^2})}.\label{pih}
\end{align}
The derivative of \eqref{pih} with respect to $\omega$ is
\begin{align}
 \frac{\partial}{\partial\omega}C(\beta,\omega,\frac{\pi}{2};\eta)
 =\frac{\hbar^4\omega^4-\hbar^2\omega^2\eta^2-2\eta^4}{\hbar^3\omega^3\sqrt{\hbar^2\omega^2+\eta^2}\tanh(\beta\sqrt{\hbar^2\omega^2+\eta^2})}
 -\frac{\beta(\hbar^2\omega^2+\eta^2)}{\hbar\omega\sinh^2(\beta\sqrt{\hbar^2\omega^2+\eta^2})}.\label{der_pih}
\end{align}
Thus, further the value \eqref{pih}
takes its minimum at the $\omega_m$ such that the RHS of \eqref{der_pih} vanishes.
%%.
At large enough $\beta$, i.e.~low enough temperature $T:=k_{B}^{-1}\beta^{-1}$, where $k_B$ is the Boltzmann constant,
we have $\hbar\omega_m\approx \sqrt{2}\eta$ since the second term in \eqref{der_pih} becomes negligible.
Thus, in summary, to make the maximum work large, one should use $x$-direction spin and low temperature $T$, and tune $\omega$
to $\omega_m\approx \sqrt{2}\eta$.
As an example, we show the graph of $C(\beta,\omega,\frac{\pi}{2};\eta)$ as a function of $\theta$
at $T=1\rm{K}$, and $\eta=1\rm{J}$ in Fig.~\ref{figure_pos_om}, which indeed takes its minimum at
$\hbar\omega\approx\sqrt{2}\rm{J}=\sqrt{2}\eta$.
We also plot the graph of $C(\beta,\omega,\theta;\eta)$ as a function of $\theta$ at the same $T$ and $\eta$ with $\hbar\omega=10\rm{J}$ (solid (blue) curve) and $\hbar\omega=\sqrt{2}\rm{J}$
(dashed (red) curve)
in Fig.~\ref{figure_pos}, which shows that
$C(\beta,\omega,\theta;\eta)$ indeed becomes smaller when $\hbar\omega=\sqrt{2}\eta$.
\begin{figure}[!t]
\centering
\includegraphics[clip ,width=3.2in]{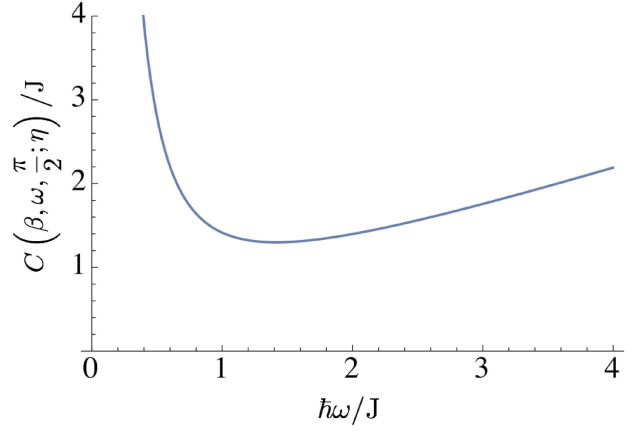}
 \caption{The graph of $C(\beta,\omega,\frac{\pi}{2};\eta)$ at $T=1\rm{K}$, and $\eta=1\rm{J}$ as a function of $\omega$.
 It takes its minimum at $\hbar\omega\approx\sqrt{2}\rm{J}=\sqrt{2}\eta$.}
\label{figure_pos_om}
\end{figure}

\begin{figure}[!t]
\centering
\includegraphics[clip ,width=3.2in]{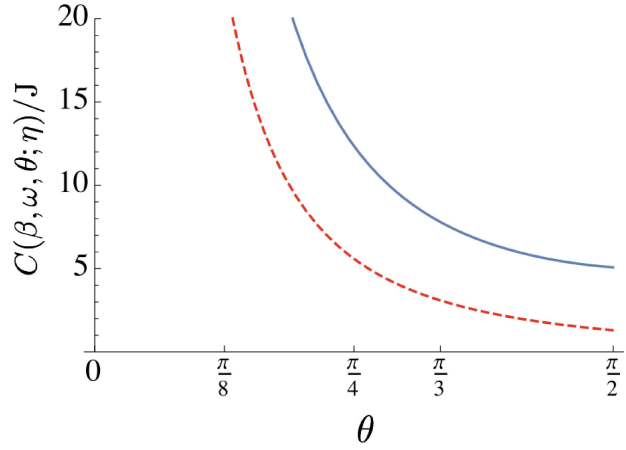}
 \caption{The graph of $C(\beta,\omega,\theta;\eta)$ at $T=1\rm{K}$, and $\eta=1\rm{J}$ with $\hbar\omega=10\rm{J}$ (solid (blue) curve) and $\hbar\omega=\sqrt{2}\rm{J}$ (dashed (red) curve).
 }
\label{figure_pos}
\end{figure}
\begin{figure}[!t]
\centering
\includegraphics[clip ,width=3.2in]{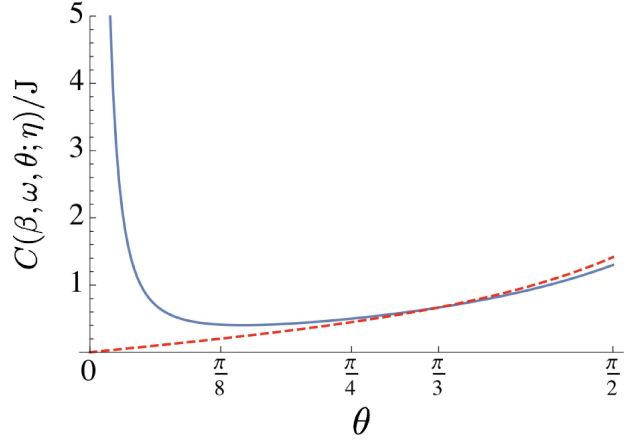}
 \caption{The graph of $C(\beta,\omega,\theta;\eta)$ as a function of $\theta$ at $T=1\rm{K}$, and $\eta=-1\rm{J}$.
 As $\theta\rightarrow 0$,
 it becomes small 
for the resonant frequency $\hbar\omega= 1\rm{J}=-\eta$ (dashed red curve), while it diverges for a non-resonant frequency $\hbar\omega=\sqrt{2}\rm{J}$ (solid blue curve).
 }
\label{figure_neg}
\end{figure}

Next, we consider the case when $\eta<0$, where $\Delta Q_{\sigma_{\theta},n}$
have to be positive to extract work.
In this case,
$\lim_{\theta\rightarrow 0}C(\beta,\omega,\theta;\eta)=(2\beta)^{-1}$
only when $\hbar\omega=-\eta$, otherwise it diverges to $+\infty$.
Thus, a kind of resonance occurs.
Since $C(\beta,\omega,\theta;\eta)>(2\beta)^{-1}$ holds in general,
that gives the infimum of the drawback.
Note that, however, $(2\beta)^{-1}$ is not the {\it minimum} since $\sigma_{\theta}$ with $\theta=0$ can no longer be used for the work extraction.
That is because $H$ is proportional to $\sigma_{0}=\sigma_{z}$.
Thus, in summary,
to make the maximum work large,
one should
tune $\omega$ to $-\eta$,
and to use low temperature,
small but non-zero $\theta$.
As an example, we show the graph of $C(\beta,\omega,\theta;\eta)$ as a function of $\theta$
at $T=1\rm{K}$, and $\eta=-1\rm{J}$
with resonant $\hbar\omega=1\rm{J}=-\eta$ (dashed (red) curve)
and non-resonant $\hbar\omega=\sqrt{2}\rm{J}$ (solid (blue) curve)
in Fig.~\ref{figure_neg}.
It shows that the coefficient indeed becomes small in $\theta\rightarrow 0$
for the resonant $\omega$.

  \bibliographystyle{apsrev4-1}
\bibliography{papers}
\end{document}